\documentclass[%
reprint,
amsmath,amssymb,
aps,
prx,
superscriptaddress,
nofootinbib
]{revtex4-2}

\usepackage{comment}
\usepackage{graphicx}
\usepackage{dcolumn}
\usepackage{enumitem}
\usepackage{bm}
\usepackage{braket}
\usepackage{amsmath}
\usepackage{dsfont}
\usepackage[dvipsnames]{xcolor}
\usepackage{mathrsfs}
\usepackage[colorlinks=true,linkcolor=black,citecolor=NavyBlue,urlcolor=NavyBlue]{hyperref}
\usepackage{pifont}
\usepackage{hyperref}
\usepackage{natbib}
\usepackage{tikz}
\usepackage[normalem]{ulem}
\usetikzlibrary{quantikz2}
\usetikzlibrary{trees}
\usepackage{nicematrix}
\usepackage{blindtext}
\usepackage[colorinlistoftodos,prependcaption,textsize=tiny]{todonotes}
\usepackage{regexpatch}
\usepackage{physics}
\usepackage{placeins}
\makeatletter
\xpatchcmd{\@todo}{\setkeys{todonotes}{#1}}{\setkeys{todonotes}{inline,#1}}{}{}
\makeatother

\newcommand{\add}[1]{{\color{blue}#1}}

\newcommand{\nocontentsline}[3]{}
\newcommand{\tocless}[2]{\bgroup\let\addcontentsline=\nocontentsline#1{#2}\egroup}

\DeclareMathAlphabet{\mathdutchcal}{U}{dutchcal}{m}{n}

\usepackage{amsthm}
\theoremstyle{definition}

\theoremstyle{plain}
\newtheorem{theorem}{Theorem}
\newtheorem{lemma}{Lemma}

\newtheorem{proposition}{Proposition}

\definecolor{axiomatic}{HTML}{FCF3E5}
\definecolor{microscopic}{HTML}{e9eef1}
\definecolor{operational}{HTML}{fceeee}
\definecolor{theorem}{HTML}{E1F1EC}

\usepackage{tcolorbox}
\tcbuselibrary{breakable}
\newtcolorbox[]{defbox}[1][]{breakable,colback=RoyalBlue!10!white,colframe=
white,title=#1}

\newtcolorbox[]{standardbox}[1][]{breakable,colback=RoyalBlue!5!white,colframe=white,#1}
\newtcolorbox[]{axiomaticbox}[1][]{breakable,colback=axiomatic,colframe=white, #1}
\newtcolorbox[]{microscopicbox}[1][]{colback=microscopic,colframe=white,#1}
\newtcolorbox[]{operationalbox}[1][]{breakable,colback=operational,colframe=white,#1}
\newtcolorbox[]{theorembox}[1][]{breakable,colback=theorem,colframe=white,#1}

\def\BraVert{\egroup\,\mid\,\bgroup}

\def\ketbra#1#2{\ket{#1\vphantom{#2}}\!\bra{#2\vphantom{#1}}}

\def\bra#1{\mathinner{\langle{#1}|}}

\def\ket#1{\mathinner{|{#1}\rangle}}

\def\braket#1{\mathinner{\langle{#1}\rangle}}

\newcommand{\nI}{{\rm i}}
\newcommand{\nE}{{\rm e}}
\usepackage{calc}
\newcommand{\vardbtilde}[1]{\tilde{\raisebox{0pt}[0.85\height]{$\tilde{#1}$}}}

\begin{document}
\title{The Trinity of Markovian Quantum Thermodynamics:\\ Unifying the Axiomatic, Microscopic, and Operational Paradigms}
\author{Yutong Luo}
\email{yuluo@tcd.ie}
\affiliation{School of Physics, Trinity College Dublin, College Green, Dublin 2, D02 K8N4, Ireland}
\affiliation{Trinity Quantum Alliance, Unit 16, Trinity Technology and Enterprise Centre, Pearse Street, Dublin 2, D02 YN67, Ireland}
\author{Jakub Czartowski}
\email{jakub.czartowski@tcd.ie}
\affiliation{School of Physics, Trinity College Dublin, College Green, Dublin 2, D02 K8N4, Ireland}
\affiliation{Trinity Quantum Alliance, Unit 16, Trinity Technology and Enterprise Centre, Pearse Street, Dublin 2, D02 YN67, Ireland}
\author{Felix Hubmann}
\affiliation{Faculty of Physics, University of Vienna, Boltzmanngasse 5, A-1090 Vienna,
Austria}
\author{Simon Milz}
\email{s.milz@hw.ac.uk}
\affiliation{Institute of Photonics and Quantum Sciences, School of Engineering and Physical Sciences,
Heriot-Watt University, Edinburgh EH14 4AS, United Kingdom}
\author{Felix C. Binder}
\email{felix.binder@tcd.ie}
\affiliation{School of Physics, Trinity College Dublin, College Green, Dublin 2, D02 K8N4, Ireland}
\affiliation{Trinity Quantum Alliance, Unit 16, Trinity Technology and Enterprise Centre, Pearse Street, Dublin 2, D02 YN67, Ireland}

\date{\today}

\begin{abstract}
Thermodynamics imposes fundamental constraints on the evolution of quantum systems. These constraints and their dynamical consequences have been formulated within distinct paradigms, including axiomatic approaches based on quantum master equations, microscopic descriptions of open-system dynamics, and operational formulations rooted in resource theories. While each perspective has yielded important insights into thermodynamically consistent quantum dynamics, their precise relationship has remained unresolved. 
Here we establish the exact equivalence of these three paradigms in the Markovian regime. We prove that thermal Lindbladians satisfying Markovianity, time-translation symmetry, and quantum detailed balance are precisely those admitting a microscopic realisation as an energy-conserving thermal collision model and, equivalently, those generating Markovian thermal operations. This unifies the existing approaches to Markovian quantum thermodynamics and identifies its dynamical underpinnings.
We further provide an explicit microscopic protocol for simulating thermal Markovian processes with controlled finite-time simulation errors. We illustrate its applicability by providing faithful thermal collision-model implementations of a qubit thermalising in a bosonic environment and of a three-level autonomous thermal machine. In the latter case, the protocol gives rise to a finite-stroke thermal engine that not only reproduces the continuous-time dynamics but also its steady-state thermodynamic performance.
As a whole, these results establish a unified foundation for Markovian quantum thermodynamics, showing that its axiomatic, microscopic, and operational formulations are exactly equivalent and providing a universal protocol for implementing thermal processes and machines.
\end{abstract}
\maketitle

\section{Introduction}
Naturally occurring dynamical processes -- both classical and quantum -- obey fundamental thermodynamic principles. These principles determine not only the equilibrium state towards which a system evolves, but also \textit{how} it gets there. For example, a glass of cold water in warmer surroundings will steadily warm until it reaches its ambient temperature. It does not first freeze and then rapidly reheat, even though both trajectories end in the same thermal state. Likewise, an excited atom coupled to a bosonic field relaxes towards its steady state while simultaneously losing quantum coherence. Thermodynamics therefore constrains the entire evolution, not merely its endpoint. Characterising such thermodynamically consistent dynamics is a cornerstone for the field of (quantum) thermodynamics and provides the essential baseline for the rigorous treatment of non-equilibrium dynamics.

Typically, thermalisation arises through interactions between a system and its surrounding environment, which induce irreversible dynamics and relaxation. In the quantum regime, this process is most commonly studied in the limit of weak system--environment coupling and rapidly decaying environmental correlations, where the environment acts as a large memoryless reservoir and the system dynamics are effectively described by a Markovian master equation~\cite{davies1974markovian,davies1976markovian,breuer2002theory,Kosloff2013QuantumThermo}. Markovian dynamics therefore provide the standard framework for describing quantum thermalisation~\cite{Kosloff2013QuantumThermo}. Identifying which Markovian evolutions are compatible with thermodynamic principles, and the constraints these principles impose, is thus a central problem in quantum thermodynamics.

A priori, this question admits several distinct answers. In the literature, the resulting framework of \textit{Markovian quantum thermodynamics} has been formulated in three complementary ways:
\begin{itemize}
    \item[(i)] {\it axiomatically}, based on thermodynamically consistent structural constraints on quantum master equations~\cite{alicki2007quantum,fagnola2015entropy,fagnola2015entropy_DB,Lostaglio2017coherence,Alhambra2017Dynamicalmaps,dann_open_2021,dann2021quantum};
    \item[(ii)] {\it microscopically}, based on explicit open-system thermalisation models that give rise to thermodynamic master equations~\cite{davies1974markovian,davies1976markovian,Potts_2021,ciccarello2022quantum,cusumano2022quantum,oDonovan2025quantum,scandi_thermalization_2026};
    \item[(iii)] {\it operationally}, based on thermodynamically achievable state transformations without consuming thermodynamic resources~\cite{janzing2000thermodynamic,brandao2013resource,horodecki2013fundamental,lostaglio2019introductory,Lostaglio2022Continuous,korzekwa2022optimizing,son2024hierarchy}.
\end{itemize}
Despite their common physical motivation, these three paradigms are logically independent and individually incomplete. The axiomatic formulation, while structurally transparent, is abstract: it characterises the generators of Markovian master equations through thermodynamic constraints but does not generally provide an explicit microscopic mechanism -- in terms of an underlying system--environment dynamics -- that realises them. Conversely, the microscopic approach is constructive: It starts from a physical system--environment interaction yet typically proceeds by model-specific constructions and does not a priori clarify the structure of the resulting system dynamics. The operational framework, meanwhile, defines admissible processes through thermodynamic convertibility without the investment of resources, but is formulated at the level of input-output maps rather than generators of a Markovian master equation. Consequently, it does not directly reveal the structural form of the underlying dynamics.
It therefore remains unclear whether these perspectives merely intersect for particular examples or, more fundamentally, select the same class of Markovian thermal dynamics. While previous work has deduced the structural properties that energy conservation imposes on master equations~\cite{dann_open_2021}, a comprehensive unification of all three aspects of Markovian thermodynamics has thus far been lacking.

Unifying structural and microscopic descriptions of physical processes is a recurring theme in physics. For quantum channels, the Stinespring dilation theorem~\cite{Stinespring_1955} establishes the equivalence between completely positive trace-preserving evolutions and unitary dynamics on an enlarged Hilbert space. Likewise, the Gorini–Kossakowski–Sudarshan–Lindblad (GKSL) equation, which is central to this work, was originally derived from axiomatic semigroup considerations~\cite{lindblad1976generators, gorini1976completely}, but can also be obtained from a microscopic system--bath Hamiltonian under weak-coupling assumptions~\cite{davies1974markovian,davies1976markovian,breuer2002theory} or via collision models~\cite{Cattaneo2021CMcansimulate, ciccarello2022quantum, burgarth_control_2023}. In both cases, connecting complementary descriptions grounds abstract formulations in concrete physical models and reveals the structure of the underlying dynamics.

In the same vein, for a consistent framework of Markovian quantum thermodynamics, clarifying the relationship between axiomatic, microscopic, and operational descriptions is important both conceptually and practically. Markovian thermalisation processes are the elementary building blocks of quantum thermodynamics, underpinning thermal machines such as quantum heat engines and refrigerators~\cite{kosloff2014quantum,hofer2017markovian}, while also serving as standard phenomenological models of thermal noise in quantum devices~\cite{breuer2002theory,Mehl2013Noise,cleri2024quantum,naeij2025open}. They should therefore admit a unique physical interpretation independent of the theoretical framework used to describe them. Otherwise, the same process could be considered thermodynamically consistent from one perspective but not another, leading to conflicting conclusions about its physical implementation, resource requirements, and thermodynamic behaviour. For instance, a phenomenological master equation satisfying all expected thermodynamic constraints might nevertheless require athermal resources for its implementation, while a seemingly thermodynamically consistent microscopic model could generate a master equation lacking the structural features expected of genuine thermal relaxation.

In this work, we resolve these questions by establishing a precise equivalence between the axiomatic, microscopic, and operational descriptions of Markovian quantum thermodynamics, thereby providing a rigorous and systematic foundation for analysing, implementing, and simulating Markovian thermalisation in quantum systems.

\section{Summary of results}\label{sec:summary_of_results}
Our main results establish a unified framework for Markovian quantum thermodynamics by connecting its axiomatic, microscopic, and operational formulations. Under minimal structural assumptions on the system Hamiltonian, we prove that these seemingly distinct frameworks characterise exactly the same class of thermodynamically consistent dynamics. 
Consequently, \textit{any} phenomenological master equation satisfying the corresponding axioms is guaranteed to admit an underlying thermodynamically consistent realisation. Conversely, the dynamics stemming from such microscopic models are fully characterised by these axioms. And finally, both of these approaches coincide with the operational formulation of Markovian quantum thermodynamics based on memoryless thermal operations. Building on this equivalence, we develop an explicit protocol for implementing the dynamics identified by our equivalence theorem. Specifically, we construct energy-conserving thermal collision models that reproduce arbitrary thermodynamically consistent GKSL dynamics in the continuous-time limit. We further derive finite-collision-time error bounds for this construction and demonstrate its applicability through concrete single- and multi-bath examples and simulations.

In this paper, we take $\hbar = k_{\rm B} = 1$ and consider a finite-dimensional quantum system $S$ with Hamiltonian $H_S$, whose evolution from $t=0$ to $t=\tau$ is described by a dynamical map $\Lambda_\tau \equiv \overleftarrow{\rm T}\exp(\int_0^\tau \mathcal{L}_t\,{\rm d} t)$ with time-dependent generator $\mathcal{L}_t$ and $\overleftarrow{\rm T}$ denoting the chronological time-ordering. Throughout, the dynamics of the system is assumed to be \textit{memoryless} -- mirroring the fact that typical thermalisation happens via weakly coupling to a large heat reservoir. Consequently, $\mathcal{L}_t$ is the generator of a GKSL master equation modelling Markovian dynamics, with the evolution of a system state $\rho(t)$ given by $\dot \rho = \mathcal{L}_t(\rho(t))$. Determining the structural requirements that thermodynamic consistency imposes on this generator $\mathcal{L}_t$ in axiomatic, microscopic, and operational terms is one of the main aims of this paper.

We consider the following three essential formulations of Markovian quantum thermodynamics:
\begin{figure}
    \centering
    \includegraphics[width=\linewidth]{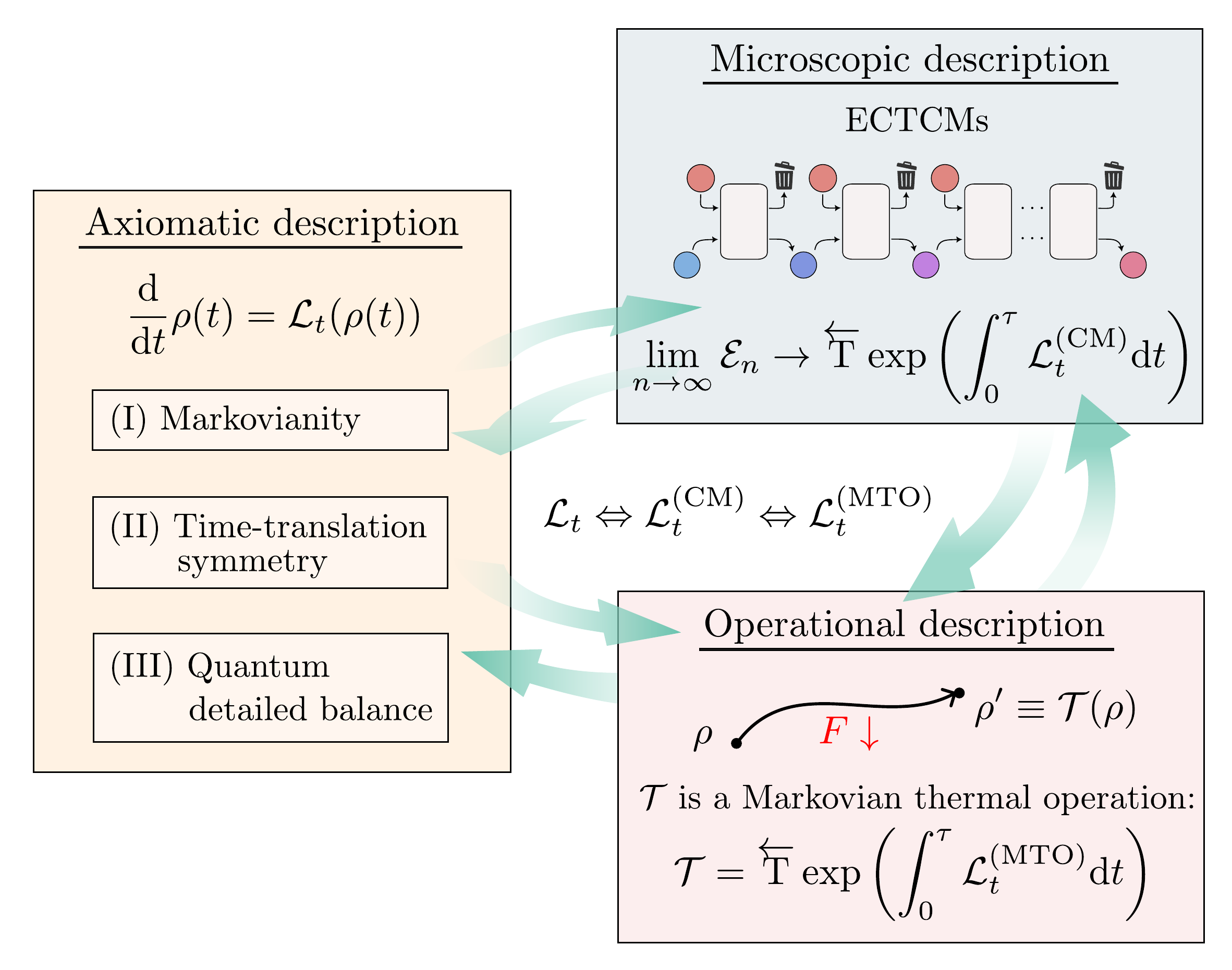}
    \caption{Equivalence between three paradigms of Markovian quantum thermodynamics. The axiomatic description (left) characterises the dynamics through a GKSL master equation subject to three thermodynamic axioms: Markovianity, time-translation symmetry, and quantum detailed balance. The microscopic description (top right) realises such dynamics via energy-conserving thermal collision models (ECTCMs), where repeated interactions with thermal ancillae converge in the continuous-time limit to a time-ordered exponential generated by $\mathcal{L}_t^{\rm (CM)}$. The operational description (bottom right) arises from the resource theory of athermality, where admissible dynamics are Markovian thermal operations (MTOs) generated by $\mathcal{L}_t^{\rm (MTO)}$ ensuring free energy monotonicity (indicated as $F\downarrow$). The arrows indicate the relationships established in this work: the sets of axiomatic generators, collision-model generators, and MTO generators coincide, such that they generate the same class of thermodynamically consistent dynamics. The three independent formulations of Markovian quantum thermodynamics are thereby unified.}
    \label{fig:equivalence}
\end{figure}

\noindent {\it Axiomatic description:} In the axiomatic formulation, the generators $\mathcal{L}_t$ are constrained by three thermodynamic axioms: Markovianity, time-translation symmetry, and quantum detailed balance. These conditions guarantee memoryless thermalisation and enforce invariance of the Gibbs state without reference to any underlying microscopic mechanism. A paradigmatic example satisfying these axioms are Davies generators, which have originally been derived in the weak-coupling limit of a system interacting with a thermal reservoir~\cite{davies1974markovian,davies1976markovian,alicki2007quantum}. The resulting dynamics decouples the evolution of energetic populations and coherence, and leaves the thermal state invariant.
Davies maps have become the canonical model for thermodynamically consistent Markovian dynamics~\cite{alicki2009thermalization,ROGA2010311,Alhambra2017Dynamicalmaps,Bardet2023RapidThermal,moroder_thermodynamics_2024}. Structurally similar master equations have been derived and analysed in Ref.~\cite{dann_open_2021} based on energy conservation assumptions of the underlying dynamics. However, the axiomatic framework encompasses a substantially broader class of physical assumptions and generators, whose microscopic and operational interpretations have remained much less understood.

\smallskip
\noindent {\it Microscopic description:} To study thermodynamic dynamics arising from a microscopic system--environment interaction satisfying thermodynamic laws, we introduce energy-conserving thermal collision models (ECTCMs), consisting of repeated interactions of system and thermal ancillae through strictly energy-conserving interaction Hamiltonians. Besides their conceptual transparency, ECTCMs provide a microscopic route to deriving Markovian master equations from strictly energy-conserving interactions, where standard weak-coupling derivations fail (Sec.~\ref{sec:EC}). Taking an appropriate continuous-time limit, the repeated interactions reproduce the desired Markovian thermodynamic dynamics. Although such thermal collision models have recently been studied in Ref.~\cite{hubmann_open_2025}, the full class of generators obtainable within this framework remains to be characterised.

\smallskip
\noindent {\it Operational description:} In the resource-theoretic formulation of quantum thermodynamics, thermodynamically free transformations are described by thermal operations, namely quantum channels implementable without access to any additional thermodynamic resources~\cite{lostaglio2019introductory}. Restricting to thermal operations generated by continuous Markovian dynamics gives rise to the class of Markovian thermal operations (MTOs)~\cite{spaventa2022capacity,vomEnde2023ExploringLimits,son2024hierarchy}. While MTOs provide an operational conceptualisation of thermodynamic Markovian processes, a complete characterisation of their generators, as well as their microscopic realisation, has remained an open problem.
\smallskip

In this work, we connect these three distinct formulations of Markovian quantum thermodynamics (Fig.~\ref{fig:equivalence}), yielding our first main result:
\begin{theorembox}
\begin{theorem}[Axiomatic--Microscopic--Operational Equivalence]\label{thm:equivalences}
For any system with a non-trivial Hamiltonian, the sets of generators defined by the axiomatic, microscopic, and operational formulations of Markovian quantum thermodynamics exactly coincide.
\end{theorem}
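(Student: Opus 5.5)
We will show that the three generator sets coincide by proving two inclusion cycles: axiomatic $\subseteq$ collision-model $\subseteq$ axiomatic, and axiomatic $\subseteq$ MTO $\subseteq$ axiomatic. Throughout, ``non-trivial Hamiltonian'' will serve to exclude the degenerate case $H_S\propto\mathds{1}$. In that case time-translation symmetry is vacuous, and the collision-model and MTO classes collapse to unital or Gibbs-preserving dynamics with no energetic structure.

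The first step is a structural normal form for axiomatic generators. Using time-translation symmetry, $\mathcal{L}_t$ commutes with $\mathcal{U}_s=e^{-iH_S s}(\cdot)e^{iH_S s}$. We decompose the Lindblad operators into Bohr-frequency components $L_\omega$, with $[H_S,L_\omega]=-\omega L_\omega$, and show that the Kossakowski matrix is block diagonal in $\omega$. Quantum detailed balance, taken in the KMS/GNS sense relative to $\gamma=e^{-\beta H_S}/Z$, then pairs each $\omega$-block with its $-\omega$ block through the factor $e^{-\beta\omega}$. It also forces the Hamiltonian part to commute with $H_S$. This gives a canonical form
\begin{equation}
\mathcal{L}_t(\rho)=-i[K_t,\rho]+\sum_{\omega}\sum_{k}\Big(\mathcal{D}[L^{(k)}_{\omega,t}]\rho+e^{-\beta\omega}\mathcal{D}[L^{(k)\dagger}_{\omega,t}]\rho\Big),
\end{equation}
with $[K_t,H_S]=0$. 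This is the generalised-Davies form, and both remaining inclusions will be proved against it.

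For the microscopic direction, each pair $(L_\omega, e^{-\beta\omega}L_\omega^\dagger)$ will be realised by a thermal ancilla qubit with gap $\omega$ in its Gibbs state, coupled through the energy-conserving interaction $V=\sqrt{g}\,(L_\omega\otimes\sigma^+ + L_\omega^\dagger\otimes\sigma^-)$, which commutes with $H_S+H_A$. Under the scaling $g\,\Delta t\to$ const, the second-order expansion of the collision map reproduces exactly the two dissipators with weights $p_0$ and $p_1$, whose ratio is $e^{-\beta\omega}$. The first-order term vanishes because the ancilla is diagonal. Repeating this over $\omega$, $k$ and time slices via Trotterisation, and adding $K_t$ as an energy-preserving unitary, gives axiomatic $\subseteq$ CM. For the converse, we compute the generator of an arbitrary ECTCM in the continuous limit. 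Energy conservation of $V$ yields covariance, and thermality of the ancilla together with the KMS property of its correlation functions yields detailed balance. The difficult point is that detailed balance requires the specific GNS/KMS symmetric form, not merely Gibbs invariance. We will verify this by writing the ancilla trace in the energy eigenbasis and checking $\mathcal{L}^\dagger$-symmetry directly.

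For the operational direction, the MTO $\subseteq$ axiomatic inclusion uses the fact that thermal operations are covariant and Gibbs-preserving. Differentiating a Markovian family, $\Lambda_{t+dt,t}=\mathrm{id}+dt\,\mathcal{L}_t$, then gives covariance and $\mathcal{L}_t(\gamma)=0$. Gibbs preservation only gives stationarity, however, so the detailed-balance part must come from something stronger. We expect this to be the main obstacle. Our plan is to use the characterisation of thermal operations as dilations by energy-conserving unitaries with Gibbs baths, apply the same correlation-function argument as in the collision-model converse, and invoke the paper's definition of MTOs as generators of such Markovian TO families. The reverse inclusion, axiomatic $\subseteq$ MTO, is immediate from the previous step: every finite collision step of an ECTCM is itself a thermal operation, since it consists of a Gibbs ancilla, an energy-conserving unitary and a partial trace. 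The continuous limit of thermal operations stays within the closed TO set, so the generated family is an MTO, and transitivity closes the equivalence.
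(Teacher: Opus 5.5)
Your collision-model $\subseteq$ axiomatic step is the paper's own argument (App.~\ref{app:proof_of_L_CM_=>_L}). Your axiomatic $\subseteq$ collision-model step is a genuinely different and simpler construction. Coupling each Bohr-frequency jump operator to its own thermal qubit via $L_\omega\otimes\sigma^+ + L_\omega^\dagger\otimes\sigma^-$ makes $\mathrm{Tr}_A\{V\gamma_A\}$ vanish identically. Cross terms between different qubits also vanish, so one product ancilla suffices and no Trotterisation is needed. You therefore need neither the paper's $d_S^3$-dimensional ancilla nor its gauge fixing of the Lamb shift, and the construction works even for $H_S\propto\mathds{1}$, a case the paper leaves open. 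Two small corrections: $[K_t,H_S]=0$ comes from time-translation symmetry, not from detailed balance, and you never actually use the non-triviality hypothesis you motivate at the start.

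The genuine gap is operational $\subseteq$ axiomatic. Reusing the correlation-function argument presupposes that each short-time propagator equals, to first order, $\mathrm{id}+\Delta t\,(\mathrm{Tr}_E\{V(\cdot\otimes\gamma_E)V\}-\tfrac{1}{2}[\mathrm{Tr}_E\{V^2\gamma_E\},\cdot\,]_+)$ for a \emph{Hermitian} energy-conserving $V$. The relabelling $k\leftrightarrow k'$ that produces detailed balance uses $V_{xyzw}=V^*_{zwxy}$. A thermal-operation dilation only supplies a unitary, and nothing in your plan forces it into this weak-coupling form. The paper inserts a separate reduction at exactly this point (App.~\ref{app:proof_of_MTO_=_CM}). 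It writes each step as $e^{-i\Delta t H_{SE_i}(\Delta t)}$ and argues, by a case analysis on the norm scaling of $H_{SE_i}(\Delta t)$, that only the form $O(1)+O(\Delta t^{-1/2})$ with vanishing thermal first moment gives a bounded non-zero generator. Each step is then an ECTCM collision, and Prop.~\ref{prop:L_CM_=>_L} applies. You have no counterpart of this step.

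Be warned, though, that this reduction cannot be completed in the stated generality, because a short-time thermal operation can be a rare strong kick rather than a weak coupling. Here is an explicit counterexample.
\begin{itemize}
\item Take $H_S=|3\rangle\langle 3|$ on $\mathbb{C}^4$, which is non-trivial and has a threefold degenerate ground level. Let $P=|1\rangle\langle 0|+|2\rangle\langle 1|+|0\rangle\langle 2|+|3\rangle\langle 3|$, an energy-conserving permutation, and set $\mathcal{L}(\rho)=r(P\rho P^\dagger-\rho)$.
\item Since $P^3=\mathds{1}$, $e^{\mathcal{L}s}=\sum_{k=0}^{2}w_k(s)\,P^k(\cdot)P^{-k}$ with $w_k(s)=e^{-rs}\sum_{n\equiv k\ (\mathrm{mod}\ 3)}(rs)^n/n!$.
\item This is an exact thermal operation for every $s$. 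The controlled unitary $\sum_k|k\rangle\langle k|_R\otimes P^k$ commutes with $H_S+H_R$ for the thermal three-level register $H_R=-\beta^{-1}\sum_k\ln w_k(s)\,|k\rangle\langle k|$. Hence $\mathcal{L}$ is a bounded, covariant MTO generator.
\item Detailed balance fails: $\langle 1|\mathcal{L}(|0\rangle\langle 0|)|1\rangle=r$ while $\langle 0|\mathcal{L}(|1\rangle\langle 1|)|0\rangle=0$. These population rates do not depend on how $\mathcal{L}$ is split into Hamiltonian and dissipator, so Eq.~(\ref{eq:QDB_L_t}) with $x=y=0$, $x'=y'=1$ fails in every decomposition.
\end{itemize}
In the short-time dilation the register fires with probability $\approx r\Delta t$. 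So $\|H_{SE}\|_\infty\sim 1/\Delta t$, yet every order of the expansion contributes at $O(\Delta t)$ because of the register weight. This is exactly the regime the paper's case (4) discards as producing a trivial generator.

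The inclusion you correctly flag as the main obstacle is therefore false without an extra hypothesis, and no correlation-function argument can prove it as stated. It does hold for $d_S=2$, where covariance plus stationarity of $\gamma_S$ already force detailed balance.
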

\end{theorembox}
The proof of Thm.~\ref{thm:equivalences} proceeds by establishing pairwise equivalences between the three formulations. Props.~\ref{prop:L_CM_=>_L} (Sec.~\ref{sec:ME_from_CM}),  and \ref{prop:ME_CM_equivalence} (Sec.~\ref{sec:constructing_ECTCM}) establish the equivalence between axiomatic and microscopic formulations. Prop.~\ref{prop:MTO_=_CM} (Sec.~\ref{sec:ECTCM_=_MTO}) establishes the equivalence between the microscopic and operational formulations. The equivalence between axiomatic and operational formulations then follows immediately by transitivity. 

As a result, our findings show that all three approaches to Markovian thermodynamics yield the same dynamical laws. Consequently, dynamics arising from these generators are guaranteed to (i) satisfy the thermodynamic axioms, (ii) admit an underlying thermodynamically consistent model, and (iii) be implementable without consuming athermal resources.

Our theorem substantially generalises previous connections between Davies maps, collision models, and Markovian thermal operations established for two-level population dynamics~\cite{lostaglio_elementary_2018}. In contrast, Thm.~\ref{thm:equivalences} establishes the exact equivalence between the axiomatic, microscopic and operational formulations for arbitrary finite-dimensional systems, including coherent dynamics.

Beyond demonstrating the equivalence between these formulations, our framework also provides practical constructive tools for simulating Markovian quantum thermodynamics ``digitally", i.e., using finite collision models. In particular, we present a simulation protocol and show that the simulation error, quantified by the deviation from the target dynamics, scales as $O(\tau^{3/2}/n^{1/2})$ in terms of the evolution time $\tau$ and the number of collisions~$n$. 

Using this protocol, we simulate paradigmatic thermodynamically consistent dynamics in concrete physical systems. First, we construct an ECTCM that reproduces the Markovian dynamics of a two-level system coupled to a bosonic thermal environment. Although the corresponding master equation is not naturally derived from an energy-conserving interaction Hamiltonian, its Lindbladian is thermal and our construction provides a fully energy-conserving microscopic implementation. Second, going beyond the single-bath setting, we transform a three-level autonomous thermal machine into a three-stroke engine, with each stroke realised by an ECTCM. We numerically verify that, in the continuous-time limit, the resulting engine reproduces the state dynamics and heat currents of the original autonomous machine. Furthermore, we prove that the two engines have identical efficiencies for arbitrary collision times chosen for the ECTCMs. Together, these examples demonstrate the practical applicability of our framework and establish ECTCMs as a versatile platform for simulating, engineering, and analysing thermodynamically consistent open-system dynamics and quantum thermal machines.

The paper is structured as follows: In Sec.~\ref{sec:axiomatic_description}, we introduce the thermodynamic axioms and derive the general form of the ensuing thermal generators. In Sec.~\ref{sec:microscopic_description}, we define ECTCMs and derive the corresponding generators in the continuous-time limit. By showing that these generators satisfy all the thermodynamic axioms and presenting an explicit ECTCM construction protocol for any given axiomatic generator, we establish the equivalence between the axiomatic and microscopic formulations of Markovian thermodynamics. Furthermore, we connect these results to the operational framework in Sec.~\ref{sec:operational_description}, proving the triple equivalence stated in Thm.~\ref{thm:equivalences}. In Sec.~\ref{sec:ECTCM_simulation}, we cast the ECTCM construction employed in the proof of Thm.~\ref{thm:equivalences} into a simulation protocol for Markovian quantum thermodynamics, together with an analysis of the finite-collision-time simulation error. Applications of the ECTCM simulation protocol to representative physical models are presented in Sec.~\ref{sec:applications}, where we demonstrate the simulation accuracy and show that the collision model simulation precisely recovers the efficiency of the thermal machine despite the simulation's discrete nature. Finally, we conclude and discuss future directions in Sec.~\ref{sec:discussion}.

\section{Axiomatic description}\label{sec:axiomatic_description}
The axiomatic approach to Markovian quantum thermodynamics imposes thermodynamic constraints on the master equation governing the evolution of a system. In particular, the dynamics must be autonomous and satisfy quantum detailed balance (see below). While yielding structurally transparent constraints on the corresponding generators, this approach does \textit{not} assume an underlying microscopic model. The physical implementation of the resulting master equations, and the resources it requires, therefore remain a priori opaque.

We begin this section by introducing the three thermodynamic axioms imposed on the generators $\mathcal{L}_t$ in Sec.~\ref{sec:thermal_axioms} and discussing their physical significance in Sec.~\ref{sec:physical_significance_axioms}. The resulting structure of the corresponding axiomatic generators is then derived in Sec.~\ref{sec:structure_thermal_Lindbladian}. 

\subsection{Thermodynamic axioms}\label{sec:thermal_axioms}
Axiomatic approaches to Markovian quantum thermodynamics have found widespread application in the literature~\cite{fagnola2015entropy,fagnola2015entropy_DB,Lostaglio2017coherence,Alhambra2017Dynamicalmaps,dann_open_2021,dann2021quantum,hewgill2021quantum,chen2025thermodynamically}. 
In this paradigm, we consider a $d_S$-dimensional system $S$ with a Hamiltonian $H_S$, which defines the energy eigenbasis $\{\ket{j}_S\}_{j=0}^{d_S-1}$ and the energy spectrum $\{\epsilon_j\}_{j=1}^{d_S-1}$ of the system. Thermalisation of $S$ over the time interval $[0,\tau]$ is induced by interactions with the environment and described by the dynamical map $\Lambda_\tau \equiv \overleftarrow{\rm T}\exp(\int_0^\tau \mathcal{L}_t\,{\rm d} t)$ with the time-dependent generator $\mathcal{L}_t$ and $\overleftarrow{\rm T}$ denoting chronological time-ordering. The axiomatic approach imposes thermodynamic constraints directly on $\mathcal{L}_t$, thereby identifying the class of thermodynamically consistent dynamics: those that are memoryless, autonomous, and consistent with thermal equilibrium. These physical requirements are encoded in the following thermodynamic axioms (see Sec.~\ref{sec:physical_significance_axioms} for a more detailed discussion of their physical significance):
\begin{axiomaticbox}[boxsep=0pt,left=1em,right=1em]{}
\begin{center}
    {\fontsize{10pt}{12pt}\selectfont {\bf Thermal Lindbladians}}
\end{center}

\begin{enumerate}[label=(\Roman*),leftmargin=2em]
    \item\label{axiom:I} {\it Markovianity}: The master equation for a system state $\rho_S(t)$ is Markovian:
    \begin{align}
        \frac{{\rm d} \rho_S(t)}{{\rm d} t} = \mathcal{L}_t(\rho_S(t)),
    \end{align}
    with the general time-dependent Lindbladian $\mathcal{L}_t$ of GKSL form~\cite{gorini1976completely,lindblad1976generators}:
    \begin{align}
        \mathcal{L}_t(\cdot) = -\nI[H_S'(t), \cdot] + \mathcal{A}_t(\cdot) - \frac{1}{2}[\mathcal{A}^\dagger_t(\mathds{1}), \cdot]_+, \hspace{-1.2em} \label{eq:L_gen}
    \end{align}
    where $[\cdot\,,\cdot]_+$ is the anti-commutator, $H_S'(t)$ is a time-dependent Hermitian operator which can be identified as the effective Hamiltonian of system $S$ after including the Lamb shift induced by the system--environment interaction, $\mathds{1}$ denotes the identity operator, $\mathcal{A}_t$ is a time-dependent completely positive (CP) map and $\mathcal{A}^\dagger_t$ is the adjoint of $\mathcal{A}_t$ with respect to the Hilbert-Schmidt inner product. We write $\mathcal{L}_t = -\nI \mathcal{L}_{H_S'(t)} + \mathcal{D}_t$ with the unitary term defined as $\mathcal{L}_{X}(\cdot) := [X, \cdot]$ and the dissipator
    \begin{align}
        \mathcal{D}_t(\cdot) := \mathcal{A}_t(\cdot) - \frac{1}{2}[\mathcal{A}^\dagger_t(\mathds{1}), \cdot]_+.
        \label{eq:D_t_gen}
    \end{align}
    \item\label{axiom:II} {\it Time-translation symmetry}: $[\mathcal{L}_t, \mathcal{L}_{H_S}]=0, \, \forall\, t$, where $H_S$ is the time-independent bare system Hamiltonian. If $\mathcal{L}_t$ satisfies the time-translation symmetry, $H_S'(t)$ and $\mathcal{A}_t$ appearing in its definition [Eq.~(\ref{eq:L_gen})] can always be chosen such that~\cite{HOLEVO1993211}
    \begin{subequations}
        \begin{align}
            [H_S'(t), H_S]&=0, \quad \forall\, t, \label{eq:H_S'_commute_H_S}\\
            [\mathcal{A}_t, \mathcal{L}_{H_S}]&=0, \quad \forall\, t. \label{eq:A_commute_L_H_S}
        \end{align}
    \end{subequations}
    \item\label{axiom:III} {\it Quantum detailed balance condition}: For all $(x,x',y,y')$, the dissipator $\mathcal{D}_t$ satisfies that
    \begin{align}
        &\quad \nE^{-\beta \epsilon_y}\bra{x'}_S\mathcal{D}_t(\ketbra{x}{y}_S)\ket{y'}_S \nonumber\\
        &= \nE^{-\beta \epsilon_{y'}}\bra{x}_S\mathcal{D}_t(\ketbra{x'}{y'}_S)\ket{y}_S^*, \, \forall\, t,
        \label{eq:QDB_L_t}
    \end{align}
    where $\beta$ is the inverse temperature, $\{\ket{j}_S\}_{j=0}^{d_S-1}$ is the orthonormal eigenbasis of $H_S$ with the corresponding eigenvalues $\{\epsilon_j\}_{j=0}^{d_S-1}$ and $d_S$ is the dimension of $H_S$. This condition is known as Gelfand--Naimark--Segal (GNS) detailed balance~\cite{agarwal1973open,alicki1976detailed}.
    In App.~\ref{app:QDB}, we show that under Axioms~\ref{axiom:I} and \ref{axiom:II}, 
    the various definitions of quantum detailed balance~\cite{AMORIM2021389,fagnola2009two}, e.g. Kubo--Martin--Schwinger (KMS) detailed balance~\cite{temme2010chi,Alhambra2017Dynamicalmaps,chen2025efficient,scandi_thermalization_2026}, coincide with Eq.~(\ref{eq:QDB_L_t}) and are equivalent to the same condition imposed on the CP map $\mathcal{A}_t$ itself, namely:
    \begin{align}
        &\quad \nE^{-\beta \epsilon_y}\bra{x'}_S\mathcal{A}_t(\ketbra{x}{y}_S)\ket{y'}_S \nonumber\\
        &= \nE^{-\beta \epsilon_{y'}}\bra{x}_S\mathcal{A}_t(\ketbra{x'}{y'}_S)\ket{y}_S^*, \, \forall\, t,
        \label{eq:QDB_A_t}
    \end{align}
    for all $(x,x',y,y')$. 
\end{enumerate}
\end{axiomaticbox}
Throughout this paper, a Lindbladian $\mathcal{L}_t$ satisfying the three axioms will be referred to as a {\it thermal Lindbladian}.
We remark that when $H_S'(t) = H_S$ and therefore $[\mathcal{L}_{H_S'(t)}, \mathcal{D}_t]=0$, the Lindbladians $\mathcal{L}_t$ defined by Axioms~\ref{axiom:I}--\ref{axiom:III} reduce to Davies generators~\cite{davies1974markovian,davies1976markovian,alicki2007quantum,ROGA2010311}, the canonical model for (rapid) Markovian thermalisation in both simple and many-body quantum systems~\cite{alicki2009thermalization,ROGA2010311,Bardet2023RapidThermal,moroder_thermodynamics_2024}. 
Previous works~\cite{Lostaglio2017coherence,dann_open_2021} characterised thermodynamically consistent Markovian dynamics by combining Markovianity~\ref{axiom:I} and time-translation symmetry~\ref{axiom:II} with the weaker requirement that the Gibbs state be stationary. In this work, we follow similar lines in the derivation of structural results, but replace Gibbs-state stationarity by the stronger condition of quantum detailed balance~\ref{axiom:III}. As we shall show, this strengthening is precisely what guarantees the existence of an energy-conserving microscopic realisation and underlies the equivalence between the axiomatic, microscopic, and operational descriptions.

\subsection{Physical significance of the three axioms}\label{sec:physical_significance_axioms}
The axioms of Markovianity~\ref{axiom:I}, time-translation symmetry~\ref{axiom:II}, and quantum detailed balance~\ref{axiom:III} represent fundamental, physical constraints expected for a thermodynamically consistent quantum process. Respectively, they encode memoryless irreversibility, autonomous processes, and thermal equilibrium. 

Markovianity~\ref{axiom:I} encapsulates the assumption that the future evolution of the system depends only on its current state. Physically, this corresponds to a situation in which the system interacts with an environment that does not retain accessible memory of the system’s past states, as is typically expected under weak system--environment coupling and rapidly decaying bath correlations~\cite{breuer2002theory}. The resulting memoryless evolution is therefore intrinsically irreversible. In the thermodynamic context, this irreversibility manifests as the monotonic accumulation of entropy production~\cite{popovic2018entropy, strasberg2019non}, in accordance with the second law.

Time-translation symmetry~\ref{axiom:II} renders the dynamics autonomous relative to the system Hamiltonian -- namely, the dynamics should be insensitive to the choice of time origin of the system's free evolution. Since energetic coherence $\ketbra{j}{j'}_S$ oscillates with Bohr frequency $\epsilon_j - \epsilon_{j'}$, this is equivalent to requiring that the dynamics cannot convert coherences between different oscillation frequencies. Consequently, different coherence modes evolve independently, which is captured by the commutation relation $[\mathcal{L}_t,\mathcal{L}_{H_S}]=0$. Beyond this physical motivation, Axiom~\ref{axiom:II} also emerges naturally in standard physical approximations and leads to profound structural consequences (see also Refs.~\citep[][Sec.~III B]{Marvian2016QuantifyCoherence} and \cite{Lostaglio2017coherence}):
\begin{itemize}
    \item[(i)] In the general theory of open quantum systems, Axiom~\ref{axiom:II} arises as a consequence of the {\it secular approximation} (also known in quantum optics as the rotating-wave approximation), which neglects terms in the Lindbladian that would violate $[\mathcal{L}_t,\mathcal{L}_{H_S}]=0$~\cite{breuer2002theory}.
    \item[(ii)] Under Axiom~\ref{axiom:II}, the channel generated by $\mathcal{L}_t$, $\Lambda_\tau \equiv \overleftarrow{\rm T}\exp(\int_0^\tau \mathcal{L}_t\,{\rm d} t)$, commutes with the free unitary evolution $\mathcal{U}_{S,t'}(\cdot) \equiv e^{-i H_S t'}(\cdot)e^{i H_S t'}$ for all $t'$. This time covariance implies that, for each fixed time $t$, $\Lambda_t$ admits a unitary dilation with an environment $E$ and a joint unitary operator $U$ such that $\Lambda_t(\cdot) = {\rm Tr}_{E}\{U(\cdot\otimes \ketbra{0}{0}_E)U^\dagger\}$ where the global dynamics preserves the total energy $[U, H_S + H_E] = 0$~\citep[][Thm.~25]{marvian2012symmetry}. Therefore, time-translation symmetry intimately connects to an energy-conservation law. 
\end{itemize}

Quantum detailed balance~\ref{axiom:III} identifies thermal equilibrium and generalises the notion of detailed balance in classical thermodynamics, which is known to imply vanishing entropy production~\cite{esposito_second_2011} and time reversibility at thermal equilibrium~\citep[][Thm.~1.2]{kelly1979reversibility}. Axiom~\ref{axiom:III} likewise leads to these fundamental equilibrium properties:
\begin{itemize}
    \item[(i)] To see that the thermal state $\gamma_S\equiv \nE^{-\beta H_S}/{\rm Tr}\{\nE^{-\beta H_S}\}$ is stationary under the axiomatic dynamics, i.e., $\mathcal{L}_t(\gamma_S) = 0$ consider two general system operators $A_S\equiv \sum_{x,y}^{d_S-1} A_S^{x,y}\ketbra{x}{y}_S$ and $B_S\equiv \sum_{x',y'}^{d_S-1} B_S^{y'x'}\ketbra{y'}{x'}_S$. Eq.~(\ref{eq:QDB_L_t}) is equivalent to the relation
    \begin{align}
        {\rm Tr}\{\mathcal{D}_t(A_S\gamma_S)B_S\} = {\rm Tr}\{A_S\mathcal{D}_t(\gamma_S B_S)\},
        \label{eq:QDB_D_with_AB}
    \end{align}
    holding for all $A_S$ and $B_S$. Choosing $A_S = \mathds{1}_S$ and using the trace-preservation condition $\mathcal{D}_t^\dagger(\mathds{1}_S) = 0$ [see Eq.~\eqref{eq:D_t_gen}], we obtain ${\rm Tr}\{\mathcal{D}_t(\gamma_S)B_S\} = 0,\,\forall\, B_S$, and thus, $\mathcal{D}_t(\gamma_S) = 0$. Furthermore, by Axiom~\ref{axiom:II} [Eq.~(\ref{eq:H_S'_commute_H_S})], the unitary part of $\mathcal{L}_t$ satisfies $\mathcal{L}_{H_S'(t)}(\gamma_S) = 0$. Combining these relations yields $\mathcal{L}_t(\gamma_S) = 0$ -- i.e., the corresponding master equation leaves the thermal state invariant. 
    \item[(ii)] The entropy production rate in Markovian quantum processes is $\dot{\Sigma}:=-D(\mathcal{L}_t(\rho_S(t))||\gamma_S)$ where $D(\sigma||\tau):={\rm Tr}\{\sigma(\ln\sigma - \ln\tau)\}$ is the quantum relative entropy~\cite{spohn1978entropy, horowitz2014equivalent}. The stationarity condition $\mathcal{L}_t(\gamma_S)=0$ therefore ensures vanishing entropy production at thermal equilibrium.
    \item[(iii)] Beyond implying vanishing entropy production, Axiom~\ref{axiom:III} also encodes a notion of time reversibility at thermal equilibrium. To see this, we omit the explicit time dependence of $\mathcal{L}_t$ and its unitary contribution $\mathcal{L}_{H_S’}$, and consider the dynamics $\Lambda_t \equiv {\rm e}^{\mathcal{D} t}$ generated by a dissipator $\mathcal{D}$ satisfying Eq.~(\ref{eq:QDB_L_t}). As discussed above, Eq.~(\ref{eq:QDB_L_t}) is equivalent to Eq.~(\ref{eq:QDB_D_with_AB}), which implies  ${\rm Tr}\{A_S\gamma_S\mathcal{D}^\dagger(B_S)\} = {\rm Tr}\{\mathcal{D}^\dagger (A_S)\gamma_S B_S\}$ for all $A_S, B_S$. It immediately follows by repeated application of this identity that ${\rm Tr}\{A_S\gamma_S(\mathcal{D}^n)^\dagger(B_S)\} = {\rm Tr}\{(\mathcal{D}^n)^\dagger (A_S)\gamma_S B_S\}$ for any integer $n\ge 1$. Therefore
    \begin{align}
        {\rm Tr}\{B_SA_S(t)\gamma_S\} = {\rm Tr}\{B_S(t)A_S\gamma_S\},
    \end{align}
    where $X(t)\equiv \Lambda^\dagger_t(X)$ is the time evolution of an operator $X$ in the Heisenberg picture. This relation shows that two-time correlation functions evaluated on the thermal state are invariant under time reversal. The dynamics is thus reversible at thermal equilibrium. Furthermore, in App.~\ref{app:QDB}, we show that Eq.~(\ref{eq:QDB_L_t}) is equivalent to the KMS condition (an a priori different notion of detailed balance) for the dissipator $\mathcal{D}$. This then implies that $\Lambda_t$ coincides with its corresponding Petz recovery map~\cite{petz1986sufficient,petz1988sufficiency} (see Thm.~8 in Ref.~\cite{Alhambra2017Dynamicalmaps}). The relationship between Petz maps and Bayesian time-reversal of quantum channels~\cite{wilde2015recoverability,parzygnat2023axioms,parzygnat2023time} suggests the interpretation of the detailed balance condition as a notion of time reversibility at thermal equilibrium.
\end{itemize}

\subsection{Structure of thermal Lindbladians}\label{sec:structure_thermal_Lindbladian}
We now examine the structure of the thermal Lindbladian $\mathcal{L}_t$, i.e., the constraints that follow from Axioms~\ref{axiom:I}–\ref{axiom:III}. Its unitary part, $\mathcal{L}_{H_S'(t)}$, is straightforward to characterise: by Axiom~\ref{axiom:II} [Eq.~(\ref{eq:H_S'_commute_H_S})], its action is given by the commutator $\mathcal{L}_{H_S'}(\cdot) = -\nI[H_S'(t), \cdot]$, where $H_S'(t)$ is a Hermitian system operator that commutes with $H_S$.
To analyse how time-translation symmetry~\ref{axiom:II} and quantum detailed balance~\ref{axiom:III} constrain the structure of the dissipator $\mathcal{D}_t$, following Ref.~\cite{dann_open_2021}, we introduce the transition operators $\{F_{S}^{jj'}\equiv\ketbra{j}{j'}_S\}_{j,j'=0}^{d_S-1}$, which form an orthonormal basis for operators acting on the system Hilbert space. By writing $\mathcal{A}_t$ in terms of its Kraus operators, i.e., $\mathcal{A}_t(\cdot)\equiv \sum_r K_r(t) (\cdot) K_r^\dagger(t)$, we have
\begin{align}
    \mathcal{A}_t(\cdot) = \sum_{j,j',m,m'=0}^{d_S-1}\xi_{jj'mm'}(t)F_S^{jj'}(\cdot) (F_S^{mm'})^\dagger,
    \label{eq:A_gen}
\end{align}
where
\begin{align}
    \xi_{jj'mm'}(t) \equiv \sum_r {\rm Tr}\left\{(F_S^{jj'})^\dagger K_r(t)\right\}{\rm Tr}\left\{F_S^{mm'} K_r^\dagger(t)\right\}.
    \label{eq:xi_def_from_A}
\end{align}
Accordingly, the dissipator $\mathcal{D}_t$ [Eq.~(\ref{eq:D_t_gen})] can be written as
\begin{align}
    \mathcal{D}_t(\cdot) &= \sum_{j,j',m,m'=0}^{d_S-1}\xi_{jj'mm'}(t)F_S^{jj'}(\cdot)(F_S^{mm'})^\dagger \nonumber\\
    &\quad - \sum_{j,j',m,m'=0}^{d_S-1}\xi_{jj'mm'}(t)\frac{1}{2}[(F_S^{mm'})^\dagger F_S^{jj'}, \cdot]_+.
    \label{eq:D_givenby_xi}
\end{align}
The coefficients $\xi_{jj'mm'}(t)$ therefore completely characterise $\mathcal{D}_t$. In particular, we have the following proposition, providing a complete characterisation of thermal Lindbladians:
\begin{proposition}\label{prop:general_form_xi}
    A generator $\mathcal{L}_t$ is a thermal Lindbladian, satisfying Axioms~\ref{axiom:I}--\ref{axiom:III}, if and only if it is of the form given in Eq.~(\ref{eq:L_gen}) with $[H_S'(t), H_S] = 0$ and its associated coefficients $\xi_{jj'mm'}(t)$ [Eq.~(\ref{eq:xi_def_from_A})] can be written as:
    \begin{align}
        \xi_{jj'mm'}(t) 
        &= \delta(\epsilon_{j'}-\epsilon_{m'} +\epsilon_m - \epsilon_j) \nE^{-\beta(\epsilon_{j}-\epsilon_{j'})/2}\nonumber \\
        &\quad \times \sum_{q,q'=0}^{d_S-1} W_{jj'qq'}(t) W_{mm'qq'}^*(t),
        \label{eq:xi_general_under_axioms}
    \end{align}
    where $\delta(x) = 1$ if $x=0$ and $0$ otherwise, and the coefficients $W_{xx'yy'}(t)$ satisfy $W_{xx'yy'}(t) = W_{x'xy'y}^*(t)$ for all $(x,x',y,y')$. 
\end{proposition}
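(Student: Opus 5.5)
The plan is to read the coefficients $\xi_{jj'mm'}(t)$ as a matrix $\Xi(t)$ indexed by pairs, with rows $(jj')$ and columns $(mm')$. By Eq.~(\ref{eq:xi_def_from_A}), $\Xi=\sum_r k_r k_r^\dagger$ with $(k_r)_{(jj')}={\rm Tr}\{(F_S^{jj'})^\dagger K_r\}$. Hence $\Xi\succeq 0$, and conversely every positive semidefinite $\Xi$ defines a CP map $\mathcal{A}_t$ through Eq.~(\ref{eq:A_gen}). Axiom~\ref{axiom:I} is therefore exactly Eq.~(\ref{eq:L_gen}) with $\Xi\succeq0$. The task reduces to translating Axioms~\ref{axiom:II} and \ref{axiom:III} into conditions on $\Xi$, and then showing that these conditions are equivalent to the factorised form (\ref{eq:xi_general_under_axioms}).

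\textbf{Axiom~\ref{axiom:II}.} By the cited result of Holevo we may take $[H_S',H_S]=0$ and $[\mathcal{A}_t,\mathcal{L}_{H_S}]=0$. Write $\omega_{jj'}=\epsilon_j-\epsilon_{j'}$. Acting with $\mathcal{A}_t$ on $\ketbra{x}{y}_S$ gives
\[
\mathcal{A}_t(\ketbra{x}{y}_S)=\sum_{j,m}\xi_{jxmy}\ketbra{j}{m}_S ,
\]
so that $\bra{x'}\mathcal{A}_t(\ketbra{x}{y})\ket{y'}=\xi_{x'xy'y}$. Covariance then says that $\ketbra{x}{y}$, which has frequency $\omega_{xy}$, is mapped into the frequency-$\omega_{xy}$ sector. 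This holds if and only if $\xi_{jj'mm'}=0$ whenever $\omega_{jj'}\neq\omega_{mm'}$, which produces the factor $\delta(\epsilon_{j'}-\epsilon_{m'}+\epsilon_m-\epsilon_j)$. Consequently $\Xi$ is block diagonal with respect to the Bohr frequencies.

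\textbf{Axiom~\ref{axiom:III}.} Using the equivalent form (\ref{eq:QDB_A_t}), detailed balance reads
\[
\nE^{-\beta\epsilon_y}\xi_{x'xy'y}=\nE^{-\beta\epsilon_{y'}}\xi^*_{xx'yy'} .
\]
Define $\tilde\xi_{jj'mm'}:=\nE^{\beta\omega_{jj'}/2}\xi_{jj'mm'}$. On the support of $\Xi$ we have $\omega_{x'x}=\omega_{y'y}$, and both sides then carry the common weight $\nE^{-\beta(\epsilon_y+\epsilon_{y'})/2}$. The condition therefore becomes the symmetry $\tilde\xi_{x'xy'y}=\tilde\xi^*_{xx'yy'}$, i.e.\ $J\tilde\Xi J=\tilde\Xi$, where $J$ is the antiunitary involution $(Jv)_{(jj')}=v^*_{(j'j)}$. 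Because the rescaling factor is constant on each frequency block, $\tilde\Xi=D^{1/2}\Xi D^{1/2}$ with $D={\rm diag}(\nE^{\beta\omega_{jj'}})>0$. Hence $\tilde\Xi$ is still positive semidefinite and block diagonal.

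\textbf{Forward direction.} Take $W:=\tilde\Xi^{1/2}$, the unique positive square root. Since $J\tilde\Xi J=\tilde\Xi$ and $J(\cdot)J$ preserves positivity, uniqueness gives $JWJ=W$. Written in components this is precisely $W_{xx'yy'}=W^*_{x'xy'y}$. Then $\tilde\Xi=WW^\dagger$, i.e.\ $\tilde\xi_{jj'mm'}=\sum_{qq'}W_{jj'qq'}W^*_{mm'qq'}$. Multiplying back by $\nE^{-\beta\omega_{jj'}/2}$ and inserting the $\delta$ factor, which is harmless since $\Xi$ is already block diagonal, yields Eq.~(\ref{eq:xi_general_under_axioms}).

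\textbf{Converse.} Start from any $W$ with $JWJ=W$. The matrix $WW^\dagger\succeq0$, and so is its Hadamard product with the frequency-block indicator matrix $[\delta(\omega_{jj'}-\omega_{mm'})]$. The latter is positive semidefinite, being a direct sum of all-ones blocks, so the Schur product theorem applies. Conjugating by $D^{-1/2}$ preserves positivity, so $\Xi\succeq0$ and Axiom~\ref{axiom:I} holds. The $\delta$ factor gives covariance, and hence Axiom~\ref{axiom:II} together with $[H_S',H_S]=0$. Finally, $J$-invariance of $WW^\dagger$ (and of the real, $J$-symmetric block mask) reverses the computation above to give Eq.~(\ref{eq:QDB_A_t}), and hence Axiom~\ref{axiom:III} via the App.~\ref{app:QDB} equivalence.

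\textbf{Main obstacle.} The delicate step is the interplay between the exponential weights and positivity. The rescaled $\tilde\Xi$ is Hermitian and positive semidefinite only because, under time-translation symmetry, the weight $\nE^{\beta\omega/2}$ is constant on each block. I will also need to check carefully that the index convention for $J$ matches the stated symmetry of $W$, and that the square root inherits the antiunitary symmetry.
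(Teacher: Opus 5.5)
Your proposal is correct and follows essentially the paper's proof. Your $\tilde\Xi$ is the paper's $\bm{\zeta}$, your $J$-invariance is the paper's $\bm{\zeta}=\bm{S}\bm{\zeta}^*\bm{S}$, and $W=\tilde\Xi^{1/2}$ is the paper's choice with trivial gauge $\bm{U}=\mathds{1}$, with uniqueness of the positive square root doing the same work. Two small remarks: the congruence should read $D=\mathrm{diag}(\mathrm{e}^{\beta\omega_{jj'}/2})$, since your exponent is off by a factor of two (this does not affect the positivity argument), and your Schur-product argument for positivity of $\Xi$ in the converse supplies a step that the paper only asserts to be straightforward.
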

\begin{proof}
    See App.~\ref{app:proof_of_proposition_xi}, where the explicit construction of $W_{xx'yy'}(t)$ for a given $\xi_{jj'mm'}(t)$ is provided.
\end{proof}
Eq.~(\ref{eq:xi_general_under_axioms}) provides the most general form of $\xi_{jj'mm'}(t)$ -- and thus $\mathcal{L}_t$ -- under the Axioms~\ref{axiom:I}--\ref{axiom:III}. 
It is worth noting that the coefficients $W_{xx'yy'}(t)$ in Eq.~(\ref{eq:xi_general_under_axioms}) possess a gauge freedom that leaves $\mathcal{L}_t$ invariant. Specifically, arranging the coefficients $W_{xx'yy'}(t)$ into a matrix $\bm{W}(t)\in\mathbb{C}^{d_S^2\times d_S^2}$ by treating $(x,x')$ and $(y,y')$ as row and column indices, we find that the coefficient $\xi_{jj'mm'}(t)$ remains invariant under the transformation $\bm{W}'(t)\rightarrow \bm{W}(t)\bm{U}(t)$, where $\bm{U}(t)\in\mathbb{C}^{d_S^2\times d_S^2}$ is a unitary matrix satisfying the same symmetry constraints as $\bm{W}(t)$. The significance of this gauge freedom will become apparent in Sec.~\ref{sec:constructing_ECTCM}, where we employ it for the construction of ECTCMs that yield master equations satisfying Axioms~\ref{axiom:I}–\ref{axiom:III}.

While Prop.~\ref{prop:general_form_xi} provides a complete axiomatic characterisation of Markovian quantum thermodynamics, it is, at this point, not clear if every master equation of this type can be understood as stemming from an underlying thermodynamically consistent system--environment interaction. To demonstrate that this is indeed the case, we now introduce a microscopic description of Markovian system dynamics based on explicit system--environment interactions. This framework will offer a physically transparent description of thermal processes and provide a concrete class of dynamics satisfying all three axioms.

\section{Microscopic description}\label{sec:microscopic_description}
While the thermodynamic axioms carry rich physical significance and identify the class of thermodynamically consistent generators, they do not by themselves reveal the microscopic mechanisms underlying the dynamics (given by a system--environment Hamiltonian $H_{SE}$). To be thermodynamically consistent, such an underlying model must be strictly energy conserving and the environment state initially thermal (see below for a detailed discussion), while Markovianity of the system dynamics is imposed through standard weak-coupling arguments~\cite{breuer2002theory}. It turns out, though, that these arguments break down for thermodynamically consistent $H_{SE}$ (see below), necessitating an alternative strategy to investigate the microscopic approach to Markovian thermodynamics. 

A widely employed, transparent framework  to derive and simulate Markovian open-system dynamics is provided by collision models, which simulate the noisy evolution of a system through repeated interactions with its environment, moderated by a Hamiltonian $H_{SE}$ and its induced unitary evolution operator (see Fig.~\ref{fig:collision_model})~\cite{brun2002simple,ziman2005description,Grimmer2016Open,Cattaneo2021CMcansimulate,ciccarello2022quantum,cusumano2022quantum}. 
\begin{figure}
    \centering
    \includegraphics[width=\linewidth]{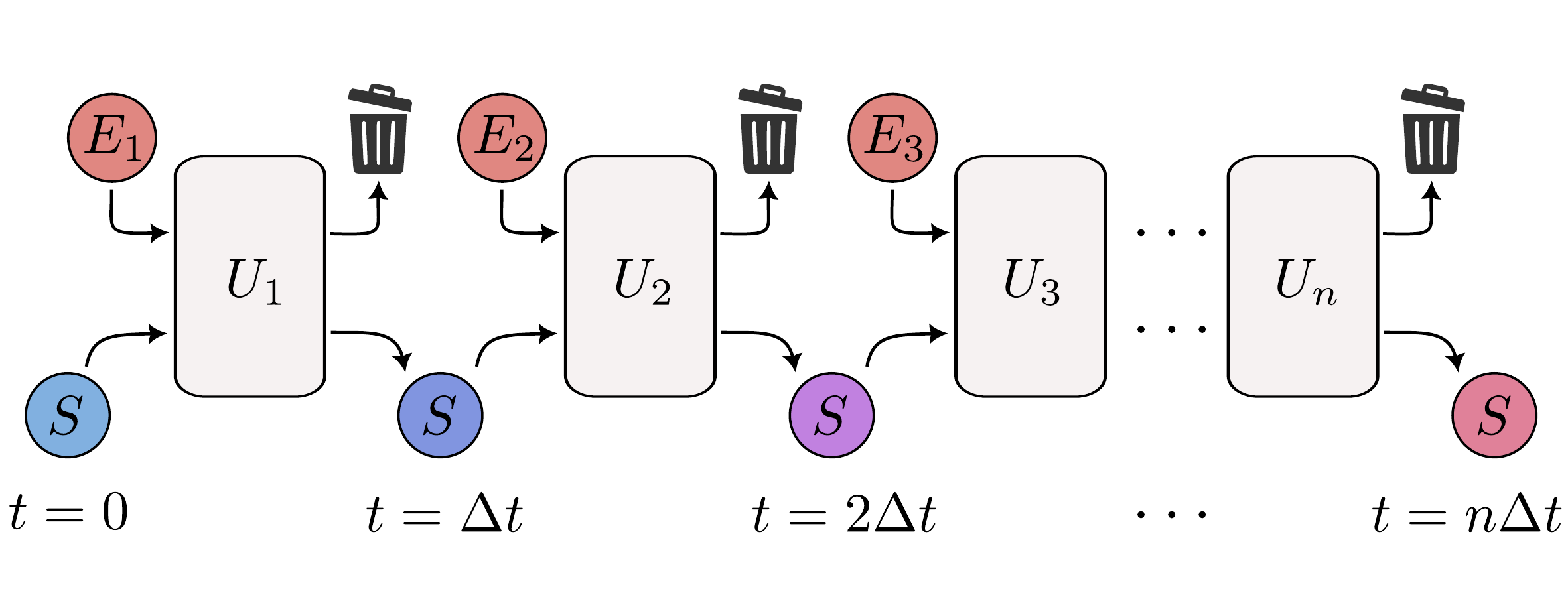}
    \caption{Quantum collision models. The system $S$ (blue discs) sequentially interacts with environmental ancillae (red discs) $\{E_i\}_{i=1}^n$ inside an interaction box representing the system–ancilla unitary $U_{i}$. After each collision occurring over a time interval $\Delta t$ the ancilla is discarded, resulting in memoryless dynamics, while the system gradually thermalises (transitions from blue to red).}
    \label{fig:collision_model}
\end{figure}
In particular, unlike conventional approaches, they offer a route to deriving Markovian master equations from strictly energy-conserving interactions. This naturally raises the question of whether a suitably constrained collision model can reproduce every thermodynamically admissible generator, thus providing a microscopic model for all thermal Lindbladians. Here, we answer this question in the affirmative by introducing thermodynamically consistent collision models and by fully characterising the resulting master equation. 

To do so, in Sec.~\ref{sec:EC}, we first introduce the energy-conservation condition, a canonical thermodynamic constraint reflecting the first law of thermodynamics. In Sec.~\ref{sec:ECTCM}, we consider collision models with thermal ancillae and energy-conserving interactions (see Fig.~\ref{fig:collision_model}), and show in Sec.~\ref{sec:ME_from_CM}  that the corresponding master equations are thermal, i.e., they satisfy Axioms~\ref{axiom:I}–\ref{axiom:III}. Finally, in Sec.~\ref{sec:constructing_ECTCM} we complete the equivalence between the axiomatic and the microscopic approaches by demonstrating that \textit{every} thermal master equation admits a corresponding thermodynamically consistent collision model.

\subsection{Energy conservation condition}\label{sec:EC}
We begin by deriving the restrictions on system--environment Hamiltonians stemming from the first law. In the framework of open quantum systems, a system $S$ is interacting with an environment $E$ with the composite Hamiltonian given by
\begin{align}
    H_{SE} = H_S\otimes\mathds{1}_E + \mathds{1}_S\otimes H_E + V,
\end{align}
where $\mathds{1}$ denotes the identity operator and $H_{S(E)}$ is the system (environment) Hamiltonian on the Hilbert space $\mathcal{H}_{S(E)}$, while $V$ is the interaction Hamiltonian on the composite space $SE$. We assume that $H_{S(E)}$ and $V$ are time-independent (explicit time-dependence will be introduced below). Given an initial state $\rho_{SE}(0)$, the state $\rho_{SE}(t)$, evolved under $H_{SE}$ until time $t$, is given by
\begin{align}
    \rho_{SE}(t) = U(t)\rho_{SE}(0)U^\dagger(t),
\end{align}
where $U(t) := \nE^{-\nI H_{SE} t}$. Thermodynamically consistent dynamics must respect the first law. That is, the energy of the local systems must be conserved under time evolution:
\begin{align}
    \langle H_S+H_E \rangle_{\rho_{SE}(t)} = \langle H_S+H_E \rangle_{\rho_{SE}(0)},
    \label{eq:H_S+H_E_conservation}
\end{align}
where $\langle X \rangle_\rho := {\rm Tr}\{X\rho\}$ is the expectation value of the operator $X$ for the state $\rho$. Here and throughout this work, we adopt the convention that local operators are implicitly tensored with the identity on the complementary subsystem.
Since Eq.~(\ref{eq:H_S+H_E_conservation}) should hold for all initial states $\rho_{SE}(0)$ and all times $t$, we have
\begin{align}
    \left[U(t), H_S+H_E\right] = 0,\quad \forall\, t.
    \label{eq:energy_conservation_U(t)}
\end{align}
Defining the thermal state $\gamma_{S(E)}$  at a finite inverse temperature $\beta < \infty$ as $\gamma_{S(E)} \equiv \nE^{-\beta H_{S(E)}}/{\rm Tr}\{\nE^{-\beta H_{S(E)}}\}$, it is easy to see that Eq.~(\ref{eq:energy_conservation_U(t)}) implies that $U(t)$ preserves thermal equilibrium, i.e., 
\begin{align}
    U(t)(\gamma_S\otimes\gamma_E)U^\dagger(t) = \gamma_S\otimes\gamma_E,\,\forall\, t.
    \label{eq:gamma_preservation_U(t)}
\end{align}
Conversely, if the eigenvalues of $H_S+H_E$ are taken, without loss of generality, to be non-negative, then Eq.~(\ref{eq:gamma_preservation_U(t)}) implies Eq.~(\ref{eq:energy_conservation_U(t)}). This follows by taking the principal logarithm of both sides of Eq.~(\ref{eq:gamma_preservation_U(t)}) (see Ref.~\citep[][Sec.~I B]{lostaglio2019introductory}). Therefore, in this open-system setting, energy conservation is equivalent to the preservation of thermal equilibrium. This relation is in fact intuitive: if the dynamics were able to drive the system out of equilibrium, external work, lower bounded by the corresponding nonequilibrium free energy~\cite{esposito_second_2011}, would necessarily be supplied during the process, thereby violating energy conservation.

We further note that Eq.~(\ref{eq:energy_conservation_U(t)}) equivalently imposes an energy conservation condition on the interaction Hamiltonian $V$:
\begin{align}
    \left[V, H_S+H_E\right] = 0.
    \label{eq:energy_conservation}
\end{align}
In order to see how energy conservation constrains the structure of $V$ (which is the relevant term for the ``collision" of the system with the environment), we consider the eigendecomposition of $H_S$ and $H_E$.
\begin{align}
    H_S &= \sum_j \epsilon_j \ketbra{j}{j}_S,\\ H_E &= \sum_k e_k \ketbra{k}{k}_E,
\end{align}
where $\{\ket{j}_S\}_j$ and $\{\ket{k}_E\}_k$ are orthonormal eigenbases of $H_S$ and $H_E$, respectively. Either Hamiltonian may generally contain energetic degeneracies.
Here, for simplicity, we do not specify the dimensions of $H_S$ and $H_E$, which can generally differ. We then have the following characterisation of energy-conserving interactions:
\begin{lemma}[\cite{son2024hierarchy}]
    Consider a general interaction Hamiltonian
    \begin{align}
        V = \sum_{j,j',k,k'} V_{jkj'k'} \ketbra{j}{j'}_S\otimes\ketbra{k}{k'}_E,
    \end{align}
    where $V_{jkj'k'} \equiv \bra{j}_S\otimes\bra{k}_E V \ket{j'}_S\otimes\ket{k'}_E$. $V$ satisfies energy conservation [Eq.~(\ref{eq:energy_conservation})] if and only if for all $(j,j',k,k')$,
    \begin{align}
    V_{jkj'k'} =  \delta(\epsilon_{j'}-\epsilon_j + e_{k'} - e_k) V_{jkj'k'}.
    \label{eq:V_under_EC}
\end{align}
\end{lemma}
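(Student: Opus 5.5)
The plan is to compute the commutator $[V, H_S + H_E]$ directly in the product eigenbasis $\{\ket{j}_S\otimes\ket{k}_E\}$. Both $H_S$ and $H_E$ are diagonal in this basis, so $H_S+H_E$ is diagonal with entries $\epsilon_j + e_k$:
\begin{align}
    H_S + H_E = \sum_{j,k} (\epsilon_j + e_k)\,\ketbra{j}{j}_S\otimes\ketbra{k}{k}_E .
\end{align}
Since $\{\ket{j}_S\otimes\ket{k}_E\}$ is an orthonormal basis of $\mathcal{H}_S\otimes\mathcal{H}_E$, the operators $\ketbra{j}{j'}_S\otimes\ketbra{k}{k'}_E$ form a basis of operators on the composite space. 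Hence $V$ is uniquely determined by its matrix elements $V_{jkj'k'}$, and an operator vanishes if and only if all of its matrix elements vanish.

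The key step is to compute the matrix elements of the commutator. Using the diagonal form above, multiplying $V$ on the left (right) by $H_S+H_E$ multiplies the $(jk,j'k')$ element by $\epsilon_j+e_k$ (respectively $\epsilon_{j'}+e_{k'}$). This gives
\begin{align}
    \bra{j}_S\otimes\bra{k}_E\,[V, H_S+H_E]\,\ket{j'}_S\otimes\ket{k'}_E = (\epsilon_{j'} - \epsilon_j + e_{k'} - e_k)\, V_{jkj'k'} .
\end{align}
Therefore Eq.~(\ref{eq:energy_conservation}) holds if and only if $(\epsilon_{j'}-\epsilon_j+e_{k'}-e_k)V_{jkj'k'}=0$ for all $(j,j',k,k')$.

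It remains to show that this is equivalent to Eq.~(\ref{eq:V_under_EC}), which I would do case by case. If $\epsilon_{j'}-\epsilon_j+e_{k'}-e_k=0$, then $\delta(\cdot)=1$ and Eq.~(\ref{eq:V_under_EC}) is trivially true; the commutator condition also holds automatically. If $\epsilon_{j'}-\epsilon_j+e_{k'}-e_k\neq 0$, the commutator condition forces $V_{jkj'k'}=0$, while Eq.~(\ref{eq:V_under_EC}) with $\delta(\cdot)=0$ reads $V_{jkj'k'}=0$. The two conditions therefore coincide element-wise, proving both directions.

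There is no real obstacle here. The only point to handle carefully is degeneracies in $H_S$ or $H_E$. They cause no difficulty: we have fixed one orthonormal eigenbasis for each Hamiltonian, and degenerate levels simply produce additional index pairs with $\delta(\cdot)=1$.
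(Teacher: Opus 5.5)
Your proposal is correct and follows essentially the same route as the paper. The paper also expands $[V, H_S+H_E]$ in the product eigenbasis to obtain the coefficients $(\epsilon_{j'}+e_{k'}-\epsilon_j-e_k)V_{jkj'k'}$, and then uses the linear independence of the operators $\ketbra{j}{j'}_S\otimes\ketbra{k}{k'}_E$ to conclude element-wise, exactly as your matrix-element argument and case split do.
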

\begin{proof}
The energy conservation condition [Eq.~(\ref{eq:energy_conservation})] reads
    \begin{align}
        &\quad [V, H_S+H_E] \nonumber\\
        &= \sum_{j,j',k,k'} V_{jkj'k'}[\ketbra{j}{j'}_S\otimes\ketbra{k}{k'}_E, H_S+H_E]\\
        &= \sum_{j,j',k,k'} V_{jkj'k'}\left(\epsilon_{j'}+e_{k'}-\epsilon_j - e_k\right)\ketbra{j}{j'}_S\otimes\ketbra{k}{k'}_E\\
        &= 0.
    \end{align}
    From the above, the `if' direction is straightforward. To prove the `only if' direction, we note that
    since the operators $\{\ketbra{j}{j'}_S\otimes\ketbra{k}{k'}_E\}_{j,j',k,k'}$ are linearly independent, the last equality above is equivalent to
    \begin{align}
         V_{jkj'k'}\left(\epsilon_{j'}+e_{k'}-\epsilon_j - e_k\right) = 0,\quad \forall\, (j,j',k,k').
    \end{align}
    Thus, $V_{jkj'k'} = 0$ for $(j,j',k,k')$ satisfying $\epsilon_{j'}+e_{k'}-\epsilon_j - e_k \neq 0$, which leads to $V_{jkj'k'} =  \delta(\epsilon_{j'}-\epsilon_j + e_{k'} - e_k) V_{jkj'k'}$ for all $(j,j',k,k')$.
\end{proof}
The energy-conservation condition imposes a specific block structure on the interaction Hamiltonian $V$, which in turn constrains the system dynamics resulting from the corresponding collision model. A natural question is whether  this constraint is sufficiently restrictive to ensure that the induced dynamics satisfy Axioms~\ref{axiom:I}–\ref{axiom:III}, thereby yielding a thermal Lindbladian. As we will show in Sec.~\ref{sec:ME_from_CM}, collision models consisting of thermal ancillae and energy-conserving interactions indeed always generate thermal Lindbladians. Before proceeding, however, it is instructive to note that the \textit{standard} microscopic derivation of a Markovian master equation~\cite{breuer2002theory} breaks down for energy-conserving interaction Hamiltonians and can thus not be employed for the analysis of Markovian thermodynamics from a microscopic model.

Textbook derivations of Markovian master equations are carried out in the interaction picture defined by $U_0(t)\equiv\nE^{-\nI (H_S+H_E) t}$, in which the Redfield equation is given by~\cite{breuer2002theory}
\begin{align}
    \frac{{\rm d}\rho_{S, {\rm I}}(t)}{{\rm d} t} &= -{\rm i}\mathrm{Tr}_E\{[V_{\rm I}(t), \rho_{S}(0)\otimes\rho_E]\}\nonumber\\
    &\quad -\int_0^t {\rm d} s \,\mathrm{Tr}_E \left\{\left[V_{\rm I}(t), [V_{\rm I}(s), \rho_{S,{\rm I}}(t)\otimes \rho_E]\right]\right\},
\end{align}
where $\rho_{S,{\rm I}}(t)\equiv {\rm Tr}_E\{\nE^{-\nI V t}\rho_{SE}(0)\nE^{\nI V t}\}$ and $V_{\rm I}(t)\equiv U_0^\dagger(t) V U_0(t)$. To obtain a Markovian master equation, the next step is to substitute $s$ by $t-s$ and let the upper limit $t\rightarrow \infty$~\cite{breuer2002theory}. However, 
with energy conservation requiring $[V, H_S+H_E]=0$, $V$ is stationary in the interaction picture, i.e. $V_{\rm I}(t) = V,\,\forall\, t$. Setting $t\rightarrow \infty$ will thus lead to a divergence. Therefore, it remains unclear how a Markovian master equation can be obtained from an energy-conserving interaction through the standard microscopic derivation. 

The above difficulty can be overcome using the collision-model approach, which provides a universal framework for simulating arbitrary Markovian dynamics~\cite{Cattaneo2021CMcansimulate,ciccarello2022quantum,cusumano2022quantum}. As we will show, by coupling the system sequentially to thermal ancillae through energy-conserving interactions, one can derive a well-defined Markovian master equation whose generator satisfies all the thermodynamic axioms introduced in previous section and is therefore a thermal Lindbladian.

\subsection{Energy-conserving thermal collision models}\label{sec:ECTCM}
In a collision model, the system sequentially interacts (collides) with a stream of ancillae. After colliding with the system over a time $\Delta t$, each ancilla is discarded, effectively making the system evolution memoryless. The resulting Markovian master equation is obtained in the limit $\Delta t \to 0$ (see below). After $n$ collisions, the system dynamics is described by the following map~\cite{ciccarello2022quantum}:
\begin{align}
    \mathcal{E}_n(\cdot) &:= {\rm Tr}_{E_n}\left\{U_n\dots {\rm Tr}_{E_2}\left\{U_2{\rm Tr}_{E_1}\left\{U_1(\cdot\otimes \rho_{E_1})U_1^\dagger\right\}\right.\right. \nonumber\\
    &\quad \left.\left. \otimes\rho_{E_2}U_2^\dagger\right\}\dots \otimes \rho_{E_n}U_n^\dagger\right\},
    \label{eq:collision_map_E_n}
\end{align}
where each unitary $U_i \equiv \nE^{-\nI H_{SE_i} \Delta t}$ acts on the system and ancilla $i$ for $i=1,2,\dots,n$ with 
\begin{align}
    H_{SE_i} &= H_S + H_{E_i} + gV_{SE_i} + D_{S,i} \label{eq:ECTCM_H_SE}\\
    &= H_{S,i}' + H_{E_i} + gV_{SE_i},
\end{align}
where we explicitly include a driving $D_{S,i}$ of the system such that $H_{S,i}'\equiv H_S + D_{S,i}$ and
the coefficient $g$ denotes the system--ancilla coupling strength. 
We remark that although the driving term $D_{S,i}$ and the system–ancilla interaction $V_{SE_i}$ are time-independent within each collision step, they are allowed to vary between different ancillae. This variation induces the time dependence in the master equation derived in the next subsection.
A pictorial representation of collision models is provided in Fig.~\ref{fig:collision_model}.

In accordance with the general energy-conservation considerations of the previous subsection, we impose the following conditions to ensure that the collision model is thermodynamically consistent:
\begin{microscopicbox}[boxsep=0pt,left=0.5em,right=1em]
\begin{center}
    {\fontsize{10pt}{12pt}\selectfont {\bf Energy-Conserving Thermal Collision Models (ECTCMs)}}
\end{center}

\begin{enumerate}[label=(\arabic*),leftmargin=2em]
    \item\label{CM:1} The driving $D_{S,i}$ satisfies  energy conservation:
    \begin{align}[D_{S,i}, H_S] = 0.\end{align}
    \item\label{CM:2} The system--ancilla interaction $V_{SE_i}$ satisfies energy conservation:
    \begin{align}[V_{SE_i}, H_S+H_{E_i}]=0.\end{align}
    \item\label{CM:3} All initial ancilla states are thermal:
    \begin{align}\rho_{E_i}=\gamma_{E_i} \equiv \nE^{-\beta H_{E_i}}/{\rm Tr}\{\nE^{-\beta H_{E_i}}\},\end{align}
    for a fixed $\beta$.
    \item\label{CM:4} The system--ancilla interaction $gV_{SE_i}$ does not require extra energy input to implement during any collision $i$:
    \begin{align}
        \langle gV_{SE_i} \rangle_{\rho_{SE_i}(s)} = \langle D_{S,i} \rangle_{\rho_S(0)} - \langle D_{S,i} \rangle_{\rho_{S}(s)}, \hspace{-1.2em}
    \end{align}$\forall\, \rho_{S}(0),\,\forall\, s\in[0,\Delta t]$, where, $\rho_S(0)$ is the initial state at the beginning of the $i$th collision, $\rho_{SE_i}(s)\equiv U_i(s)(\rho_S(0)\otimes\rho_{E_i})U_i^\dagger(s)$ with $U_i(s) \equiv \nE^{-\nI H_{SE_i} s}$, and $\rho_S(s)\equiv {\rm Tr}_{E_i}\{\rho_{SE_i}(s)\}$.
\end{enumerate}
\end{microscopicbox}
According to Conditions~\ref{CM:1} and \ref{CM:2}, it is straightforward to see that the unitary operator $U_i$ is energy-conserving in the sense discussed above, i.e., $[U_i, H_S + H_{E_i}] = 0$. We call the collision models satisfying Conditions~\ref{CM:1}--\ref{CM:4} {\it energy-conserving thermal collision models} (ECTCMs).

The four conditions can be interpreted as follows: Conditions~\ref{CM:1} and \ref{CM:2} ensure that neither the driving $D_{S,i}$ nor the system--ancilla interaction $V_{SE_i}$ alter the energy of the local systems individually (i.e., w.r.t. $H_S$ and $H_S+H_E$). Condition~\ref{CM:3} requires each ancilla to be prepared in a thermal state, thereby modelling a thermal environment. Finally, Condition~\ref{CM:4} imposes a stronger notion of energy conservation by requiring that any energy cost associated with implementing the interaction be supplied entirely by the driving term during the collision. We furthermore have the following equivalent formulation of Condition~\ref{CM:4}:
\begin{lemma}\label{lem:CM_4}
    Under Conditions~\ref{CM:1}, \ref{CM:2}, and \ref{CM:3}, Condition~\ref{CM:4} is equivalent to
    \begin{align}
        {\rm Tr}_{E_i}\{V_{SE_i}\gamma_{E_i}\} = 0.
        \label{eq:CM_4_vanishing_LS}
    \end{align}
\end{lemma}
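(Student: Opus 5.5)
The plan is to use conservation of the full collision Hamiltonian $H_{SE_i}$ to turn Condition~\ref{CM:4}, which involves all times $s\in[0,\Delta t]$, into a condition at $s=0$ only, and then read that condition off as Eq.~(\ref{eq:CM_4_vanishing_LS}).

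\emph{Step 1: two conserved quantities.} Since $U_i(s)=\nE^{-\nI H_{SE_i}s}$ commutes with $H_{SE_i}$, the expectation $\langle H_{SE_i}\rangle_{\rho_{SE_i}(s)}$ is independent of $s$. Conditions~\ref{CM:1} and \ref{CM:2} give $[H_{SE_i},H_S+H_{E_i}]=0$. Indeed, $H_S+H_{E_i}$ commutes with itself, with $gV_{SE_i}$ by \ref{CM:2}, and with $D_{S,i}$ by \ref{CM:1}, because $D_{S,i}$ acts only on $S$ and commutes with $H_S$. Hence $\langle H_S+H_{E_i}\rangle_{\rho_{SE_i}(s)}$ is also conserved. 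Subtracting the two conserved quantities shows that
\[
\langle gV_{SE_i}\rangle_{\rho_{SE_i}(s)}+\langle D_{S,i}\rangle_{\rho_S(s)}
\]
is independent of $s$. Here I use that $D_{S,i}$ is local to $S$, so its expectation in $\rho_{SE_i}(s)$ equals $\langle D_{S,i}\rangle_{\rho_S(s)}$. Therefore, for every initial $\rho_S(0)$ and every $s$,
\[
\langle gV_{SE_i}\rangle_{\rho_{SE_i}(s)}-\bigl(\langle D_{S,i}\rangle_{\rho_S(0)}-\langle D_{S,i}\rangle_{\rho_S(s)}\bigr)=\langle gV_{SE_i}\rangle_{\rho_S(0)\otimes\gamma_{E_i}},
\]
where Condition~\ref{CM:3} fixes the initial ancilla state to be $\gamma_{E_i}$.

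\emph{Step 2: reduction to $s=0$.} By the identity above, Condition~\ref{CM:4} holds for all $s$ and all $\rho_S(0)$ if and only if $\langle gV_{SE_i}\rangle_{\rho_S(0)\otimes\gamma_{E_i}}=0$ for all $\rho_S(0)$. The forward direction is simply Condition~\ref{CM:4} evaluated at $s=0$. Conversely, if the right-hand side vanishes, the identity gives Condition~\ref{CM:4} at every $s$.

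\emph{Step 3: partial trace.} We have
\[
\langle gV_{SE_i}\rangle_{\rho_S\otimes\gamma_{E_i}}=g\,{\rm Tr}\{\rho_S\,{\rm Tr}_{E_i}\{V_{SE_i}\gamma_{E_i}\}\}.
\]
Assuming $g\neq 0$ (otherwise there is no interaction and both sides are trivially satisfied), this vanishes for all density matrices $\rho_S$ if and only if ${\rm Tr}_{E_i}\{V_{SE_i}\gamma_{E_i}\}=0$. For the nontrivial direction, note that density matrices span all operators on $\mathcal{H}_S$, so an operator with vanishing trace against every state is zero.

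I expect no real obstacle. The only point needing care is Step 1: one must check that \ref{CM:1} and \ref{CM:2} make $H_S+H_{E_i}$ conserved separately from $H_{SE_i}$. This separate conservation is what isolates $gV+D_{S,i}$ as the conserved combination.
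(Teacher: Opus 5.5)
Your proof is correct and follows the paper's argument essentially step for step: conservation of both $H_{SE_i}$ and $H_S+H_{E_i}$ (the latter from Conditions~\ref{CM:1} and \ref{CM:2}) makes $\langle gV_{SE_i}+D_{S,i}\rangle$ constant in $s$, which reduces Condition~\ref{CM:4} to its $s=0$ instance $\langle V_{SE_i}\rangle_{\rho_S(0)\otimes\gamma_{E_i}}=0$ for all $\rho_S(0)$, i.e.\ ${\rm Tr}_{E_i}\{V_{SE_i}\gamma_{E_i}\}=0$. One correction to your aside: for $g=0$ Condition~\ref{CM:4} holds trivially but ${\rm Tr}_{E_i}\{V_{SE_i}\gamma_{E_i}\}=0$ need not hold, so the equivalence genuinely requires $g\neq 0$; the paper assumes this implicitly, since $g\propto(\Delta t)^{-1/2}$.
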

\begin{proof}
    Since the total energy is conserved during any collision (generically labelled by $i$), we have
    \begin{align}
        \frac{{\rm d}}{{\rm d} s}\langle H_S + H_{E_i} + gV_{SE_i} + D_{S,i} \rangle_{\rho_{SE_i}(s)} = 0.
    \end{align}
    By Conditions~\ref{CM:1} and \ref{CM:2}, $[H_{SE_i}, H_S+H_{E_i}]=0$. Thus, $\frac{{\rm d}}{{\rm d} s}\langle H_S + H_{E_i} \rangle_{\rho_{SE_i}(s)} = 0$, which yields
    \begin{align}
        \frac{{\rm d}}{{\rm d} s}\langle gV_{SE_i} + D_{S,i} \rangle_{\rho_{SE_i}(s)} = 0.
    \end{align}
    Condition~\ref{CM:4} is thus equivalent to $\langle V_{SE_i} \rangle_{\rho_S(0)\otimes\rho_{E_i}} = 0,\,\forall\,\rho_S(0)$, where $\rho_S(0)$ is the initial state at the beginning of the $i$th collision and $\rho_{E_i}$ is the state of the fresh ancilla state. By Condition~\ref{CM:3}, it becomes $\langle V_{SE_i} \rangle_{\rho_S(0)\otimes\gamma_{E_i}} = 0,\,\forall\,\rho_S(0) \Leftrightarrow {\rm Tr}_{E_i}\{V_{SE_i}\gamma_{E_i}\} = 0$.
\end{proof}
As we will show in next subsection, Eq.~(\ref{eq:CM_4_vanishing_LS}) corresponds to a vanishing Lamb shift in the master equation obtained from the collision model. In existing collision-model derivations, this condition is introduced as a technical assumption to guarantee a well-defined continuous-time limit~\cite{ciccarello2022quantum,cusumano2022quantum}. Here we show that it admits a natural microscopic interpretation (Lem.~\ref{lem:CM_4}), allowing it to be elevated from an auxiliary assumption to a defining thermodynamic principle of ECTCMs.

\subsection{Thermal master equation from ECTCMs}\label{sec:ME_from_CM}
We now follow the standard derivation of Markovian master equations from collision models~\cite{ciccarello2022quantum} to derive the master equation corresponding to the \textit{energy-conserving} collision model introduced above. In particular, we show that the resulting master equations are thermal, i.e., satisfy Axioms~\ref{axiom:I}–\ref{axiom:III} (see Prop.~\ref{prop:L_CM_=>_L}).

During a small collision time $\Delta t$, the unitary operator $U_i$ in Eq.~(\ref{eq:collision_map_E_n}) can be approximated as follows:
\begin{align}
    U_i &\equiv \exp[-\mathrm{i}(H_{S,i}'+H_{E_i}+gV_{SE_i})\Delta t] \\
    &\simeq \mathds{1} - \mathrm{i}(H_{S,i}'+H_{E_i}+gV_{SE_i})\Delta t - \frac{1}{2}V_{SE_i}^2(g\Delta t)^2.
\end{align}
Since we seek a finite dissipative evolution in the limit $\Delta t\rightarrow 0$, the interaction strength is rescaled according to $g\propto (\Delta t)^{-1/2}$. This is the standard scaling yielding a nontrivial Markovian generator, as it keeps the second-order interaction contribution finite in the continuous-time limit~\cite{ciccarello2022quantum}. 
Consequently, the energy scales associated with $H_{S,i}’$ and $H_{E_i}$ are much smaller than that associated with the interaction term $gV_{SE_i}$, rendering second-order terms in $\Delta t$ that are not quadratic in $gV_{SE_i}$ negligible. We then have
\begin{align}
    \rho_{SE_i} &\equiv U_i \rho_{SE,i-1} U_i^\dagger \\
    &= \rho_{SE,i-1} - \mathrm{i}[H_{S,i}'+H_{E_i}+gV_{SE_i}, \rho_{SE,i-1}]\Delta t \nonumber \\
    & + \left(V_{SE_i}\rho_{SE,i-1}V_{SE_i} - \frac{1}{2}[V_{SE_i}^2, \rho_{SE,i-1}]_+\right)(g\Delta t)^2,
\end{align}
where $\rho_{SE,i-1}\equiv \rho_{S,i-1}\otimes\rho_{E_i}$ and $\rho_{SE_i}$ are the joint states before and after the $i$th collision, respectively, while $\rho_{S, i}$ is the system state after the $i$th collision, i.e., $\rho_{S,i}\equiv {\rm Tr}_{E_{i}}\{\rho_{SE_{i}}\}$. We have
\begin{align}
    \frac{\Delta \rho_{S,i}}{\Delta t} &= -\mathrm{i}[H_{S,i}'+g\mathrm{Tr}_E\{V_{SE_i}\rho_{E_i}\}, \rho_{S,i-1}] \nonumber\\
    &\quad + (g^2\Delta t)\mathrm{Tr}_{E_i}\left\{V_{SE_i}\rho_{S,i-1}\otimes\rho_{E_i} V_{SE_i}\right\} \nonumber\\
    &\quad - (g^2\Delta t){\rm Tr}_{E_i}\left\{\frac{1}{2}[V_{SE_i}^2, \rho_{S,i-1}\otimes\rho_{E_i}]_+\right\},
    \label{eq:CM_discrete_ME}
\end{align}
where $\Delta \rho_{S,i}\equiv \rho_{S,i} - \rho_{S,i-1}$.
For simplicity, we restrict our attention to identical ancillae. As demonstrated in Sec.~\ref{sec:constructing_ECTCM}, this already suffices to reproduce the full class of Markovian thermal dynamics satisfying Axioms~\ref{axiom:I}--\ref{axiom:III}. The description in terms of non-identical ancillae (e.g. to explicitly model environmental time-dependence) is equally straightforward. By Condition~\ref{CM:3}, 
\begin{align}
    \rho_{E} = \gamma_E = \sum_k p_{E}^{(k)} \ketbra{k}{k}_{E},
    \label{eq:rho_E_def}
\end{align}
with $p^{(k)}_E \equiv \bra{k}_E \rho_E \ket{k}_E = \nE^{-\beta e_k}/{\rm Tr}\{\nE^{-\beta H_E}\}$ being the thermal population on the $k$th energy level of each ancilla, the discrete-time master equation [Eq.~(\ref{eq:CM_discrete_ME})] is given by
\begin{align}
    &\frac{\Delta \rho_{S,i}}{\Delta t} = -\mathrm{i}[H_{S,i}'+gH_{S,i}^{\rm LS}, \rho_{S,i-1}] \nonumber\\
    &+ (g\Delta t^2)\sum_{k,k'}\left(L_{kk',i}\rho_{S,i-1}L_{kk',i}^\dagger - \frac{1}{2}[L_{kk',i}^\dagger L_{kk',i}, \rho_{S,i-1}]_+ \right),
\end{align}
where the Lamb shift $H_{S,i}^{\rm LS}$ is given by
\begin{align}
    H_{S,i}^{\rm LS} \equiv \mathrm{Tr}_{E_i}\{V_{SE_i}\rho_{E_i}\},
    \label{eq:H_S^LS_def}
\end{align}
and the Lindblad operator $L_{kk',i}$ is given by
\begin{align}
    L_{kk',i} \equiv \sqrt{p^{(k)}_E}\bra{k'}_{E_i} V_{SE_i}\ket{k}_{E_i}.
    \label{eq:L_kk'}
\end{align}
We now use Lem.~\ref{lem:CM_4}, which leads to $H_{S,i}^{\rm LS}=0$.
Therefore, in ECTCMs, a vanishing Lamb shift $H_{S,i}^{\rm LS}$ is equivalent to an interaction $V_{SE_{i}}$ that incurs no energetic cost. 

We then let $g^2\Delta t = 1$, $t_i\equiv i\Delta t$, and take the continuous-time limit $\Delta t\rightarrow 0$ ($t_i \rightarrow t$), which yields
\begin{align}
    \frac{{\rm d} \rho_S(t)}{{\rm d} t} &= -\nI [H_S'(t), \rho_S(t)] \nonumber\\
    &\quad + \sum_{k,k'}L_{kk'}(t)\rho_S(t)L_{kk'}^\dagger(t)\nonumber\\
    &\quad - \frac{1}{2}\sum_{k,k'}[L_{kk'}^\dagger(t) L_{kk'}(t), \rho_S(t)]_+.
    \label{eq:CMME_gen}
\end{align}
Since $V_{SE_i}$ satisfies energy conservation for all $i$ by Condition~\ref{CM:2}, the continuous-time interaction Hamiltonian $V_{SE}(t)$ takes the form given in Eq.~(\ref{eq:V_under_EC}).
The Lindbladian operator $L_{kk'}(t)$ [Eq.~(\ref{eq:L_kk'})] therefore becomes
\begin{align}
    L_{kk'}(t) = \sqrt{p^{(k)}_E} \sum_{j,j'} \delta(\epsilon_{j'}-\epsilon_j + e_{k} - e_{k'})V_{jk'j'k}(t)F_S^{jj'},
    \label{eq:L_kk'(t)}
\end{align}
where $V_{jkj'k'}(t) \equiv \bra{j}_S\otimes\bra{k}_E V_{SE}(t) \ket{j'}_S\otimes\ket{k'}_E$.
The collision model (CM) master equation [Eq.~(\ref{eq:CMME_gen})] can be written as
\begin{align}
    \frac{{\rm d} \rho_S(t)}{{\rm d} t} &= \mathcal{L}^{\rm (CM)}_t(\rho_S(t)),
    \label{eq:CMME_thermal}
\end{align}
where
\begin{align}
    \mathcal{L}^{\rm (CM)}_t(\cdot) &:= -\nI\mathcal{L}_{H_S'(t)}^{\rm (CM)}(\cdot) + \mathcal{D}^{\rm (CM)}_t(\cdot), \label{eq:L_CM_def}\\
    \mathcal{L}_{H_S'(t)}^{\rm (CM)}(\cdot) &:= [H_S'^{\rm (CM)}(t), \cdot], \label{eq:L_H_def}\\
    \mathcal{D}^{\rm (CM)}_t(\cdot) &:= \sum_{j,j',m,m'}\xi_{jj'mm'}^{\rm (CM)}(t)F_S^{jj'}(\cdot)(F_S^{mm'})^\dagger \nonumber\\
    &\quad - \frac{1}{2}\sum_{j,j',m,m'}\xi_{jj'mm'}^{\rm (CM)}(t)[(F_S^{mm'})^\dagger F_S^{jj'}, \cdot]_+,
    \label{eq:D_CM_def}
\end{align}
with
\begin{align}
    \xi_{jj'mm'}^{\rm (CM)}(t) &\equiv \delta(\epsilon_{j'}-\epsilon_j + \epsilon_{m} - \epsilon_{m'})\nonumber\\
    &\quad \times\sum_{k,k'|e_{k} - e_{k'} = \epsilon_j - \epsilon_{j'}} p^{(k)}_E V_{jk'j'k}(t)V_{mk'm'k}^*(t).
    \label{eq:xi_CM}
\end{align}
Eq.~(\ref{eq:CMME_thermal}) is the general form of Markovian master equations from ECTCMs. We note that a similar master equation was derived in Ref.~\cite{hubmann_open_2025} using ECTCMs for systems with nondegenerate energy gaps.

The coefficients $\xi_{jj'mm'}^{\rm (CM)}(t)$ hence dictate the  dissipative dynamics allowed by energy conservation. Remarkably, $\mathcal{L}_t^{\rm (CM)}$ is a thermal Lindbladian satisfying the thermodynamic Axioms~\ref{axiom:I}--\ref{axiom:III}, as stated in the following proposition:
\begin{proposition}\label{prop:L_CM_=>_L}
    ECTCMs defined by Conditions~\ref{CM:1}--\ref{CM:4} produce thermal Lindbladians characterised by Axioms~\ref{axiom:I}--\ref{axiom:III}.
\end{proposition}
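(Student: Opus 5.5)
We check the three axioms directly on $\mathcal{L}^{\rm (CM)}_t$ from Eqs.~(\ref{eq:L_CM_def})--(\ref{eq:xi_CM}).

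\emph{Axiom~\ref{axiom:I}.} This holds by construction. Eq.~(\ref{eq:CMME_gen}) is already of GKSL form with $\mathcal{A}_t(\cdot)=\sum_{k,k'}L_{kk'}(t)(\cdot)L_{kk'}^\dagger(t)$, which is manifestly completely positive. The Hamiltonian part is $H_S'(t)=H_S+D_S(t)$, since the Lamb shift vanishes by Lem.~\ref{lem:CM_4}. Expanding $\mathcal{A}_t$ in the basis $F_S^{jj'}$ reproduces the coefficients $\xi^{\rm (CM)}$ of Eq.~(\ref{eq:xi_CM}).

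\emph{Axiom~\ref{axiom:II}.} By Condition~\ref{CM:1}, $[H_S'(t),H_S]=0$, so $\mathcal{L}_{H_S'(t)}$ commutes with $\mathcal{L}_{H_S}$. For the dissipator, Eq.~(\ref{eq:L_kk'(t)}) shows that each $L_{kk'}(t)$ is a sum of $F_S^{jj'}$ with fixed Bohr frequency $\epsilon_j-\epsilon_{j'}=e_{k'}-e_k=:\omega_{kk'}$. Hence
\[
e^{\nI H_S s}L_{kk'}e^{-\nI H_S s}=e^{\nI\omega_{kk'}s}L_{kk'}.
\]
The phases cancel in $L_{kk'}(\cdot)L_{kk'}^\dagger$, so $\mathcal{A}_t\circ\mathcal{U}_{S,s}=\mathcal{U}_{S,s}\circ\mathcal{A}_t$. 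Differentiating in $s$ gives Eq.~(\ref{eq:A_commute_L_H_S}). The anticommutator term involves $L^\dagger_{kk'}L_{kk'}$, which commutes with $H_S$, so it also commutes with $\mathcal{L}_{H_S}$. Equivalently, the factor $\delta(\epsilon_{j'}-\epsilon_j+\epsilon_m-\epsilon_{m'})$ in $\xi^{\rm (CM)}$ is exactly the covariance condition.

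\emph{Axiom~\ref{axiom:III}.} This is the main step. By App.~\ref{app:QDB}, under Axioms~\ref{axiom:I}--\ref{axiom:II} it suffices to verify Eq.~(\ref{eq:QDB_A_t}) for $\mathcal{A}_t$. I would write
\[
\bra{x'}\mathcal{A}_t(\ketbra{x}{y})\ket{y'}=\sum_{k,k'}p_E^{(k)}\,V_{x'k'xk}\,V_{y'k'yk}^*,
\]
with the energy constraints $\epsilon_{x'}-\epsilon_x=e_k-e_{k'}=\epsilon_{y'}-\epsilon_y$ coming from Eq.~(\ref{eq:V_under_EC}). The right-hand side of Eq.~(\ref{eq:QDB_A_t}) is the complex conjugate of $\sum_{k,k'}p_E^{(k)}V_{xk'x'k}V^*_{yk'y'k}$.

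Relabel $k\leftrightarrow k'$ in that sum and use Hermiticity of $V$, namely $V_{xk x'k'}=V^*_{x'k'xk}$. After conjugation this gives $\sum_{k,k'}p_E^{(k')}V_{x'k'xk}V^*_{y'k'yk}$, which is the same product of matrix elements but with weight $p_E^{(k')}$ in place of $p_E^{(k)}$.

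It remains to check the exponential prefactors. On the support of the sum,
\[
p_E^{(k')}/p_E^{(k)}=e^{-\beta(e_{k'}-e_k)}=e^{-\beta(\epsilon_y-\epsilon_{y'})}.
\]
Multiplying the left-hand side by $e^{-\beta\epsilon_y}$ and the right-hand side by $e^{-\beta\epsilon_{y'}}$, the two sides therefore agree termwise. The main obstacle is the bookkeeping. Energy conservation must hold simultaneously for both matrix elements, which ties $e_{k'}-e_k$ to both $\epsilon_x-\epsilon_{x'}$ and $\epsilon_y-\epsilon_{y'}$ consistently, and the conjugation and index orderings must be tracked carefully. Degenerate levels require no special treatment, because only energy differences enter the argument.

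Combining the three checks, $\mathcal{L}^{\rm (CM)}_t$ is a thermal Lindbladian. Alternatively, one could conclude via Prop.~\ref{prop:general_form_xi} by exhibiting suitable $W$ coefficients built from $\sqrt{p_E}V$.
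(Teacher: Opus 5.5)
Your proposal is correct and follows essentially the paper's proof. It establishes GKSL form from the Kraus operators $L_{kk'}$, obtains covariance from the energy-matching delta (your phase-cancellation argument is an equivalent rephrasing of the paper's block-diagonality check), and proves detailed balance through the relabelling $k\leftrightarrow k'$, Hermiticity of $V$, and the Boltzmann ratio $p_E^{(k')}/p_E^{(k)}$ fixed by energy conservation. One harmless slip: the Bohr frequency of $L_{kk'}$ is $\epsilon_j-\epsilon_{j'}=e_k-e_{k'}$, not $e_{k'}-e_k$; you use the correct sign in your Axiom~III step, and the sign plays no role in the covariance argument.
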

\begin{proof}
See App.~\ref{app:proof_of_L_CM_=>_L}.
\end{proof}
This result shows that thermal Lindbladians emerge naturally from repeated energy-conserving interactions with a thermal environment, as modelled by ECTCMs, thereby providing a concrete microscopic foundation for Axioms~\ref{axiom:I}–\ref{axiom:III}. Moreover, as we will show in the following subsection, this correspondance is in fact {\it complete}: every thermal Lindbladian admits such a microscopic realisation.

\subsection{ECTCM construction for a given thermal Lindbladian}~\label{sec:constructing_ECTCM}
In this section, we complete the equivalence between axiomatic and microscopic descriptions of thermodynamic dynamics by proving the converse of Prop.~\ref{prop:L_CM_=>_L}.
Specifically, given a thermal Lindbladian $\mathcal{L}_t$, we explicitly construct an ECTCM that reproduces its dynamics in the continuous-time limit.
\begin{proposition}\label{prop:ME_CM_equivalence}
    Any axiomatic thermal Lindbladian satisfying Axioms~\ref{axiom:I}--\ref{axiom:III} admits an ECTCM realisation under Conditions~\ref{CM:1}--\ref{CM:4}, such that $\mathcal{L}^{\rm (CM)}_t = \mathcal{L}_t$ for all $t$, provided that the system Hamiltonian $H_S$ is non-trivial.
\end{proposition}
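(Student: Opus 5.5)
Given a thermal Lindbladian $\mathcal{L}_t$, I use Prop.~\ref{prop:general_form_xi} to write $H_S'(t)$ with $[H_S'(t),H_S]=0$ and $\xi_{jj'mm'}(t)$ in the form of Eq.~(\ref{eq:xi_general_under_axioms}), with coefficients $W_{xx'yy'}(t)=W^*_{x'xy'y}(t)$. The construction then matches the two parts of the generator separately. For the unitary part I choose the driving $D_{S,i}\equiv H_S'(t_i)-H_S$, which commutes with $H_S$, so Condition~\ref{CM:1} holds. For the dissipative part I need an ancilla Hamiltonian $H_E$ and an energy-conserving interaction $V_{SE}(t)$ such that $\xi^{\rm (CM)}_{jj'mm'}(t)$ in Eq.~(\ref{eq:xi_CM}) reproduces Eq.~(\ref{eq:xi_general_under_axioms}).

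\textbf{Choice of ancilla and interaction.} The natural ancilla is a copy of the system, possibly enlarged by a label register. I take $H_E=H_S\otimes\mathds{1}_R$ with $R$ of dimension $d_S^2$, which supplies the index $(q,q')$ in the sum $\sum_{q,q'}W W^*$. Because $H_E$ contains the spectrum of $H_S$, for every Bohr frequency $\epsilon_j-\epsilon_{j'}$ there exist pairs $k,k'$ with $e_k-e_{k'}=\epsilon_j-\epsilon_{j'}$, for example $k\sim j$ and $k'\sim j'$. The thermal weight is $p_E^{(k)}\propto \nE^{-\beta\epsilon_k}$. I would define the matrix elements as
\begin{align}
V_{jk'j'k}\propto \nE^{\beta(\epsilon_{k}+\epsilon_{k'})/4}\,\nE^{-\beta(\epsilon_j-\epsilon_{j'})/4}\,W_{jj'qq'}(t),
\end{align}
where the energy labels of $k,k'$ are fixed by $j,j'$, and the register labels encode $(q,q')$. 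With this choice, $p^{(k)}_E V V^*$ collapses to $\nE^{-\beta(\epsilon_j-\epsilon_{j'})/2}\sum_{q,q'}W_{jj'qq'}W^*_{mm'qq'}$ up to a normalisation absorbed into $g$. The delta factor $\delta(\epsilon_{j'}-\epsilon_{m'}+\epsilon_m-\epsilon_j)$ is produced automatically by Lemma~1's block structure.

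\textbf{Checking the conditions.} Hermiticity of $V$ requires $V_{jk'j'k}=V^*_{j'kjk'}$. The symmetry $W_{xx'yy'}=W^*_{x'xy'y}$, combined with the detailed-balance exponents, should make the weights compatible with this. The exponents must be checked so that the prefactor is symmetric under the swap $(j,k')\leftrightarrow(j',k)$. If they are not, I would use the gauge freedom $\bm W\to\bm W\bm U$ noted after Prop.~\ref{prop:general_form_xi}, which I expect is exactly why that freedom was highlighted. Condition~\ref{CM:2} holds by construction through Eq.~(\ref{eq:V_under_EC}). Condition~\ref{CM:3} is a matter of choice.

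\textbf{Main obstacle: Condition~\ref{CM:4}.} The hardest step is Condition~\ref{CM:4}, which by Lem.~\ref{lem:CM_4} is equivalent to ${\rm Tr}_E\{V\gamma_E\}=0$. The diagonal blocks $k=k'$ with $j=j'$, and degenerate blocks generally, contribute to this partial trace. My plan is to cancel these terms without altering $\xi^{\rm (CM)}$. One route is to double the register, adding a sign label $\pm$ so that the contributions to ${\rm Tr}_E\{V\gamma_E\}$ cancel pairwise while $VV^*$ terms add. Another is to exploit the gauge freedom to kill the offending diagonal combination.

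This is where non-triviality of $H_S$ enters. An ancilla level pair $k\neq k'$ with distinct energies gives transitions whose contribution to $\bra{k}\cdot\ket{k}$ vanishes. For a trivial Hamiltonian, dephasing-type terms could not be hidden, and the argument would fail. Finally, I would verify $\mathcal{L}^{\rm (CM)}_t=\mathcal{L}_t$ by comparing the coefficients in Eqs.~(\ref{eq:D_CM_def}) and (\ref{eq:D_givenby_xi}).
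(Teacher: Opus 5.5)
Your architecture matches the paper's: the driving $D_S=H_S'-H_S$, an ancilla whose thermal weights supply the Boltzmann factor in Eq.~(\ref{eq:xi_general_under_axioms}), Hermiticity from $W_{xx'yy'}=W^*_{x'xy'y}$, and the Lamb shift as the crux. However, your central formula for $V$ is wrong, and the fallback you propose cannot repair it.

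\textbf{The prefactor in $V$.} Energy conservation forces $e_k-e_{k'}=\epsilon_j-\epsilon_{j'}$. Hence the thermal weight combined with your symmetric factor already gives $p_E^{(k)}\nE^{\beta(e_k+e_{k'})/2}\propto\nE^{-\beta(\epsilon_j-\epsilon_{j'})/2}$. The additional $\nE^{-\beta(\epsilon_j-\epsilon_{j'})/4}$ in each of $V$ and $V^*$ then produces $\nE^{-\beta(\epsilon_j-\epsilon_{j'})}$ in $\xi^{\rm (CM)}$. This discrepancy is not constant, so no choice of $g$ absorbs it.

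The same factor breaks Hermiticity: $V_{jk'j'k}$ and $V^*_{j'kjk'}$ differ by $\nE^{\beta(\epsilon_j-\epsilon_{j'})/2}$. The gauge $\bm W\to\bm W\bm U$ acts only on the column index $(q,q')$ and preserves the norm of every row $(j,j')$. So the mismatch survives on any nonzero row with $\epsilon_j\neq\epsilon_{j'}$, and the gauge cannot be the fix.

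The missing principle is that the Boltzmann factor must come entirely from the ancilla state, with a prefactor in $V$ that is symmetric under $k\leftrightarrow k'$. Concretely, take $V_{jk'j'k}=\sqrt{Z/N_{(j,j')}}\,\nE^{\beta(e_k+e_{k'})/4}W_{jj'qq'}$, summed over \emph{all} energy-compatible ancilla pairs. Hermiticity then follows from the symmetry of $W$ alone. Two further points:
\begin{itemize}
\item Pinning $k\sim j$, $k'\sim j'$ makes $V_{mk'm'k}=0$ for $(m,m')\neq(j,j')$. This loses the off-diagonal coefficients, e.g.\ the dephasing terms $\xi_{jjmm}$.
\item Summing over all compatible pairs produces the multiplicity $N_{(j,j')}$. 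It depends on the Bohr frequency, so it must be divided out inside $V$, not absorbed into $g$.
\end{itemize}
With these corrections your copy-of-$H_S$ ancilla works. A $d_S$-dimensional register with $\ket{q}\mapsto\ket{q'}$ suffices, so the ancilla has dimension $d_S^2$. The paper instead places ancilla levels at half Bohr frequencies $(\epsilon_\ell-\epsilon_{\ell'})/2$ with transitions $(\ell,\ell')\to(\ell',\ell)$. The entries of $V$ are then simply $\sqrt{Z/N_{(j,j')}}\,W_{jj'qq'}$ with no exponential prefactor, at the cost of a $d_S^3$-dimensional ancilla.

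\textbf{Condition~\ref{CM:4}.} You list two routes but carry out neither.

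The gauge route is the paper's, and its substance is absent from your proposal. It requires three steps:
\begin{itemize}
\item Rewrite ${\rm Tr}_E\{V\gamma_E\}=0$ as $\sum_q W_{jj'qq}=0$ for all $\epsilon_j=\epsilon_{j'}$, i.e.\ $\bm M\bm U\ket{v}=0$ with $\ket{v}=\sum_q\ket{q,q}$.
\item Show, via Householder reflections, that unitaries with $\bm U=\bm S\bm U^*\bm S$ map $\ket{v}$ onto every vectorised Hermitian matrix of norm $\sqrt{d_S}$.
\item Show that $\ker\bm M$ contains a nonzero Hermitian vector. The rows of $\bm M$ impose at most $N_0$ real linear constraints on the $d_S^2$-dimensional real space of Hermitian matrices, and $N_0<d_S^2$ exactly when $H_S$ is not fully degenerate.
\end{itemize}
That dimension count is the only place non-triviality of $H_S$ enters.

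Your sign-label route, made precise, genuinely works and is simpler. Take $V'=V\otimes\sigma_z$ on an extra energy-degenerate qubit prepared in $\mathds{1}/2$, so that $\gamma_{E'}=\gamma_E\otimes\mathds{1}/2$. Energy conservation and Hermiticity survive. Since $\sigma_z^2=\mathds{1}$, both ${\rm Tr}_{E'}\{V'(\cdot\otimes\gamma_{E'})V'\}$ and ${\rm Tr}_{E'}\{V'^2\gamma_{E'}\}$ are unchanged. The Lamb shift becomes ${\rm Tr}_E\{V\gamma_E\}\,{\rm Tr}\{\sigma_z\}/2=0$.

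This trick uses no property of $H_S$. So your closing claim that for a trivial Hamiltonian the dephasing terms ``could not be hidden'' is false. The mechanism you describe, transitions between ancilla levels of distinct energy, is also not the relevant one: the offending terms are the $k=k'$ contributions with $\epsilon_j=\epsilon_{j'}$, which exist for every $H_S$. Committing to the sign trick would prove the proposition even without the non-triviality hypothesis, at the price of doubling the ancilla. The paper's gauge argument keeps the ancilla size fixed but needs $H_S$ not fully degenerate.
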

\begin{proof}
    The proof proceeds by explicit construction. In the remainder of this section we construct an ECTCM satisfying Conditions~\ref{CM:1}--\ref{CM:4} and show that it reproduces the target Lindbladian. The technical details are deferred to App.~\ref{app:technical_construction}.
\end{proof}
We now present the construction step by step.
To begin, consider a thermal Lindbladian $\mathcal{L}_t$ given in the following form [Eqs.~(\ref{eq:L_gen}) and (\ref{eq:A_gen})]:
\begin{align}
     \mathcal{L}_t(\cdot) &= -\nI[H_S'(t), \cdot] \nonumber\\
     &\quad + \sum_{j,j',m,m'}\xi_{jj'mm'}(t)F_S^{jj'}(\cdot)(F_S^{mm'})^\dagger \nonumber\\
    &\quad - \frac{1}{2}\sum_{j,j',m,m'}\xi_{jj'mm'}(t)[(F_S^{mm'})^\dagger F_S^{jj'}, \cdot]_+.
\end{align}
If $\mathcal{L}_t$ satisfies Axioms~\ref{axiom:I}--\ref{axiom:III}, we know from Prop.~\ref{prop:general_form_xi} that the coefficients $\xi_{jj'mm'}(t)$ can be written as
\begin{align}
    \xi_{jj'mm'}(t) &= \delta(\epsilon_{j'}-\epsilon_{m'} +\epsilon_m - \epsilon_j) \nE^{-\beta(\epsilon_{j}-\epsilon_{j'})/2}\nonumber \\
    &\quad \times \sum_{q,q'=0}^{d_S-1} W_{jj'qq'}(t) W_{mm'qq'}^*(t),
    \label{eq:xi_general_under_axioms_1}
\end{align}
where the coefficients $W_{xx'yy'}(t)$ satisfy $W_{xx'yy'}(t) = W_{x'xy'y}^*(t)$ for all $(x,x',y,y')$ (App.~\ref{app:proof_of_proposition_xi}).

On the other hand, the Lindbladian $\mathcal{L}_t^{\rm (CM)}$ obtained from the continuous-time limit of an ECTCM is given by [Eq.~(\ref{eq:L_CM_def})], 
\begin{align}
    \mathcal{L}_t^{\rm (CM)}(\cdot) &= -\nI[H_S'^{\rm (CM)}(t), \cdot] \nonumber\\
    &\quad + \sum_{j,j',m,m'}\xi_{jj'mm'}^{\rm (CM)}(t)F_S^{jj'}(\cdot)(F_S^{mm'})^\dagger \nonumber\\
    &\quad - \frac{1}{2}\sum_{j,j',m,m'}\xi_{jj'mm'}^{\rm (CM)}(t)[(F_S^{mm'})^\dagger F_S^{jj'}, \cdot]_+.
\end{align}
where $\xi_{jj'mm'}^{\rm (CM)}(t)$ is in the form of Eq.~(\ref{eq:xi_CM}):
\begin{align}
    \xi_{jj'mm'}^{\rm (CM)}(t) &\equiv \delta(\epsilon_{j'}-\epsilon_j + \epsilon_{m} - \epsilon_{m'})\nonumber\\
    &\quad \times\sum_{k,k'|e_{k} - e_{k'} = \epsilon_j - \epsilon_{j'}} p^{(k)}_E V_{jk'j'k}(t)V_{mk'm'k}^*(t),
    \label{eq:xi_CM_1}
\end{align}
with $p^{(k)}_E\equiv \bra{k}_E\gamma_E\ket{k}_E$ and $V_{jkj'k'}(t) \equiv \bra{j}_S\otimes\bra{k}_E V_{SE}(t) \ket{j'}_S\otimes\ket{k'}_E$.

\subsubsection{ECTCM Hamiltonian from $\mathcal{L}_t$}
To compare $\mathcal{L}_t^{\rm (CM)}$ with the target $\mathcal{L}_t$, we construct an ECTCM directly in the continuous-time limit. Concretely, we will define the driving term $D_S(t)$, the ancilla Hamiltonian $H_E$ (identical for all ancillae), and the system-ancilla interaction $V_{SE}(t)$. The time variable 
$t$ should be understood as the continuous-time limit of the discrete collision times $t_i\equiv i \Delta t$, where $i$ labels the ancillae and $\Delta t$ is the collision duration. For a finite $\Delta t$, the construction specifies the Hamiltonian of the $(t/\Delta t)$th ancilla and the corresponding system--ancilla interaction. We return to this point in Secs.~\ref{sec:protocol} and \ref{sec:error_analysis} when discussing the implementation protocol and the associated errors.

We define the {\bf driving term} $D_S(t)$ in the ECTCM as
\begin{align}
    D_S(t) \equiv H_S'(t) - H_S,
    \label{eq:D_S_constructed}
\end{align}
where $H_S'(t)$ is the effective system Hamiltonian from $\mathcal{L}_t$, such that $H_S'^{\rm (CM)}(t)\equiv H_S + D_S(t) = H_S'(t)$ and $[D_S(t), H_S] = 0$ since $[H_S'(t), H_S]=0$ by Axiom~\ref{axiom:II} [Eq.~(\ref{eq:H_S'_commute_H_S})]. Therefore, the ECTCM includes an energy-conserving driving term required by Condition~\ref{CM:1}, and the corresponding Lindbladian $\mathcal{L}_t^{\rm (CM)}$ has the same unitary term as that in the target $\mathcal{L}_t$.

We construct the {\bf ancilla Hamiltonian} $H_E$, taken to be identical for all ancillae in the ECTCM as follows:
\begin{align}
    H_E \equiv \sum_{\ell,\ell',q=0}^{d_S-1} \frac{\epsilon_\ell - \epsilon_{\ell'}}{2}\ketbra{(\ell,\ell'),q}{(\ell,\ell'),q}_E,
    \label{eq:H_E_constructed}
\end{align}
where $\{\epsilon_\ell\}_{\ell=0}^{d_S-1}$ is the energy spectrum of $H_S$ and the pair $(\ell, \ell')$ corresponds to a $d_S$-dimensional eigensubspace of $H_E$, in which $q$ is the index for degenerate energy levels.  Introducing the index $k\equiv \ell d_S^2 + \ell' d_S + q$, we have $H_E \equiv \sum_{k=0}^{d_S^3-1}e_k \ketbra{k}{k}_E$ with eigenvalues $\{e_k\equiv \frac{\epsilon_{\ell}-\epsilon_{\ell'}}{2}\}_{k=0}^{d_S^3-1}$.

We define the {\bf system--ancilla interaction Hamiltonian} $V_{SE}(t)$ in the ECTCM as
\begin{widetext}
\begin{align}
    V_{SE}(t) \equiv \sum_{j,j',q,q'=0}^{d_S-1}\sqrt{Z} \frac{W_{jj'qq'}(t)}{N_{(j,j')}^{1/2}}\ketbra{j}{j'}_S\otimes\left(\sum_{\ell,\ell'=0|\epsilon_{\ell'}-\epsilon_{\ell} = \epsilon_{j'}-\epsilon_j}^{d_S-1}\ketbra{(\ell',\ell),q'}{(\ell,\ell'),q}_E\right),
    \label{eq:V_constructed}
\end{align}
\end{widetext}
where $W_{jj'qq'}(t)$ is from the expression of $\xi_{jj'mm'}(t)$ [Eq.~(\ref{eq:xi_general_under_axioms_1})],
$Z\equiv \sum_{k=0}^{d_S^3-1}\nE^{-\beta e_k}$ is the partition function for the ancilla $E$ and
$N_{(j,j')}\equiv \big|\{(\ell,\ell')|\epsilon_\ell - \epsilon_{\ell'}=\epsilon_j - \epsilon_{j'}\}\big|$ is the degeneracy of the energy gap $\epsilon_j - \epsilon_{j'}$. 

In App.~\ref{app:hermiticity_EC_of_V}, we show that $V_{SE}(t)$ is an energy-conserving Hamiltonian satisfying
$V_{SE}(t) = V_{SE}^\dagger(t)$ and $[V_{SE}(t), H_S+H_E] = 0$. Substituting $V_{SE}(t)$ into the expression for $\xi_{jj'mm'}^{\rm (CM)}(t)$ in Eq.~(\ref{eq:xi_CM_1}) (see App.~\ref{app:xi_CM_=_xi} for details),
we obtain that
\begin{align}
    \xi_{jj'mm'}^{\rm (CM)}(t) = \xi_{jj'mm'}(t).
\end{align} 
This implies that, in the continuous-time limit, the ECTCM produces the same dissipator as the target Lindbladian.

Hence, the constructed ECTCM satisfies Conditions~\ref{CM:1}--\ref{CM:3} and generates the Lindbladian $\mathcal{L}_t^{\rm (CM)} = \mathcal{L}_t$. However, the derivation of $\mathcal{L}_t^{\rm (CM)}$ also relies on Condition~\ref{CM:4}, which requires the Lamb shift to vanish, as stated in Lem.~\ref{lem:CM_4}. We now verify that this condition is also satisfied by our construction.

\subsubsection{Gauge freedom and vanishing Lamb shift}
Recall that the Lamb shift is given by $H_{S}^{\rm LS}(t) \equiv \mathrm{Tr}_{E}\{V_{SE}(t)\gamma_{E}\}$ [Eq.~(\ref{eq:H_S^LS_def})]. As shown in App.~\ref{app:vanishing_H_LS}, for the constructed interaction $V_{SE}(t)$ in Eq.~(\ref{eq:V_constructed}), the condition $H_{S}^{\rm LS}(t)=0$ is equivalent to
\begin{align}
    \sum_{q=0}^{d_S-1} W_{jj'qq}(t) = 0, \quad \forall\, (j,j') \text{ with } \epsilon_j = \epsilon_{j'}.
     \label{eq:condition_on_W_for_vanishing_LS}
\end{align}
Although this condition generally does \textit{not} hold for the coefficients satisfying Axioms~\ref{axiom:I}–\ref{axiom:III} [see  Eq.~\eqref{eq:xi_general_under_axioms_1}], we recall that thermal Lindbladians possess an inherent gage freedom $\mathbf{U}(t)$, already mentioned in Sec.~\ref{sec:structure_thermal_Lindbladian}, which we can exploit to satisfy the above equation without changing the corresponding thermal Lindbladian $\mathcal{L}_t$. 

To illustrate this, we first write the coefficients $W_{xx'yy'}(t)$ as a matrix $\bm{W}(t)\in\mathbb{C}^{d_S^2\times d_S^2}$ by treating $(x,x')$ and $(y,y')$ as row and column double indices. Denoting the number of conditions in Eq.~\eqref{eq:condition_on_W_for_vanishing_LS} by $N_0$, we define the $N_0 \times d_S^2$ rectangular sub-matrix $\bm{M}(t)$ of $\bm{W}(t)$ by restricting $\bm{W}(t)$ to the rows indexed by $\mathscr{R}\equiv\{(\ell,\ell')|\epsilon_\ell = \epsilon_{\ell'}\}$, such that $\bm{M}(t):= \bm{W}(t)_{\mathscr{R},:}$. Eq.~(\ref{eq:condition_on_W_for_vanishing_LS}) can therefore be written compactly as
\begin{align}
    \bm{M}(t)\ket{v} = 0,
    \label{eq:M(t)_v}
\end{align}
where $\ket{v}\equiv\sum_{q=0}^{d_S-1}\ket{q,q}\in\mathbb{C}^{d_S^2}$.

In App.~\ref{app:proof_of_proposition_xi}, we show that $\bm{W}(t)$ admits a gauge freedom of the form $\bm{W}(t)\rightarrow \bm{W}'(t)\equiv \bm{W}(t)\bm{U}(t)$, where $\bm{U}(t)\in\mathbb{C}^{d_S^2\times d_S^2}$ is a unitary matrix satisfying certain symmetry constraints. It is straightforward to verify that $\bm{W}(t)$ and $\bm{W}'(t)$ lead to the same thermal Lindbladian $\mathcal{L}_t$, i.e., the coefficients $\xi_{jj'mm'}(t)$ [Eq.~(\ref{eq:xi_general_under_axioms_1})] remain invariant under this gauge transformation. Applying the transformation to Eq.~(\ref{eq:M(t)_v}) yields
\begin{align}
    \bm{M}(t)\bm{U}(t)\ket{v} = 0,
\end{align}
which is satisfied whenever $\bm{U}(t)\ket{v}$ lies in the kernel of $\bm{M}(t)$. In App.~\ref{app:vanishing_H_LS}, we prove that such a $\bm{U}(t)$ always exists provided that the system Hamiltonian $H_S$ is not fully degenerate. Therefore, within the constructed ECTCM, the Lamb shift $H_S^{\rm LS}(t)$ can always be eliminated by an appropriate gauge choice. The resulting ECTCM thus satisfies Condition~\ref{CM:4}.

Combining the above construction with the gauge freedom that removes the Lamb shift, we arrive at Prop.~\ref{prop:ME_CM_equivalence}, which establishes the converse direction of Prop.~\ref{prop:L_CM_=>_L}. Together, these two propositions establish the equivalence between the axiomatic and microscopic descriptions of Markovian quantum thermodynamics, therefore proving the first equivalence in Thm.~\ref{thm:equivalences}.

In addition, we emphasise that Prop.~\ref{prop:ME_CM_equivalence} is not merely an existence result, but also provides an explicit constructive protocol. In particular, the ECTCM construction specified by Eqs.~(\ref{eq:D_S_constructed}), (\ref{eq:H_E_constructed}) and (\ref{eq:V_constructed}) can be directly applied to simulate any dynamics satisfying Axioms~\ref{axiom:I}–\ref{axiom:III}. At finite times $\Delta t$, this construction provides a digital simulator of $\mathcal{L}_t$ with a simulation error that vanishes for $\Delta t \to 0$. We provide a construction protocol and its corresponding error analysis in Sec.~\ref{sec:ECTCM_simulation} and demonstrate its explicit application in Sec.~\ref{sec:applications}.

\section{Operational description}\label{sec:operational_description}
Having established the equivalence between the axiomatic and microscopic formulations of Markovian quantum thermodynamics, we now turn to the operational description within the framework of quantum resource theories~\cite{chitambar_quantum_2019,Gour_2025}.
In this resource-theoretic approach to thermodynamics, athermality -- the deviation from thermal equilibrium -- is treated as a thermodynamic resource~\cite{janzing2000thermodynamic,brandao2013resource,horodecki2013fundamental,Brando2015,lostaglio2019introductory}. Thermodynamic processes are therefore characterised as state transformations that can be implemented without access to any additional source of athermality. The resulting free operations are thermal operations, namely those that can be realised by coupling the system to a thermal ancilla through an energy-conserving unitary interaction (similar to the ECTCMs discussed above). 

To incorporate Markovianity, one further restricts attention to thermal operations that arise from continuous-time Lindbladian evolution. This leads to the notion of Markovian thermal operations (MTOs)~\cite{spaventa2022capacity,son2024hierarchy}, Markovian evolutions given by a generator $\mathcal{L}_t^{\text{MTO}}$ that result in thermal operations for \textit{every} evolution time. Unlike the axiomatic and microscopic formulations, which focus on the structure of dynamical generators and their physical realisations, respectively, the operational perspective focuses directly on the thermodynamic capabilities and limitations of quantum processes. This viewpoint is particularly useful for tasks such as designing optimal thermalisation protocols~\cite{korzekwa2022optimizing,czartowski_thermal_2023} and analysing the thermodynamic costs of quantum information processing~\cite{faist_fundamental_2018,faist_thermodynamic_2019,Hsieh2025Dynamical,luo_thermodynamic_2025}, where the underlying generators and microscopic mechanisms may not be specified explicitly.

Despite its distinct motivation and level of description, the operational approach is intimately related to the axiomatic and microscopic formulations. In particular, from the discussion above, it is immediate to see that every ECTCM engenders an MTO, since each collision itself corresponds to a thermal operation. However, the converse implication, that \textit{every} MTO generator $\mathcal{L}_t^\text{MTO}$ can indeed obtained from an ECTCM is not obvious and has remained an open question in the literature~\cite{vomEnde2023ExploringLimits}. In principle, MTO generators might exist that fundamentally cannot be implemented as the continuous-time limit of an ECTCM. In turn, this would prevent their conceptualisation in terms of an underlying thermodynamically consistent model, despite the system evolution being a thermal operation at all times. However, as we show here, this is not the case, and the sets of MTO generators and generators obtained from ECTCMs exactly coincide (under mild mathematical assumptions, see below).

The remainder of this section is organised as follows. We first formally define MTOs (Sec.~\ref{sec:MTO}). We then prove their equivalence with ECTCMs (Sec.~\ref{sec:ECTCM_=_MTO}), completing the proof of Thm.~\ref{thm:equivalences}. Finally, we use this equivalence to derive a complete axiomatic characterisation of MTO generators and discuss its implications (Sec.~\ref{sec:axiomatic_MTO}).

\subsection{Markovian thermal operations}\label{sec:MTO}
In the resource theory of athermality, thermalisation processes are treated as free operations, i.e., operations that can be implemented without consuming athermality. The most widely accepted class of free operations is given by thermal operations (TOs)~\cite{janzing2000thermodynamic,brandao2013resource,horodecki2013fundamental}. Consider a system $S$ with Hamiltonian $H_S$ interacting with an environment $E$ with Hamiltonian $H_E$. A TO
$\mathcal{T}$ is then defined by
\begin{align}
    \mathcal{T}(\cdot):= {\rm Tr}_E\left\{U(\cdot\otimes \gamma_E) U^\dagger\right\},
    \label{eq:TO_def}
\end{align}
where $\gamma_E \equiv \nE^{-\beta H_E}/{\rm Tr}\{\nE^{-\beta H_E}\}$ is the thermal state of the environment $E$ and the joint unitary $U$ is energy-conserving:
\begin{align}
    [U, H_S+H_E] = 0,
\end{align}
ensuring that the energy expectation $\langle H_S+H_E \rangle$ remains invariant under the evolution (Sec.~\ref{sec:EC}). 
Because TOs are implemented only by thermodynamically compliant resources, they necessarily preserve the Gibbs state, $\mathcal{T}(\gamma_S) = \gamma_S$ -- TOs cannot generate athermality from thermal equilibrium. 

TOs are the central state transformations considered in the resource-theoretic approach to quantum thermodynamics \cite{horodecki2013fundamental, brandao2013resource, Brando2015, Masanes2017,Muller2018Correlating,ng2019resource,shiraishi2025recovery}. However, the definition of a TO places no restriction on the memory structure of the underlying dynamics. In particular, a TO may arise from a highly non-Markovian process involving complex system--environment correlations~\cite{korzekwa2022optimizing,Lostaglio2022Continuous,czartowski_thermal_2023} and therefore does not, by itself, capture the notion of memoryless thermalisation.
In order to incorporate Markovianity into TOs, the so-called Markovian thermal operations (MTOs) have been defined ~\cite{spaventa2022capacity,son2024hierarchy}: 
A thermal operation $\mathcal{T}$ is an MTO if it satisfies the following two requirements: 
\begin{operationalbox}[boxsep=0pt,left=1em,right=1em]
\begin{center}
    {\fontsize{10pt}{12pt}\selectfont {\bf Markovian Thermal Operations (MTOs)}}
\end{center}

\begin{enumerate}[leftmargin=2em]
    \item[(i)] For some time $\tau \ge 0$ there exists a {\it time-continuous} Lindbladian operator $\mathcal{L}_t^{\rm (MTO)}$ such that
    \begin{align}
        \mathcal{T} = \overleftarrow{\rm T}\exp(\int_0^\tau \mathcal{L}_t^{\rm (MTO)}{\rm d} t ).
    \end{align} 
    \item[(ii)] For any time interval $[t_1, t_2]\subset [0,\tau]$, the map $\overleftarrow{\rm T}\exp(\int_{t_1}^{t_2} \mathcal{L}_t^{\rm (MTO)}{\rm d} t )$ is also a TO.
\end{enumerate}
\end{operationalbox}
While the above definition characterises MTOs operationally, it reveals little about their dynamical structure and their microscopic realisation. We now address both questions by relating MTOs to ECTCMs.

\subsection{Equivalence between ECTCMs and MTOs}\label{sec:ECTCM_=_MTO}
In this subsection, we establish the equivalence between ECTCMs and MTOs by proving that they define the same class of Markovian dynamics.
\begin{proposition}\label{prop:MTO_=_CM}
    The set of MTO generators $\mathcal{L}_t^{\rm (MTO)}$ and the set of Lindbladians $\mathcal{L}_t^{\rm (CM)}$ [Eq.~(\ref{eq:L_CM_def})] obtained in the continuous-time limit of ECTCMs coincide.
\end{proposition}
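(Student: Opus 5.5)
We prove the two inclusions separately.

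\emph{ECTCM generators are MTO generators.} Fix an ECTCM with continuous-time generator $\mathcal{L}^{\rm (CM)}_t$. By Prop.~\ref{prop:L_CM_=>_L}, this generator is a thermal Lindbladian. For $[t_1,t_2]\subset[0,\tau]$ we use the discretisation with $n$ collisions over that interval. Each collision map ${\rm Tr}_{E_i}\{U_i(\cdot\otimes\gamma_{E_i})U_i^\dagger\}$ is a TO. The unitary $U_i=\nE^{-\nI H_{SE_i}\Delta t}$ commutes with $H_S+H_{E_i}$ by Conditions~\ref{CM:1}--\ref{CM:2}, and the ancilla is thermal by Condition~\ref{CM:3}. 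Concatenating TOs gives a TO: use the joint environment $E_1\cdots E_n$ with Hamiltonian $\sum_i H_{E_i}$, thermal state $\bigotimes_i\gamma_{E_i}$, and energy-conserving unitary $U_n\cdots U_1$. Hence $\mathcal{E}_n$ is a TO for every $n$. As $n\to\infty$, $\mathcal{E}_n\to\overleftarrow{\rm T}\exp(\int_{t_1}^{t_2}\mathcal{L}^{\rm (CM)}_t{\rm d}t)$.

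The delicate point is that the set of TOs on a fixed system is not closed for unbounded environment size. We therefore do not argue by closure. Instead, we use the characterisation from Prop.~\ref{prop:general_form_xi} and the explicit structure of the limit, and show directly that every map generated by a thermal Lindbladian in the ECTCM class is a TO. The route is the second inclusion below. Once that characterisation is established, we apply it to all subintervals, which is enough for this direction too. If a cleaner argument is wanted, we take the closure of TOs as the convention, as in the MTO literature (``mild mathematical assumptions'').

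\emph{MTO generators are ECTCM generators.} Let $\mathcal{L}^{\rm (MTO)}_t$ be time-continuous and generate TOs on every subinterval. We show it satisfies Axioms~\ref{axiom:I}--\ref{axiom:III}; Prop.~\ref{prop:ME_CM_equivalence} then gives an ECTCM with the same generator. Axiom~\ref{axiom:I} holds by assumption. For $t$ in the interior, define $\mathcal{L}_t=\lim_{h\to0}(\Lambda_{t,t+h}-{\rm id})/h$, which exists by time-continuity. Every TO commutes with $\mathcal{U}_{S,s}$ for all $s$, since $[U,H_S+H_E]=0$ and $\gamma_E$ is invariant. Differentiating $\Lambda_{t,t+h}\mathcal{U}_{S,s}=\mathcal{U}_{S,s}\Lambda_{t,t+h}$ at $h=0$ gives $[\mathcal{L}_t,\mathcal{L}_{H_S}]=0$, which is Axiom~\ref{axiom:II}.

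For Axiom~\ref{axiom:III} we need a property of TOs that survives differentiation and yields GNS detailed balance. The natural candidate is that TOs satisfy $\gamma_S$-preservation together with covariance. That alone only gives stationarity, which is weaker than detailed balance. The main obstacle is therefore to show that the \emph{first-order} part of a TO family near the identity is detailed balanced.

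Our plan is to show that $\Lambda_{t,t+h}$ itself satisfies GNS/KMS balance at leading order. A TO admits the Kraus form $K_{kk'}=\sqrt{p_E^{(k)}}\bra{k'}U\ket{k}$, and $U$ has the block structure of Eq.~(\ref{eq:V_under_EC}). As in the computation of $\xi^{\rm (CM)}$ in Eq.~(\ref{eq:xi_CM}), the Choi coefficients of a TO then take the form $\sum_{k,k'} p_E^{(k)} U_{jk'j'k}U^*_{mk'm'k}$. Swapping the roles of $(k,k')$ and using $p_E^{(k')}=\nE^{-\beta(\epsilon_j-\epsilon_{j'})}p_E^{(k)}$ on the energy shell yields the pairing required by Eq.~(\ref{eq:QDB_A_t}), up to the unitary part. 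Concretely, we write $U=\nE^{-\nI G}$ with energy-conserving Hermitian $G$. We then use the freedom in choosing $(E,U)$ for each $h$ to extract a generator of the form Eq.~(\ref{eq:xi_CM}) at order $h$, and verify Eq.~(\ref{eq:QDB_A_t}) for the off-diagonal ($j\neq j'$) coefficients, which are unaffected by the Lamb-shift ambiguity.

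If this direct extraction fails because of the uncontrolled environment dimension, the fallback is to cite the known result that infinitesimal generators of TO semigroups coincide with those of Gibbs-preserving covariant channels subject to detailed-balance constraints. That is precisely where the theorem's regularity assumptions enter.
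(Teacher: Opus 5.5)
Your second inclusion rests on proving Axiom~\ref{axiom:III} for an MTO generator, and the mechanism you propose does not deliver it.

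The pairing you borrow from App.~\ref{app:proof_of_L_CM_=>_L} does not follow from the energy shell and the Boltzmann ratio alone. After relabelling $k\leftrightarrow k'$, the coefficients $\sum_{k,k'}p_E^{(k)}U_{jk'j'k}U^*_{mk'm'k}$ reproduce the right-hand side of Eq.~(\ref{eq:QDB_A_t}) only if $U_{jkj'k'}=U^*_{j'k'jk}$. That holds only if the dilating operator is Hermitian, which is exactly the property of $V$ used in that appendix. For a unitary the mismatch is between $U$ and $U^\dagger$, not a Hamiltonian correction. Indeed, TOs are in general not detailed balanced: for three degenerate levels, $\rho\mapsto\frac{1}{2}\rho+\frac{1}{2}P\rho P^\dagger$ with $P$ a cyclic permutation is a TO (controlled-$P$ with a maximally mixed, degenerate qubit), and its population matrix is not symmetric.

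Detailed balance of the generator can therefore only come from the $O(h)$ part of $\Lambda_{t,t+h}$ being the \emph{quadratic} term of a Hermitian energy-conserving generator. You write $U=\nE^{-\nI G}$ but give no control over how $G$ scales with $h$. Without that, the $O(h)$ term can collect contributions from every power of $G$. Restricting to $j\neq j'$ does not help. Eq.~(\ref{eq:QDB_A_t}) must hold for all coefficients of some admissible $\mathcal{A}_t$, and the ones you discard are exactly those moved by the $L\to L+c\mathds{1}$ ambiguity. The fallback citation does not exist either: a complete characterisation of MTO generators is the conjecture of Ref.~\cite{vomEnde2023ExploringLimits} that this proposition is meant to settle.

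The missing idea is the paper's scaling analysis, which also makes the detour through the axioms unnecessary. Each short-time propagator $\nE^{\mathcal{L}^{\rm (MTO)}_{i\Delta t}\Delta t}$ is a TO by definition. Write its dilation as $U_{SE_i}=\nE^{-\nI\Delta t\,H_{SE_i}(\Delta t)}$ with $[H_{SE_i}(\Delta t),H_S+H_{E_i}]=0$, expand in $\Delta t$, and use boundedness of the generator:
\begin{itemize}
\item If $\|H_{SE_i}(\Delta t)\|_\infty$ grows faster than $\Delta t^{-1/2}$, the second-order map must vanish, and Lems.~\ref{lem:zero_dissipator} and~\ref{lem:zero_second_order} then make the step trivial.
\item Any divergent part must have vanishing thermal mean, or the first-order (Lamb-shift) term blows up.
\end{itemize}
This leaves $H_{SE_i}(\Delta t)=H_S+H_{E_i}+D_{S,i}+V_{SE_i}/\sqrt{\Delta t}$ with $[D_{S,i},H_S]=0$, $[V_{SE_i},H_S+H_{E_i}]=0$ and ${\rm Tr}_{E_i}\{V_{SE_i}\gamma_{E_i}\}=0$. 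That is a collision obeying Conditions~\ref{CM:1}--\ref{CM:4}, so $\mathcal{L}^{\rm (MTO)}_t$ is identified directly as some $\mathcal{L}^{\rm (CM)}_t$. Detailed balance is then a consequence (Prop.~\ref{prop:L_CM_=>_L}, via Hermiticity of $V$) rather than a prerequisite. Routing through Prop.~\ref{prop:ME_CM_equivalence} would also import its non-trivial-$H_S$ hypothesis, which the proposition does not state and the direct route does not need.

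In your first inclusion, the promised ``direct'' argument is circular. Your second part proves MTO $\Rightarrow$ ECTCM, not that thermal Lindbladians generate TOs. Your fallback is what the paper does: compose the collision TOs on the joint environment and pass to the limit, with the closure or finite-environment-approximation caveat. Keep that and drop the detour.
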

\begin{proof}[Proof sketch (full proof in App.~\ref{app:proof_of_MTO_=_CM})]
    It is immediate that every ECTCM induces an MTO. The continuous-time evolution generated by $\mathcal{L}_t^{\rm (CM)}$ is the limit of the $n$-collision map $\mathcal{E}_n$ with $\tau \equiv n\Delta t$ as $\Delta t\rightarrow 0$. Since each collision is a thermal operation by virtue of the defining conditions of ECTCMs, and thermal operations are closed under composition, $\mathcal{E}_n$ is itself a TO for all~$n$. Hence, its continuous-time limit is an MTO. Although this construction may require an infinite-dimensional environment in the continuous-time limit, Sec.~\ref{sec:error_analysis} shows that the dynamics can be approximated arbitrarily well using finite-dimensional environments.

    To prove the converse, we exploit the defining property of an MTO: for every sufficiently small time step $\Delta t$, the corresponding propagator is itself a TO. This provides a sequence of energy-conserving microscopic implementations, which we identify with the individual collisions of an ECTCM. Since we consider finite-dimensional systems, the corresponding Lindbladian is bounded. We then show that this boundedness forces the associated interaction Hamiltonians to scale as $H_{SE_i}(\Delta t) = O(1) + O(1/\sqrt{\Delta t})$, precisely matching the ECTCM structure and satisfying Conditions~\ref{CM:1}--\ref{CM:4}. It follows that every bounded MTO generator arises as the continuous-time limit of an ECTCM.
\end{proof}
Prop.~\ref{prop:MTO_=_CM} demonstrates that microscopic and operational formulations of Markovian quantum thermodynamics characterise the same class of dynamics. Combining this with Props.~\ref{prop:L_CM_=>_L} and \ref{prop:ME_CM_equivalence}, which establish the equivalence between axiomatic and microscopic descriptions, we therefore prove the {\it Axiomatic–Microscopic–Operational Equivalence} as stated in Thm.~\ref{thm:equivalences}.

As an immediate consequence, this triple equivalence reveals the structure of MTO generators: every $\mathcal{L}_t^{\rm (MTO)}$ is a thermal Lindbladian, and vice versa. This establishes a complete axiomatic characterisation of MTOs, thereby connecting their operational definition in terms of thermal operations with a simple generator-level description. As we discuss next, this characterisation has important implications for the long-standing problem of identifying and characterising thermodynamically admissible dynamics within the resource-theoretic framework of quantum thermodynamics.

\subsection{Axiomatic characterisation of MTO generators}\label{sec:axiomatic_MTO}
While TOs provide the standard operational framework for quantum thermodynamics in the resource theoretic setting, their characterisation remains mathematically challenging due to the highly non-unique energy-conserving unitary dilations underlying their definition with no canonical form known~\cite{vom2022bath}. Consequently, several alternative classes of free operations with more tractable mathematical descriptions have been proposed and studied, most notably Gibbs-preserving operations~(GPOs)~\cite{faist_fundamental_2018,Shiraishi2021GPO,luo_thermodynamic_2025} and their time-covariant version~(\mbox{CGPOs})~\cite{Cfiwikl2015Limitations,Shiraishi2025CovGPO}. However, the existence of strict gaps between TOs and GPOs~\cite{faist_gibbs-preserving_2015,Tajima2025Gibbs-preserving}, as well as between TOs and CGPOs~\cite{Cfiwikl2015Limitations,Ding2021Exploring}, demonstrates that this mathematical simplification comes at the cost of enlarging the set of admissible processes beyond what is physically well-motivated.

In the Markovian regime, the corresponding question asks for characterising MTOs and in particular their generators. Mirroring the situation for TOs, obtaining a tractable mathematical characterisation of MTOs has remained a highly nontrivial problem~\cite{vomEnde2023ExploringLimits}. Remarkably, Thm.~\ref{thm:equivalences} provides such a characterisation. 
By establishing the equivalence between the axiomatic and operational descriptions, it identifies MTO generators precisely as those satisfying the three thermodynamic axioms: Markovianity~\ref{axiom:I}, time-translation symmetry~\ref{axiom:II} and quantum detailed balance~\ref{axiom:III}. 
Consequently, violations of the three axioms admit a direct operational interpretation as the resources required to realise dynamics beyond the MTO framework.

In addition to this axiomatic characterisation, Ref.~\cite{vomEnde2023ExploringLimits} identified a family of MTO generators and conjectured that it exhausts the entire set of MTO generators (up to closure). In App.~\ref{app:relation_to_known_MTOs}, we show that this family coincides with the class of generators subject to Thm.~\ref{thm:equivalences} and therefore prove the conjectured completeness of the generator family under the assumptions considered in this work (namely, non-trivial system Hamiltonians and continuous time-dependent generators).

Finally, beyond the axiomatic characterisation, Thm.~\ref{thm:equivalences} is constructive: any axiomatic thermal Lindbladian generates an MTO through its microscopic ECTCM realisation. We now translate this construction into an explicit protocol for simulating arbitrary Markovian thermal dynamics using ECTCMs.

\section{ECTCM simulation for Markovian quantum thermodynamics}\label{sec:ECTCM_simulation}
Having established the equivalence between the axiomatic, microscopic, and operational formulations of Markovian quantum thermodynamics, we now turn to its constructive and practical implications. The equivalence between thermal Lindbladians and ECTCMs implies that any thermodynamically consistent Markovian dynamics can be simulated through a suitable sequence of energy-conserving collisions with thermal ancillae.
In Sec.~\ref{sec:protocol}, based on the considerations of Sec.~\ref{sec:constructing_ECTCM}, we provide such a ``digital" simulation protocol for constructing ECTCMs for arbitrary thermal Lindbladians, together with an analysis of the simulation error in Sec.~\ref{sec:error_analysis} for finite collision times $\Delta t$.

\subsection{Protocol for ECTCM simulation}\label{sec:protocol}
We summarise the ECTCM construction from Sec.~\ref{sec:constructing_ECTCM} in Protocol~\ref{tab:protocol} as a simulation protocol for arbitrary dynamics generated by thermal Lindbladians satisfying Axioms~\ref{axiom:I}–\ref{axiom:III}. The full details of the protocol are provided in the extended version, Protocol~\ref{tab:protocol_full}, presented in App.~\ref{app:full_protocol}. The protocol can be followed as a recipe: given a thermal Lindbladian $\mathcal{L}_t$ satisfying Axioms~\ref{axiom:I}–\ref{axiom:III}, an evolution time $\tau$, and collision time~$\Delta t$ as inputs, the protocol provides an ECTCM with $n=\tau/\Delta t$ collisions.
{
\renewcommand{\tablename}{Protocol}
\begin{table}[h]
\caption{\textbf{ECTCM simulation protocol}}
\label{tab:protocol}
\centering
\begin{tabular}{c p{0.93\linewidth}}
\hline\hline
\noalign{\vskip 1mm}
1. &
Choose the {\bf collision time} $\Delta t$ and the corresponding number of collisions
$n \equiv \tau/\Delta t$, where $\tau$ is the target {\bf evolution time}.
\\[2mm]
2. &
Stroboscopically sample the {\bf effective Hamiltonian} $H_S'(t)$ and the {\bf CP map} $\mathcal{A}_t$ involved in $\mathcal{L}_t$ [Eq.~(\ref{eq:L_gen})] at discrete times as
$\{H_S'(i\Delta t)\}_{i=1}^n$ and $\{\mathcal{A}_{i\Delta t}\}_{i=1}^n$, respectively.\\[2mm]
3. & Calculate the {\bf coefficients} $\xi_{jj'mm'}(i\Delta t)$ from $\mathcal{A}_{i\Delta t}$ according to Eq.~(\ref{eq:xi_def_from_A}).
\\[2mm]
4. &
Construct the {\bf coefficients}
$W_{jj'qq'}(i\Delta t)$ from
$\xi_{jj'mm'}(i\Delta t)$ [Eq.~(\ref{eq:xi_general_under_axioms})]; details of the construction, including the gauge choice, are provided in Protocol~\ref{tab:protocol_full}.
\\[2mm]
5. &
Construct the {\bf driving Hamiltonian}
$D_{S,i}$ from $H_S'(i\Delta t)$ according to Eq.~(\ref{eq:D_S_constructed}).
\\[2mm]
6. &
Construct the {\bf ancilla Hamiltonian} $H_{E_i}$
according to Eq.~(\ref{eq:H_E_constructed}).\\[2mm]
7. &
Construct the {\bf interaction Hamiltonian} $gV_{SE_i}$ according to Eq.~(\ref{eq:V_constructed}) with $g$ satisfying $g^2\Delta t = 1$.
~\\[2mm]
8. &
Implement the {\bf collision map} $\mathcal{E}_n$ [Eq.~(\ref{eq:collision_map_E_n})].\\[1mm]

\hline\hline
\end{tabular}
\end{table}
}

\subsection{Error analysis for Protocol~\ref{tab:protocol}}\label{sec:error_analysis}
Protocol~\ref{tab:protocol} implements a discrete-time map $\mathcal{E}_n$ [Eq.~(\ref{eq:collision_map_E_n})] which leads to the continuous-time Markovian dynamics $\overleftarrow{\rm T}\exp(\int_0^\tau \mathcal{L}_t^{\rm (CM)}{\rm d} t)$, with $\overleftarrow{\rm T}$ indicating chronological time-ordering, by taking the standard continuous-time limit for collision models (Sec.~\ref{sec:ME_from_CM}). By Prop.~\ref{prop:ME_CM_equivalence}, $\mathcal{L}_{t}^{\rm (CM)} = \mathcal{L}_t,\,\forall\, t$, assuming that the system Hamiltonian is not fully degenerate.
In this section, we quantify the error due to finite time steps, by analysing the distance between the discrete-time map $\mathcal{E}_n$ and the target map $\overleftarrow{\rm T}\exp(\int_0^\tau \mathcal{L}_t\,{\rm d} t)$. 
To exclude cases where $\mathcal{L}_t$ varies dramatically in time, we assume that 
$\mathcal{L}_t$ is Lipschitz continuous (i.e., its rate of change is globally bounded), thus excluding unphysical situations~\cite{heinonen2001lectures}. This means that there exists a constant $C_{\rm L}>0$ such that for any two times $r, u\ge 0$, the following bound holds:
\begin{align}
    \|\mathcal{L}_r-\mathcal{L}_u\|_{1\rightarrow 1} \le C_{\rm L} |r-u|,
\end{align}
where $\|\cdot\|_{1\rightarrow1}$ denotes the $1\rightarrow 1$ norm of linear maps~\cite{watrous2018theory}.

The relevant error (distance) is then
\begin{align}
    \varepsilon := \|\mathcal{E}_n - \overleftarrow{\rm T}\nE^{\int_0^\tau \mathcal{L}_t{\rm d} t}\|_{1\rightarrow 1}.
\end{align}
To obtain an overall error bound, we decompose the total simulation error as follows (See App.~\ref{app:error_analysis}):
\begin{align}
    \varepsilon \le n \max_{1\le i\le n}\varepsilon_{\rm s}(i) + n\max_{1\le i \le n}\left[\varepsilon_{\rm t}(i) + \varepsilon_{\rm c}(i)\right],
    \label{eq:varepsilon_bound_gen}
\end{align}
where $\varepsilon_{\rm s}(i)$ denotes the stroboscopic sampling error at the $i$th step,
\begin{align}
    \varepsilon_{\rm s}(i) := \|\nE^{\mathcal{L}_{i\Delta t}\Delta t} - \overleftarrow{\rm T}\nE^{\int_{(i-1)\Delta t}^{i\Delta t} \mathcal{L}_t{\rm d} t}\|_{1\rightarrow 1},
\end{align}
$\varepsilon_{\rm t}(i)$ denotes the Taylor truncation error,
\begin{align}
    \varepsilon_{\rm t}(i) := \|(\mathcal{I}+\mathcal{L}_i\Delta t)-\nE^{\mathcal{L}_i\Delta t}\|_{1\rightarrow 1},
\end{align}
and $\varepsilon_{\rm c}(i)$ denotes the collision error,
\begin{align}
    \varepsilon_{\rm c}(i) := \|\hat{\mathcal{E}}_i - (\mathcal{I}+\mathcal{L}_i\Delta t)\|_{1\rightarrow 1},
\end{align}
where $\hat{\mathcal{E}}_i$ is the $i$th collision channel defined as $\hat{\mathcal{E}}_i(\cdot) := {\rm Tr}_{E_i}\{U_i(\cdot\otimes\rho_{E_i})U_i^\dagger\}$.
Therefore, to bound $\varepsilon$, it suffices to bound the sampling error $\varepsilon_{\rm s}(i)$, the truncation error $\varepsilon_{\rm t}(i)$ and the collision error $\varepsilon_{\rm c}(i)$ individually. 
We defer the full analysis to App.~\ref{app:error_analysis}, where explicit bounds on these contributions are derived both for general collision-model simulations of arbitrary GKSL dynamics and, in particular, for the ECTCM simulation described in Protocol~\ref{tab:protocol}. Combining these bounds and using $\tau\equiv n\Delta t$, we prove that, for Protocol~\ref{tab:protocol} and sufficiently small $\Delta t$,
\begin{align}
    \varepsilon = O(\tau\sqrt{\Delta t}) = O\left(\frac{\tau^{3/2}}{n^{1/2}}\right).
    \label{eq:error_scaling}
\end{align}
In other words, the number of collisions for the ECTCM in Protocol~\ref{tab:protocol} to achieve an error below $\varepsilon$ scales as
\begin{align}
    n = O\left(\frac{\tau^3}{\varepsilon^2}\right).
\end{align}
In the ECTCM, the ancilla dimension is $d_S^3$ [see Eq.~(\ref{eq:H_E_constructed})]. If the ancilla can be reset to its thermal state after each collision, only a single ancilla of this size is required. Otherwise, if each collision requires an additional fresh ancilla, the total environment dimension $d_E$ scales as
\begin{align}
    d_E = (d_S^3)^{O\left({\tau^3}/{\varepsilon^2}\right)}.
    \label{eq:scaling_of_d_E}
\end{align}
This result is consistent with the standard open-system perspective that a Markovian master equation $(\varepsilon = 0)$ arises from coupling to an effectively infinite thermal reservoir~\cite{davies1974markovian,breuer2002theory}. In addition,  within the ECTCM framework, it quantifies how the dynamics would deviate from the Markovian limit, through the error $\varepsilon$, when interacting with a finite-dimensional environment $(d_E < \infty)$. 

We emphasise, however, that the requirement of an infinite-dimensional environment stems from the fact that the ECTCM simuation is \textit{universal}: it applies to every thermal Lindbladian satisfying Axioms~\ref{axiom:I}–\ref{axiom:III} and, once constructed, reproduces the corresponding continuous-time dynamics for arbitrary evolution times. This universality, however, does not imply that the construction is dimension-optimal. Indeed,
for a specific thermal Lindbladian $\mathcal{L}$ (assume that it is time-independent for simplicity), the map (i.e., MTO) $\nE^{\mathcal{L} \tau}$ at a fixed evolution time $\tau$ may admit an \textit{exact} implementation using only a finite-dimensional environment. We conjecture this behaviour to be generic and provide an explicit example in App.~\ref{app:finite_MTO_realisation}.

\section{ECTCM constructions for physical systems}\label{sec:applications}
In this section, we illustrate the practical utility of our Protocol~\ref{tab:protocol} by constructing ECTCMs to simulate two paradigmatic physical models. In Sec.~\ref{sec:TLS_model} we present an ECTCM simulation for a two-level system coupled to a bosonic environment, while in Sec.~\ref{sec:three_level_ATM}, we transform a three-level autonomous thermal machine into a three-stroke engine operating through a cycle of three sequential ECTCMs. Remarkably, we prove that the resulting engine not only faithfully reproduces the continuous dynamics for $\Delta t \to 0$, but actually retains the same efficiency as the original thermal machine, regardless of the collision time.

\subsection{Two-level atom in a bosonic environment}\label{sec:TLS_model}
A two-level system (TLS) coupled to a bosonic environment constitutes a canonical model for dissipative quantum dynamics and a paradigmatic setting for studying thermalisation in quantum systems~\cite{DeRaedt1984Thermodynamics,Leggett1987Dynamics,breuer2002theory,gardiner2004quantum,paule2018thermodynamics}. 
It therefore serves as a natural benchmark for demonstrating our ECTCM simulation protocol.

Here, following Ref.~\cite{paule2018thermodynamics}, we consider a two-level  atom characterised by the Hamiltonian $H_S = \epsilon \sigma^+\sigma^-$ with $\sigma^- \equiv \ketbra{g}{e}_S$ and $\sigma^+ = (\sigma^-)^\dagger$ being the lowering and raising operators for the atom. The atom is coupled to an infinite number of bosonic modes with the Hamiltonian
$H_E = \sum_\nu \Omega_\nu b_\nu^\dagger b_\nu$ with $b_\nu$ and $b_\nu^\dagger$ being the annihilation and creation operators for the mode $\nu$.
Within the rotating-wave approximation~\cite{burgarth2024taming}, the interaction Hamiltonian $V$ takes the multi-mode Jaynes--Cummings form
\begin{align}
    V = \sum_\nu f_\nu (\sigma^- \otimes b_\nu^\dagger + \sigma^+ \otimes b_\nu),
\end{align}
where $f_\nu$ is the coupling strength between the atom and the $\nu$th mode. Taking the limit of a continuum of modes and applying the Born--Markov approximation in the weak-coupling limit, the secular approximation selects the resonant frequency of the reservoir as the relevant damping contribution~\citep[]
{paule2018thermodynamics}. Further assuming an initial thermal state of the bosonic environment and neglecting the Lamb shift, the reduced dynamics of the atom are governed by the following GKSL master equation: $\dot{\rho}_S(t) = \mathcal{L}^{\rm (TLS)}(\rho_S(t))$, where
\begin{align}
    \mathcal{L}^{\rm (TLS)}(\cdot) := -{\rm i}[H_S, \cdot] + \Gamma \left[(n_{\rm th} + 1) \mathcal{D}_{\sigma^-}(\cdot) + n_{\rm th}\mathcal{D}_{\sigma^+}(\cdot)\right],
\end{align}
with $\mathcal{D}_X(\cdot):= X(\cdot)X^\dagger - [X^\dagger X, \cdot]_+/2$, $n_{\rm th} \equiv (\nE^{\beta \epsilon} - 1)^{-1}$ denoting the Bose--Einstein thermal occupation number and the damping coefficient $\Gamma\propto f^2$ that depends on the density of states at the atom's frequency $\epsilon$. 

Since $[V, H_S+H_E] \neq 0$ in the TLS model, $\mathcal{L}^{\rm (TLS)}$ does a priori not arise from an energy-conserving interaction. However, it is easy to check that $\mathcal{L}^{\rm (TLS)}$ satisfies Axioms~\ref{axiom:I}--\ref{axiom:III}. Consequently, Thm.~\ref{thm:equivalences} guarantees the existence of an equivalent ECTCM. Applying Protocol~\ref{tab:protocol}, we obtain an ECTCM in which each ancilla consists of three qubits with Hamiltonian
\begin{align}
    H_E^{\rm (CM)} = \frac{\epsilon}{2}(|10\rangle\!\langle 10| - |01\rangle\!\langle 01|)\otimes\mathds{1}_2,
\end{align}
identical for every collision $i$, and the  system--ancilla interaction is
\begin{align}
    V_{SE}^{\rm (CM)} = V_0(|g\rangle\!\langle e|_S\otimes |101\rangle\!\langle 010|_E + {\rm h.c.}),
\end{align}
where $V_0\equiv \sqrt{2\Gamma\left[1+\cosh(\beta\epsilon/2)\right]/\sinh(\beta\epsilon/2)}$. By construction, $[V_{SE}^{\rm (CM)}, H_S+H_E^{\rm (CM)}]=0$, while the resulting ECTCM generator reproduces the original one, i.e., $\mathcal{L}^{\rm (CM)} = \mathcal{L}^{\rm (TLS)}$.
\begin{figure}
    \centering
    \includegraphics[width=\linewidth]{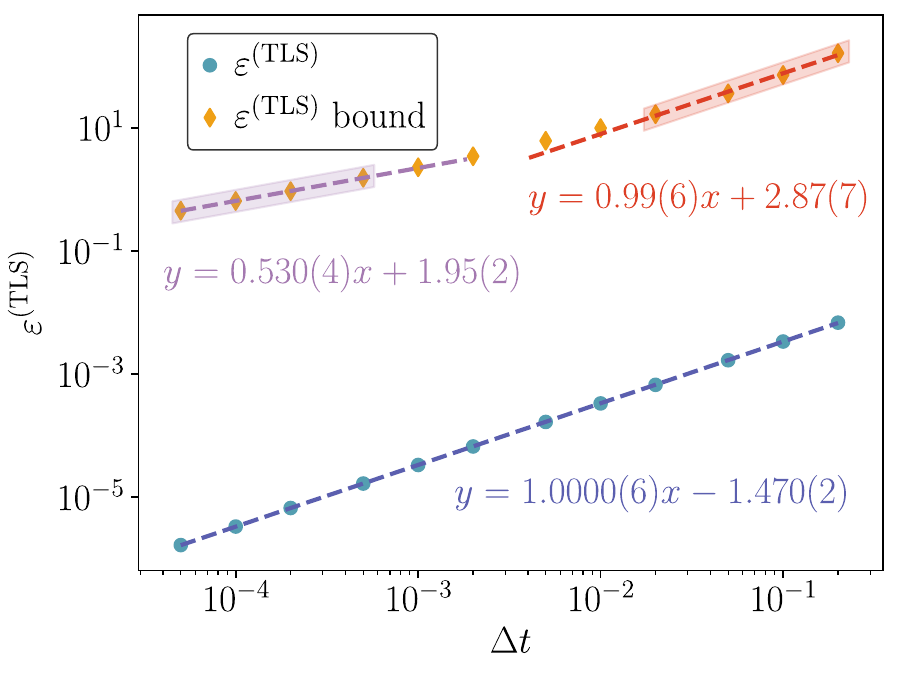}
    \caption{Error scaling of the ECTCM simulation of the TLS model. The initial state is $\rho_S(0)=\ketbra{\psi}{\psi}_S$ with $\ket{\psi}_S=(\ket{g}_S+\sqrt{3}\ket{e}_S)/2$. Similar error scaling is observed for other initial states. Other parameters are set to $\epsilon=1$, $\beta=0.1$, $\Gamma=0.01$, and $\tau=4$. The system remains out of equilibrium at time $\tau$, as quantified by $\|\rho_S(\tau)-\gamma_S\|_1 \simeq 0.63$. The error $\varepsilon^{\rm (TLS)}$ and its bound [Eq.~(\ref{eq:error_bound})] are analysed via linear fits in the log-log plot: the blue-dashed curve is fitted using all data points of $\varepsilon^{\rm (TLS)}$, while
    the purple-dashed (red-dashed) curve is obtained from the first (last) four data points of the $\varepsilon^{\rm (TLS)}$ bound, highlighted by the shaded regions. Uncertainties of fitted slopes and intercepts are given in parentheses on the last digit.}
    \label{fig:error_scaling_TLS}
\end{figure}

For an initial state $\rho_S(0)$, we write the final state at time $\tau$ in the TLS model as $\rho_S(\tau)\equiv \nE^{\mathcal{L}^{\rm (TLS)}\tau}(\rho_S(0))$, and the final state after $n\equiv\tau/\Delta t$ steps in the ECTCM with finite collision time $\Delta t$ as $\rho_{S,n}^{\rm (CM)}\equiv \mathcal{E}_n\left(\rho_S(0)\right)$, where $\mathcal{E}_n$ is the collision map [Eq.~(\ref{eq:collision_map_E_n})]. The simulation error $\varepsilon^{\rm (TLS)} \equiv \|\rho_S(\tau) - \rho_{S,n}^{\rm (CM)}\|_1$ is upper bounded by the $1\rightarrow 1$ norm $\|\nE^{\mathcal{L}^{\rm (TLS)}\tau} - \mathcal{E}_n\|_{1\rightarrow 1}$. It is therefore bounded by the error estimate established for Protocol~\ref{tab:protocol} [Eq.~(\ref{eq:error_bound})].

In Fig.~\ref{fig:error_scaling_TLS}, we plot the scalings of $\varepsilon^{\rm (TLS)}$ and its bound [Eq.~(\ref{eq:error_bound})] as functions of the collision time $\Delta t$, for a fixed initial state $\rho_S(0)$. The decay of $\varepsilon^{\rm (TLS)}$ with decreasing $\Delta t$ demonstrates the successful ECTCM simulation of $\mathcal{L}^{\rm (TLS)}$. Moreover, while the upper bound of~$\varepsilon^{\rm (TLS)}$ [Eq.~(\ref{eq:error_bound})] transitions from $O(\Delta t)$ to $O(\sqrt{\Delta t})$ in the small $\Delta t$ regime, we find that the actual error~$\varepsilon^{\rm (TLS)}$ retains the linear scaling, regardless of $\Delta t$. This is because the error terms contributing to the $O(\sqrt{\Delta t})$ behaviour in the ECTCM vanish in the particular case of the TLS model.
As discussed in Ref.~\cite{cleve2017efficient}, the linear error scaling, or equivalently the gate cost $n = \mathcal{O}(\tau^2/\varepsilon)$, is optimal for collision models. Our ECTCM construction for the TLS model is therefore optimal in this sense. In fact, in App.~\ref{app:scaling_for_d=2}, we further prove that our construction achieves the optimal error scaling for all thermal qubit dynamics  with non-trivial system Hamiltonian. 

By studying the TLS model, we demonstrate the constructive nature of Thm.~\ref{thm:equivalences}: even when a thermodynamic master equation does not appear to originate from an energy-conserving interaction, like it is the case here, our protocol yields an explicit energy-conserving digital simulation through ECTCMs, as long as the corresponding Lindbladian $\mathcal{L}_t$, i.e., satisfies Axioms~\ref{axiom:I}--\ref{axiom:III}.

Having established the equivalence for thermalisation with a \textit{single} heat bath, we now turn to a quantum system interacting with multiple baths, namely a quantum thermal engine, to explore the implications of Thm.~\ref{thm:equivalences} in this general setting.

\subsection{Three-level autonomous thermal machine}\label{sec:three_level_ATM}
One of the prototypical models of quantum thermal engines is provided by the Scovil--Schulz-DuBois three-level maser~\cite{Scovil59threelevel}. In this model, different transitions between energy levels are mediated by thermal reservoirs at different temperatures, allowing the system to operate as a minimal thermal machine in regimes such as refrigeration and heat pumping~\cite{linden2010small,kosloff2014quantum,correa2014quantum,paule2018thermodynamics}. Despite its simple structure, the model has served as a paradigmatic platform for studying general principles of quantum thermodynamics, including optimal efficiency, quantum effects in energy transport, and thermodynamic uncertainty relations~\cite{boukobza2007three,correa2014quantum,kalaee2021violating}. 

In this subsection, we present an explicit ECTCM construction of such a thermal machine to demonstrate how Protocol~\ref{tab:protocol} can be naturally extended to model interactions with multiple thermal baths, even when the resulting Lindbladian is not itself thermal [i.e., it does not satisfy Axioms~\ref{axiom:I}–\ref{axiom:III}]. In Sec.~\ref{sec:ECTCM_simulation_ATM}, we first show how the dynamics generated by a three-level autonomous thermal machine can be reproduced by a sequence of ECTCMs associated with the individual baths. In Sec.~\ref{sec:ECTCM_thermal_engine}, we adopt a complementary perspective and regard the resulting ECTCM construction as a thermal machine in its own right. This interpretation yields a finite-stroke engine whose thermodynamic behaviour can be analysed independently of the continuous-time limit, enabling direct comparisons with the autonomous machine stemming from the corresponding master equation. In Sec.~\ref{sec:heat_efficiency_ATM}, we show that the ECTCM reproduces not only the state dynamics and heat currents of the original machine for the case $\Delta t \to 0$, but also the key thermodynamic quantity -- the efficiency -- already in the regime of finite stroke length, where the simulation of the dynamics itself becomes inaccurate.

\subsubsection{ECTCM Simulation of an autonomous thermal machine}\label{sec:ECTCM_simulation_ATM}
We consider a three-level autonomous thermal machine~(ATM), where transitions between each pair of energy levels are mediated by a thermal bath~\cite{correa2014quantum,paule2018thermodynamics} (See Fig.~\ref{fig:three_level_ATM}).
Denote the three energy levels as $\{\ket{g}_S, \ket{e_a}_S, \ket{e_b}_S\}$. 
The system Hamiltonian is given by
\begin{align}
    H_S = \epsilon_1\ketbra{e_a}{e_a}_S + (\epsilon_1+\epsilon_2)\ketbra{e_b}{e_b}_S.
\end{align}
Let $\epsilon_3\equiv \epsilon_1+\epsilon_2$. When the three thermal baths, with inverse temperatures $\beta_1$, $\beta_2$ and $\beta_3$, are all bosonic and weakly coupled to the system, the system dynamics can be modelled by a GKSL master equation with Lindbladian $\mathcal{L}^{\rm (ATM)}$ as~\cite{paule2018thermodynamics}
\begin{align}
    \mathcal{L}^{\rm (ATM)}(\cdot) = -\nI[H_S, \cdot] + \Gamma\sum_{r=1}^{3}\mathcal{D}_r(\cdot),
\end{align}
where the Lamb shift has been neglected, the damping coefficient $\Gamma$ is assumed to be the same for all baths, and 
\begin{gather}
    \mathcal{D}_r(\cdot):= (n_{{\rm th}, r} + 1)\mathcal{D}_{\sigma^-_r}(\cdot) + n_{{\rm th}, r}\mathcal{D}_{\sigma^+_r}(\cdot),
\end{gather}
with  $\mathcal{D}_X(\cdot):= X(\cdot)X^\dagger - \tfrac{1}{2}[X^\dagger X, \cdot]_+$, $n_{{\rm th},r} \equiv (\nE^{\beta_r \epsilon_r} - 1)^{-1}$ and the ladder operators $\sigma_r^\pm$ (satisfying $\sigma^+_r = (\sigma^-_r)^\dagger$) defined as $\sigma_1^- \equiv \ketbra{g}{e_a}_S$, $\sigma^-_2 \equiv \ketbra{e_a}{e_b}_S$ and $\sigma^-_3 \equiv \ketbra{g}{e_b}_S$.

We denote the steady state by $\pi_{S}$, defined through $\mathcal{L}^{\rm (ATM)}(\pi_S) = 0$. Note that $\pi_S\neq \gamma_S$, i.e., the steady state is not thermal. The steady-state heat flux entering from the $r$th bath is defined as $\dot{Q}_r^{\rm (ATM)} \equiv {\rm Tr}\{H_S\mathcal{D}_r(\pi_{S})\}$.
The system operates as a refrigerator when $\beta_1 \ge \beta_3 \ge \beta_2$ and if the following condition is satisfied~\cite{correa2014quantum, paule2018thermodynamics}:
\begin{align}
    \epsilon_2 \ge \left(\frac{\beta_1 - \beta_3}{\beta_3 - \beta_2}\right)\epsilon_1.
\end{align}
In this regime, the system reaches the steady state, i.e., $\rho_S = \pi_S$, where it continuously extracts heat from the first (cold) bath ($\dot{Q}_1^{\rm (ATM)} \ge 0)$ and releases it into the third bath at intermediate temperature ($\dot{Q}_3^{\rm (ATM)} \le 0$), with the required energy supplied by the second (hot) bath ($\dot{Q}_2^{\rm (ATM)} \ge 0$), as shown in Fig.~\ref{fig:three_level_ATM}. The efficiency of this refrigerator $\eta^{\rm (ATM)}$ has been shown to be~\cite{correa2014quantum, paule2018thermodynamics}:
\begin{align}
    \eta^{\rm (ATM)} := \frac{\dot{Q}_1^{\rm (ATM)}}{\dot{Q}_2^{\rm (ATM)}} = \frac{\epsilon_1}{\epsilon_2} \le \frac{\beta_3 - \beta_2}{\beta_1 - \beta_3} =: \eta_{\rm Carnot}^{\rm (ATM)}.
    \label{eq:eta^ATM}
\end{align}
\begin{figure}
    \centering
    \includegraphics[width=\linewidth]{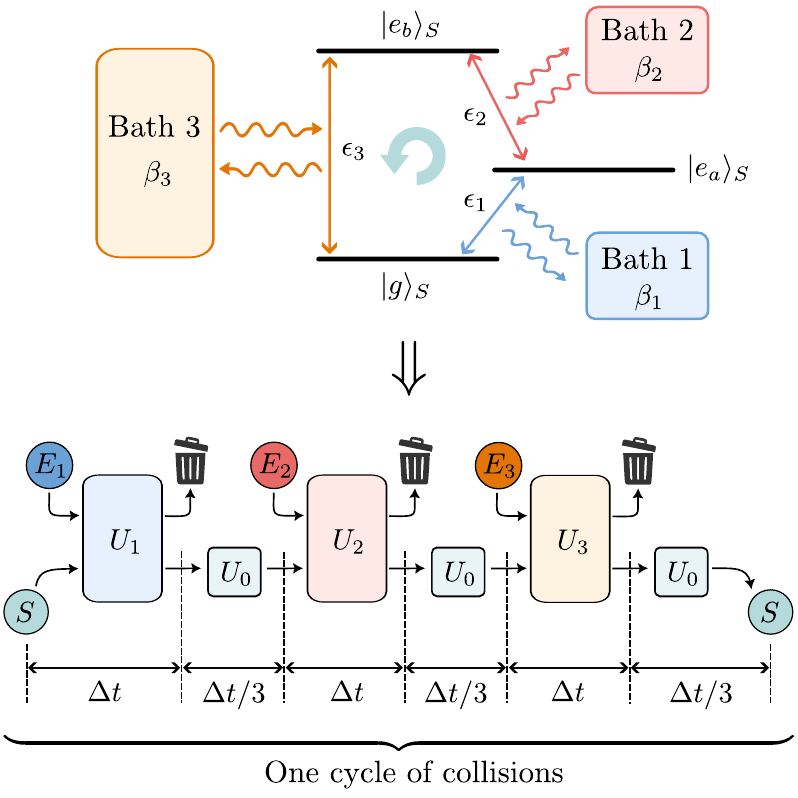}
    \caption{Three-level autonomous thermal machine and its ECTCM realisation. Top: A three-level system with energy eigenstates $\ket{g}_S$, $\ket{e_a}_S$, and $\ket{e_b}_S$, where the three transitions with energy gaps $\epsilon_1$, $\epsilon_2$, and $\epsilon_3$ are each coupled to independent thermal baths at inverse temperatures $\beta_1$, $\beta_2$, and $\beta_3$, respectively, with $\beta_1 \geq \beta_3 \geq \beta_2$. In this regime, the machine can operate as a refrigerator, absorbing heat from both the cold and hot baths and dumping it into the bath at intermediate temperature. Bottom: An energy-conserving thermal collision model (ECTCM) realisation of this thermal machine, in which the system sequentially interacts with three thermal ancillae prepared at the corresponding temperatures. Each collision consists of a unitary interaction with an ancilla lasting for $\Delta t$, followed by the system’s free unitary evolution for $\Delta t/3$. In the continuous-time limit, one collision cycle reproduces the infinitesimal-time-step dynamics of the autonomous thermal machine.}
    \label{fig:three_level_ATM}
\end{figure}
To implement this autonomous thermal machine via collision models, and in particular, ECTCMs, we notice that while $\mathcal{L}^{\rm (ATM)}$ satisfies Axioms~\ref{axiom:I} and \ref{axiom:II}, it does {\it not} satisfy Axiom~\ref{axiom:III}, quantum detailed balance with respect to a single temperature. Therefore, Protocol~\ref{tab:protocol} cannot be directly adopted to construct an ECTCM realisation of this dynamics. However, since each dissipator $\mathcal{D}_r$ satisfies $[\mathcal{D}_r, \mathcal{L}_{H_S}] = 0$ as well as detailed balance with respect to the corresponding bath inverse temperature $\beta_r$, we can regard them individually as thermal Lindbladians (without the unitary part) and use Protocol~\ref{tab:protocol} to construct an ECTCM to implement each of them. 
We thus construct three ECTCMs, one for each thermal bath, and combine them into a three-stroke cycle in which the system interacts sequentially with the three corresponding ancilla streams. 

To show that this sequential collision protocol indeed reproduces the dynamics generated by the full dissipator~$\Gamma\sum_{r=1}^3\mathcal{D}_r$, we invoke the Trotter--Suzuki formula~\cite{trotter1959product,suzuki1976generalized},
\begin{align}
    \nE^{\tau\Gamma\sum_{r=1}^3 \mathcal{D}_r} = \left(\overleftarrow{\prod_{r=1}^3}\nE^{\frac{\tau}{n} \Gamma\mathcal{D}_r}\right)^n + O(\tau^2 / n),
\end{align}
which shows that, in the limit $n \rightarrow \infty$, the collective dissipative dynamics can be approximated arbitrarily well by a sequential application of the individual dissipative dynamics.
Since each map $\nE^{\frac{\tau}{n} \Gamma\mathcal{D}_r}$ can be realised by its corresponding ECTCM, the entire dissipative dynamics admits a three-stroke ECTCM implementation.

We hence construct the collision model illustrated in Fig.~\ref{fig:three_level_ATM} to simulate this ATM, where each original full collision is decomposed into three independent collisions with different corresponding ancillae. Each of these constituent collisions is followed by the system's free unitary evolution to capture the missing unitary part in the three ECTCM constructions (see bottom of Fig.~\ref{fig:three_level_ATM}). Formally, denoting the individual collision time by $\Delta t$, the $i$th full collision map $\hat{\mathcal{E}}_i$ is given by
\begin{align}
    \hat{\mathcal{E}}_i &:= \mathcal{U}_0\circ\hat{\mathcal{E}}_{i,3}\circ\mathcal{U}_0\circ\hat{\mathcal{E}}_{i,2}\circ\mathcal{U}_0\circ\hat{\mathcal{E}}_{i,1}
    \label{eq:hat_E^(i)}
\end{align}
where $\mathcal{U}_0(\cdot):= U_0(\cdot)U_0^\dagger$ with $U_0\equiv \nE^{-\nI H_S \Delta t/3}$ and
\begin{align}
    \hat{\mathcal{E}}_{i,r}(\cdot):= {\rm Tr}_{E_{i,r}}\{U_{i,r}(\cdot\otimes\gamma_{E_{i,r}})U_{i,r}^\dagger\}
\end{align}
with $\gamma_{E_{i,r}} \equiv \nE^{-\beta_r H_{E_{i,r}}}/{\rm Tr}\{\nE^{-\beta_r H_{E_{i,r}}}\}$ and $U_{i,r} \equiv \nE^{-\nI gV_{i,r} \Delta t}$. Here, $g\equiv 1/\sqrt{\Delta t}$, and
$H_{E_{i,r}}$ and $V_{i,r}$ are constructed using Protocol~\ref{tab:protocol} by treating $\Gamma \mathcal{D}_r$ as a thermal Lindbladian. We remark that since $[V_{i,r}, H_S + H_{E_{i,r}}]=0$, and hence $[\hat{\mathcal{E}}_{i,r}, \mathcal{U}_0]=0$ for $r=1,2,3$, the timing of the system’s free evolution between collisions does not affect the overall dynamics.

After $n$ full collisions, the resulting map becomes
\begin{align}
    \mathcal{E}_n := \hat{\mathcal{E}}_n\circ \hat{\mathcal{E}}_{n-1}\circ\cdots\circ \hat{\mathcal{E}}_1.
\end{align}
For a total evolution time $\tau$, we set $n\equiv \tau/\Delta t$. 
By construction, $\mathcal{E}_n$ simulates the Markovian map $\nE^{\mathcal{L}^{\rm (ATM)} \tau}$ in the continuous-time limit ($\Delta t\rightarrow 0$). 

Since a single full collision $\hat{\mathcal{E}}_i$ requires a duration of $4\Delta t$, the collision model requires a total time $4\tau$ to reproduce dynamics over a duration $\tau$. This overhead arises naturally from recasting a continuous-time thermal machine into a sequence of finite-time strokes. The simulation time can be reduced by increasing the magnitudes of $H_S$ and $V_{i,r}$ in $U_0$ and $U_{i,r}$, respectively. 
\begin{figure*}[t]
    \centering
    \includegraphics[width=\textwidth]{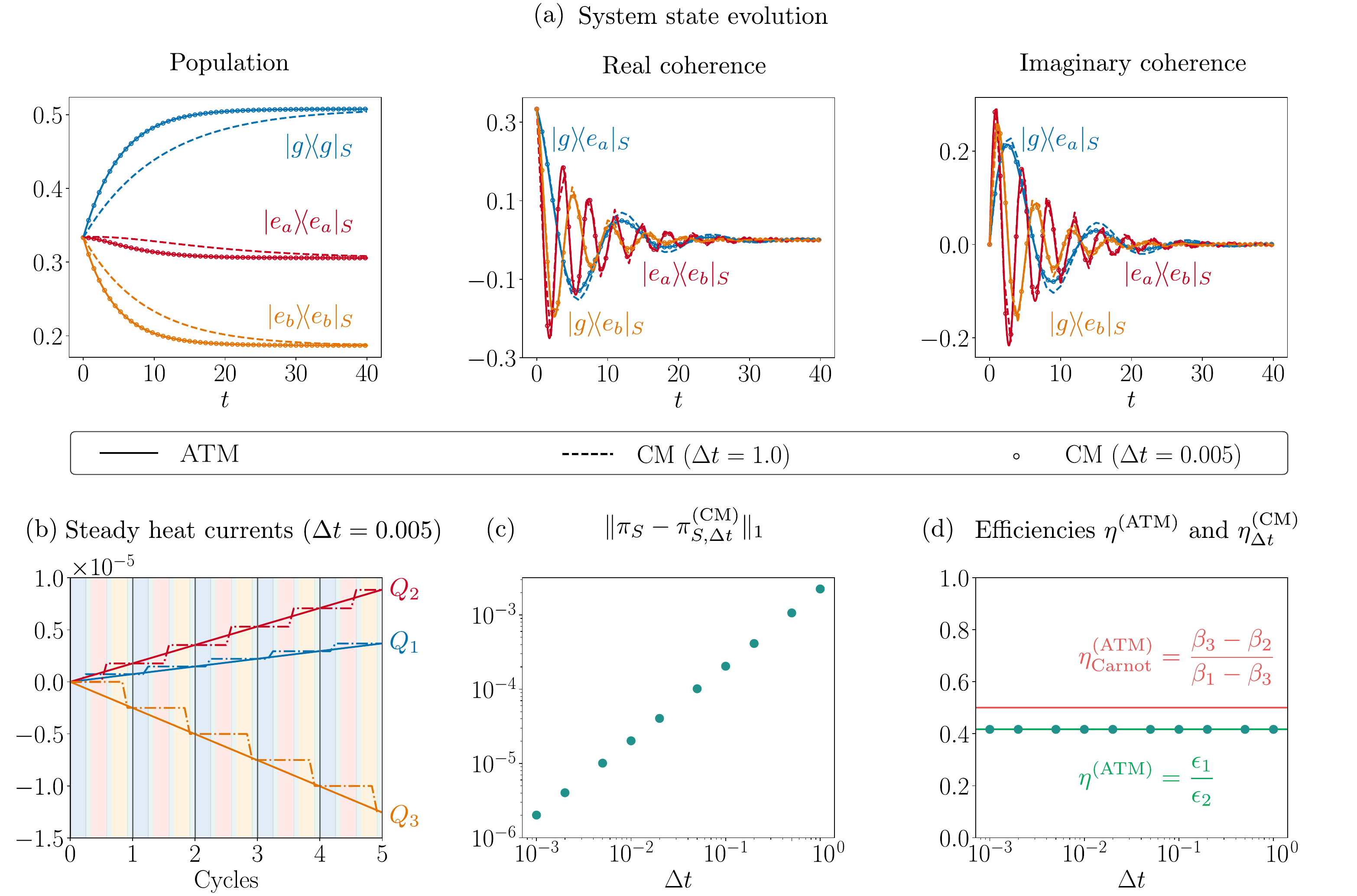}
    \caption{Numerical comparison between the three-level autonomous thermal machine (ATM) and its ECTCM realisations (CM).  Parameters: $\epsilon_1 = 0.5$, $\epsilon_2 = 1.2$, $\beta_1 = 1$, $\beta_2 = 0.4$, $\beta_3 = 0.6$ and $\Gamma=0.05$. (a) Population and coherence dynamics of the ATM (solid lines) and its ECTCM realisations with $\Delta t = 1.0$ (dashed lines) and $\Delta t = 0.005$ (circles). The initial state is $\rho_{S}(0)=\ketbra{\psi}{\psi}_S$ where $\ket{\psi}_S=(\ket{g}_S+\ket{e_a}_S+\ket{e_b}_S)/\sqrt{3}$. Similar convergence for sufficiently small $\Delta t$ is observed for other initial states. The time scale of the ECTCM is four times longer than that of the ATM, as discussed in the main text; for comparison, the time axes have been rescaled accordingly. In addition, to improve visibility, the ECTCM dynamics with $\Delta t = 0.005$ are shown only every 200 steps. (b)  Steady-state heat currents in the ATM (solid lines) and its ECTCM realisation with $\Delta t = 0.005$ (dash-dotted lines). The heat current per cycle in the ATM is given by ${Q}_r^{\rm (ATM)} = \dot{Q}_r^{\rm (ATM)} \Delta t$ for $r = 1,2,3$. The three strokes within each cycle of the ECTCM realisation are highlighted using colors consistent with Fig.~\ref{fig:three_level_ATM}. The continuously plotted dash-dotted lines represent discrete, stroke-resolved quantities that are  updated once per stroke of duration $\Delta t$. The collision model does not resolve the precise heat-transfer time within each stroke. (c) Difference between the steady states of the ATM and its ECTCM realisations for different values of $\Delta t$, quantified by the trace norm. (d) Efficiencies of the ATM (green line), its ECTCM realisations for different values of $\Delta t$ (green dots), and the Carnot efficiency (red line) (Cf. Prop.~\ref{prop::eff_ECTCM_ATM}).} 
    \label{fig:ATM_plot}
\end{figure*}

In Fig.~\ref{fig:ATM_plot}(a), we validate the analytical ECTCM realisation by showing that $\mathcal{E}_n$ reproduces the same dynamics as $\nE^{\mathcal{L}^{\rm (ATM)} \tau}$ for sufficiently small $\Delta t$. This establishes the ECTCM as a faithful microscopic realisation of the autonomous thermal machine. However, the constructed ECTCM possesses an additional interpretation: for arbitrary collision times $\Delta t$, it constitutes a finite-stroke thermal engine in its own right. In the remainder of this subsection, we investigate its thermodynamic performance and show that, surprisingly, its efficiency exactly reproduces that of the original autonomous machine for arbitrary stroke length $\Delta t$, despite the simulation of the dynamics losing faithfulness with increasing~$\Delta t$.

\subsubsection{The resulting ECTCM as a finite-stroke thermal machine}\label{sec:ECTCM_thermal_engine}
A finite-stroke thermal machine operates by sequentially coupling the working system to different thermal reservoirs for finite durations, rather than interacting with them continuously. Each cycle therefore consists of a discrete sequence of thermodynamic strokes~\cite{alicki_introduction_2018}.
To see that the ECTCM is such a thermal machine in its own right, we consider its long-time behaviour.
In the long time limit ($\tau \rightarrow \infty$), $\mathcal{E}_n$ transforms any input state into the $\Delta t$-dependent steady state $\pi_{S,\Delta t}^{\rm (CM)}$, of the single cycle of collisions, $\hat{\mathcal{E}}_i$, satisfying $\hat{\mathcal{E}}_i(\pi_{S,\Delta t}^{\rm (CM)}) = \pi_{S,\Delta t}^{\rm (CM)}$. Here, we exclude the fine-tuned values of $\Delta t$ for which the steady state $\pi_{S,\Delta t}^{\rm (CM)}$ does not exist. At this steady state, $\hat{\mathcal{E}}_i$ can be regarded as a thermal cycle consisting of three strokes associated with the three baths, respectively. We first show that the resulting engine satisfies the first and second laws of thermodynamics by charaterising its heat currents. We then analyse its efficiency and compare them with that of the original ATM.

First, the ECTCM thermal machine provides a transparent framework for tracking energy exchanges between the system and the baths. By virtue of Condition~\ref{CM:4}, in the absence of external driving of the system (i.e., $D_{i,r}=0$), the system–bath interaction neither supplies nor consumes energy during a collision. In particular, the interaction energy vanishes on average throughout each collision, i.e., 
\begin{align}
    \langle V_{i,r} \rangle_{\rho_{SE_{i,r}}(s)} \equiv {\rm Tr}\{V_{i,r}\rho_{SE_{i,r}}(s)\} = 0
    \label{eq:V_i_r_=_0}
\end{align}
$\forall\, \rho_S(0),\, s\in[0,\Delta t]$, where $\rho_S(0)$ is the initial state at the beginning of the $(i,r)$th collision and $\rho_{SE_{i,r}}(s)\equiv U_i(s)(\rho_S(0)\otimes\rho_{E_i})U_i^\dagger(s)$.
As a result, any change in the system’s energy can be attributed solely to energy exchanged with the baths.
This allows us to faithfully define the heat currents at the steady state $\pi_{S,\Delta t}^{\rm (CM)}$ solely in terms of the system Hamiltonian $H_S$ (see below). 

In a single cycle described by $\hat{\mathcal{E}}_i$ [Eq.~(\ref{eq:hat_E^(i)})],
let $\tilde{\pi}_{S,\Delta t}^{\rm (CM)}\equiv \mathcal{U}_0\circ\hat{\mathcal{E}}_{i,1}(\pi_{S,\Delta t}^{\rm (CM)})$ and $\vardbtilde{\pi}_{S,\Delta t}^{\rm (CM)}\equiv \mathcal{U}_0\circ\hat{\mathcal{E}}_{i,2}(\tilde{\pi}_{S,\Delta t}^{\rm (CM)})$ the states of the system after the first and the second stroke, respectively. 
The heat currents from each bath during a collision can be defined as
\begin{align}
    Q_{1;\Delta t}^{\rm (CM)} &\equiv {\rm Tr}\{H_S [\tilde{\pi}_{S,\Delta t}^{\rm (CM)} - \pi_{S,\Delta t}^{\rm (CM)}]\}, \label{eq:Q_1_ATM:CM}\\
    Q_{2;\Delta t}^{\rm (CM)} &\equiv {\rm Tr}\{H_S [\vardbtilde{\pi}_{S,\Delta t}^{\rm (CM)} - \tilde{\pi}_{S,\Delta t}^{\rm (CM)}]\}, \label{eq:Q_2_ATM:CM}\\
    Q_{3;\Delta t}^{\rm (CM)} &\equiv {\rm Tr}\{H_S [\pi_{S,\Delta t}^{\rm (CM)} - \vardbtilde{\pi}_{S,\Delta t}^{\rm (CM)}]\}, \label{eq:Q_3_ATM:CM}
\end{align}
where the last definition follows from the fact that $\pi_{S,\Delta t}^{\rm (CM)}$ is the fixed point of $\hat{\mathcal{E}}_i$ such that $\pi_{S,\Delta t}^{\rm (CM)} = \mathcal{U}_0\circ\hat{\mathcal{E}}_{i,3}(\vardbtilde{\pi}_{S,\Delta t}^{\rm (CM)})$. 
It follows immediately that the heat currents satisfy the {\it first law} of thermodynamics,
\begin{align}
    Q_{1;\Delta t}^{\rm (CM)} + Q_{2;\Delta t}^{\rm (CM)} + Q_{3;\Delta t}^{\rm (CM)} = 0.
\end{align}
Moreover, the free energy of a quantum state $\rho$ relative to the ambient inverse temperature $\beta$ is defined as~\cite{Vinjanampathy2016QTD}: $F_\beta(\rho) = {\rm Tr}\{H\rho\} - \beta^{-1}S(\rho)$ with $S(\rho):=-{\rm Tr}\{\rho\ln\rho\}$. Since in this ECTCM thermal machine, each collision $\hat{\mathcal{E}}_r$ is a TO which cannot increase the free energy of a state~\cite{lostaglio2019introductory}, in a single cycle, we have
\begin{align}
    &\quad \beta_1\left[F_{\beta_1}(\tilde{\pi}_{S,\Delta t}^{\rm (CM)}) - F_{\beta_1}(\pi_{S,\Delta t}^{\rm (CM)})\right] \nonumber\\
    &+ \beta_2\left[F_{\beta_2}(\vardbtilde{\pi}_{S,\Delta t}^{\rm (CM)}) - F_{\beta_2}(\tilde{\pi}_{S,\Delta t}^{\rm (CM)})\right]\nonumber\\
    &+ \beta_3\left[F_{\beta_3}(\pi_{S,\Delta t}^{\rm (CM)}) - F_{\beta_3}(\vardbtilde{\pi}_{S,\Delta t}^{\rm (CM)})\right] \le 0,
\end{align}
which leads to
\begin{align}
    \beta_1 Q_{1;\Delta t}^{\rm (CM)} + \beta_2 Q_{2;\Delta t}^{\rm (CM)} + \beta_3 Q_{3;\Delta t}^{\rm (CM)} \le 0.
\end{align}
This is the Clausius expression of the {\it second law} of thermodynamics. We hereby establish that the ECTCM thermal machine satisfies both the first and second laws of thermodynamics, and is thus thermodynamically consistent.

\subsubsection{Heat currents and efficiency of the ECTCM thermal machine}\label{sec:heat_efficiency_ATM}
With the expressions for the heat currents for the ECTCM thermal machine in hand, we can now benchmark its performance against that of the original, continuous ATM. To compare them, in Fig.~\ref{fig:ATM_plot}(b), we show that for sufficiently small $\Delta t$, the heat current per cycle $Q_{r;\Delta t}^{\rm (CM)}$ coincides with the heat currents of the original ATM calculated as $Q_r^{\rm (ATM)}\equiv\dot{Q}_r^{\rm (ATM)}\Delta t$, for $r=1,2,3$. This result demonstrates that the ECTCM realisation is not only dynamically equivalent to the original ATM but also faithfully reproduces its thermodynamic quantities as a physical model.

We finally compare the efficiencies of the ECTCM thermal machine and the original ATM. We first show that no work is required to operate the ECTCM thermal machine and then define its efficiency in terms of the heat currents exchanged with the baths, in direct analogy with the efficiency of the original ATM [Eq.~(\ref{eq:eta^ATM})].

Since enabling the system--ancilla interaction in the ECTCM incurs no energetic cost [see Eq.~(\ref{eq:V_i_r_=_0})], the only step that could, in principle, require work is the switching between different thermal ancillae and their associated interaction Hamiltonians in successive collisions. 
To see that this switching cost also vanishes,
consider a single cycle described by $\hat{\mathcal{E}}$ [Eq.~(\ref{eq:hat_E^(i)})], where we omit the cycle label $i$ for simplicity. Let $\rho_{SE_1}$ be the joint state of $S$ and ancilla $E_1$ right after their collision $\hat{\mathcal{E}}_1$ and define $\rho_{S, 1} \equiv {\rm Tr}_{E_1}\{\rho_{SE_1}\}$. The energy cost incurred by switching to collide with the second ancilla $E_2$ (i.e., switching from the interaction term $V_1$ to $V_2$ and bringing in a fresh ancilla $\gamma_{E_2}$) is then given by~\cite{ciccarello2022quantum}:
\begin{align}
    W_{1\rightarrow 2} &:= {\rm Tr}\{V_2(\rho_{S, 1}\otimes \gamma_{E_2})\} - {\rm Tr}\{V_1\rho_{SE_1}\}.
\end{align}
Following Eq.~(\ref{eq:V_i_r_=_0}), both terms in $W_{1\rightarrow 2}$ vanish. Thus, $W_{1\rightarrow 2}=0$ and similarly $W_{2\rightarrow 3} = W_{3\rightarrow 1} = 0$. 

Having shown that the ECTCM thermal machine operates without any work input, we define its efficiency solely in terms of the heat currents:
\begin{align}
    \eta^{\rm (CM)}_{\Delta t} := \frac{Q_{1;\Delta t}^{\rm (CM)}}{Q_{2;\Delta t}^{\rm (CM)}}.
\end{align}
Interestingly, we find that although, as shown in Fig.~\ref{fig:ATM_plot}(c), increasing the collision times $\Delta t$ lead to a growing deviation between the respective steady states $\pi_{S,\Delta t}^{\rm (CM)}$ and $\pi_S$, the efficiencies of the two refrigerators remain identical, as illustrated in Fig.~\ref{fig:ATM_plot}(d). In fact, the constant efficiency is a general feature of the three-stroke thermal machine we propose:
\begin{proposition}
\label{prop::eff_ECTCM_ATM}
    The efficiency of the ECTCM thermal machine in Fig.~(\ref{fig:three_level_ATM}) is a constant independent of the collision time $\Delta t$, and in particular, $\eta^{\rm (CM)}_{\Delta t} = \eta^{\rm (ATM)} = \epsilon_1/\epsilon_2,\,\forall\,\Delta t$. 
\end{proposition}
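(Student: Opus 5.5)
Since each ECTCM stroke is built by Protocol~\ref{tab:protocol} from $\Gamma\mathcal{D}_r$, which only involves the single transition $\sigma_r^\pm$, I will show that the $r$th stroke can change the system energy only by moving population across that transition. The strategy is to track populations through one cycle at the steady state.

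The first step is to characterise $\hat{\mathcal{E}}_{i,r}$. The interaction $V_{i,r}$ from Eq.~(\ref{eq:V_constructed}) is built from $W$ coefficients supported on the index pairs appearing in $\mathcal{D}_r$. These are $\ketbra{g}{e_a}$ and its adjoint for $r=1$, and analogously for $r=2,3$, possibly together with diagonal terms that the gauge choice fixes so that the Lamb shift vanishes. Energy conservation, $[V_{i,r},H_S+H_{E_{i,r}}]=0$, then makes $\hat{\mathcal{E}}_{i,r}$ time-covariant. Consequently the populations $p_g,p_{e_a},p_{e_b}$ evolve under a classical stochastic map $T_r$ that is independent of the coherences. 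I will check from the explicit form of $V$ that $T_r$ acts nontrivially only on the two levels connected by transition $r$, leaving the third population untouched. The free evolutions $\mathcal{U}_0$ do not affect populations at all.

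Given this, each stroke $r$ moves a net population $J_r$ across its transition, and the energy change is $Q_{r;\Delta t}^{\rm (CM)}=\epsilon_r J_r$, with $J_r$ counted positive upward. Concretely, stroke 1 moves $J_1$ from $g$ to $e_a$, stroke 2 moves $J_2$ from $e_a$ to $e_b$, and stroke 3 moves $J_3$ from $g$ to $e_b$. At the cyclic steady state $\pi^{\rm (CM)}_{S,\Delta t}$ the populations must return to their initial values after the full cycle. The balance for $p_{e_a}$ gives $J_1-J_2=0$, and the balance for $p_g$ gives $J_1+J_3=0$. Hence
\begin{align}
    \eta^{\rm (CM)}_{\Delta t}=\frac{Q_{1;\Delta t}^{\rm (CM)}}{Q_{2;\Delta t}^{\rm (CM)}}=\frac{\epsilon_1 J_1}{\epsilon_2 J_2}=\frac{\epsilon_1}{\epsilon_2}
\end{align}
for every $\Delta t$, provided $J_2\neq0$. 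This matches Eq.~(\ref{eq:eta^ATM}). As a consistency check, the first law follows automatically since $\epsilon_3=\epsilon_1+\epsilon_2$.

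The main obstacle is the first step: verifying rigorously that the constructed $V_{i,r}$, including the gauge-fixed diagonal parts, cannot induce population transfer involving the spectator level. The support of the $W$ matrix obtained in App.~\ref{app:proof_of_proposition_xi} could in principle mix index pairs. I would first argue directly from Eq.~(\ref{eq:V_constructed}) restricted to the relevant $W$ entries. If that fails, I would instead use a structural argument. For any CPTP covariant map, the population transfer $j\to j'$ is $\sum|{\bra{j'}K\ket{j}}|^2$. The Kraus operators of $\hat{\mathcal{E}}_{i,r}$ lie in the algebra generated by the ancilla matrix elements of $V_{i,r}$. For a nondegenerate three-level spectrum, these are linear combinations of $\sigma_r^\pm$ and diagonal operators. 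Products of such operators never connect the spectator level to the other two, so any operator in that algebra has this same block structure. Finally, I will note that $J_2=0$ is a degenerate case in which the efficiency is defined by continuity, or is excluded just as the fine-tuned values of $\Delta t$ were.
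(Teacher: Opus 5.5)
Your proposal is correct and is essentially the paper's proof: both reduce to population bookkeeping, use that stroke $r$ leaves the spectator level's population unchanged, and close the cycle at the fixed point $\pi^{\rm (CM)}_{S,\Delta t}$ to obtain $Q_{1;\Delta t}^{\rm (CM)}/Q_{2;\Delta t}^{\rm (CM)}=\epsilon_1/\epsilon_2$; your fluxes with $J_1=J_2=-J_3$ are a cleaner packaging of the paper's manipulation of the intermediate population vectors. The step you flag, which the paper only asserts, closes by your first route: $\bm{W}=\bm{\zeta}^{1/2}\bm{U}$ only mixes columns, so its nonzero rows are those of $\bm{\zeta}$, i.e.\ the two index pairs of transition $r$, hence $V_{i,r}=\sigma_r^-\otimes B+\sigma_r^+\otimes B^\dagger$ has no diagonal system part (the Lamb shift vanishes automatically since $\bm{M}=0$) and commutes with the spectator projector for every $\Delta t$; your fallback's premise would not follow from the spectrum alone, because the constructed $H_E$ contains every Bohr frequency of $H_S$.
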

\begin{proof}
In each stroke, the operation applied on the system, $\mathcal{U}_0\circ\hat{\mathcal{E}}_r$ for $r=1,2,3$, is a thermal operation with respect to the corresponding inverse temperature $\beta_r$. Since thermal operations do not mix populations and coherence in energy eigenbasis for nondegenerate systems~\cite{lostaglio2019introductory} and the heat currents defined in Eqs.~(\ref{eq:Q_1_ATM:CM}), (\ref{eq:Q_2_ATM:CM}) and (\ref{eq:Q_3_ATM:CM}) only depend on the populations of the involved states, we can focus on the population vector of $\pi_{S,\Delta t}^{\rm (CM)}$, denoted by $\vec{p}\equiv (p_{g}, p_{e_a}, p_{e_b})$, and similarly, $\vec{\tilde{p}}$ and $\vec{\vardbtilde{p}}$ for $\tilde{\pi}_{S,\Delta t}^{\rm (CM)}$ and $\vardbtilde{\pi}_{S,\Delta t}^{\rm (CM)}$, respectively. We then have
\begin{align}
    Q_{1;\Delta t}^{\rm (CM)} &= \epsilon_1(\tilde{p}_{e_a} - p_{e_a}),\\
    Q_{2;\Delta t}^{\rm (CM)} &= \epsilon_1(\vardbtilde{p}_{e_a} - \tilde{p}_{e_a}) + (\epsilon_1+\epsilon_2)(\vardbtilde{p}_{e_b} - \tilde{p}_{e_b}).
\end{align}
Since the population in $\ket{g}_S$ is unchanged in the second stroke, we have $\tilde{p}_{e_a} + \tilde{p}_{e_b} = \vardbtilde{p}_{e_a} + \vardbtilde{p}_{e_b}$, which leads to $Q_{2;\Delta t}^{\rm (CM)} = \epsilon_2(\tilde{p}_{e_a}-\vardbtilde{p}_{e_a})$. Because the third stroke does not affect $\vardbtilde{p}_{e_a}$ and the populations return to $\vec{p}$ afterwards, as required by the definition of steady state, we have $\vardbtilde{p}_{e_a} = p_{e_a}$. Therefore, $Q_{2;\Delta t}^{\rm (CM)} = \epsilon_2(\tilde{p}_{e_a}-p_{e_a})$ and the efficiency is given by $\eta^{\rm (CM)}_{\Delta t} = {Q_{1;\Delta t}^{\rm (CM)}}/{Q_{2;\Delta t}^{\rm (CM)}} = \epsilon_1/\epsilon_2$.
\end{proof}
This demonstrates that the ECTCM realisation preserves the thermodynamic performance of the original ATM even when its state dynamics deviate from the continuous-time evolution. In particular, because the engine efficiency is reproduced exactly for arbitrary collision times $\Delta t$, one need not operate in the infinitesimal-cycle limit, where the heat exchanged per cycle becomes vanishingly small. The ECTCM construction therefore provides a physically meaningful finite-stroke counterpart of the continuous-time thermal machine, rather than merely viewing the latter as the limiting case of an increasingly rapid stroke engine.

\section{Summary and outlook}\label{sec:discussion}
In this work, we established the equivalence between three complementary formulations of Markovian quantum thermodynamics: the axiomatic description in terms of thermal Lindbladians (Prop.~\ref{prop:general_form_xi}), the microscopic description based on energy-conserving thermal collision models (ECTCMs), and the operational description in terms of Markovian thermal operations (MTOs). Specifically, we proved that every ECTCM generator is a thermal Lindbladian (Prop.~\ref{prop:L_CM_=>_L}), every thermal Lindbladian admits a microscopic realisation through an ECTCM (Prop.~\ref{prop:ME_CM_equivalence}), and the set of MTO generators coincides with the set of ECTCM generators (Prop.~\ref{prop:MTO_=_CM}). These results culminate in the \emph{Axiomatic--Microscopic--Operational Equivalence} theorem (Thm.~\ref{thm:equivalences}), establishing that the classes of generators arising from the three approaches coincide and therefore characterise exactly the same set of Markovian quantum thermodynamic processes.

This equivalence provides new insights into each formulation: 
\begin{itemize}
    \item[(i)] For the {\it axiomatic} paradigm, the equivalence theorem establishes that every thermal Lindbladian admits a microscopic realisation through an ECTCM. Consequently, the dynamics generated by thermal Lindbladians can be implemented using only thermal ancillae and energy-conserving interactions, without consuming thermodynamic resources.
    \item[(ii)] For the {\it microscopic} paradigm, the equivalence theorem reveals the precise dynamical structure generated by repeated energy-conserving interactions with thermal ancillae. In particular, the resulting dynamics are characterised exactly by the three defining axioms of thermal Lindbladians: Markovianity, time-translation covariance, and quantum detailed balance.
    \item[(iii)] For the {\it operational} paradigm, the equivalence theorem resolves the structure of MTO generators by showing that they coincide exactly with thermal Lindbladians, thereby settling the conjecture in Ref.~\cite{vomEnde2023ExploringLimits} on the complete characterisation of MTO generators. This yields a simple axiomatic characterisation of thermodynamically admissible Markovian dynamics and demonstrates that every MTO generator can be realised microscopically through an ECTCM.
\end{itemize}
Beyond establishing equivalence between the three formulations, our results endow Markovian quantum thermodynamics with a digital, thermodynamically consistent simulation paradigm via ECTCMs (Protocol~\ref{tab:protocol}). Given an arbitrary thermal Lindbladian, the protocol provides an explicit compilation into a sequence of collisions with thermal ancillae, thereby translating a continuous-time thermodynamic evolution into a discrete set of elementary thermodynamic gates. This perspective not only yields a constructive microscopic implementation of thermal dynamics but also opens the possibility of studying their simulation complexity, resource requirements, and experimental realisation using collision-model architectures. 

To illustrate the versatility of this result and its physical implications, we applied the protocol to two representative examples. First, we constructed an ECTCM simulation of a two-level system coupled to a bosonic environment, demonstrating how a paradigmatic open-system dynamics can be realised within our framework. Given that this model's original derivation starts from a Hamiltonian that is \textit{not} energy-preserving, this thermodynamically consistent simulation further highlights the non-uniqueness of implementations of Markovian thermal dynamics as well as their vastly different resource requirements. Second, we considered a three-level autonomous thermal machine coupled to multiple baths and showed that the resulting ECTCM reproduces not only the state dynamics but also the thermodynamic behaviour of the original model. In particular, the ECTCM construction naturally yields a finite-stroke thermal engine whose efficiency coincides exactly with that of the continuous-time autonomous machine for arbitrary collision times $\Delta t$. This example demonstrates that the ECTCM framework naturally extends to multi-bath settings which go beyond thermal master equations and provides a physically meaningful platform for studying finite-stroke quantum thermal machines.

The equivalence and simulation framework established in this work open up several directions for future research. 

First, our results are restricted to the Markovian regime. It remains an open question whether a similar correspondence exists between axiomatic, microscopic, and operational descriptions of non-Markovian thermodynamics. In the literature, master equations~\cite{Nazir2018,Nestmann2021How,Dann2022NonMarkovian}, collision models~\cite{ciccarello2013CMnonMarkovian,Lorenzo2017Composite,Campbell2018System} and thermal operations~\cite{Ptaszy2022NonMarkovian,czartowski_thermal_2023,Zambon2025Quantum} beyond the Markovian regime have thus far been studied independently and a unification within a single thermodynamic framework is still lacking. Our results serve as a Markovian baseline for the larger non-Markovian class of dynamics and provide a starting point for a similar consolidation of approaches to thermodynamics in the presence of memory effects.
  
Second, although the ECTCM protocol furnishes a universal microscopic realisation of thermal Lindbladians, its construction is not necessarily complexity-optimal. In particular, the requirement of strictly energy-conserving system–ancilla interactions may necessitate highly nonlocal couplings and large ancilla dimensions. Allowing controlled violations of energy conservation through the expenditure of work resources may substantially simplify the implementation. Conversely, while non-thermal Lindbladians cannot be simulated via ECTCMs, additional out-of-equilibrium resources might make the simulation possible~\cite{Cattaneo2021CMcansimulate,lacroix_making_2025}. Quantifying the required resources directly from the corresponding Lindbladian, however, is far from straightforward. The characterisation of resulting trade-offs between thermodynamic resource consumption, implementation complexity and expressivity remains an important open problem, both foundationally and experimentally.

Finally, collision-model thermal machines have been extensively studied as finite-stroke quantum heat engines~\cite{Molitor2020Stroboscopic,piccione2021power,melo2022implementation}. The close relationship between continuous-time and finite-stroke engines has also attracted attention~\cite{Uzdin2015Equivalence,lobejko2026equivalence}. Our ECTCM framework naturally bridges these two directions by providing a constructive mapping from continuous-time thermal machines to finite-stroke collision-model implementations. The three-level autonomous thermal machine presented in this work serves as an explicit example, where the resulting finite-stroke engine reproduces not only the continuous-time dynamics in the appropriate limit but also the efficiency of the original engine for arbitrary collision times. This thermodynamic equivalence opens up a systematic route for the ``digital", stroke-wise implementation of continuous thermal engines. Whether it holds in general, and what exact thermodynamic properties of continuous processes can be recovered in this way is an intriguing direction for future investigation.

\begin{acknowledgments} 
We thank S. Brattegard, T. Harris, M. Huber, M. Mitchison, S. V. Moreira, M. Moroder, A. de Oliveira Junior, P. Taranto, and J. Xuereb for valuable discussion and comments.
The research conducted in this publication was funded by Taighde Éireann -- Research Ireland under grant number IRCLA/2022/3922. YL is supported by China Scholarship Council (No.~202408060137).
\end{acknowledgments}

\newtheorem{theoremApp}{Theorem}[section]
\newtheorem{lemmaApp}{Lemma}[section]
\newtheorem{corollaryApp}{Corollary}[theoremApp]
\numberwithin{figure}{section}
\appendix
\allowdisplaybreaks

\section{Quantum detailed balance condition(s)}\label{app:QDB}
\subsection{Equivalence of quantum detailed balance of $\mathcal{D}_t$ and $\mathcal{A}_t$}
Given the inverse temperature $\beta$ and the system Hamiltonian $H_S$, let $\{\ket{j}_S\}_{j=0}^{d_S-1}$ be the orthonormal eigenbasis of $H_S$ with the corresponding eigenvalues $\{\epsilon_j\}_{j=0}^{d_S-1}$ and $d_S$ is the dimension of $H_S$. We say a Hermitian-preserving map $\mathcal{X}$ acting on the system $S$ satisfies the quantum detailed balance (QDB) condition if and only if the following holds:
\begin{align}
    &\quad \nE^{-\beta \epsilon_y}\bra{x'}_S\mathcal{X}(\ketbra{x}{y}_S)\ket{y'}_S \nonumber\\
    &= \nE^{-\beta \epsilon_{y'}}\bra{x}_S\mathcal{X}(\ketbra{x'}{y'}_S)\ket{y}_S^*,
    \label{appeq:QDB_def}
\end{align}
for all $(x,x',y,y')$. 
In this appendix, we prove the following:
\begin{lemma}
    Given a Lindbladian $\mathcal{L}$ satisfying Axioms~\ref{axiom:I} and \ref{axiom:II}, i.e., $\mathcal{L}(\cdot) = -\nI[H_S', \cdot] + \mathcal{A}(\cdot) - \frac{1}{2}[\mathcal{A}^\dagger(\mathds{1}), \cdot]_+$ with $[H_S', H_S]=0$ and $[\mathcal{A}, \mathcal{L}_{H_S'}]=0$ where $\mathcal{L}_{H_S'}(\cdot):=[H_S', \cdot]$,
    the following two statements are equivalent:
    \begin{enumerate}
        \item The dissipator $\mathcal{D}(\cdot):=\mathcal{A}(\cdot) - \frac{1}{2}[\mathcal{A}^\dagger(\mathds{1}), \cdot]_+$ satisfies the QDB condition [Eq.~(\ref{appeq:QDB_def})].
        \item $\mathcal{A}$ satisfies the QDB condition [Eq.~(\ref{appeq:QDB_def})].
    \end{enumerate}
\end{lemma}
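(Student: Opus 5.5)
Write $\mathcal{D}=\mathcal{A}-\mathcal{K}$ with $\mathcal{K}(\cdot):=\frac12[K,\cdot]_+$ and $K:=\mathcal{A}^\dagger(\mathds{1})\ge 0$. The QDB condition [Eq.~(\ref{appeq:QDB_def})] is real-linear in the map $\mathcal{X}$: both sides are linear in $\mathcal{X}$ up to complex conjugation, and the condition is just the equality of a linear functional with a conjugated one. It therefore suffices to show that $\mathcal{K}$ always satisfies QDB whenever $\mathcal{A}$ satisfies Axiom~\ref{axiom:II}, \emph{and} that $\mathcal{K}$ satisfies QDB whenever $\mathcal{D}$ does. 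The equivalence then follows by adding or subtracting $\mathcal{K}$.

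\textbf{Step 1 (structure of $K$).} I would first show that $K$ commutes with $H_S$. The covariance $[\mathcal{A},\mathcal{L}_{H_S}]=0$ (Eq.~(\ref{eq:A_commute_L_H_S})) passes to the adjoint, since $\mathcal{L}_{H_S}$ is anti-self-adjoint up to sign. Applying this to $\mathds{1}$ gives $[H_S,\mathcal{A}^\dagger(\mathds{1})]=\mathcal{A}^\dagger([H_S,\mathds{1}])=0$, so $K$ is block diagonal in energy: $K_{ab}:=\bra{a}K\ket{b}=0$ unless $\epsilon_a=\epsilon_b$. I would state this with $H_S$ rather than $H_S'$, because Axiom~\ref{axiom:II} is what supplies Eq.~(\ref{eq:A_commute_L_H_S}).

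\textbf{Step 2 ($\mathcal{K}$ satisfies QDB).} I would compute directly:
\begin{align}
\bra{x'}\mathcal{K}(\ketbra{x}{y})\ket{y'}=\tfrac12\big(K_{x'x}\delta_{yy'}+\delta_{xx'}K_{yy'}\big).
\end{align}
The right-hand side of the QDB condition is then $\tfrac12 e^{-\beta\epsilon_{y'}}(K_{xx'}\delta_{y'y}+\delta_{x'x}K_{y'y})^*$. Hermiticity of $K$ gives $K_{xx'}^*=K_{x'x}$ and $K_{y'y}^*=K_{yy'}$. Each term carries either $\delta_{yy'}$ or $K_{yy'}$, and by Step~1 both force $\epsilon_y=\epsilon_{y'}$. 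The Boltzmann factors $e^{-\beta\epsilon_y}$ and $e^{-\beta\epsilon_{y'}}$ therefore coincide term by term, and both sides agree. The statement then follows immediately: (2)$\Rightarrow$(1) because $\mathcal{D}=\mathcal{A}-\mathcal{K}$ is a difference of QDB maps, and (1)$\Rightarrow$(2) because $\mathcal{A}=\mathcal{D}+\mathcal{K}$.

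\textbf{Main obstacle.} The only delicate point is the real-linearity bookkeeping. The conjugation acts on matrix elements of $\mathcal{X}$ on the right-hand side only, and I would check that the set of maps satisfying Eq.~(\ref{appeq:QDB_def}) is closed under real linear combinations, which is all that is needed here. A second point to check is that Step~1 uses the time-translation covariance of $\mathcal{A}$ itself, not merely of $\mathcal{L}$; the paper guarantees that choice via Ref.~\cite{HOLEVO1993211}. If one only had $[\mathcal{L},\mathcal{L}_{H_S}]=0$, I would instead note that the dissipative part in the canonical Holevo decomposition can be chosen covariant, and work with that representative.
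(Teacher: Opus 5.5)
Your proposal is correct and follows essentially the paper's route. The paper expands $\mathcal{D}$ in the coefficients $\xi_{jj'mm'}$ and observes that the anticommutator contributions cancel from both sides of the QDB condition, so that QDB of $\mathcal{D}$ reduces to QDB of $\mathcal{A}$. These contributions are exactly the matrix elements of your $\mathcal{K}$, with $K_{ab}=\sum_h \xi_{hbha}$; they are invariant under the Boltzmann reweighting (by energy-gap matching) and self-conjugate (by Hermiticity of $\bm{\xi}$), which is why they cancel. Your operator-level packaging via real-linearity and $[K,H_S]=0$ is a cleaner presentation of the same computation. You are also right that the covariance actually needed is with respect to $H_S$ [Eq.~(\ref{eq:A_commute_L_H_S})], not the $\mathcal{L}_{H_S'}$ written in the statement.
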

\begin{proof}
    Taking the orthonormal basis $\{F_{S}^{jj'}\equiv\ketbra{j}{j'}_S\}_{j,j'=0}^{d_S-1}$ in the operator space, we consider the general expression of $\mathcal{A}$ [Eq.~(\ref{eq:A_gen})]:
    \begin{align}
        \mathcal{A}(\cdot) = \sum_{j,j',m,m'=0}^{d_S-1}\xi_{jj'mm'}F_S^{jj'}(\cdot) (F_S^{mm'})^\dagger.
    \end{align}
    Since $\mathcal{A}$ is CP, the coefficients $\xi_{jj'mm'}$, defined in Eq.~(\ref{eq:xi_def_from_A}), satisfy the following property [see Eq.~(\ref{eq:hermiticity_xi})]:
    \begin{align}
        \xi_{jj'mm'} = \xi_{mm'jj'}^*,\quad\forall\, (j,j',m,m').
        \label{appeq:hermiticity_xi}
    \end{align}
    By time-translation symmetry~\ref{axiom:II}, i.e., $[\mathcal{A}, \mathcal{L}_{H_S'}]=0$, we have [see Eq.~(\ref{eq:xi_energy_gaps_matching})]
    \begin{align}
        \xi_{jj'mm'} = \delta(\epsilon_{j'}-\epsilon_{m'} +\epsilon_m - \epsilon_j) \xi_{jj'mm'}.
        \label{appeq:xi_energy_gaps_matching}
    \end{align}
    The action of $\mathcal{D}$ is given by [Eq.~(\ref{eq:D_givenby_xi})]
    \begin{align}
        &\quad \bra{j}_S\mathcal{D}(\ketbra{\ell}{n}_S)\ket{m}_S \nonumber\\
        &= \xi_{j\ell m n} - \frac{1}{2}\sum_{h=0}^{d_S-1}\left(\xi_{h\ell h j}\delta_{mn} + \xi_{h m h n}\delta_{j\ell}\right)\\
        &= \xi_{j\ell m n} \nonumber\\
        &\quad - \frac{1}{2}\sum_{h=0}^{d_S-1}\left[\xi_{h\ell h j}\delta_{mn}\delta(\epsilon_{j}-\epsilon_{\ell}) + \xi_{h m h n}\delta_{j\ell}\delta(\epsilon_{m}-\epsilon_{n})\right],
        \label{appeq:D_general_action}
    \end{align}
    where the third line follows from Eq.~(\ref{appeq:xi_energy_gaps_matching}). Since $\mathcal{D}$ satisfies Eq.~(\ref{appeq:QDB_def}), we have that for all $(j,\ell,m,n)$,
    \begin{widetext}
        \begin{align}
        \nE^{-\beta(\epsilon_n - \epsilon_m)}\bra{j}_S\mathcal{D}(\ketbra{\ell}{n}_S)\ket{m}_S 
        &= \nE^{-\beta(\epsilon_n - \epsilon_m)}\xi_{j\ell m n} - \frac{1}{2}\sum_{h=0}^{d_S-1}\nE^{-\beta(\epsilon_n - \epsilon_m)}\left[\xi_{h\ell h j}\delta_{mn}\delta(\epsilon_{j}-\epsilon_{\ell}) + \xi_{h m h n}\delta_{j\ell}\delta(\epsilon_{m}-\epsilon_{n})\right]\\
        &= \nE^{-\beta(\epsilon_n - \epsilon_m)}\xi_{j\ell m n} - \frac{1}{2}\sum_{h=0}^{d_S-1}\left[\xi_{h\ell h j}\delta_{mn}\delta(\epsilon_{j}-\epsilon_{\ell}) + \xi_{h m h n}\delta_{j\ell}\delta(\epsilon_{m}-\epsilon_{n})\right] \label{appeq:QDB_D_mid_step_1}\\
        &\overset{\text{Eq.~(\ref{appeq:QDB_def})}}{=} \bra{\ell}_S\mathcal{D}(\ketbra{j}{m}_S)\ket{n}_S^*\\
        &= \xi_{\ell j n m}^* - \frac{1}{2}\sum_{h=0}^{d_S-1}\left[\xi_{hj h \ell}^*\delta_{mn}\delta(\epsilon_{j}-\epsilon_{\ell}) + \xi_{h n h m}^*\delta_{j\ell}\delta(\epsilon_{m}-\epsilon_{n})\right]\\
        &= \xi_{\ell j n m}^* - \frac{1}{2}\sum_{h=0}^{d_S-1}\left[\xi_{h\ell h j}\delta_{mn}\delta(\epsilon_{j}-\epsilon_{\ell}) + \xi_{h m h n}\delta_{j\ell}\delta(\epsilon_{m}-\epsilon_{n})\right],
        \label{appeq:QDB_D_mid_step_2}
        \end{align}
    \end{widetext}
    where in the last line we used Eq.~(\ref{appeq:hermiticity_xi}). By comparing Eqs.~(\ref{appeq:QDB_D_mid_step_1}) and (\ref{appeq:QDB_D_mid_step_2}), we obtain that, for a Lindbladian $\mathcal{L}$ satisfying Axioms~\ref{axiom:I} and \ref{axiom:II}, we have
    \begin{align*}
        \text{$\mathcal{D}$ satisfies Eq.~(\ref{appeq:QDB_def})}
        \Leftrightarrow \nE^{-\beta \epsilon_n}\xi_{j\ell m n} = \nE^{-\beta \epsilon_m}\xi^*_{\ell j n m},
        \label{appeq:QDB_xi}
    \end{align*}
    $\forall\, (j,\ell,m,n)$, which means that $\mathcal{A}$ satisfies Eq.~(\ref{appeq:QDB_def}).
\end{proof}
\subsection{Equivalence of different definitions of quantum detailed balance}
Given a parameter $s\in[0,1]$, we define the inner product of two system operators $X$ and $Y$ as~\cite{AMORIM2021389}
\begin{align}
    \langle X, Y \rangle_{\sigma,s} := {\rm Tr}\{X^\dagger \sigma^{s}Y\sigma^{1-s}\},
\end{align}
where $\sigma$ is a positive definite operator.
The QDB condition [Eq.~(\ref{appeq:QDB_def})] for a Hermitian-preserving map $\mathcal{X}$ acting on the system $S$ is equivalent to the following:
\begin{align}
    \langle A_S, \mathcal{X}^\dagger(B_S) \rangle_{\gamma_S, 0} = \langle \mathcal{X}^\dagger(A_S), B_S \rangle_{\gamma_S, 0},
    \label{appeq:QDB_def_innerproduct}
\end{align}
for all system operators $A_S$ and $B_S$. Therefore, $\mathcal{X}$ satisfies QDB $\Leftrightarrow$ $\mathcal{X}^\dagger$ is self-adjoint with respect to the inner product $\langle \cdot, \cdot \rangle_{\gamma_S, 0}$.\footnote{Note the difference between two adjoint notions appearing: Hilbert-Schmidt adjoint $\mathcal{X}^\dagger$ customarily denoted by ${}^\dagger$ and adjoint with respect to $\langle \cdot, \cdot \rangle_{\gamma_S, 0}$ inner product. As such, even though $\mathcal{X}\neq \mathcal{X}^\dagger$ in general, i.e. it is not Hilbert-Schmidt-self-adjoint, QDB of $\mathcal{X}$ requires self-adjointness of $\mathcal{X}^\dagger$ with respect to $\langle \cdot, \cdot \rangle_{\gamma_S, 0}$.} We can regard Eq.~(\ref{appeq:QDB_def_innerproduct}) as the definition of QDB. 

In the App.~D of Ref.~\cite{Alhambra2017Dynamicalmaps}, it has been shown that the QDB condition defined in terms of $\langle \cdot, \cdot \rangle_{\gamma_S, 0}$ is equivalent to the QDB condition defined via $\langle \cdot, \cdot \rangle_{\gamma_S, 1/2}$, for all Hermitian-preserving maps that satisfy time-translation symmetry, i.e., they commute with $\mathcal{L}_{H_S}$. We now generalise this equivalence to all inner products $\langle \cdot, \cdot \rangle_{\gamma_S, s}$ for $s\in[0,1]$.
\begin{lemma}
    Given a Hermitian-preserving map $\mathcal{X}$ that satisfies $[\mathcal{X}, \mathcal{L}_{H_S}]=0$, $\mathcal{X}$ satisfies the QDB condition [Eq.~(\ref{appeq:QDB_def})] if and only if $\mathcal{X}^\dagger$ is self-adjoint with respect to all inner product $\langle \cdot, \cdot \rangle_{\gamma_S, s}$ for $s\in[0,1]$.
\end{lemma}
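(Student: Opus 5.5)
Since the QDB condition [Eq.~(\ref{appeq:QDB_def})] is already known to be equivalent to self-adjointness of $\mathcal{X}^\dagger$ with respect to $\langle\cdot,\cdot\rangle_{\gamma_S,0}$ [Eq.~(\ref{appeq:QDB_def_innerproduct})], the ``if'' direction is immediate by taking $s=0$. For the converse, the plan is to write self-adjointness with respect to $\langle\cdot,\cdot\rangle_{\gamma_S,s}$ in matrix-element form and show that, under time-translation symmetry, the resulting condition does not depend on $s$.

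Concretely, I would take $A_S=\ketbra{a}{b}_S$ and $B_S=\ketbra{c}{d}_S$ in the energy eigenbasis. Because $\gamma_S$ is diagonal in this basis, $\gamma_S^{s}\ketbra{c}{d}\gamma_S^{1-s}$ and similar terms only pick up scalar factors $Z^{-1}\nE^{-\beta s\epsilon_c-\beta(1-s)\epsilon_d}$. Both sides of
\[
\langle A_S,\mathcal{X}^\dagger(B_S)\rangle_{\gamma_S,s}=\langle\mathcal{X}^\dagger(A_S),B_S\rangle_{\gamma_S,s}
\]
then become a single matrix element of $\mathcal{X}^\dagger$ multiplied by a Boltzmann weight. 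The left-hand side carries a factor of the form $\nE^{-\beta(s\epsilon_a+(1-s)\epsilon_b)}\,\bra{a}\mathcal{X}^\dagger(\ketbra{c}{d})\ket{b}$. The right-hand side, after using Hermiticity preservation, $\mathcal{X}^\dagger(X)^\dagger=\mathcal{X}^\dagger(X^\dagger)$, carries $\nE^{-\beta(s\epsilon_c+(1-s)\epsilon_d)}$ times a conjugated matrix element $\bra{c}\mathcal{X}^\dagger(\ketbra{a}{b})\ket{d}^*$.

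The key step is to bring in $[\mathcal{X},\mathcal{L}_{H_S}]=0$, which is equivalent to $[\mathcal{X}^\dagger,\mathcal{L}_{H_S}]=0$. This makes $\bra{a}\mathcal{X}^\dagger(\ketbra{c}{d})\ket{b}$ vanish unless $\epsilon_a-\epsilon_b=\epsilon_c-\epsilon_d$; the same constraint holds for the other matrix element. On this support, the ratio of the two Boltzmann weights is
\[
\nE^{-\beta[s(\epsilon_a-\epsilon_c)+(1-s)(\epsilon_b-\epsilon_d)]}=\nE^{-\beta(\epsilon_b-\epsilon_d)},
\]
because $\epsilon_a-\epsilon_c=\epsilon_b-\epsilon_d$. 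The ratio is therefore independent of $s$. Off the support, both sides vanish identically.

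It follows that the component-wise self-adjointness condition is the same equation for every $s\in[0,1]$, and in particular it coincides with the $s=0$ case, which is QDB. Linearity in $A_S$ and $B_S$ (sesquilinearity of the inner product) then extends the statement from basis elements to all operators.

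The main obstacle I expect is bookkeeping: getting the conjugations and index placements right when the right-hand side is rewritten via Hermiticity preservation, so that it matches exactly the form of Eq.~(\ref{appeq:QDB_def}). I would first check the $s=0$ case explicitly against Eq.~(\ref{appeq:QDB_def}) to fix the index conventions, and then redo the computation for general $s$.
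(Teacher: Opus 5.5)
Your proposal is correct and follows essentially the paper's argument: time-translation symmetry confines the relevant matrix elements to matched Bohr frequencies ($\epsilon_a-\epsilon_b=\epsilon_c-\epsilon_d$), where the $s$-dependence of the Boltzmann weights cancels, so the self-adjointness condition is the same for every $s$ and coincides with the $s=0$ case, i.e., QDB. The only difference is cosmetic: you work with matrix elements of $\mathcal{X}^\dagger$ and pass through the $s=0$ inner-product form, whereas the paper multiplies Eq.~(\ref{appeq:QDB_def}) directly by the matching $s$-dependent factors and then reads off the $s$-inner-product identity.
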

\begin{proof}
    The `if' part is straightforward since by setting $s=0$, Eq.~(\ref{appeq:QDB_def_innerproduct}) holds, which is equivalent to Eq.~(\ref{appeq:QDB_def}). We only need to prove the `only if' part. 

    Due to the time-translation symmetry, i.e, $[\mathcal{X}, \mathcal{L}_{H_S}]=0$, $\mathcal{X}$ acts like a block matrix in the eigensubspace of $\mathcal{L}_{H_S}$. Therefore, $\bra{x'}_S\mathcal{X}(\ketbra{x}{y}_S)\ket{y'}_S = 0$ if $\epsilon_{x}-\epsilon_{y}\neq \epsilon_{x'}-\epsilon_{y'}$.
    
    Combining this with Eq.~(\ref{appeq:QDB_def}), for each $s\in[0,1]$, we have
    \begin{align}
        &\quad \nE^{-\beta \epsilon_y}\nE^{-(1-s)\beta(\epsilon_{x}-\epsilon_y)}\bra{x'}_S\mathcal{X}(\ketbra{x}{y}_S)\ket{y'}_S \nonumber\\
        &= \nE^{-\beta \epsilon_{y'}}\nE^{-(1-s)\beta(\epsilon_{x'}-\epsilon_{y'})}\bra{x}_S\mathcal{X}(\ketbra{x'}{y'}_S)\ket{y}_S^*,
    \end{align}
    for all $(x,x',y,y')$, which is equivalent to
    \begin{align}
        {\rm Tr}\{\mathcal{X}(\gamma_S^{1-s} A_S^\dagger \gamma_S^{s})B_S\} = {\rm Tr}\{A_S^\dagger\mathcal{X}(\gamma_S^{s}B_S\gamma_S^{1-s})\},
    \end{align}
    for all system operators $A_S$ and $B_S$, and therefore,
    \begin{align}
        \langle A_S, \mathcal{X}^\dagger(B_S) \rangle_{\gamma_S, s} = \langle \mathcal{X}^\dagger(A_S), B_S \rangle_{\gamma_S, s},
    \end{align}
    for all system operators $A_S$ and $B_S$, i.e., $\mathcal{X}^\dagger$ is self-adjoint with respect to $\langle \cdot, \cdot \rangle_{\gamma_S, s}$.
\end{proof}
\section{Proof of Prop.~\ref{prop:general_form_xi}}\label{app:proof_of_proposition_xi}
The coefficient $\xi_{jj'mm'}(t)$ is defined in Eq.~(\ref{eq:xi_def_from_A}):
\begin{align}
    \xi_{jj'mm'}(t) &\equiv \braket{j|_S\mathcal{A}_t(\ketbra{j'}{m'}_S)|m}_S \\
    &= \sum_r {\rm Tr}\left\{(F_S^{jj'})^\dagger K_r(t)\right\}{\rm Tr}\left\{F_S^{mm'} K_r^\dagger(t)\right\}.
    \label{appeq:xi_def_from_A}
\end{align}
where $\{F_{S}^{jj'}\equiv\ketbra{j}{j'}_S\}_{j,j'=0}^{d_S-1}$ are the transition operators defined in the eigenbasis of $H_S$ and $K_r(t)$ are the Kraus operators of the map $\mathcal{A}_t$ appearing in the generator $\mathcal{L}_t$ [see Eq.~(\ref{eq:L_gen})].

Grouping $(j,j')$ and $(m,m')$, respectively, $\xi_{jj'mm'}(t)$ can be regarded as a complex matrix $\bm{\xi}(t)\in\mathbb{C}^{d_S^2\times d_S^2}$. In particular, $\bm{\xi}(t)$ is the Choi operator of the CP map $\mathcal{A}_t$, i.e., $\bm{\xi}(t) = \sum_{x,y=0}^{d_S-1} \mathcal{A}_t(\ketbra{x}{y})\otimes \ketbra{x}{y}$. Consequently, it is easy to see that $\bm{\xi}(t)$ is Hermitian and positive semidefinite, i.e., 
\begin{align}
    \xi_{jj'mm'}(t) = \xi_{mm'jj'}^*(t),\quad\forall\, (j,j',m,m'),\, \forall\, t,
    \label{eq:hermiticity_xi}
\end{align}
and
\begin{align}
    \sum_{j,j',m,m'=0}^{d_S-1}v_{jj'}^*\xi_{jj'mm'}(t)v_{mm'}\ge 0,\,\forall\, t,
    \label{eq:PSD_xi}
\end{align}
for all vectors $\vec{v} = (v_{xy})_{x,y=0}^{d_S-1}\in\mathbb{C}^{d_S^2}$.

(a) Firstly, we prove the `only if' part in Prop.~\ref{prop:general_form_xi}, namely, the coefficients $\xi_{jj'mm'}(t)$ associated with a thermal Lindbladian $\mathcal{L}_t$ are in the form of Eq.~(\ref{eq:xi_general_under_axioms}).

By time-translation symmetry~\ref{axiom:II} [Eq.~(\ref{eq:A_commute_L_H_S})], $\mathcal{A}_t$ is block-diagonal in the eigensubspace of $\mathcal{L}_{H_S}$. Recall that the eigenoperators and eigenvalues of $\mathcal{L}_{H_S}$ are $\{\ketbra{j}{j'}_S\}_{j,j'=0}^{d_S-1}$ and $\{\epsilon_j - \epsilon_{j'}\}_{j,j'=0}^{d_S-1}$, respectively. We have
\begin{align}
    \bra{j}_S\mathcal{A}_t(\ketbra{j'}{m'}_S)\ket{m}_S = \delta(\epsilon_{j'}-\epsilon_{m'} +\epsilon_m - \epsilon_j)\xi_{jj' m m'}(t),
\end{align}
for all $(j, j', m, m')$, where $\delta(x) = 1$ if $x=0$ and $0$ otherwise. We thus have
\begin{align}
    \xi_{jj'mm'}(t) = \delta(\epsilon_{j'}-\epsilon_{m'} +\epsilon_m - \epsilon_j) \xi_{jj'mm'}(t).
    \label{eq:xi_energy_gaps_matching}
\end{align}
By quantum detailed balance~\ref{axiom:III} [Eq.~(\ref{eq:QDB_A_t})], we have
\begin{align}
    \nE^{-\beta \epsilon_{m'}}\xi_{jj' m m'}(t) = \nE^{-\beta \epsilon_{m}}\xi^*_{j' j m' m}(t),
    \label{eq:QDB_xi}
\end{align}
for all $(j,j',m,m')$ and $t$. 

Hence, the constraints on $\mathcal{A}_t$ imposed by Axioms~\ref{axiom:II} and \ref{axiom:III} are reflected in the structure of $\xi_{jj'mm'}(t)$, as given by Eqs.~(\ref{eq:xi_energy_gaps_matching}) and (\ref{eq:QDB_xi}), respectively.
We now combine these two constraints and derive an expression for $\xi_{jj'mm'}(t)$.
Define a new coefficient
\begin{align}
    \zeta_{jj' mm'}(t) \equiv \nE^{-\beta (\epsilon_{j'} - \epsilon_j)/4}\xi_{jj' mm'}(t) \nE^{-\beta (\epsilon_{m'}-\epsilon_m)/4}.
    \label{eq:zeta}
\end{align}
Grouping the indices $(j,j')$ and $(m,m')$, the matrix $\bm{\zeta}(t)\in\mathbb{C}^{d_S^2\times d_S^2}$ is Hermitian and positive semidefinite:
\begin{align}
    \zeta_{jj' m m'}(t) &\equiv \nE^{-\beta (\epsilon_{j'} - \epsilon_j)/4}\xi_{jj' mm'}(t) \nE^{-\beta (\epsilon_{m'}-\epsilon_m)/4}\\
    &= \nE^{-\beta (\epsilon_{m'}-\epsilon_m)/4}\xi_{mm'jj'}^*(t)\nE^{-\beta (\epsilon_{j'} - \epsilon_j)/4}\\
    &= \zeta_{mm' j j'}^*(t),\quad \forall\, (j,j',m,m'),\, \forall\, t,
\end{align}
where the second line follows from Eq.~(\ref{eq:hermiticity_xi}),
and 
\begin{align}
    \sum_{j,j',m,m'=0}^{d_S-1}v_{jj'}^*\zeta_{jj'mm'}(t)v_{mm'}\ge 0,\,\forall\, t,
\end{align}
for all vectors $\vec{v} = (v_{xy})_{x,y=0}^{d_S-1}\in\mathbb{C}^{d_S^2}$. These properties of $\bm{\zeta}(t)$ can also be seen by noticing that $\bm{\zeta}(t) = \bm{\Upsilon}\bm{\xi}(t)\bm{\Upsilon}^\dagger$, where $\Upsilon_{xx'yy'}=\delta_{xy}\delta_{x'y'}\nE^{-\beta(\epsilon_{x'}-\epsilon_x)/4}$, such that Hermiticity and positivity of $\bm{\zeta}(t)$ follows directly from that of $\bm{\xi}(t)$.

Furthermore, the constraints given by Eqs.~(\ref{eq:xi_energy_gaps_matching}) and (\ref{eq:QDB_xi}) imply that $\zeta_{jj'mm'}(t)$ satisfies the following symmetry:
\begin{align}
    \zeta_{jj' m m'}(t)
    &\equiv \nE^{-\beta (\epsilon_{j'} - \epsilon_j)/4}\xi_{jj' mm'}(t) \nE^{-\beta (\epsilon_{m'}-\epsilon_m)/4}\\
    &= \nE^{-\beta (\epsilon_{j'} - \epsilon_j)/4}\xi_{j' j m'm}^*(t) \nE^{3\beta(\epsilon_{m'} - \epsilon_m)/4}\\
    &= \nE^{-\beta (\epsilon_{j'} - \epsilon_j)/4}\nE^{3\beta (\epsilon_{m'}-\epsilon_m)/4}\nonumber\\
    &\quad \times\delta(\epsilon_{j}-\epsilon_{m} +\epsilon_{m'} - \epsilon_{j'})\xi_{j' j m'm}^*(t)\\
    &= \nE^{-\beta (\epsilon_j - \epsilon_{j'})/4} \xi_{j' j m'm}^*(t) \nE^{-\beta (\epsilon_{m} - \epsilon_{m'})/4}\\
    &= \zeta_{j' j m' m}^*(t),\quad \forall\, (j,j',m,m'),\, \forall\, t,
\end{align}
where in the second and the third line we used Eqs.~\eqref{eq:QDB_xi} and \eqref{eq:xi_energy_gaps_matching}, respectively.
By defining the swap matrix $\bm{S}\in\mathbb{C}^{d_S^2\times d_S^2}$ for indices $j\leftrightarrow j'$ and $m\leftrightarrow m'$, we can express the symmetry as
\begin{align}
    \bm{\zeta}(t) = \bm{S}\bm{\zeta}^*(t)\bm{S}.
    \label{eq:zeta_reshuffled_symmetry}
\end{align}
Since $\bm{\zeta}(t)$ is positive semidefinite, its positive square root $\bm{\zeta}^{1/2}(t)$ is unique. 
Noticing that $\bm{S}(\bm{\zeta}^{1/2}(t))^*\bm{S} \ge 0$ and
$(\bm{S}(\bm{\zeta}^{1/2}(t))^*\bm{S})(\bm{S}(\bm{\zeta}^{1/2}(t))^*\bm{S}) = \bm{S}(\bm{\zeta}^{1/2}(t))^*(\bm{\zeta}^{1/2}(t))^*\bm{S} = \bm{S}\bm{\zeta}^*(t)\bm{S} = \bm{\zeta}(t)$, we find that:
\begin{align}
    \bm{\zeta}^{1/2}(t) = \bm{S}(\bm{\zeta}^{1/2}(t))^*\bm{S},
    \label{eq:zeta_sqrt_reshuffled_symmetry}
\end{align}
i.e., $\bm{\zeta}^{1/2}(t)$ inherits the symmetry of $\bm{\zeta}(t)$.
We can further define
\begin{align}
    \bm{W}(t) \equiv \bm{\zeta}^{1/2}(t)\bm{U}(t),
\end{align}
where $\bm{U}(t)$ is a unitary matrix satisfying $\bm{U}(t) =\bm{S}\bm{U}^*(t)\bm{S}$. Therefore,
\begin{align}
    \bm{W}(t) = \bm{S}\bm{W}^*(t)\bm{S},
    \label{eq:W_reshuffled_symmetry}
\end{align}
and $\bm{\zeta}(t)=\bm{W}(t)\bm{W}^\dagger(t)$. The gauge freedom of $\bm{W}(t)$ introduced by $\bm{U}(t)$ is irrelevant here but will become useful in finding a collision-model realisation of $\mathcal{L}_t$ (see App.~\ref{app:vanishing_H_LS}).
We write
\begin{align}
    \zeta_{jj'mm'}(t) &= \sum_{q,q'=0}^{d_S-1} W_{jj'qq'}(t) W_{mm'qq'}^*(t),
\end{align}
and therefore,
\begin{align}
    \xi_{jj'mm'}(t) &= \nE^{-\beta(\epsilon_{j}-\epsilon_{j'})/4}\nE^{-\beta(\epsilon_{m}-\epsilon_{m'})/4} \nonumber\\
    &\quad \times \sum_{q,q'=0}^{d_S-1} W_{jj'qq'}(t) W_{mm'qq'}^*(t)\\
    &= \delta(\epsilon_{j'}-\epsilon_{m'} +\epsilon_m - \epsilon_j) \nE^{-\beta(\epsilon_{j}-\epsilon_{j'})/2}\nonumber \\
    &\quad \times \sum_{q,q'=0}^{d_S-1} W_{jj'qq'}(t) W_{mm'qq'}^*(t),
    \label{appeq:xi_general_under_axioms}
\end{align}
where in the second equality we used Eq.~(\ref{eq:xi_energy_gaps_matching}). $\xi_{jj'mm'}(t)$ is therefore of the desired form given in Eq.~(\ref{eq:xi_general_under_axioms}).

(b) We then prove the `if' part of Prop.~\ref{prop:general_form_xi}, namely, the coefficient $\xi_{jj'mm'}(t)$ given in Eq.~(\ref{eq:xi_general_under_axioms}), leads to a map $\mathcal{A}_t$ satisfying time-translation symmetry~\ref{axiom:II} [Eq.~(\ref{eq:A_commute_L_H_S})] and quantum detailed balance~\ref{axiom:III} [Eq.~(\ref{eq:QDB_A_t})]. Equivalently, it suffices to show that $\xi_{jj'mm'}(t)$ satisfies Eqs.~(\ref{eq:hermiticity_xi}), (\ref{eq:PSD_xi}), (\ref{eq:xi_energy_gaps_matching}) and (\ref{eq:QDB_xi}). The first three conditions follow straightforwardly. For the last condition, we obtain
\begin{widetext}
    \begin{align}
    \nE^{-\beta(\epsilon_{m'}-\epsilon_m)}\xi_{jj'mm'}(t) &= \delta(\epsilon_{j'}-\epsilon_{m'} +\epsilon_m - \epsilon_j) \nE^{-\beta(\epsilon_{j'}-\epsilon_{j})/2} \sum_{q,q'=0}^{d_S-1} W_{jj'qq'}(t) W_{mm'qq'}^*(t)\\
    &= \delta(\epsilon_{j'}-\epsilon_{m'} +\epsilon_m - \epsilon_j) \nE^{-\beta(\epsilon_{j'}-\epsilon_{j})/2} \sum_{q,q'=0}^{d_S-1} W_{j'jqq'}^*(t) W_{m'mqq'}(t)\\
    &= \xi_{j'jm'm}^*(t),
\end{align}
\end{widetext}
where in the second line we used the property that $W_{xx'yy'}(t) = W_{x'xy'y}^*(t)$ for all $(x,x',y,y')$ and swapped the indices $q\leftrightarrow q'$. This completes the proof of Prop.~\ref{prop:general_form_xi}.

\section{Proof of Prop.~\ref{prop:L_CM_=>_L}}\label{app:proof_of_L_CM_=>_L}
In this appendix,  we prove that the Lindbladian $\mathcal{L}_t^{\rm (CM)}$ [Eq.~(\ref{eq:L_CM_def})] arising from the continuous-time limit of an ECTCM is a thermal Lindbladian satisfying Axioms~\ref{axiom:I}--\ref{axiom:III}.

Recall the definition of the coefficient $\xi_{jj'mm'}^{\rm (CM)}$ [Eq.~(\ref{eq:xi_CM})]:
\begin{align}
    \xi_{jj'mm'}^{\rm (CM)}(t) &\equiv \delta(\epsilon_{j'}-\epsilon_j + \epsilon_{m} - \epsilon_{m'})\nonumber\\
    &\quad \times\sum_{k,k'|e_{k} - e_{k'} = \epsilon_j - \epsilon_{j'}} p^{(k)}_E V_{jk'j'k}(t)V_{mk'm'k}^*(t).
\end{align}
We define the map $\mathcal{A}^{\rm (CM)}_t$ as
\begin{align}
    \mathcal{A}^{\rm (CM)}_t(\cdot) = \sum_{j,j',m,m'}\xi_{jj'mm'}^{\rm (CM)}(t)F_S^{jj'}(\cdot) (F_S^{mm'})^\dagger.
\end{align}
Note that $\mathcal{A}^{\rm (CM)}_t$ is CP since $\mathcal{A}^{\rm (CM)}_t(\cdot) \equiv \sum_{k,k'}L_{kk'}(t)(\cdot)L_{kk'}^\dagger(t)$ is constructed from the Lindblad operators $L_{kk'}(t)$ in Eq.~(\ref{eq:L_kk'(t)}).
The ECTCM Lindbladian $\mathcal{L}^{\rm (CM)}_t$ can be written as 
\begin{align}
    \mathcal{L}^{\rm (CM)}_t(\cdot) &= -\nI[H_S'^{\rm (CM)}(t), \cdot] \nonumber\\
    &\quad + \mathcal{A}^{\rm (CM)}_t(\cdot) - \frac{1}{2}[(\mathcal{A}^{\rm (CM)}_t)^\dagger(\mathds{1}), \cdot]_+,
    \label{appeq:L_CM_with_A}
\end{align}
coinciding with the Lindbladian form required by Axiom~\ref{axiom:I}. We now show that $\mathcal{L}^{\rm (CM)}_t$ satisfies time-translation symmetry~\ref{axiom:II} and quantum detailed balance~\ref{axiom:III}.

Since $[H_S'^{\rm (CM)}(t), H_S]=0,\,\forall\, t$ in ECTCMs [Condition~\ref{CM:1}],
for the time-translation symmetry~\ref{axiom:II}, we only need to show that $[\mathcal{A}^{\rm (CM)}_t, \mathcal{L}_{H_S}] = 0$, i.e., $\mathcal{A}^{\rm (CM)}$ acts like a block matrix in the eigensubspace of $\mathcal{L}_{H_S}$. Recall that the eigenoperators and eigenvalues of $\mathcal{L}_{H_S}$ are $\{\ketbra{j}{j'}_S\}_{j,j'}$ and $\{\epsilon_j - \epsilon_{j'}\}_{j,j'}$, respectively. We have
~\\
\begin{align}
    &\quad \mathcal{A}^{\rm (CM)}_t(\ketbra{\ell}{n}_S) \nonumber\\
    &=  \sum_{j,j',m,m'}\xi_{jj'mm'}^{\rm (CM)}(t)F_S^{jj'}(\ketbra{\ell}{n}_S) (F_S^{mm'})^\dagger\\
    &= \sum_{j,m}\xi_{j\ell mn}^{\rm (CM)}(t)\ketbra{j}{m}_S\\
    &= \sum_{j,m}\delta(\epsilon_{\ell}-\epsilon_j + \epsilon_{m} - \epsilon_{n})\xi_{j\ell mn}^{\rm (CM)}(t)\ketbra{j}{m}_S,
\end{align}
where in the third line we used the expression of $\xi_{jj'mm'}^{\rm (CM)}(t)$ in Eq.~(\ref{eq:xi_CM}). Therefore, $\mathcal{A}^{\rm (CM)}_t$ cannot send an eigenoperator of $\mathcal{L}_{H_S}$ outside of the corresponding eigen-subspace. We conclude that $[\mathcal{A}^{\rm (CM)}_t, \mathcal{L}_{H_S}] = 0$ and thus $\mathcal{L}^{\rm (CM)}_t$ satisfies time-translation symmetry~\ref{axiom:II}.
 
Showing that $\mathcal{L}^{\rm (CM)}_t$ satisfies quantum detailed balance~\ref{axiom:III}, is equivalent to showing that $\mathcal{A}^{\rm (CM)}_t$ satisfies Eq.~(\ref{eq:QDB_A_t}), namely,
\begin{align}
    \nE^{-\beta \epsilon_n}\xi_{j\ell m n}^{\rm (CM)}(t) = \nE^{-\beta \epsilon_m}\left(\xi_{\ell j n m}^{\rm (CM)}(t)\right)^*,
    \label{eq:QDB_xi_CM}
\end{align}
for all $(j,\ell,m,n)$.
We firstly notice that for all tuples $(j,\ell,m,n)$ for which $\epsilon_\ell - \epsilon_j \neq \epsilon_n - \epsilon_m$, we have $\xi_{j\ell mn}^{\rm (CM)}(t) = \xi_{\ell j n m}^{\rm (CM)}(t) = 0$ by Eq.~(\ref{eq:xi_CM}), so Eq.~(\ref{eq:QDB_xi_CM}) trivially holds. When $\epsilon_\ell - \epsilon_j = \epsilon_n - \epsilon_m$, we consider
\begin{align}
    &\quad \nE^{-\beta (\epsilon_n - \epsilon_m)}\xi_{j\ell mn}^{\rm (CM)}(t) \nonumber \\
    &= \sum_{k,k'|e_{k} - e_{k'} = \epsilon_j - \epsilon_{\ell}} \nE^{-\beta (\epsilon_n - \epsilon_m)}p^{(k)}_E V_{jk'\ell k}(t)V_{mk'nk}^*(t) \\
    &= \sum_{k,k'|e_{k'} - e_{k} = \epsilon_j - \epsilon_{\ell}} \nE^{-\beta (\epsilon_n - \epsilon_m)}p_E^{(k')} V_{jk\ell k'}(t)V_{mknk'}^*(t)\\
    &= \sum_{k,k'|e_{k} - e_{k'} = \epsilon_{\ell} - \epsilon_j} p^{(k)}_E V_{jk\ell k'}(t)V_{mknk'}^*(t)\\
    &= \sum_{k,k'|e_{k} - e_{k'} = \epsilon_{\ell} - \epsilon_j} p^{(k)}_E V_{\ell k'jk}^*(t)V_{nk'mk}(t)\\
    &= \left(\xi_{\ell j n m}^{\rm (CM)}(t)\right)^*,
\end{align}
where in the third line we relabelled $k\leftrightarrow k'$, the fourth line follows from that $\nE^{-\beta (\epsilon_n - \epsilon_m)}p_E^{(k')} = \nE^{-\beta (\epsilon_n - \epsilon_m + e_{k'})}/Z = \nE^{-\beta e_{k}}/Z = p_E^{(k)}$ where $Z\equiv {\rm Tr}\{\nE^{-\beta H_E}\}$, and the fifth line follows from the Hermiticity of $V(t)$, i.e., $V_{xyzw}(t)=V_{zwxy}^*(t),\,\forall\, (x,y,z,w)$.
Therefore, we have proven Eq.~(\ref{eq:QDB_xi_CM}), which implies that $\mathcal{L}^{\rm (CM)}_t$ satisfies quantum detailed balance~\ref{axiom:III}. This completes the proof of Prop.~\ref{prop:L_CM_=>_L}.

\section{Technical details for the ECTCM construction in Sec.~\ref{sec:constructing_ECTCM}}\label{app:technical_construction}
In this appendix, we present the explicit calculations underlying Sec.~\ref{sec:constructing_ECTCM}. That is, for a given thermal Lindbladian $\mathcal{L}_t$ satsifying Axioms~\ref{axiom:I}--\ref{axiom:III}, we provide an explicit construction for an ECTCM that realises it. \\ 
In this collision model, the ancilla Hamiltonian $H_E$ is defined as follows [Eq.~(\ref{eq:H_E_constructed})]:
\begin{align}
    H_E \equiv \sum_{\ell,\ell',q=0}^{d_S-1} \frac{\epsilon_\ell - \epsilon_{\ell'}}{2}\ketbra{(\ell,\ell'),q}{(\ell,\ell'),q}_E,
    \label{appeq:H_E_constructed}
\end{align}
where $\{\epsilon_\ell\}_{\ell=0}^{d_S-1}$ is the energy spectrum of $H_S$ and the pair $(\ell, \ell')$ corresponds to a $d_S$-dimensional eigensubspace of $H_E$, in which $q$ is the index for degenerate energy levels.  Introducing the index $k\equiv \ell d_S^2 + \ell' d_S + q$, we have $H_E \equiv \sum_{k=0}^{d_S^3-1}e_k \ketbra{k}{k}_E$ with $\{e_k\equiv \frac{\epsilon_{\ell}-\epsilon_{\ell'}}{2}\}_{k=0}^{d_S^3-1}$ being the eigenvalues.

The system--ancilla interaction Hamiltonian $V_{SE}(t)$ in the ECTCM is given by Eq.~(\ref{eq:V_constructed}):
\begin{widetext}
\begin{align}
    V_{SE}(t) \equiv \sum_{j,j',q,q'=0}^{d_S-1}\sqrt{Z} \frac{W_{jj'qq'}(t)}{N_{(j,j')}^{1/2}}\ketbra{j}{j'}_S\otimes\left(\sum_{\ell,\ell'=0|\epsilon_{\ell'}-\epsilon_{\ell} = \epsilon_{j'}-\epsilon_j}^{d_S-1}\ketbra{(\ell',\ell),q'}{(\ell,\ell'),q}_E\right),
    \label{appeq:V_constructed}
\end{align}
\end{widetext}
where $W_{jj'qq'}(t)$ is from the expression of $\xi_{jj'mm'}(t)$ [Eq.~(\ref{eq:xi_general_under_axioms_1})],
$Z\equiv \sum_{k=0}^{d_S^3-1}\nE^{-\beta e_k}$ is the partition function for the ancilla $E$ and
$N_{(j,j')}\equiv \big|\{(\ell,\ell')|\epsilon_\ell - \epsilon_{\ell'}=\epsilon_j - \epsilon_{j'}\}\big|$ is the degeneracy of the energy gap $\epsilon_j - \epsilon_{j'}$.

\begin{widetext}
\subsection{Hermiticity and energy conservation of $V_{SE}(t)$}\label{app:hermiticity_EC_of_V}
It is straightforward to check that $V_{SE}(t) = V_{SE}^\dagger(t)$:

\begin{align}
    V_{SE}^\dagger(t) &= \sum_{j,j',q,q'=0}^{d_S-1} \sqrt{Z} \frac{W_{jj'qq'}^*(t)}{N_{(j,j')}^{1/2}}\ketbra{j'}{j}_S\otimes\left(\sum_{\ell,\ell'=0|\epsilon_{\ell'}-\epsilon_{\ell} = \epsilon_{j'}-\epsilon_j}^{d_S-1}\ketbra{(\ell,\ell'),q}{(\ell',\ell),q'}_E\right) \\
    &= \sum_{j,j',q,q'=0}^{d_S-1} \sqrt{Z} \frac{W_{j'jq'q}(t)}{N_{(j',j)}^{1/2}}\ketbra{j'}{j}_S\otimes\left(\sum_{\ell,\ell'=0|\epsilon_{\ell'}-\epsilon_{\ell} = \epsilon_{j}-\epsilon_{j'}}^{d_S-1}\ketbra{(\ell',\ell),q}{(\ell,\ell'),q'}_E\right)\\
    &= \sum_{j,j',q,q'=0}^{d_S-1} \sqrt{Z} \frac{W_{jj'qq'}(t)}{N_{(j,j')}^{1/2}}\ketbra{j}{j'}_S\otimes\left(\sum_{\ell,\ell'=0|\epsilon_{\ell'}-\epsilon_{\ell} = \epsilon_{j'}-\epsilon_{j}}^{d_S-1}\ketbra{(\ell',\ell),q'}{(\ell,\ell'),q}_E\right)\\
    &= V_{SE}(t),
\end{align}\end{widetext}
where in the second line we used the symmetry: $W_{xx'yy'}(t) = W_{x'xy'y}^*(t)$ for all $(x,x',y,y')$ [Eq.~(\ref{eq:W_reshuffled_symmetry})], $N_{(j,j')}=N_{(j',j)}$ and swapped the indices $\ell\leftrightarrow \ell'$, and in the third line we swapped the indices $j\leftrightarrow j'$ and $q\leftrightarrow q'$.\\
\noindent In addition, to fulfill the Condition~\ref{CM:2} for ECTCMs, we show that $V_{SE}(t)$ is energy conserving, i.e., it satisfies $[V_{SE}(t), H_S+H_E] = 0$.
\clearpage
\begin{widetext}
    \begin{align}
    &\quad [V_{SE}(t), H_S + H_E] \nonumber\\
    &= \sum_{j,j',q,q'=0}^{d_S-1} \sum_{\ell,\ell'=0|\epsilon_{\ell'}-\epsilon_{\ell} = \epsilon_{j'}-\epsilon_{j}}^{d_S-1} \sqrt{Z} \frac{W_{jj'qq'}(t)}{N_{(j,j')}^{1/2}} \left[\left(\epsilon_{j'}+\frac{\epsilon_\ell-\epsilon_{\ell'}}{2}\right)-\left(\frac{\epsilon_{\ell'}-\epsilon_\ell}{2} + \epsilon_j\right)\right]\ketbra{j}{j'}_S\otimes\ketbra{(\ell',\ell),q'}{(\ell,\ell'),q}_E\\
    &= \sum_{j,j',q,q'=0}^{d_S-1} \sum_{\ell,\ell'=0|\epsilon_{\ell'}-\epsilon_{\ell} = \epsilon_{j'}-\epsilon_{j}}^{d_S-1} \sqrt{Z} \frac{W_{jj'qq'}(t)}{N_{(j,j')}^{1/2}} \left[\left(\epsilon_{j'}+\frac{\epsilon_j-\epsilon_{j'}}{2}\right)-\left(\frac{\epsilon_{j'}-\epsilon_j}{2} + \epsilon_j\right)\right]\ketbra{j}{j'}_S\otimes\ketbra{(\ell',\ell),q'}{(\ell,\ell'),q}_E\\
    &= 0\add{,}
\end{align}
where we have used that the second sum only runs over pairs $(\ell, \ell')$ for which $\epsilon_{\ell'}-\epsilon_{\ell} = \epsilon_{j'}-\epsilon_{j}$ holds.
\end{widetext}

\subsection{Proof of $\xi_{jj'mm'}^{\rm (CM)}(t) = \xi_{jj'mm'}(t)$}\label{app:xi_CM_=_xi}
The coefficient $\xi_{jj'mm'}(t)$ of the thermal Lindbladian is given by Eq.~(\ref{eq:xi_general_under_axioms_1}):
~\\
\begin{align}
    \xi_{jj'mm'}(t) &= \delta(\epsilon_{j'}-\epsilon_{m'} +\epsilon_m - \epsilon_j) \nE^{-\beta(\epsilon_{j}-\epsilon_{j'})/2}\nonumber \\
    &\quad \times \sum_{q,q'=0}^{d_S-1} W_{jj'qq'}(t) W_{mm'qq'}^*(t),
    \label{appeq:xi_general_under_axioms_1}
\end{align}
where $W_{xx'yy'}(t)$ is constructed from $\xi_{jj'mm'}(t)$ as described in App.~\ref{app:proof_of_proposition_xi}.

The coefficient $\xi_{jj'mm'}^{\rm (CM)}(t)$ of the ECTCM is in the form of Eq.~(\ref{eq:xi_CM_1}):
\begin{align}
    \xi_{jj'mm'}^{\rm (CM)}(t) &\equiv \delta(\epsilon_{j'}-\epsilon_j + \epsilon_{m} - \epsilon_{m'})\nonumber\\
    &\quad \times\sum_{k,k'|e_{k} - e_{k'} = \epsilon_j - \epsilon_{j'}} p^{(k)}_E V_{jk'j'k}(t)V_{mk'm'k}^*(t).
    \label{appeq:xi_CM_1}
\end{align}
where $p^{(k)}_E\equiv \bra{k}_E\gamma_E\ket{k}_E$ and $V_{jkj'k'}(t) \equiv \bra{j}_S\otimes\bra{k}_E V_{SE}(t) \ket{j'}_S\otimes\ket{k'}_E$.

Before proceeding with the calculation showing that $\xi_{jj'mm'}^{\rm (CM)}(t) = \xi_{jj'mm'}(t)$, we briefly explain the rationale behind the construction of the ancilla Hamiltonian $H_E$ [Eq.~(\ref{appeq:H_E_constructed})] and the system--ancilla interaction Hamiltonian $V_{SE}(t)$ [Eq.~(\ref{appeq:V_constructed})]. By comparing Eqs.~(\ref{appeq:xi_general_under_axioms_1}) and~(\ref{appeq:xi_CM_1}), one immediately identifies the correspondences:
\begin{align*}
    p^{(k)}_E &\leftrightarrow \nE^{-\beta(\epsilon_{j}-\epsilon_{j'})/2},\\
    V_{jkj'k'}(t) &\leftrightarrow W_{jj'qq'}(t).
\end{align*}
The first correspondence suggests that the ancilla energy levels labelled by $k$ should be chosen to reproduce the Boltzmann factors $\nE^{-\beta(\epsilon_{j}-\epsilon_{j'})/2}$. This naturally leads to an ancilla spectrum composed of the energy differences $\{\frac{\epsilon_{\ell}-\epsilon_{\ell'}}{2}\}_{\ell,\ell'=0}^{d_S-1}$. The second correspondence, however, imposes an additional requirement. Since the interaction Hamiltonian $V_{SE}(t)$ must encode the coefficients $W_{jj'qq'}(t)$ , the ancilla index $k$ must carry not only information about the energy difference $(\ell,\ell')$ but also the additional index $q$. This observation motivates enlarging the ancilla Hilbert space so that a single ancilla label $k$ can simultaneously accommodate both types of information.
Consequently, the ancilla must have dimension $d_S^3$, with the tuple index $(\ell,\ell',q)$ encoded as $k\equiv \ell d_S^2 + \ell' d_S + q$.

Now, to prove $\xi_{jj'mm'}^{\rm (CM)}(t) = \xi_{jj'mm'}(t)$ explicitly, we first compute the Lindblad operator $L_{kk'}(t)$ of the collision model, assuming a thermal initial environment state, given by Eq.~(\ref{eq:L_kk'}): 
\begin{widetext}
\begin{align}
    L_{kk'}(t) &\equiv \sqrt{p^{(k)}_E}\bra{k'}_E V_{SE}(t)\ket{k}_E\\
    &= \sum_{j,j',q,q'=0}^{d_S-1}\sum_{\ell,\ell'=0|\epsilon_{\ell'}-\epsilon_{\ell} = \epsilon_{j'}-\epsilon_{j}}^{d_S-1}\delta(k-\ell d_S^2-\ell'd_S-q)\delta(k'-\ell'd_S^2-\ell d_S-q')\nE^{-\beta (\epsilon_j - \epsilon_{j'})/4} \frac{W_{jj'qq'}(t)}{N_{(j,j')}^{1/2}} F_S^{jj'}.
\end{align}
We then have
\begin{align}
    \xi_{jj'mm'}^{\rm (CM)}(t) &\equiv \sum_{k,k'=0}^{d_S^3-1} {\rm Tr}\left\{(F_S^{jj'})^\dagger L_{kk'}(t)\right\}{\rm Tr}\left\{F_S^{mm'} L_{kk'}^\dagger(t)\right\} \label{eq:xi_CM_from_L_kk'}\\
    &= \sum_{k,k'=0}^{d_S^3-1}\hspace{1.5mm}\sum_{q,q'=0}^{d_S-1}\hspace{1.5mm}\sum_{p,p'=0}^{d_S-1}\hspace{1.5mm}\sum_{\ell,\ell'=0|\epsilon_{\ell'}-\epsilon_{\ell} = \epsilon_{j'}-\epsilon_{j}}^{d_S-1}\hspace{1.5mm}\sum_{n,n'=0|\epsilon_{n'}-\epsilon_{n} = \epsilon_{m'}-\epsilon_{m}}^{d_S-1}\nonumber\\
    &\quad \times \delta(k-\ell d_S^2-\ell'd_S-q)\delta(k'-\ell'd_S^2-\ell d_S-q')\delta(k-nd_S^2-n'd_S-p)\delta(k'-n'd_S^2-nd_S-p')\nonumber\\
    &\quad \times\nE^{-\beta (\epsilon_j - \epsilon_{j'})/4}\nE^{-\beta (\epsilon_m - \epsilon_{m'})/4} \frac{W_{jj'qq'}(t)}{N_{(j,j')}^{1/2}}\frac{W_{mm'pp'}^*(t)}{N_{(m,m')}^{1/2}}\\
    &= \sum_{q,q'=0}^{d_S-1}\hspace{1.5mm}\sum_{p,p'=0}^{d_S-1}\hspace{1.5mm}\sum_{\ell,\ell'=0|\epsilon_{\ell'}-\epsilon_{\ell} = \epsilon_{j'}-\epsilon_{j}}^{d_S-1}\hspace{1.5mm}\sum_{n,n'=0|\epsilon_{n'}-\epsilon_{n} = \epsilon_{m'}-\epsilon_{m}}^{d_S-1}\nonumber \\
    &\quad \times \delta(\ell d_S^2+\ell'd_S+q - nd_S^2-n'd_S-p)\delta(\ell'd_S^2+\ell d_S+q'-n'd_S^2-nd_S-p') \nonumber\\
    &\quad \times\nE^{-\beta (\epsilon_j - \epsilon_{j'})/4}\nE^{-\beta (\epsilon_m - \epsilon_{m'})/4} \frac{W_{jj'qq'}(t)}{N_{(j,j')}^{1/2}}\frac{W_{mm'pp'}^*(t)}{N_{(m,m')}^{1/2}}\label{eq:xi_CM_constraucted_deltas}\\
    &= \sum_{q,q'=0}^{d_S-1}\hspace{1.5mm}\sum_{p,p'=0}^{d_S-1}\hspace{1.5mm}\sum_{\ell,\ell'=0|\epsilon_{\ell'}-\epsilon_{\ell} = \epsilon_{j'}-\epsilon_{j}}^{d_S-1}\hspace{1.5mm}\sum_{n,n'=0|\epsilon_{n'}-\epsilon_{n} = \epsilon_{m'}-\epsilon_{m}}^{d_S-1} \delta_{\ell n}\delta_{\ell' n'}\delta_{qp}\delta_{q'p'}\nonumber\\
    &\quad \times \nE^{-\beta (\epsilon_j - \epsilon_{j'})/4}\nE^{-\beta (\epsilon_m - \epsilon_{m'})/4} \frac{W_{jj'qq'}(t)}{N_{(j,j')}^{1/2}}\frac{W_{mm'pp'}^*(t)}{N_{(m,m')}^{1/2}}\\
    &= \delta(\epsilon_{j'}-\epsilon_{m'}+\epsilon_m - \epsilon_j)\sum_{q,q'=0}^{d_S-1}\left(\sum_{\ell,\ell'=0|\epsilon_{\ell'}-\epsilon_{\ell} = \epsilon_{j'}-\epsilon_{j}}^{d_S-1} \frac{1}{N_{(j,j')}}\right)\nE^{-\beta (\epsilon_j - \epsilon_{j'})/2}W_{jj'qq'}(t)W_{mm'qq'}^*(t)\\
    &= \delta(\epsilon_{j'}-\epsilon_{m'}+\epsilon_m - \epsilon_j)\sum_{q,q'=0}^{d_S-1}\nE^{-\beta (\epsilon_j - \epsilon_{j'})/2}W_{jj'qq'}(t)W_{mm'qq'}^*(t)
    \label{eq:xi_CM_constructed}\\
    &= \xi_{jj'mm'}(t),
\end{align}
where in Eq.~(\ref{eq:xi_CM_constraucted_deltas}), to evaluate the Kronecker deltas, we used the fact that the base-$d_S$ expansion of an integer is unique, so equality of two integers implies equality of the corresponding coefficients in their base-$d_S$ representations.
\end{widetext}
Consequently, we have shown that, for \textit{any} thermal Lindbladian $\mathcal{L}_t$, we can construct a driving term $\mathcal{D}_S(t)$ [see Eq.~(\ref{eq:D_S_constructed})], an ancilla Hamiltonian $H_E$ and interaction Hamiltonian $V_{SE}(t)$ satisfying Conditions~\ref{CM:1}--\ref{CM:2}, such that the resulting collision model Lindblaidan $\mathcal{L}^{\rm (CM)}_t$ -- obtained from assuming an initial thermal state on the environment, hence satisfying Condition~\ref{CM:3} -- coincides with $\mathcal{L}_t$. In the next subsection, we show that $V_{SE}$ can be chosen such that the corresponding Lamb shift vanishes, i.e., the collision model also satisfies Condition~\ref{CM:4}, and is thus an ECTCM.

Before concluding this subsection, we remark that the above construction can be summarised in the following lemma:
\begin{lemma}[Thermal Hermitian dilation]\label{lem:thermal_hermitian_dilation}
    Given a CP map $\mathcal{A}$ that satisfies the two conditions:
    \begin{itemize}
        \item (Time-translation symmetry) $[\mathcal{A}, \mathcal{L}_{H_S}] = 0$.
        \item (Quantum detailed balance) For all $(x,x',y,y')$, $\nE^{-\beta \epsilon_y}\bra{x'}_S\mathcal{A}(\ketbra{x}{y}_S)\ket{y'}_S = \nE^{-\beta \epsilon_{y'}}\bra{x}_S\mathcal{A}(\ketbra{x'}{y'}_S)\ket{y}_S^*$,
    \end{itemize}
    where $\{\ket{j}_S\}_{j=0}^{d_S-1}$ is the orthonormal eigenbasis of $H_S$ with the corresponding eigenvalues $\{\epsilon_j\}_{j=0}^{d_S-1}$, there always exists a thermal Hermitian dilation:
    \begin{align}
        \mathcal{A}(\cdot) = {\rm Tr}_E\{V_{SE}(\cdot\otimes \gamma_E)V_{SE}\},
    \end{align}
    where $H_E$ and $V_{SE}$ are constructed in Eqs.~(\ref{appeq:H_E_constructed}) and (\ref{appeq:V_constructed}), respectively, satisfying  $[V_{SE}, H_S+H_E] = 0$, and $\gamma_E \equiv \nE^{-\beta H_E}/{\rm Tr}\{\nE^{-\beta H_E}\}$ is the thermal state of $E$. We note that
    \begin{align}
        \mathcal{A}\text{ is TP } \Leftrightarrow {\rm Tr}_E\{V_{SE}^2 \gamma_E\} = \mathds{1}_S.
    \end{align}
\end{lemma}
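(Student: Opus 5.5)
The plan is to read the lemma as a repackaging of the computation in App.~\ref{app:xi_CM_=_xi}, applied to a single time-independent CP map $\mathcal{A}$ in place of $\mathcal{A}_t$, and to confirm that the three required properties follow from results already in hand.

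First, we apply part (a) of the proof of Prop.~\ref{prop:general_form_xi} to $\mathcal{A}$. Its Choi coefficients $\xi_{jj'mm'}$ satisfy the gap-matching constraint Eq.~(\ref{eq:xi_energy_gaps_matching}) by time-translation symmetry. By the detailed-balance hypothesis they satisfy Eq.~(\ref{eq:QDB_xi}). Rescaling to $\bm{\zeta}$ and taking the unique positive square root then produces $\bm{W}$ with $\bm{W}=\bm{S}\bm{W}^*\bm{S}$, and $\xi$ takes the form Eq.~(\ref{appeq:xi_general_under_axioms}).

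Next, we build $H_E$ and $V_{SE}$ from Eqs.~(\ref{appeq:H_E_constructed}) and (\ref{appeq:V_constructed}) using this $\bm{W}$. Hermiticity and $[V_{SE},H_S+H_E]=0$ are exactly App.~\ref{app:hermiticity_EC_of_V}, which uses only the symmetry of $\bm{W}$.

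For the dilation identity, we expand $\gamma_E=\sum_k p_E^{(k)}\ketbra{k}{k}_E$ and take the partial trace in the eigenbasis of $H_E$. This gives
\begin{align}
{\rm Tr}_E\{V_{SE}(\rho\otimes\gamma_E)V_{SE}\}=\sum_{k,k'}L_{kk'}\,\rho\, L_{kk'}^\dagger,\qquad L_{kk'}=\sqrt{p_E^{(k)}}\bra{k'}_EV_{SE}\ket{k}_E,
\end{align}
so the Choi coefficients of the right-hand side are precisely $\xi^{\rm (CM)}_{jj'mm'}$ of Eq.~(\ref{eq:xi_CM_from_L_kk'}). The calculation ending in Eq.~(\ref{eq:xi_CM_constructed}) shows these equal $\xi_{jj'mm'}$, and since a map is determined by its Choi coefficients, the two maps coincide. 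The only point to check is that the factor $\sqrt{Z}$ cancels the normalisation of $p_E^{(k)}=\nE^{-\beta(\epsilon_\ell-\epsilon_{\ell'})/2}/Z$. The factor $1/N_{(j,j')}$ likewise cancels the number of admissible pairs $(\ell,\ell')$. The base-$d_S$ uniqueness of $k$ collapses the double sums.

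For the trace-preservation remark, we note that $\mathcal{A}^\dagger(X)={\rm Tr}_E\{V_{SE}(X\otimes\mathds{1}_E)V_{SE}(\mathds{1}_S\otimes\gamma_E)\}$. Setting $X=\mathds{1}_S$ gives $\mathcal{A}^\dagger(\mathds{1}_S)={\rm Tr}_E\{V_{SE}^2\gamma_E\}$, where cyclicity of the partial trace over $E$ applies because $\gamma_E$ acts only on $E$. Hence $\mathcal{A}$ is TP if and only if ${\rm Tr}_E\{V_{SE}^2\gamma_E\}=\mathds{1}_S$.

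The main obstacle is bookkeeping rather than concept: making sure the index conventions in $L_{kk'}$ (which ket is $k$ versus $k'$, the reversal $(\ell,\ell')\to(\ell',\ell)$) reproduce $W_{jj'qq'}W^*_{mm'qq'}$ with the correct Boltzmann factor $\nE^{-\beta(\epsilon_j-\epsilon_{j'})/2}$. These conventions were already fixed in App.~\ref{app:xi_CM_=_xi}, so we will simply cite that computation.
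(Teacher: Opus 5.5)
Your proposal is correct and follows the paper's proof essentially verbatim: write $\mathcal{A}$ through coefficients $\xi_{jj'mm'}$ of the form in Prop.~\ref{prop:general_form_xi}, recognise the Choi coefficients of ${\rm Tr}_E\{V_{SE}(\cdot\otimes\gamma_E)V_{SE}\}$ as $\xi^{\rm (CM)}_{jj'mm'}$, and invoke the computation ending in Eq.~(\ref{eq:xi_CM_constructed}). Your short argument that $\mathcal{A}^\dagger(\mathds{1}_S)={\rm Tr}_E\{V_{SE}^2\gamma_E\}$ is also correct, and it fills in the trace-preservation remark, which the paper states without proof.
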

\begin{proof}
    Since the CP map $\mathcal{A}$ satisfies the pertinent parts of Axioms~\ref{axiom:II} and \ref{axiom:III}, it can be written as $\mathcal{A}(\cdot) = \sum_{j,j',m,m'=0}^{d_S-1}\xi_{jj'mm'}F_S^{jj'}(\cdot) (F_S^{mm'})^\dagger$ where $F_S^{xy}\equiv \ketbra{x}{y}_S$ and the coefficient $\xi_{jj'mm'}$ is of the form of Eq.~(\ref{appeq:xi_general_under_axioms_1}) without the variable of time. The thermal Hermitian dilation ${\rm Tr}_E\{V_{SE}(\cdot\otimes \gamma_E)V_{SE}\}$ with $H_E$ and $V_{SE}$ defined in Eqs.~(\ref{appeq:H_E_constructed}) and (\ref{appeq:V_constructed}) can be written as ${\rm Tr}_E\{V_{SE}(\cdot\otimes \gamma_E)V_{SE}\} = \sum_{j,j',m,m'=0}^{d_S-1}\xi_{jj'mm'}^{\rm (CM)}F_S^{jj'}(\cdot) (F_S^{mm'})^\dagger$ where $\xi_{jj'mm'}^{\rm (CM)}$ is given by Eq.~(\ref{eq:xi_CM_from_L_kk'}). Following the derivation of Eq.~(\ref{eq:xi_CM_constructed}), we have $\xi_{jj'mm'} = \xi_{jj'mm'}^{\rm (CM)}$, and therefore, $\mathcal{A}(\cdot) = {\rm Tr}_E\{V_{SE}(\cdot\otimes \gamma_E)V_{SE}\}$.
\end{proof}
To understand this lemma, we note that the Hermitian operator $V_{SE}$ appears in the first order Taylor expansion of the unitary $U_{SE}\equiv \nE^{-\nI V_{SE}\Delta t}$ with respect to $\Delta t$. The lemma therefore shows that every CP map satisfying thermodynamic Axioms~\ref{axiom:II} and~\ref{axiom:III} can be interpreted as an effective partial description of a short-time energy-conserving collision with a thermal ancilla. The remaining contribution required to complete the Lindbladian description is precisely the anticommutator term involving $\mathcal{A}^\dagger(\mathds{1})$, as appears in $\mathcal{L}^{\rm (CM)}$ [Eq.~(\ref{appeq:L_CM_with_A})].

Recall that Prop.~\ref{prop:general_form_xi} can be regarded as a characterisation of CP maps satisfying thermodynamic Axioms~\ref{axiom:II} and~\ref{axiom:III}. 
Lem.~\ref{lem:thermal_hermitian_dilation} thus offers an alternative characterisation of the same class of maps from a collision-model perspective. 
We will use this lemma in App.~\ref{app:error_analysis} to bound the norm of $\mathcal{A}$ in terms of the norm of $V_{SE}$.

\medskip

\subsection{Vanishing Lamb shift by a gauge freedom}\label{app:vanishing_H_LS}
In this subsection, we conclude the proof that all thermal Linbladians can be realised by ECTCMs by showing that the Lamb shift induced by the constructed 
$V_{SE}(t)$ [Eq.~(\ref{eq:V_constructed})] can be eliminated by exploiting a gauge freedom.
Recall that the Lamb shift is given by $H_{S}^{\rm LS}(t) \equiv \mathrm{Tr}_{E}\{V_{SE}(t)\gamma_{E}\}$ [Eq.~(\ref{eq:H_S^LS_def})]. We see that a vanishing Lamb shift, i.e., $H_{S}^{\rm LS}(t) = 0$ implies
\begin{widetext}
\begin{align}
    {\rm Tr}_E\{V_{SE}(t) \gamma_E\} &= \sum_{k=0}^{d_S^3-1}\sum_{j,j',q,q'=0}^{d_S-1}\sum_{\ell,\ell'=0|\epsilon_{\ell'}-\epsilon_{\ell} = \epsilon_{j'}-\epsilon_j}^{d_S-1} \delta(k-\ell'd_S^2-\ell d_S - q')\delta(k - \ell d_S^2 - \ell' d_S - q) \nE^{-\beta e_k}\frac{W_{jj'qq'}(t)}{(ZN_{(j,j')})^{1/2}} F_S^{jj'}\\
    &= \sum_{j,j'|\epsilon_j=\epsilon_{j'}}^{d_S-1}\sum_{q=0}^{d_S-1}\sum_{\ell = 0}^{d_S-1}\frac{W_{jj'qq}(t)}{(ZN_{(j,j')})^{1/2}} F_S^{jj'}\\
    &= \sum_{j,j'|\epsilon_j=\epsilon_{j'}}^{d_S-1}\sum_{q=0}^{d_S-1}\frac{d_S W_{jj'qq}(t)}{(ZN_0)^{1/2}} F_S^{jj'} = 0\\
    &\Leftrightarrow \sum_{q=0}^{d_S-1} W_{jj'qq}(t) = 0, \quad \forall\, (j,j') \text{ with } \epsilon_j = \epsilon_{j'},
    \label{eq:condition_on_W_for_LS}
\end{align}
where 
the second equality follows from the uniqueness of the base-$d_S$ expansion of integers and
$N_0 \equiv \big|\{(\ell, \ell')|\epsilon_\ell = \epsilon_{\ell'}\}\big|$, and the final equivalence follows from the linear independence of the $F_S^{jj'}$ basis elements. 
\end{widetext}
To enforce Eq.~(\ref{eq:condition_on_W_for_LS}) while keeping the terms $\xi_{jj'mm'}^{\rm (CM)}(t)$ -- and thus the resulting Lindbladian $\mathcal{L}^{(\rm{CM})}_t$ -- invariant, we ignore the time dependence for simplicity and recall that $W_{jj'qq'}$ comes from the square root of $\zeta_{jj'mm'}$ in Eq.~(\ref{eq:zeta}): $\bm{W} \equiv \bm{\zeta}^{1/2}\bm{U}$, where the bold letter denotes the matrices in $\mathbb{C}^{d_S^2\times d_S^2}$ treating $(j,j')$ and $(m,m')$ as row and column indices, respectively, and $\bm{U}$ is a gauge unitary matrix satisfies the symmetry:
\begin{align}
    \bm{U} = \bm{S}\bm{U}^*\bm{S},
    \label{eq:U_reshuffled_symmetry}
\end{align}
with $\bm{S}$ being the swap operator acting on $(j,j')$ and $(m,m')$, respectively.
Since the matrix $\bm{U}$ does not influence the resulting coefficient $\xi_{jj'mm'}^{\rm (CM)}$ in Eq.~(\ref{eq:xi_CM_constructed}),
we now show how the condition of vanishing Lamb shift [Eq.~(\ref{eq:condition_on_W_for_LS})] on $\bm{W}$ can be satisfied by using the gauge freedom of $\bm{W}$, i.e., choosing an appropriate $\bm{U}$.

To this end, define the rectangular matrix $\bm{M}\in \mathbb{C}^{N_0\times d_S^2}$ with $N_0$ rows and $d_S^2$ columns, where $N_0$ denotes the number of equations representing the condition in Eq.~(\ref{eq:condition_on_W_for_LS}), i.e.,
\begin{align}
    \bm{M}_{\alpha,:} := (\bm{\zeta}^{1/2})_{r_\alpha,:},\quad \alpha=0,1,\dots,N_0-1,
    \label{eq:M_def}
\end{align}
where each $r_\alpha$ is a pair $(j,j')$ and $\alpha$ enumerates the elements of $\mathscr{R}\equiv\{(\ell,\ell')|\epsilon_\ell = \epsilon_{\ell'}\}$ as $\mathscr{R}=\{r_0,r_1,\dots,r_{N_0-1}\}$.
Define the vector $\ket{v}\equiv \sum_{q=0}^{d_S-1}\ket{q,q}\in\mathbb{C}^{d_S^2}$. Condition Eq.~(\ref{eq:condition_on_W_for_LS}) is equivalent to
\begin{align}
    \exists\, \bm{U},\text{ such that } \bm{M}\bm{U}\!\ket{v} = 0,
    \label{eq:M_U_v_=_0}
\end{align}
Therefore, to satisfy Eq.~(\ref{eq:M_U_v_=_0}), we require that $\bm{U}\!\ket{v}\in \mathfrak{ker}(\bm{M})$ where $\mathfrak{ker}(\bm{X})$ denotes the kernel of the matrix $\bm{X}$. Define $\mathfrak{reach}(\mathscr{X};\ket{\psi})$ as the set of reachable states via matrices in the set $\mathscr{X}$ acting on $\ket{\psi}$, i.e.,
\begin{align}
    \mathfrak{reach}(\mathscr{X};\ket{\psi}) := \{\bm{X}\!\ket{\psi} | \bm{X}\in\mathscr{X}\},
\end{align}
and denote the set of $\bm{U}$'s satisfying the symmetry [Eq.~(\ref{eq:U_reshuffled_symmetry})] as $\mathscr{U}$.
Eq.~(\ref{eq:M_U_v_=_0}) can be expressed as
\begin{align}
    \mathfrak{ker}(\bm{M})\cap \mathfrak{reach}(\mathscr{U};\ket{v}) \neq \varnothing,
    \label{eq:ker_reach_non_empty}
\end{align}
i.e., the kernel of $\bm{M}$ intersects with the reachable set $\mathfrak{reach}(\mathscr{U};\ket{v})$. We have the following characterisation of $\mathfrak{reach}(\mathscr{U};\ket{v})$:
\begin{lemma}\label{lem:reach_U_v}
    Let ${\rm Herm}^{d_S\times d_S}$ denote the set of Hermitian matrices in $\mathbb{C}^{d_S\times d_S}$ and $\|\!\ket{\psi}\!\|_2$ be the 2-norm of $\ket{\psi}$.
    \begin{align}
        \mathfrak{reach}(\mathscr{U};\ket{v}) &= \{\ket{\phi}|\ket{\phi} = \sum_{\ell,\ell'=0}^{d_S-1} H_{\ell\ell'}\ket{\ell,\ell'},\nonumber\\
        &\quad \|\!\ket{\phi}\!\|_2=\sqrt{d_S},\, \bm{H}\in{\rm Herm}^{d_S\times d_S}\} \\
        &=: \overline{\mathscr{V}}_{\rm Herm}.
    \end{align}
\end{lemma}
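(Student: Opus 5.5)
The plan is to rephrase the symmetry constraint $\bm{U}=\bm{S}\bm{U}^*\bm{S}$ as commutation with an antiunitary involution. Then $\mathscr{U}$ becomes a real orthogonal group acting on a real form of $\mathbb{C}^{d_S^2}$, and $\mathfrak{reach}(\mathscr{U};\ket{v})$ is a sphere in that real form.

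First, define the antilinear map $J\ket{\psi}:=\bm{S}\ket{\psi}^*$, with complex conjugation taken in the basis $\{\ket{\ell,\ell'}\}$. Since $\bm{S}$ is a real permutation matrix, $J$ is antiunitary and $J^2=\mathds{1}$. For any matrix $\bm{U}$ one has $J\bm{U}\ket{\psi}=\bm{S}\bm{U}^*\ket{\psi}^*$ and $\bm{U}J\ket{\psi}=\bm{U}\bm{S}\ket{\psi}^*$. Hence Eq.~(\ref{eq:U_reshuffled_symmetry}) is equivalent to $[\bm{U},J]=0$, and $\mathscr{U}$ is the set of unitaries commuting with $J$.

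Next, identify the fixed set $V_J:=\{\ket{\phi}:J\ket{\phi}=\ket{\phi}\}$. Writing $\ket{\phi}=\sum_{\ell,\ell'}H_{\ell\ell'}\ket{\ell,\ell'}$, the condition $J\ket{\phi}=\ket{\phi}$ reads $H_{\ell\ell'}=H^*_{\ell'\ell}$. So $V_J$ consists exactly of the vectorisations of Hermitian matrices. It is a real vector space of real dimension $d_S^2$ with $\mathbb{C}^{d_S^2}=V_J\oplus \nI V_J$. For $\ket{a},\ket{b}\in V_J$ the inner product is real, since $\braket{a|b}=\braket{Ja|Jb}^*=\braket{a|b}^*$. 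Moreover, $\ket{v}=\sum_q\ket{q,q}$ corresponds to $\bm{H}=\mathds{1}$, so $\ket{v}\in V_J$ and $\|\ket{v}\|_2=\sqrt{d_S}$.

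The inclusion $\mathfrak{reach}(\mathscr{U};\ket{v})\subseteq\overline{\mathscr{V}}_{\rm Herm}$ is then immediate. For $\bm{U}\in\mathscr{U}$ we have $J\bm{U}\ket{v}=\bm{U}J\ket{v}=\bm{U}\ket{v}$, so $\bm{U}\ket{v}$ is Hermitian-type, and unitarity gives norm $\sqrt{d_S}$.

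For the reverse inclusion, take $\ket{\phi}\in\overline{\mathscr{V}}_{\rm Herm}$. By Gram--Schmidt in the real inner product space $V_J$, extend $\ket{v}/\sqrt{d_S}$ to a real-orthonormal basis $\{\ket{a_k}\}_{k=1}^{d_S^2}$ of $V_J$, and likewise extend $\ket{\phi}/\sqrt{d_S}$ to a basis $\{\ket{b_k}\}$. Because inner products within $V_J$ are real and $V_J$ spans $\mathbb{C}^{d_S^2}$ over $\mathbb{C}$, each of these is also a complex orthonormal basis of $\mathbb{C}^{d_S^2}$. Define the unitary $\bm{U}:=\sum_k\ketbra{b_k}{a_k}$; it sends $\ket{v}$ to $\ket{\phi}$. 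It remains to check $[\bm{U},J]=0$. On each basis vector, $J\bm{U}\ket{a_k}=J\ket{b_k}=\ket{b_k}=\bm{U}J\ket{a_k}$. Both $J\bm{U}$ and $\bm{U}J$ are antilinear, and they agree on a basis whose real span is $V_J$. Extending via $J(\nI x)=-\nI Jx$, they then agree on $V_J\oplus\nI V_J$, i.e., everywhere. Hence $\bm{U}\in\mathscr{U}$.

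The main point needing care is this last extension step: that antilinear maps agreeing on the real form $V_J$ agree globally, and that a real-orthonormal basis of $V_J$ is a complex orthonormal basis. Both follow from the reality of inner products on $V_J$ and the decomposition $\mathbb{C}^{d_S^2}=V_J\oplus\nI V_J$. Put differently, $\mathscr{U}\cong O(d_S^2)$, which acts transitively on the sphere of radius $\sqrt{d_S}$ in $V_J$.
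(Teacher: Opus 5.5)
Your proof is correct. Your forward inclusion is the paper's argument in different notation: the paper's chain $\bm{S}\ket{u}^*=\bm{S}\bm{U}^*\bm{S}\ket{v}=\bm{U}\ket{v}$ is exactly your $J\bm{U}\ket{v}=\bm{U}J\ket{v}=\bm{U}\ket{v}$.

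The two arguments differ in the converse. The paper does not build a full orthonormal frame. It writes down a single Householder reflection $\bm{U}'=\mathds{1}-2\ket{w}\bra{w}/\|\!\ket{w}\!\|_2^2$ with $\ket{w}=\ket{\tilde{\phi}}-\ket{\tilde{v}}$. It uses the reality of $\braket{\phi|v}$ (which equals ${\rm Tr}\,\bm{H}$) to get $\bm{U}'\ket{v}=\ket{\phi}$. It then checks $\bm{U}'=\bm{S}(\bm{U}')^*\bm{S}$ directly from $J\ket{w}=\ket{w}$. The case $\ket{\phi}=\ket{v}$ is handled separately because of the division by $\|\!\ket{w}\!\|_2^2$. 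In your language, this is the reflection of the real form $V_J$ in the hyperplane orthogonal to $\ket{w}$, so it is one particular element of $\mathscr{U}\cong O(d_S^2)$. The paper's argument is therefore a special instance of your transitivity statement.

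Your route buys structure. It identifies the entire gauge group as the orthogonal group of the real form of vectorised Hermitian matrices, which explains why the reachable set is precisely a sphere there. It also needs no case split. The paper's route buys explicitness: a closed-form rank-one unitary with no Gram--Schmidt step, which is exactly what Protocol step 2(d) implements.
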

\begin{proof}
    For a general vector $\ket{\psi} \equiv \sum_{\ell,\ell'=0}^{d_S-1}\psi_{\ell\ell'}\ket{\ell,\ell'}$, we note that $\bm{S}\!\ket{\psi}^* = \sum_{\ell,\ell'=0}^{d_S-1}\psi_{\ell'\ell}^*\ket{\ell,\ell'}$. It is straightforward that
    \begin{align}
        \overline{\mathscr{V}}_{\rm Herm} = \{\ket{\phi} | \bm{S}\!\ket{\phi}^* = \ket{\phi},\, \|\!\ket{\phi}\!\|_2=\sqrt{d_S}\}.
        \label{eq:V_Herm_representation}
    \end{align}
    For a vector $\ket{u}\equiv \bm{U}\!\ket{v}\in \mathfrak{reach}(\mathscr{U};\ket{v})$, we have
    \begin{align}
        \bm{S}\!\ket{u}^* = \bm{S}\bm{U}^*\!\ket{v} = \bm{S}\bm{U}^*\bm{S}\!\ket{v} = \bm{U}\!\ket{v} = \ket{u},
    \end{align}
    where the equality $\bm{S}\!\ket{v} = \ket{v}$ is used. Besides, $\|\!\ket{u}\!\|_2 = \|\!\ket{v}\!\|_2 = \sqrt{d_S}$.
    Therefore, $\ket{u}\in\overline{\mathscr{V}}_{\rm Herm}$, meaning that $\mathfrak{reach}(\mathscr{U};\ket{v}) \subseteq \overline{\mathscr{V}}_{\rm Herm}$.

    On the other hand, consider a vector $\ket{\phi}\in\overline{\mathscr{V}}_{\rm Herm}$. If $\ket{\phi} = \ket{v}$, it is straightforward that $\ket{\phi}\in \mathfrak{reach}(\mathscr{U};\ket{v})$, so we only need to consider the case when $\ket{\phi} \neq \ket{v}$. Normalise $\ket{\phi}$ and $\ket{v}$ as $\ket{\tilde{\phi}}\equiv \ket{\phi}/\sqrt{d_S}$ and $\ket{\tilde{v}}\equiv \ket{v}/\sqrt{d_S}$, respectively.
    We consider the Householder reflection unitary $\bm{U}'\in\mathbb{C}^{d_S^2\times d_S^2}$ \cite{Householder1958}:
    \begin{align}
        \bm{U}' \equiv \mathds{1} - \frac{2(\ket{\tilde{\phi}}-\ket{\tilde{v}})(\bra{\tilde{\phi}} - \bra{\tilde{v}})}{\|\!\ket{\tilde{\phi}}-\ket{\tilde{v}}\!\|_2^2}.
        \label{eq:U_householder_reflection}
    \end{align}
    Since $\braket{\phi|v}$ is always real, we have $\bm{U'}\!\ket{v} = \ket{\phi}$. Moreover, $\bm{U}' = \bm{S}(\bm{U}')^*\bm{S}$, which follows from the fact that 
    $\bm{S}\!\ket{v}^* = \ket{v}$ and $\bm{S}\!\ket{\phi}^* = \ket{\phi}$.
    Therefore, any $\ket{\phi}\in \overline{\mathscr{V}}_{{\rm Herm}}$ is reachable via a symmetry-constrained $\bm{U'}$ acting on $\ket{v}$, i.e., $\overline{\mathscr{V}}_{\rm Herm}\subseteq \mathfrak{reach}(\mathscr{U};\ket{v})$. We finally conclude that $\mathfrak{reach}(\mathscr{U};\ket{v})=\overline{\mathscr{V}}_{\rm Herm}$.
\end{proof}
By Lem.~\ref{lem:reach_U_v}, we know that the reachable set $\mathfrak{reach}(\mathscr{U};\ket{v})$ is equal to $\overline{\mathscr{V}}_{\rm Herm}$, the set of vectorised Hermitian matrices in ${\rm Herm}^{d_S\times d_S}$ with a norm constraint. Let $\mathscr{V}_{\rm Herm}$ be the space of vectorised Hermitian matrices without the norm constraint.
The condition Eq.~(\ref{eq:ker_reach_non_empty}) is equivalent to
\begin{align}
    {\rm dim}(\mathfrak{ker}(\bm{M})\cap \mathscr{V}_{\rm Herm}) > 0,
    \label{eq:dim(ker(M)_cap_V_Herm)_>_0}
\end{align}
since once there exists a non-trivial intersection (i.e., non-zero vector) $\ket{\phi}\in \mathfrak{ker}(\bm{M})\cap \mathscr{V}_{\rm Herm}$, its normalised counterpart $\ket{\phi'} \equiv \sqrt{d_S}\ket{\phi}/\|\!\ket{\phi}\!\|_2 \in \mathfrak{ker}(\bm{M})\cap\mathfrak{reach}(\mathscr{U};\ket{v})$, and if Eq.~(\ref{eq:ker_reach_non_empty}) holds, the common element of $\mathfrak{ker}(\bm{M})$ and $\mathfrak{reach}(\mathscr{U};\ket{v})$ cannot be the zero vector due to the norm constraint in $\mathfrak{reach}(\mathscr{U};\ket{v})$, meaning that Eq.~(\ref{eq:ker_reach_non_empty}) holds if and only if ${\rm dim}(\mathfrak{ker}(\bm{M})\cap \mathscr{V}_{\rm Herm}) > 0$. 

To show that this is indeed the case, we now consider the structure of $\bm{M}$. Recall that $\bm{\zeta}^{1/2}$ satisfies the same symmetry as $\bm{\zeta}$ [Eq.~(\ref{eq:zeta_sqrt_reshuffled_symmetry})], i.e., $\zeta_{jj'mm'} = \zeta_{j'jm'm}^*$ for all $(j,j',m,m')$. Since $\bm{M}$ consists of $N_0$ rows picked from $\bm{\zeta}^{1/2}$ [Eq.~(\ref{eq:M_def})], we have
\begin{align}
    M_{jj'mm'} + M_{j'jmm'} &= M_{jj'm'm}^* + M_{j'jm'm}^*,
    \label{eq:M_jj'+M_j'j}\\
    \nI(M_{jj'mm'} - M_{j'jmm'}) &= -\nI(M_{jj'm'm}^* - M_{j'jm'm}^*),
    \label{eq:i(M_jj'-M_j'j)}
\end{align}
for all $(j,j')\in \{(\ell,\ell')|\epsilon_\ell = \epsilon_{\ell'}\}$ and $(m,m')$. For $(j,j')\in \{(\ell,\ell')|\epsilon_\ell = \epsilon_{\ell'}\}$, define
\begin{align}
    \ket{\bm{H}_{jj'}^+} &\equiv \sum_{m,m'=0}^{d_S-1} \frac{M_{jj'mm'}^* + M_{j'jmm'}^*}{2} \ket{m,m'},\\
    \ket{\bm{H}_{jj'}^-} &\equiv \sum_{m,m'=0}^{d_S-1} \frac{-\nI(M_{jj'mm'}^* - M_{j'jmm'}^*)}{2} \ket{m,m'}.
\end{align}
By Eqs.~(\ref{eq:M_jj'+M_j'j}) and (\ref{eq:i(M_jj'-M_j'j)}), $\ket{\bm{H}_{jj'}^+}$ and $\ket{\bm{H}_{jj'}^-}$ are vectorised Hermitian matrices, i.e., $\ket{\bm{H}_{jj'}^{\pm}}\in\mathscr{V}_{\rm Herm}$. The matrix $\bm{M}$ can thus be written as
\begin{align}
    \bm{M} &= \sum_{j,j'=0|\epsilon_j=\epsilon_{j'}}^{d_S-1}\sum_{m,m'=0}^{d_S-1}M_{jj'mm'}\ketbra{j,j'}{m,m'}\\
    &= \sum_{j=0}^{d_S-1}\ketbra{j,j}{\bm{H}_{jj}^+} \nonumber\\
    &\quad + \sum_{j,j'=0|\epsilon_j=\epsilon_{j'},\, j>j'} (\ket{j,j'} + \ket{j',j})\!\bra{\bm{H}_{jj'}^+} \nonumber\\
    &\quad - \sum_{j,j'=0|\epsilon_j=\epsilon_{j'},\, j>j'} \nI(\ket{j,j'} - \ket{j',j})\!\bra{\bm{H}_{jj'}^-}.
\end{align}
Since 
\begin{align}
    &{\rm dim}\left[{\rm span}\left(\{\ket{\bm{H}_{jj}}\}_{j=0}^{d_S-1}\cup\{\ket{\bm{H}_{jj'}^+}, \ket{\bm{H}_{jj'}^-}\}_{j,j'=0\big|\substack{\epsilon_j=\epsilon_{j'}\\j>j'}}^{d_S-1}\right)\right] \nonumber\\
    &\le |\{(j,j')|\epsilon_j = \epsilon_{j'}\}| \equiv N_0,
    \label{eq:dim_span_H_in_M}
\end{align}
as long as $N_0 < d_S^2 \equiv {\rm dim}(\mathscr{V}_{\rm Herm})$, there always exists a non-trivial vector $\ket{\phi}\in \mathscr{V}_{\rm Herm}$ which is orthogonal to all $\ket{\bm{H}_{jj'}^\pm}$'s in $\bm{M}$, such that $\bm{M}\!\ket{\phi} = 0$, i.e., $\ket{\phi}\in \mathfrak{ker}(\bm{M})\cap\mathscr{V}_{\rm Herm}$, and thus, Eq.~(\ref{eq:dim(ker(M)_cap_V_Herm)_>_0}), ${\rm dim}(\mathfrak{ker}(\bm{M})\cap \mathscr{V}_{\rm Herm}) > 0$, is satisfied.
In conclusion, we have
\begin{align*}
    N_0 < d_S^2 &\Rightarrow \text{Eq.~(\ref{eq:dim(ker(M)_cap_V_Herm)_>_0})} \Leftrightarrow \text{Eq.~(\ref{eq:ker_reach_non_empty})} \\
    &\Leftrightarrow \text{Eq.~(\ref{eq:M_U_v_=_0})} \Leftrightarrow \text{Eq.~(\ref{eq:condition_on_W_for_LS})}\\
    &\Leftrightarrow {\rm Tr}_E\{V_{SE}\gamma_E\} = 0.
\end{align*}
We note that when $N_0 = d_S^2$ such that $\bm{M} = \bm{\zeta}^{1/2}$, the dimension of the span in Eq.~(\ref{eq:dim_span_H_in_M}) can still be strictly less than $N_0$, and thus, $d_S^2$, if $\bm{\zeta}$, and therefore $\bm{M}$, is not full rank. 
Therefore, the Lamb shift in the constructed collision model can always vanish under an appropriate gauge, except the special case where the system has fully degenerate energies ($N_0=d_S^2$) and the coefficient matrix $\bm{\zeta}$ [Eq.~(\ref{eq:zeta})] is full-rank, for which its ECTCM realisation is unknown. 

\section{Proof of Prop.~\ref{prop:MTO_=_CM}}\label{app:proof_of_MTO_=_CM}
In this appendix, we prove that the set of MTO generators $\mathcal{L}_t^{\rm (MTO)}$ coincides with the set of ECTCM generators $\mathcal{L}_t^{\rm (CM)}$. Specifically, we show that every ECTCM generator generates an MTO and, conversely, that every {\it bounded} MTO generator necessarily arises as the continuous-time limit of an ECTCM.

Before proceeding with the proof of Prop.~\ref{prop:MTO_=_CM}, we establish two auxillary technical lemmas. The first characterises when a GKSL dissipator vanishes identically. The second applies this characterisation to the second-order term in the expansion of an infinitesimal thermal operation and shows that, if this term vanishes, all other terms in the expansion except the zeroth-order vanish, and hence rendering the induced system dynamics trivial.
\begin{lemma}\label{lem:zero_dissipator}
    A GKSL dissipator $\mathcal{D}(\cdot) := \mathcal{A}(\cdot) - \frac{1}{2}[\mathcal{A}^\dagger(\mathds{1}),\cdot]_+$ where $\mathcal{A}$ is a CP map, is the zero map, i.e., $\mathcal{D}(\rho) = 0$ for all states $\rho$, if and only if $\mathcal{A}\propto \mathcal{I}$.
\end{lemma}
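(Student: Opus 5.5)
The plan is to prove the two directions separately. The ``if'' direction is immediate, and the ``only if'' direction reduces to a positivity argument on the Choi operator of $\mathcal{A}$. Throughout, ``$\mathcal{A}\propto\mathcal{I}$'' is read as $\mathcal{A}(\cdot)=c\,(\cdot)$ with $c\ge 0$; non-negativity is forced by complete positivity.

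\emph{If.} Suppose $\mathcal{A}=c\,\mathcal{I}$. Then $\mathcal{A}^\dagger(\mathds{1})=c\mathds{1}$, so
\begin{align}
\mathcal{D}(\rho)=c\rho-\tfrac{1}{2}[c\mathds{1},\rho]_+=0
\end{align}
for every $\rho$.

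\emph{Only if.} Suppose $\mathcal{D}(\rho)=0$ for all states. By linearity this extends to all operators, since states span the operator space. Set $K:=\mathcal{A}^\dagger(\mathds{1})$, which is positive semidefinite because $\mathcal{A}$ is CP. Then
\begin{align}
\mathcal{A}(X)=\tfrac{1}{2}(KX+XK)\quad\text{for all } X.
\end{align}
The aim is to show that this map can only be CP when $K$ is a multiple of the identity. To do so, compute its Choi operator $\bm{J}=\sum_{x,y}\mathcal{A}(\ketbra{x}{y})\otimes\ketbra{x}{y}$, as in App.~\ref{app:proof_of_proposition_xi}. With $\ket{\Omega}=\sum_x\ket{x,x}$ and $\ket{a}:=\tfrac12(K\otimes\mathds{1})\ket{\Omega}$, this gives
\begin{align}
\bm{J}=\ketbra{a}{\Omega}+\ketbra{\Omega}{a}.
\end{align}
Here the term $XK$ is rewritten using the identity $(\mathds{1}\otimes K^{T})\ket{\Omega}=(K\otimes\mathds{1})\ket{\Omega}$ together with the fact that $K$ is Hermitian. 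Complete positivity of $\mathcal{A}$ is equivalent to $\bm{J}\ge 0$.

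The key step is the elementary fact that an operator of the form $\ketbra{a}{\Omega}+\ketbra{\Omega}{a}$ is positive semidefinite only if $\ket{a}=\lambda\ket{\Omega}$ with $\lambda\ge0$. To see this, suppose $\ket{a}$ has a nonzero component $\ket{a_\perp}$ orthogonal to $\ket{\Omega}$. Restricting to the two-dimensional span of $\ket{\Omega}$ and $\ket{a_\perp}$, the operator has vanishing $\ket{a_\perp}$-diagonal entry but nonzero off-diagonal entry $\braket{a_\perp|a_\perp}$. Its determinant is therefore negative, which gives a negative eigenvalue; an explicit test vector is $\ket{\Omega}-s\ket{a_\perp}$ for suitable $s>0$. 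So $\ket{a}\propto\ket{\Omega}$, and positivity of $\braket{\Omega|\bm{J}|\Omega}=2\,\mathrm{Re}\braket{\Omega|a}\,\|\Omega\|^2$ then fixes $\lambda\ge0$.

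Finally, $(K\otimes\mathds{1})\ket{\Omega}=2\lambda\ket{\Omega}$ is the vectorisation of $K=2\lambda\mathds{1}$, since vectorisation is injective. Hence $\mathcal{A}(X)=\lambda X$. The main obstacle is this rank-two positivity argument; the rest is bookkeeping with the Choi isomorphism.
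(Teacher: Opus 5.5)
Your proof is correct and is essentially the paper's argument. Both rewrite the vanishing dissipator as $\mathcal{A}(X)=\tfrac12(KX+XK)$ with $K=\mathcal{A}^\dagger(\mathds{1})$, and both show that complete positivity forces $K\propto\mathds{1}$ by exhibiting a negative $2\times 2$ determinant in the Choi operator: the paper diagonalises $K$ and restricts to $\mathrm{span}\{\ket{ii},\ket{jj}\}$ (determinant $-(a_i-a_j)^2$), while you work basis-free through the rank-two form $\ketbra{a}{\Omega}+\ketbra{\Omega}{a}$. One harmless slip: $K=2\lambda\mathds{1}$ gives $\mathcal{A}(X)=2\lambda X$ rather than $\lambda X$, which does not affect the conclusion $\mathcal{A}\propto\mathcal{I}$.
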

\begin{proof}
    The `if' part is immediate. We thereby focus on the `only if' direction. The condition that $\mathcal{D}$ is the zero map is equivalent to an equality
    \begin{align}
        \mathcal{A}(\cdot) = \frac{1}{2}[\mathcal{A}^\dagger(\mathds{1}),\cdot]_+.
    \end{align}
    Let us denote $A\equiv \mathcal{A}^\dagger(\mathds{1})$. Since $\mathcal{A}$ is CP, so is the map $\Phi(\cdot):=[A,\cdot]_+$. This requires that the Choi operator of $\Phi$, denoted by $J_\Phi$, in the eigenbasis of $A$ with the corresponding eigenvalues $\{a_i\}_i$ is positive semidefinite, i.e.,
    \begin{align}
        J_\Phi \equiv \sum_{i,j} \Phi(\ketbra{i}{j})\otimes\ketbra{i}{j} = \sum_{i,j}(a_i+a_j)\ketbra{ii}{jj} \ge 0.
    \end{align}
    Now we restrict the Choi operator to the subspace spanned by two vectors $\{\ket{ii}, \ket{jj}\}$. The positivity of $J_{\Phi}$ implies that
    \begin{align}
        \begin{pmatrix}
            2a_i & a_i + a_j \\
            a_i + a_j & 2a_j
        \end{pmatrix} \ge 0.
    \end{align}
    Since the determinant of this matrix is $-(a_i-a_j)^2$, this matrix is positive semidefinite if and only if $a_i = a_j$. Therefore, we have 
    \begin{align*}
        J_\Phi \ge 0 \Leftrightarrow a_i = a_j,\,\forall\, i,j \Leftrightarrow A \propto \mathds{1}.
    \end{align*}
    Hence, $\mathcal{A}(\cdot) = \frac{1}{2}\Phi(\cdot) \propto [\mathds{1},\cdot]_+ = (\cdot)$, namely, $\mathcal{A}\propto\mathcal{I}$.
\end{proof}
\begin{lemma}\label{lem:zero_second_order}
    Consider an infinitesimal thermal operation given in the following form:
    \begin{align}
        \mathcal{T}(\cdot):={\rm Tr}_E\{\nE^{-\nI H_{SE}\Delta t}(\cdot\otimes\gamma_E)\nE^{\nI H_{SE}\Delta t}\},
    \end{align}
    where $[H_{SE}, H_S+H_E] = 0$ and $\gamma_E$ is the thermal state of $E$. Suppose that $\mathcal{T}$ admits the following expansion around $\Delta t=0$:
    \begin{align}
        \mathcal{T} = \sum_{m=0}^\infty \frac{(-\nI \Delta t)^m}{m!}\mathcal{T}^{(m)},
    \end{align}
    where $\mathcal{T}^{(m)}(\cdot):= {\rm Tr}_E\{\mathcal{L}_{H_{SE}}^m(\cdot\otimes\gamma_E)\}$ with $\mathcal{L}_{H_{SE}}(\cdot) := [H_{SE}, \cdot]$. If $\mathcal{T}^{(2)}$ is the zero map, then $\mathcal{T} = \mathcal{I}$.
\end{lemma}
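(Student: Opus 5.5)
Our plan is to show that $\mathcal{T}^{(2)}=0$ forces $H_{SE}$ to act on $S$ only in a trivial way on the support of $\gamma_E$, so that every $\mathcal{T}^{(m)}$ with $m\ge1$ vanishes. First we rewrite $\mathcal{T}^{(2)}$ in GKSL form. Expanding the double commutator and using $\gamma_E=\sum_k p_E^{(k)}\ketbra{k}{k}_E$ with $p_E^{(k)}>0$ (finite $\beta$) gives
\begin{align}
    -\tfrac{1}{2}\mathcal{T}^{(2)}(\rho) = \mathcal{A}(\rho) - \tfrac12[\mathcal{A}^\dagger(\mathds{1}),\rho]_+ ,
    \quad \mathcal{A}(\cdot)=\sum_{k,k'}L_{kk'}(\cdot)L_{kk'}^\dagger,
\end{align}
with $L_{kk'}\equiv\sqrt{p_E^{(k)}}\bra{k'}_E H_{SE}\ket{k}_E$. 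This works because $\mathcal{A}^\dagger(\mathds{1})=\mathrm{Tr}_E\{H_{SE}^2\gamma_E\}$, exactly as in the derivation of Eq.~(\ref{eq:CMME_gen}). So $\mathcal{T}^{(2)}=0$ means the dissipator with CP part $\mathcal{A}$ vanishes.

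Next we invoke Lem.~\ref{lem:zero_dissipator}, which gives $\mathcal{A}=c\,\mathcal{I}$ for some $c\ge0$. A CP map proportional to the identity has all Kraus operators proportional to $\mathds{1}_S$, by uniqueness of Kraus representations up to isometry, since the Choi rank is one. Hence $L_{kk'}=\lambda_{kk'}\mathds{1}_S$ for all $k,k'$, and because $p_E^{(k)}>0$ we obtain
\begin{align}
    H_{SE}=\mathds{1}_S\otimes h_E
\end{align}
for some Hermitian $h_E$ on $E$. Then $\mathcal{L}_{H_{SE}}(\rho\otimes\gamma_E)=\rho\otimes[h_E,\gamma_E]$. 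Energy conservation $[H_{SE},H_S+H_E]=0$ implies $[h_E,H_E]=0$, so $[h_E,\gamma_E]=0$. Therefore $\mathcal{T}^{(m)}=0$ for every $m\ge1$, while $\mathcal{T}^{(0)}=\mathcal{I}$, and the expansion yields $\mathcal{T}=\mathcal{I}$. Alternatively, once $H_{SE}=\mathds{1}_S\otimes h_E$, one can read this off directly from the exponential.

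The main obstacle is the Kraus step, namely passing from $\mathcal{A}\propto\mathcal{I}$ to each $L_{kk'}\propto\mathds{1}_S$. We will handle it via the Choi matrix: $J_\mathcal{A}=\sum_{kk'}|L_{kk'}\rangle\!\rangle\langle\!\langle L_{kk'}|=c\,|\mathds{1}\rangle\!\rangle\langle\!\langle\mathds{1}|$ is rank one, and a sum of positive rank-one terms equal to a rank-one projector forces each vector to be parallel to $|\mathds{1}\rangle\!\rangle$. A second subtle point is that $H_{SE}$ may contain $H_S\otimes\mathds{1}_E$. That part would survive as a unitary term in $\mathcal{T}^{(1)}$ and apparently contradict the conclusion. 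Our argument shows it must in fact be proportional to the identity on $S$, so the claim effectively presumes $H_{SE}$ has no nontrivial system-local part beyond a multiple of $\mathds{1}_S$. This is consistent with the lemma being applied to the $O(1/\sqrt{\Delta t})$ interaction piece. We will state this reading explicitly.
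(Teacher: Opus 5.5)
Your proposal is correct and follows essentially the same route as the paper's proof: write $\mathcal{T}^{(2)}=-2\mathcal{D}$ with Kraus operators $L_{kk'}=\sqrt{p_E^{(k)}}\bra{k'}_E H_{SE}\ket{k}_E$, use Lem.~\ref{lem:zero_dissipator} to get $\mathcal{A}\propto\mathcal{I}$, deduce $H_{SE}=\mathds{1}_S\otimes h_E$ from the full rank of $\gamma_E$, and then use $[h_E,H_E]=0$ to make every $\mathcal{T}^{(m)}$ with $m\ge 1$ vanish; your Choi rank-one argument also justifies the Kraus step, which the paper only asserts. Your closing caveat is unnecessary, because the hypothesis $\mathcal{T}^{(2)}=0$ by itself already forces $H_{SE}=\mathds{1}_S\otimes h_E$, as your own argument shows, so there is no tension with $\mathcal{T}^{(1)}$ and the lemma holds exactly as stated, with no extra assumption on a system-local part of $H_{SE}$.
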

\begin{proof}
    By direct calculation, we have $\mathcal{T}^{(2)} = -2\mathcal{D}$ where $\mathcal{D}(\cdot) := \mathcal{A}(\cdot) - \frac{1}{2}[\mathcal{A}^\dagger(\mathds{1}),\cdot]_+$ and
    \begin{align}
        \mathcal{A}(\cdot) := {\rm Tr}_E\{H_{SE}(\cdot\otimes\gamma_E)H_{SE}\}.
    \end{align}
    If $\mathcal{T}^{(2)}$ is the zero map, so is $\mathcal{D}$. By Lem.~\ref{lem:zero_dissipator}, this implies that $\mathcal{A}\propto \mathcal{I}$. To see how this condition restricts $H_{SE}$, we write $\gamma_{E} = \sum_k p_{E}^{(k)}\ketbra{k}{k}_{E}$. The Kraus operators of $\mathcal{A}$ are 
    \begin{align}
        L_{kk'}\equiv \sqrt{p_{E}^{(k)}}\bra{k'}_{E_i}H_{SE}\ket{k}_{E}
    \end{align}
    for all $k$, $k'$. Since $\mathcal{A}\propto \mathcal{I}$, $L_{kk'}\propto \mathds{1}_S,\,\forall\,k,k'$. Given that $\gamma_{E}$ is full-rank, we have 
    \begin{align}
        \bra{k'}_{E}H_{SE}\ket{k}_{E} \propto \mathds{1}_S,\,\forall\,k,k',
    \end{align}
    which implies that $H_{SE} = \mathds{1}_S\otimes B_{E}$ where $B_E$ is an operator on $E$. Since $[H_{SE}, H_S+H_E] = 0$, we further have $[H_{SE}, H_E] = [B_E, H_E] = 0$, and therefore, 
    \begin{align}
        \mathcal{L}_{H_{SE}}(\rho_S\otimes\gamma_E) = \rho_S\otimes[B_E, \gamma_E] = 0,\,\forall\,\rho_S,
    \end{align}
    i.e., $\mathcal{L}_{H_{SE}}(\cdot\otimes\gamma_E)$ is the zero map. This yields that $\mathcal{T}^{(m)}(\cdot) := {\rm Tr}_E\{\mathcal{L}_{H_{SE}}^m(\cdot\otimes\gamma_E)\}$ vanishes identically for all positive integer $m$, which implies reduction of the whole map to identity, $\mathcal{T} = \mathcal{I}$.
\end{proof}
    We now turn to the proof of Prop.~\ref{prop:MTO_=_CM}. As we shall see, Lem.~\ref{lem:zero_second_order} helps single out ECTCMs as the only possible microscopic realisation of bounded MTO generators.

    Firstly, it is straightforward that $\mathcal{L}^{\rm (CM)}_t$ generates an MTO. The map $\overleftarrow{\rm T}\exp(\int_0^\tau \mathcal{L}_t^{\rm (CM)}{\rm d} t )$ arises as the continuous-time limit of the $n$-collision map $\mathcal{E}_n$ with $\tau \equiv n\Delta t$ as $\Delta t\rightarrow 0$. Moreover, $\mathcal{E}_n$ is a concatenation of $n$ collisions, each of which is a TO by virtue of the defining conditions of ECTCMs. Since TOs are closed under composition, $\mathcal{E}_n$ is itself a TO for every $n$. Consequently, its continuous-time limit is an MTO.

    To show that every bounded $\mathcal{L}_t^{\rm (MTO)}$ -- generating an MTO dynamics -- necessarily requires an ECTCM realisation, we note that for a generic MTO $\mathcal{T} = \overleftarrow{\rm T}\exp(\int_0^\tau \mathcal{L}_t^{\rm (MTO)}{\rm d} t )$, since $\mathcal{L}_t^{\rm (MTO)}$ is continuous, we can write it as a product integral~\citep[][p.7]{Dollard_Friedman_1984}. That is, we have
    \begin{align}
        \mathcal{T} = \overleftarrow{\rm T}\exp\left(\int_0^\tau \mathcal{L}_t^{\rm (MTO)}{\rm d} t \right) = \lim_{\Delta u_i\rightarrow 0,\,\forall\,i} \overleftarrow{\prod_{i=1}^n}\nE^{\mathcal{L}_{u_i}^{\rm (MTO)}\Delta u_i},
    \end{align}
    where $\{u_0,u_1,u_2,\cdots,u_n\}$ is a partition of $[0,\tau]$ with $\Delta u_i \equiv u_i - u_{i-1}$, and $\overleftarrow{\prod}_{m=1}^M \mathcal{X}_m := \mathcal{X}_M\mathcal{X}_{M-1}\cdots\mathcal{X}_1$ denotes the time ordered product (for simplicity, we omit the concatenation symbol $\circ$). The limit is independent of the specific choice of the partition. Choosing $\Delta u_i = \Delta t,\,\forall\, i$ and $\tau \equiv n \Delta t$, we have
    \begin{align}
        \mathcal{T} = \lim_{n\rightarrow \infty} \overleftarrow{\prod_{i=1}^n}\nE^{\mathcal{L}_{i\Delta t}^{\rm (MTO)}\Delta t}.
        \label{appeq:T_MTO_expression}
    \end{align}
    We will now show that each $\mathcal{L}_{i\Delta t}^{\rm (MTO)}$ can {\it only} arise from a collision in an ECTCM. Since for every $i$, the map $\mathcal{T}_i\equiv \nE^{\mathcal{L}_{i\Delta t}^{\rm (MTO)}\Delta t}$ is a TO by definition, we have
    \begin{align}
        \mathcal{T}_i(\cdot) = {\rm Tr}_{E_i}\{U_{SE_i}(\cdot\otimes\gamma_{E_i})U_{SE_i}^\dagger\}.
    \end{align}
    Without loss of generality, we write 
    \begin{align}
        U_{SE_i} = \nE^{-\nI \Delta t\,H_{SE_i}(\Delta t)}.
    \end{align}
    Since $[U_{SE_i}, H_S+H_{E_i}]=0$, $H_{SE_i}(\Delta t)$ can always be chosen to satisfy $[H_{SE_i}(\Delta t), H_S+H_{E_i}] = 0$, in which the environment Hamiltonian $H_{E_i}$ can be obtained from $\gamma_{E_i}$ and is bounded and independent of the discretisation parameter $\Delta t$.
    
    To cast the map $\mathcal{T}_i$ into a collision, 
    we further interpret the unitary $U_{SE_i}$ as the endpoint of a unitary trajectory $U_{SE_i}(s):= \nE^{-\nI H_{SE_i}(\Delta t)s}$ for $s\in[0,\Delta t]$, i.e., 
    \begin{align}
        \mathcal{T}_i(\cdot) &= \mathcal{T}_i(s)(\cdot)|_{s=\Delta t} \nonumber\\
        &:= {\rm Tr}_{E_i}\{\nE^{-\nI H_{SE_i}(\Delta t)s}(\cdot\otimes\gamma_{E_i}) \nE^{\nI H_{SE_i}(\Delta t)s}\}|_{s=\Delta t}.
    \end{align}
    Expanding $U_{SE_i}(s)$ around $s = 0$, we have
    \begin{align}
        \mathcal{T}_i(s) &= \sum_{m=0}^\infty \frac{(-\nI s)^m}{m!}\mathcal{T}_i^{(m)},
    \end{align}
    where $\mathcal{T}_i^{(m)}(\cdot):= {\rm Tr}_{E_i}\{\mathcal{L}_{H_{SE_i}(\Delta t)}^m(\cdot\otimes\gamma_{E_i})\}$. By setting $s=\Delta t$, we therefore obtain an expansion of $\mathcal{T}_i$:
    \begin{align}
        \mathcal{T}_i &= \sum_{m=0}^\infty \frac{(-\nI \Delta t)^m}{m!}\mathcal{T}_i^{(m)}.
        \label{appeq:T_i_expansion_w_T}
    \end{align}
    In particular,
    \begin{widetext}
        \begin{align}
        \mathcal{T}_i^{(0)}(\cdot) &= (\cdot)\\
        \mathcal{T}_i^{(1)}(\cdot) &= [{\rm Tr}_{E_i}\{H_{SE_i}(\Delta t)\gamma_{E_i}\}, \cdot]\\
        \mathcal{T}_i^{(2)}(\cdot) &= -2\left(\mathrm{Tr}_{E_i}\left\{H_{SE_i}(\Delta t)(\cdot\otimes\gamma_{E_i}) H_{SE_i}(\Delta t)\right\} -  {\rm Tr}_{E_i}\left\{\frac{1}{2}[(H_{SE_i}(\Delta t))^2, \cdot\otimes\gamma_{E_i}]_+\right\}\right).
    \end{align}
    \end{widetext}
    We hence write
    \begin{align}
        \mathcal{T}_i = \mathcal{I} + \Delta t\mathcal{T}_i^{(1)} + \frac{(\Delta t)^2}{2}\mathcal{T}_i^{(2)} + \tilde{O}(\Delta t),
        \label{appeq:T_i_expansion_w_H}
    \end{align}
    where $\tilde{O}(\Delta t)$ denotes the scaling of the remainder terms in the expansion, which depends on the scaling of $\|H_{SE_i}\|_\infty$ and will be discussed case by case below. On the other hand, recalling that $\mathcal{T}_i\equiv \nE^{\mathcal{L}_{i\Delta t}^{\rm (MTO)}\Delta t}$, we can alternatively write
    \begin{align}
        \mathcal{T}_i = \mathcal{I} + \Delta t \mathcal{L}_{i\Delta t}^{\rm (MTO)} + O\qty((\Delta t)^2).
        \label{appeq:T_i_expansion_w_L}
    \end{align}
    One may be concerned about the apparent $\Delta t$-dependence of $\mathcal{L}_{i\Delta t}^{\rm (MTO)}$. However, it is important to note that $i\Delta t$ should be regarded as $\lfloor \frac{t}{\Delta t} \rfloor \Delta t$, which is independent of the discretisation parameter $\Delta t$. 
    
    Comparing Eqs.~(\ref{appeq:T_i_expansion_w_L}) and (\ref{appeq:T_i_expansion_w_H}), we see that $\mathcal{L}_{i\Delta t}^{\rm (MTO)}$ collects all contributions that are the first order in $\Delta t$. We thereby consider four cases:
    
    (1) If $\|H_{SE_i}(\Delta t)\|_\infty$ remains bounded as $\Delta t \rightarrow 0$, 
    $\tilde{O}(\Delta t) = O((\Delta t)^3)$ and
    $\mathcal{L}_{i\Delta t}^{\rm (MTO)}$ contains only $\mathcal{T}_i^{(1)}$, namely, the Hamiltonian term generated by ${\rm Tr}_{E_i}\{H_{SE_i}(\Delta t)\gamma_{E_i}\}$, resulting in purely unitary dynamics.

    (2) If $\|H_{SE_i}(\Delta t)\|_\infty$ diverges as $\Delta t \rightarrow 0$ with the scaling $O(1/\sqrt{\Delta t})$, the second-order map $\mathcal{T}_i^{(2)}$ in Eq.~(\ref{appeq:T_i_expansion_w_H}) contributes at first order in $\Delta t$ and is therefore incorporated into $\mathcal{L}_{i\Delta t}^{\rm (MTO)}$, while the remainder terms scale as $\tilde{O}(\Delta t) = O((\Delta t)^{3/2})$ do not contribute to $\mathcal{L}_{i\Delta t}^{\rm (MTO)}$.
    Moreover, if $H_{SE_i}(\Delta t)$ admits a nonvanishing ${\rm Tr}_{E_i}\{H_{SE_i}(\Delta t)\gamma_{E_i}\}$, then it will lead to an unbounded first-order contribution $\mathcal{T}_i^{(1)}$ to the generator. Therefore, for $\mathcal{L}_{i\Delta t}^{\rm (MTO)}$ to remain bounded, ${\rm Tr}_{E_i}\{H_{SE_i}(\Delta t)\gamma_{E_i}\}$ must vanish.

    (3) If $\|H_{SE_i}(\Delta t)\|_\infty$ diverges more slowly than $O(1/\sqrt{\Delta t})$ as \mbox{$\Delta t \rightarrow 0$}, as discussed above, ${\rm Tr}_{E_i}\{H_{SE_i}(\Delta t)\gamma_{E_i}\}$ should still vanish in order for a bounded generator $\mathcal{L}_{i\Delta t}^{\rm (MTO)}$. Moreover, the second-order map $\mathcal{T}_i^{(2)}$ and the remainder terms $\tilde{O}(\Delta t)$ in Eq.~(\ref{appeq:T_i_expansion_w_H}) decay faster than $O(\Delta t)$, and therefore do not contribute to $\mathcal{L}_{i\Delta t}^{\rm (MTO)}$, implying that the generator is the zero map.

    (4) If $\|H_{SE_i}(\Delta t)\|_\infty$ diverges faster than $O(1/\sqrt{\Delta t})$ as $\Delta t \rightarrow 0$, as discussed above, ${\rm Tr}_{E_i}\{H_{SE_i}(\Delta t)\gamma_{E_i}\}$ should still vanish in order for a bounded generator $\mathcal{L}_{i\Delta t}^{\rm (MTO)}$. Moreover, the second-order map $\mathcal{T}_i^{(2)}$ in Eq.~(\ref{appeq:T_i_expansion_w_H}) decays slower than $O(\Delta t)$ and would therefore lead to a divergent contribution to $\mathcal{L}_{i\Delta t}^{\rm (MTO)}$. Consequently, $\mathcal{T}_i^{(2)}$ must vanish as well if $\mathcal{L}_{i\Delta t}^{\rm (MTO)}$ is to remain bounded. By Lem.~\ref{lem:zero_second_order}, this implies that all maps $\mathcal{T}_i^{(m)}$ with $m\ge 1$ in Eq.~(\ref{appeq:T_i_expansion_w_T}) vanish identically and the generator $\mathcal{L}_{i\Delta t}^{\rm (MTO)}$ is the zero map.

    Through the above discussion, we note that for a bounded, nonvanishing $\mathcal{L}_{i\Delta t}^{\rm (MTO)}$ to exist, $\|H_{SE_i}(\Delta t)\|_\infty$ can only scale as either $O(1)$ or $O(1/\sqrt{\Delta t})$ as $\Delta t \rightarrow 0$, yielding the following general form of $H_{SE_i}(\Delta t)$:
    \begin{align}
        H_{SE_i}(\Delta t) = \frac{1}{\sqrt{\Delta t}}V_{SE_i}(\Delta t) + D_{SE_i}(\Delta t),
    \end{align}
    where ${\rm Tr}_{E_i}\{V_{SE_i}(\Delta t)\gamma_{E_i}\} = 0$,
    $D_{SE_i}(\Delta t)$ and $V_{SE_i}(\Delta t)$ scale as $O(1)$ as $\Delta t \rightarrow 0$.  Since for a bounded $D_{SE_i}(\Delta t)$, only its thermal mean, i.e., $D_{S_i}(\Delta t) \equiv {\rm Tr}_{E_i}\{D_{SE_i}(\Delta t)\gamma_{E_i}\}$, contributes to the system dynamics, we can effectively replace $D_{SE_i}(\Delta t)$ by $D_{S_i}(\Delta t)$ without altering the resulting dynamics. Besides, without loss of generality, we can include the system Hamiltonian $H_S$ and the environmental Hamiltonian $H_{E_i}$ into $H_{SE_i}(\Delta t)$. We therefore obtain
    \begin{align}
        H_{SE_i}(\Delta t) = H_S + H_{E_i} + \frac{1}{\sqrt{\Delta t}}V_{SE_i}(\Delta t) + D_{S,i}(\Delta t).
    \end{align}
    Since the two terms $D_{S,i}(\Delta t)$ and $\frac{1}{\sqrt{\Delta t}}V_{SE_i}(\Delta t)$ have different scalings, $[H_{SE_i}(\Delta t), H_S+H_{E_i}] = 0$ implies that
    $[D_{S,i}(\Delta t), H_S] = 0$ and $[V_{SE_i}(\Delta t), H_S+H_{E_i}] = 0$.
    Furthermore, since the limit $\Delta t \rightarrow 0$ will be taken in the expression of the MTO $\mathcal{T}$ in consideration [Eq.~(\ref{appeq:T_MTO_expression})], we can replace $D_{S,i}(\Delta t)$ and $V_{SE_i}(\Delta t)$ by their respective limits: $D_{S,i}\equiv \lim_{\Delta t\rightarrow 0} D_{S,i}(\Delta t)$ and $V_{SE_i}\equiv \lim_{\Delta t\rightarrow 0}V_{SE_i}(\Delta t)$, which leads to 
    \begin{align}
        H_{SE_i}(\Delta t) =  H_S + H_{E_i} + \frac{1}{\sqrt{\Delta t}}V_{SE_i} + D_{S,i}.
    \end{align}
    This is a standard Hamiltonian in ECTCMs [see Eq.~(\ref{eq:ECTCM_H_SE})]. Following the above discussion, we remark that this system--environment model indeed satisfies the defining conditions of ECTCMs [Conditions~\ref{CM:1}--\ref{CM:4}]:
    \begin{align*}
        &\text{Condition~\ref{CM:1}: } [D_{S,i}, H_S]=0;\\
        &\text{Condition~\ref{CM:2}: } [V_{SE_i}, H_S+ H_{E_i}]=0;\\
        &\text{Condition~\ref{CM:3}: } \gamma_{E_i} \text{ is the thermal state of $E_i$};\\
        &\text{Condition~\ref{CM:4}: } {\rm Tr}_{E_i}\{V_{SE_i}\gamma_{E_i}\} = 0\text{ (Lem.~\ref{lem:CM_4})}.
    \end{align*}
    Therefore, each $\mathcal{L}_{i\Delta t}^{\rm (MTO)}$ arises in the continuous-time limit of a single collision in this ECTCM. Combining with the fact that every ECTCM generator $\mathcal{L}_t^{\rm (CM)}$ generates an MTO, we hence conclude that the set of MTO generators coincides exactly with the set of ECTCM generators.

\section{Relation to the known family of MTO generators in Ref.~\cite{vomEnde2023ExploringLimits}}\label{app:relation_to_known_MTOs}
In this appendix, we show that the class of generators characterised in Thm.~\ref{thm:equivalences} coincides with the set of MTO generators -- conjectured to provide a complete characterisation of \textit{all} MTO generators -- identified in Ref.~\cite{vomEnde2023ExploringLimits}.
\begin{lemma}[{\citep[][Thm.~2]{vomEnde2023ExploringLimits}}]\label{lem:MTO_generator}
    Let ${\rm TO}(H_S, \beta)$ denote the set of thermal operations on a system $S$ with Hamiltonian $H_S$ at inverse temperature $\beta$. If a Lindbladian $\mathcal{L}^{\rm (MTO)}$ is of the following form:
    \begin{align}
        \mathcal{L}^{\rm (MTO)} = -\nI\mathcal{L}_{H_S'} + \mathcal{D}^{\rm (MTO)},
        \label{eq:L_MTO}
    \end{align}
    where $H_S'$ is a system operator satisfying $[H_S', H_S]=0$ and 
    \begin{align}
        \mathcal{D}^{\rm (MTO)}(\cdot) &:= \sum_{k,k'}J_{kk'}(\cdot)J_{kk'}^\dagger - \frac{1}{2}\sum_{k,k'}[J_{kk'}^\dagger J_{kk'}, \cdot]_+,\\
        J_{kk'} &:= \nE^{-\beta e_k/2}\bra{k'}_E V_{SE} \ket{k}_E,
        \label{eq:J_kk'}
    \end{align}
    with $[V_{SE}, H_S+H_E]=0$ and $H_E\equiv \sum_k e_k\ketbra{k}{k}_E$, then $(\nE^{\mathcal{L}^{\rm (MTO)}t})_{t\ge 0}$ is a continuous one-parameter semigroup in the closure of ${\rm TO}(H_S, \beta)$.
\end{lemma}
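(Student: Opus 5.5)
The plan is to realise $\nE^{\mathcal{L}^{\rm (MTO)}t}$ as the limit of an explicit collision model in which every collision is itself a thermal operation. Then membership in the closure of ${\rm TO}(H_S,\beta)$ follows from closedness of thermal operations under composition.

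First, I would separate the Hamiltonian part. Since $[H_S',H_S]=0$, the unitary channel $\nE^{-\nI\mathcal{L}_{H_S'}s}$ is a thermal operation: take a trivial one-dimensional environment and $U=\nE^{-\nI H_S's}$, which commutes with $H_S$. By the Trotter product formula,
\[
\nE^{\mathcal{L}^{\rm (MTO)}t}=\lim_{n\to\infty}\bigl(\nE^{-\nI\mathcal{L}_{H_S'}t/n}\,\nE^{\mathcal{D}^{\rm (MTO)}t/n}\bigr)^n .
\]
It therefore suffices to approximate each dissipative factor by thermal operations.

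Second, I would treat the dissipator by a collision model without Lamb-shift problems. Take $n$ fresh copies of $E$ in the Gibbs state $\gamma_E$. Each collides via $U_\Delta=\nE^{-\nI g V_{SE}\Delta}$ with $g^2\Delta=Z:={\rm Tr}\,\nE^{-\beta H_E}$, so that $p_E^{(k)}g^2\Delta=\nE^{-\beta e_k}$ reproduces the factor in $J_{kk'}$. Because $[V_{SE},H_S+H_E]=0$, each collision channel is a thermal operation. The expansion in Sec.~\ref{sec:ME_from_CM} then gives
\[
\hat{\mathcal{E}}_\Delta=\mathcal{I}-\nI g\Delta\,[H^{\rm LS},\cdot\,]+\Delta\,\mathcal{D}^{\rm (MTO)}+O(\Delta^{3/2}),\qquad H^{\rm LS}={\rm Tr}_E\{V_{SE}\gamma_E\}.
\]
The Lamb-shift term is the main obstacle, because it scales as $g\Delta\sim\Delta^{1/2}$ and does not vanish in general.

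To remove it, I would first try a symmetrisation trick: alternate collisions using $V_{SE}$ and $-V_{SE}$. Both are energy conserving, and the second-order term is even in $V_{SE}$, so two consecutive collisions give $\mathcal{I}+2\Delta\mathcal{D}^{\rm (MTO)}+O(\Delta^{3/2})$. The cross term is $O(g^2\Delta^2)=O(\Delta)$ and equals $-\bigl[H^{\rm LS},[H^{\rm LS},\cdot\,]\bigr]Z\Delta$, which I need to check. A cleaner way to kill every odd contribution is to enlarge the ancilla by a qubit with trivial Hamiltonian in the maximally mixed state and use $V'=V_{SE}\otimes\sigma_z$. This coupling is still energy conserving, ${\rm Tr}\{V'\gamma\}=0$, and $V'^2=V_{SE}^2\otimes\mathds{1}$. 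The Kraus-type operators become $J_{kk'}\otimes\bra{s}\sigma_z\ket{s}/\sqrt2$, which reproduce $\mathcal{D}^{\rm (MTO)}$ exactly once $g$ is rescaled by the qubit's partition factor $2$. This is the choice I would commit to. It makes each collision a thermal operation with zero first-order term, giving $\hat{\mathcal{E}}_\Delta=\mathcal{I}+\Delta\mathcal{D}^{\rm (MTO)}+O(\Delta^{3/2})$.

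Finally, I would combine the pieces with a telescoping bound in the $1\to1$ norm. Using contractivity of channels and an error decomposition analogous to Eq.~\eqref{eq:varepsilon_bound_gen}, the error is at most $n\cdot O(\Delta^{3/2})\to0$. Hence $\nE^{\mathcal{L}^{\rm (MTO)}t}$ is a limit of compositions of thermal operations, and so lies in the closure of ${\rm TO}(H_S,\beta)$ for every $t\ge0$. The semigroup property and continuity in $t$ are inherited from the exponential form.
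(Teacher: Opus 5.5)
Your proof is correct, and it takes a different route from the paper, which gives no proof of this lemma at all: it imports it from Ref.~\cite{vomEnde2023ExploringLimits} and uses it as a black box in Prop.~\ref{prop:equiv_MTO_gen}. Your argument is therefore a self-contained replacement.

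\textbf{Relation to the paper's own arguments.} Your skeleton is the paper's one-line argument that every ECTCM induces an MTO: an energy-conserving collision with a Gibbs ancilla is a TO, TOs compose, and one passes to the limit. The new ingredient is how you dispose of the Lamb shift $H^{\rm LS}={\rm Tr}_E\{V_{SE}\gamma_E\}$ for an \emph{arbitrary} energy-conserving $V_{SE}$.

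\textbf{Why the dressing works.} Replacing $V_{SE}$ by $V'=V_{SE}\otimes\sigma_z$, with a maximally mixed zero-energy qubit, works for a clean symmetry reason. Conjugation by $\mathds{1}_{SE}\otimes\sigma_x$ sends $V'\to-V'$ while fixing the ancilla Gibbs state, so the collision channel is even in $g\Delta$. Consequently:
\begin{itemize}
\item every odd order vanishes, the Lamb shift included;
\item the second-order term equals the undressed one, because $\sigma_z(\mathds{1}/2)\sigma_z=\mathds{1}/2$ and $V'^2=V_{SE}^2\otimes\mathds{1}$;
\item with $g^2\Delta=Z$ you get $\mathcal{I}+\Delta\mathcal{D}^{\rm (MTO)}+O(\Delta^2)$, which is better than the $O(\Delta^{3/2})$ you state.
\end{itemize}

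\textbf{Comparison with the paper's Lamb-shift removal.} The paper removes the Lamb shift by a gauge choice plus a Householder reflection (App.~\ref{app:vanishing_H_LS}). That method is tied to its specific $d_S^3$-dimensional construction and needs $H_S$ not fully degenerate or $\bm{\zeta}$ rank-deficient. Your trick is unconditional and costs one extra qubit. Grafted onto the paper's construction, it would:
\begin{itemize}
\item remove the non-trivial-Hamiltonian caveat of Prop.~\ref{prop:ME_CM_equivalence};
\item show directly that the vanishing-Lamb-shift condition separating $\mathcal{L}^{\rm (CM)}_t$ from the generators of this lemma is no restriction;
\item give the $O(\Delta t)$ global error that App.~\ref{app:scaling_for_d=2} proves only for $d_S=2$.
\end{itemize}

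\textbf{The alternating scheme.} The $\pm V_{SE}$ alternation you first considered does fail, so abandoning it was right. The cross term is $+Z\Delta\,[H^{\rm LS},[H^{\rm LS},\cdot\,]]=-2Z\Delta\,\mathcal{D}_{H^{\rm LS}}$, with the opposite sign to the one you wrote, and it survives. Each pair of collisions therefore generates $\mathcal{D}^{\rm (MTO)}-Z\mathcal{D}_{H^{\rm LS}}$ rather than the target.

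\textbf{Hermiticity of $V_{SE}$.} You implicitly take $V_{SE}$ Hermitian, and this is necessary. Otherwise $\mathcal{D}^{\rm (MTO)}(\gamma_S)=Z\,{\rm Tr}_E\{[V_{SE},V_{SE}^\dagger](\gamma_S\otimes\gamma_E)\}$ need not vanish, and the statement fails.
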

 We can define a time-dependent Lindbladian $\mathcal{L}_t^{\rm (MTO)}$ such that $\mathcal{L}_t^{\rm (MTO)}$ is an MTO generator defined in Eq.~(\ref{eq:L_MTO}) for every time $t$. By Lem.~\ref{lem:MTO_generator}, the map $\mathcal{T}_\tau := \overleftarrow{\rm T}\exp\left(\int_0^\tau \mathcal{L}_t^{\rm (MTO)}{\rm d} t \right)$ is a TO for all times $\tau$ (potentially requiring an infinite environment for its realisation). However, since Lem.~\ref{lem:MTO_generator} provides only a partial characterisation of MTO generators, it is a priori unclear whether \textit{all} MTO dynamics are generated by (time-dependent) generators $\mathcal{L}^{(\rm{MTO})}_t$ satisfying Eqs.~(\ref{eq:L_MTO})-(\ref{eq:J_kk'}). Here, we answer this question affirmatively by demonstrating that the set of these generators $\mathcal{L}^{(\rm{MTO})}_t$ coincides exactly with the set of thermal generators $\mathcal{L}_t$ and $\mathcal{L}^{(\rm{CM})}_t$ arising from the axiomatic and microscopic approaches, respectively: 
\begin{proposition}
\label{prop:equiv_MTO_gen}
    For systems with non-trivial Hamiltonians, the following sets of generators of Markovian thermal dynamics coincide:
    \begin{enumerate}
        \item The set of axiomatic Lindbladians $\mathcal{L}_t$ defined by Axioms~\ref{axiom:I}--\ref{axiom:III}.
        \item The set of Lindbladians $\mathcal{L}_t^{\rm (CM)}$ obtained in the continuous-time limit of energy-conserving collision models defined by Conditions~\ref{CM:1}--\ref{CM:4}.
        \item The set of MTO generators $\mathcal{L}_t^{\rm (MTO)}$ defined in Lem.~\ref{lem:MTO_generator}.
    \end{enumerate}
\end{proposition}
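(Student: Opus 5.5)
Since Props.~\ref{prop:L_CM_=>_L} and \ref{prop:ME_CM_equivalence} already identify sets 1 and 2 for non-trivial $H_S$, it remains to identify set 3, the generators of the form of Lem.~\ref{lem:MTO_generator}, with set 2. We prove both inclusions by matching the dissipators directly. The unitary parts agree trivially: in both descriptions they are $-\nI[H_S',\cdot]$ with $[H_S',H_S]=0$, since in an ECTCM $H_S'=H_S+D_S$ with $[D_S,H_S]=0$ by Condition~\ref{CM:1}.

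\emph{Inclusion $2\subseteq 3$.} Take an ECTCM generator $\mathcal{L}_t^{\rm (CM)}$ with interaction $V_{SE}(t)$ and ancilla Hamiltonian $H_E=\sum_k e_k\ketbra{k}{k}_E$. Its Lindblad operators are $L_{kk'}=\sqrt{p_E^{(k)}}\bra{k'}_E V_{SE}\ket{k}_E$ [Eq.~(\ref{eq:L_kk'})], with $p_E^{(k)}=\nE^{-\beta e_k}/Z$. These are exactly the operators $J_{kk'}$ of Eq.~(\ref{eq:J_kk'}) once we use the rescaled interaction $V'_{SE}=V_{SE}/\sqrt{Z}$. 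The rescaled interaction is still Hermitian and still satisfies $[V'_{SE},H_S+H_E]=0$. Hence $\mathcal{D}_t^{\rm (CM)}=\mathcal{D}^{\rm (MTO)}$ pointwise in $t$, and $\mathcal{L}_t^{\rm (CM)}$ has the form of Lem.~\ref{lem:MTO_generator} at every $t$. Condition~\ref{CM:4} plays no role in this direction.

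\emph{Inclusion $3\subseteq 2$.} Take $\mathcal{L}^{\rm (MTO)}$ as in Lem.~\ref{lem:MTO_generator}. Define $\mathcal{A}(\cdot)=\sum_{k,k'}J_{kk'}(\cdot)J_{kk'}^\dagger$. Expanding $\mathcal{A}$ in the basis $F_S^{jj'}$, its coefficients equal, up to the constant factor $Z$, the expression $\xi^{\rm (CM)}_{jj'mm'}$ of Eq.~(\ref{eq:xi_CM}), with $V_{SE}$ in place of the collision-model interaction. This uses only Hermiticity and energy conservation of $V_{SE}$, which give the block structure of Lemma on Eq.~(\ref{eq:V_under_EC}). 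We then repeat the computation of App.~\ref{app:proof_of_L_CM_=>_L} verbatim. It uses only energy conservation of $V_{SE}$, Hermiticity, and the Boltzmann weights $\nE^{-\beta e_k}$; it never uses a vanishing Lamb shift. The computation shows that $\mathcal{A}$ commutes with $\mathcal{L}_{H_S}$ and satisfies the detailed balance relation Eq.~(\ref{eq:QDB_A_t}). Together with $[H_S',H_S]=0$, this makes $\mathcal{L}^{\rm (MTO)}$ a thermal Lindbladian, i.e.\ an element of set 1, by Prop.~\ref{prop:general_form_xi}. Prop.~\ref{prop:ME_CM_equivalence} then supplies an ECTCM realisation with the required vanishing Lamb shift, placing it in set 2.

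The main obstacle is this second inclusion. A generator of the form in Lem.~\ref{lem:MTO_generator} may come from an interaction $V_{SE}$ with ${\rm Tr}_E\{V_{SE}\gamma_E\}\neq0$, so it need not itself be an ECTCM. We avoid building a collision model from the given $V_{SE}$ and instead pass through the axiomatic characterisation, where the gauge freedom of App.~\ref{app:vanishing_H_LS} removes the Lamb shift. This is exactly where the hypothesis that $H_S$ is non-trivial enters, and we must check that the exceptional fully degenerate case is excluded by that hypothesis. Two minor points remain. Time dependence is handled pointwise in $t$, since all identities are algebraic. The constant $Z$ (or $\sqrt{Z}$) must be tracked consistently between the normalisations of $p_E^{(k)}$ and $\nE^{-\beta e_k/2}$.
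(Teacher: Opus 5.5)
Your proposal is correct and follows the paper's proof essentially step for step. Sets 1 and 2 are identified via Props.~\ref{prop:L_CM_=>_L} and \ref{prop:ME_CM_equivalence}. The inclusion $2\Rightarrow 3$ comes from matching $L_{kk'}$ with $J_{kk'}$. The inclusion $3\Rightarrow 1$ comes from rerunning the computation of App.~\ref{app:proof_of_L_CM_=>_L}, which never uses the vanishing Lamb shift, after which set 1 leads back to set 2 through the gauge-fixed construction. Your explicit rescaling $V'_{SE}=V_{SE}/\sqrt{Z}$, and your remark that Hermiticity of $V_{SE}$ is genuinely needed in the detailed-balance step, are details the paper leaves implicit.
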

\begin{proof}
    ($1\Leftrightarrow 2$) This equivalence has been established by Props.~\ref{prop:L_CM_=>_L} and ~\ref{prop:ME_CM_equivalence}.

    ($2\Rightarrow 3$) Comparing $\mathcal{L}_t^{\rm (MTO)}$ [Eq.~(\ref{eq:L_MTO})] with the master equation obtained from collision models [Eq.~(\ref{eq:CMME_gen})] or $\mathcal{L}_t^{\rm (CM)}$ [Eq.~(\ref{eq:L_CM_def})], we find that $\mathcal{L}_t^{\rm (MTO)}$ has the same structure as $\mathcal{L}_t^{\rm (CM)}$ except that the $J_{kk'} = \nE^{-\beta e_k/2}\bra{k'}_E V_{SE} \ket{k}_E$ in the definition [Eq.~\eqref{eq:J_kk'}] of $\mathcal{L}_t^{\rm (MTO)}$ may not allow for an interaction Hamiltonian $V_{SE}(t)$ that leads to a vanishing Lamb shift in the corresponding ECTCM. Therefore, $\mathcal{L}_t^{\rm (CM)}$ is always a generator of MTOs with additional constraints.
    
    ($3\Rightarrow 1$) We can follow the same lines as in the proof of Prop.~\ref{prop:L_CM_=>_L} (App.~\ref{app:proof_of_L_CM_=>_L}), when we show that $\mathcal{L}_t^{\rm (CM)}$ satisfies Axioms~\ref{axiom:I}--\ref{axiom:III}. A priori, $\mathcal{L}_t^{\rm (MTO)}$ is less constrained than $\mathcal{L}_t^{\rm (CM)}$ due to the absence of the vanishing Lamb shift condition. However, since this condition was not used in the corresponding part of the proof of Prop.~\ref{prop:L_CM_=>_L}, we can directly employ the same reasoning to show that $\mathcal{L}_t^{(\text{MTO})}$ is a thermal Lindbladian $\mathcal{L}_t$.
\end{proof}
By Prop.~\ref{prop:MTO_=_CM}, the class of $\mathcal{L}_t^{\rm (CM)}$ coincides with the class of MTO generators, Prop.~\ref{prop:equiv_MTO_gen} therefore implies that the set of MTO generators in Lem.~\ref{lem:MTO_generator} exhausts the entire set of MTO generators. Consequently, we establish the conjecture in Ref.~\cite{vomEnde2023ExploringLimits} that this identified family is complete (up to closure), at least under the assumptions considered here (namely, non-trivial system Hamiltonians and continuous time-dependent generators).

\section{Full version of Protocol~\ref{tab:protocol}}\label{app:full_protocol}
In this appendix, we present the full version of the ECTCM simulation protocol~\ref{tab:protocol} in Protocol~\ref{tab:protocol_full}.
{
\renewcommand{\tablename}{Protocol}
\renewcommand{\thetable}{\thesection.\Roman{table}}
\setcounter{table}{0}
\begin{table*}[t]
\caption{\bf Explicit construction of the energy-conserving thermal collision model (ECTCM).}
\centering
\begin{tabular}{c p{0.95\linewidth}}
\hline
\hline
\noalign{\vskip 2mm}
{\bf 1.} & {\bf Extract coefficients from the given thermal master equation under Axioms~\ref{axiom:I}--\ref{axiom:III}.} \\[2mm]
(a) & Calculate the coefficient $\xi_{jj'mm'}(t)$ from the CP map $\mathcal{A}_t$ in the target Lindbladian $\mathcal{L}_t$ [Eq.~(\ref{eq:xi_def_from_A})]. \\[2mm]
(b) & Construct the coefficient $\zeta_{jj'mm'}(t)$ from $\xi_{jj'mm'}(t)$ [Eq.~(\ref{eq:zeta})]. \\[2mm]
(c) & Rewrite $\zeta_{jj'mm'}(t)$ as a matrix $\bm{\zeta}(t) \in\mathbb{C}^{d_S^2\times d_S^2}$ by treating $(j,j')$ ($(m,m')$) as row (column) indices, and obtain the square root $\bm{\zeta}^{1/2}(t)\in\mathbb{C}^{d_S^2\times d_S^2}$. \\
&~\\
{\bf 2.} & {\bf Construct the coefficient matrix $\bm{W}$ by fixing the gauge.} \\[2mm]
(a) & Calculate the degeneracy coefficient $N_0 \equiv |\{(j,j')|\epsilon_j = \epsilon_{j'}\}|$ and pick the corresponding $N_0$ rows from $\bm{\zeta}^{1/2}(t)$ to form $\bm{M}(t)\in\mathbb{C}^{N_0\times d_S^2}$ [Eq.~(\ref{eq:M_def})]. \\[2mm]
(b) & Find the eigenbasis of the kernel of $\bm{M}(t)$, denoted as $\{\ket{e_\alpha(t)}\}_{\alpha=0}^{{\rm dim}(\mathfrak{ker}(\bm{M}(t)))-1}$. \\[2mm]
(c) & Iterate over $\alpha$ to find $\alpha'$ such that $\ket{e_{\alpha'}(t)} + \bm{S}\ket{e_{\alpha'}(t)}^* \neq 0$ where $\bm{S}\in\mathbb{C}^{d_S^2\times d_S^2}$ is the swap matrix, and denote $\ket{\phi(t)} \equiv \ket{e_{\alpha'}(t)} + \bm{S}\ket{e_{\alpha'}(t)}^*$.\\[2mm]
(d) & Define $\ket{v} \equiv \sum_{j=0}^{d_S-1}\ket{j,j}$ and construct the Householder reflection unitary $\bm{U}(t)\in\mathbb{C}^{d_S^2\times d_S^2}$ between (normalised) $\ket{\phi(t)}$ and $\ket{v}$ [Eq.~(\ref{eq:U_householder_reflection})]. \\[2mm]
(e) & Construct the matrix $\bm{W}(t) \equiv \bm{\zeta}^{1/2}(t)\bm{U}(t) \in \mathbb{C}^{d_S^2\times d_S^2}$. \\
&~\\
{\bf 3.} & {\bf Construct the collision model.} \\[2mm]
(a) & Set the collision time $\Delta t$ and the number of collision $n\equiv \tau/\Delta t$ where $\tau$ is the evolution time. \\[2mm]
(b) & Sample the effective Hamiltonian $H_S'(t)$ as $\{H_S'(i\Delta t)\}_{i=1}^n$, the CP map $\mathcal{A}_t$ as $\{\mathcal{A}_{i\Delta t}\}_{i=1}^n,$ and follow Steps 1 and 2 to construct the corresponding $\bm{W}(i\Delta t)$ for $i=1,2,\dots, n$. \\[2mm]
(c) & Construct the driving $D_{S,i}\equiv H_S'(i\Delta t) - H_S$. \\[2mm]
(d) & Construct the same Hamiltonian $H_E$ for all ancillae [Eq.~(\ref{eq:H_E_constructed})]. \\[2mm]
(e) & Construct the interaction Hamiltonian $gV_{SE_i}$ from $\bm{W}(i\Delta t)$ [Eq.~(\ref{eq:V_constructed})] with constant $g$ satisfying $g^2\Delta t = 1$. \\[2mm]
(f) & Implement the collision map $\mathcal{E}_n$ [Eq.~(\ref{eq:collision_map_E_n})].\\[2mm]
\hline
\hline
\end{tabular}
\label{tab:protocol_full}
\end{table*}
}
We remark that in Step~2(c) of Protocol~\ref{tab:protocol_full}, the vector $\ket{\phi(t)}$ is indeed in the intersection $\mathfrak{ker}(\bm{M}(t))\cap \mathscr{V}_{\rm Herm}$. In particular, $\ket{\phi(t)}\in \mathscr{V}_{\rm Herm}$ is guaranteed by definition.
We show that $\ket{\phi(t)}\in \mathfrak{ker}(\bm{M}(t))$ as follows (ignoring the time dependence)
\begin{align}
    {\bm M}\ket{\phi} &= {\bm M}{\bm S}\ket{e_{\alpha'}}^*\\
    &= \sum_{j,j'=0|\epsilon_j=\epsilon_{j'}}^{d_S-1}\sum_{m,m'=0}^{d_S-1} M_{jj'mm'}(e_{\alpha'})_{m'm}^*\ket{j,j'}\\
    &= \sum_{j,j'=0|\epsilon_j=\epsilon_{j'}}^{d_S-1}\sum_{m,m'=0}^{d_S-1} \left[M_{j'jm'm}(e_{\alpha'})_{m'm}\right]^*\ket{j,j'}\\
    &= 0,
\end{align}
where in the third line, we used the symmetry: $M_{jj'mm'}=M_{j'jm'm}^*$ and the last line follows from the fact that $\sum_{m,m'=0}^{d_S-1}M_{j'jm'm}(e_{\alpha'})_{m'm} = 0$ for all $(j',j)\in \{(\ell,\ell')|\epsilon_\ell = \epsilon_{\ell'}\}$ since ${\bm M}\ket{e_{\alpha'}} = 0$.

\section{Technical details of the error analysis for Protocol~\ref{tab:protocol}}\label{app:error_analysis}
In the main text, the overall error $\varepsilon$ of Protocol~\ref{tab:protocol} is given by [Eq.~(\ref{eq:varepsilon_bound_gen})]
\begin{align}
    \varepsilon &\le n \max_{1\le i\le n}\varepsilon_{\rm s}(i) + n\max_{1\le i \le n}\left[\varepsilon_{\rm t}(i) + \varepsilon_{\rm c}(i)\right].
\end{align}
In this appendix, we first derive this decomposition and then provide an explicit bound on $\varepsilon$ by bounding the sampling error $\varepsilon_{\rm s}(i)$, the truncation error $\varepsilon_{\rm t}(i)$ and the collision error $\varepsilon_{\rm c}(i)$, respectively.
\subsection{Useful inequalities}\label{app:useful_ineqs}
First, we present several useful inequalities and specify the assumptions. Following the error analysis of general collision models in Ref.~\cite{Cattaneo2021CMcansimulate},
we consider the trace norm and operator norm for an operator $X$~\citep[][p.~33]{watrous2018theory}:
\begin{align}
    \|X\|_1 &:= {\rm Tr}\{\sqrt{X^\dagger X}\}, \label{eq:trace_norm}\\
    \|X\|_\infty &:= \sup_{\|Y\|_1= 1}|{\rm Tr}\{Y^\dagger X\}|. \label{eq:operator_norm}
\end{align}
The $1\rightarrow 1$ norm for a channel $\Lambda$ is defined as~\citep[][p.~167]{watrous2018theory}
\begin{align}
    \|\Lambda\|_{1\rightarrow 1} := \sup_{\|X\|_1= 1} \|\Lambda(X)\|_1.
    \label{eq:1to1_norm}
\end{align}
In the following, we will use H\"older's inequality: 
\begin{align}
    \|XY\|_1 \le \|X\|_1\|Y\|_\infty,
    \label{eq:Holder_ineq}
\end{align}
for any operators $X$ and $Y$. In addition, the following two lemmas will be useful~\citep[][p.~167]{watrous2018theory}:
\begin{lemmaApp}\label{applem:triangle_1to1}
    The $1\rightarrow 1$ norm satisfies the triangle inequality, i.e., $\|\Lambda - \Omega\|_{1\rightarrow 1} \le \|\Lambda - \Phi\|_{1\rightarrow 1} + \|\Phi - \Omega\|_{1\rightarrow 1}$, for any linear maps $\Lambda$, $\Omega$, and $\Phi$. 
\end{lemmaApp}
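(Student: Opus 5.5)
The target is Lemma~\ref{applem:triangle_1to1}: the triangle inequality for the $1\rightarrow 1$ norm. I would reduce it to the triangle inequality for the trace norm $\|\cdot\|_1$ on operators, which holds because $\|\cdot\|_1$ is a norm.

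First, fix an arbitrary operator $X$ with $\|X\|_1=1$. Since $\Lambda-\Omega$ is linear, we have
\begin{align}
    (\Lambda-\Omega)(X) = (\Lambda-\Phi)(X) + (\Phi-\Omega)(X).
\end{align}

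Next, apply the trace-norm triangle inequality:
\begin{align}
    \|(\Lambda-\Omega)(X)\|_1 \le \|(\Lambda-\Phi)(X)\|_1 + \|(\Phi-\Omega)(X)\|_1.
\end{align}

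Then bound each term on the right by the corresponding supremum in the definition, Eq.~(\ref{eq:1to1_norm}). Because $\|X\|_1=1$, this gives $\|(\Lambda-\Phi)(X)\|_1\le\|\Lambda-\Phi\|_{1\rightarrow 1}$, and likewise for $\Phi-\Omega$.

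Finally, take the supremum over all $X$ with $\|X\|_1=1$ on the left-hand side. The right-hand side does not depend on $X$, so the claim follows.

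The only point needing care is that $\|\cdot\|_1$ really is a norm on operators, in particular that it satisfies the triangle inequality. I would cite this as standard, e.g.\ from Ref.~\cite{watrous2018theory}. If a self-contained argument is wanted, use the variational characterisation $\|Y\|_1=\sup_{\|Z\|_\infty\le 1}|{\rm Tr}\{Z^\dagger Y\}|$. A supremum of the seminorms $|{\rm Tr}\{Z^\dagger\,\cdot\,\}|$ is subadditive, so subadditivity of $\|\cdot\|_1$ is immediate.

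Finiteness of the suprema is also not an issue here, because all spaces are finite-dimensional.
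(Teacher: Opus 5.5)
Your proof is correct and follows essentially the same route as the paper: both apply the trace-norm triangle inequality to $(\Lambda-\Phi)(X)+(\Phi-\Omega)(X)$ and then pass to the supremum over $\|X\|_1=1$. The only cosmetic difference is the order of steps: the paper bounds the supremum of a sum by the sum of suprema, whereas you bound each term pointwise by its supremum before taking the supremum on the left.
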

\begin{proof}
    \begin{align}
        \|\Lambda - \Omega\|_{1\rightarrow 1} &= \sup_{\|X\|_1= 1} \|\Lambda(X) - \Omega(X)\|_1 \\
        &\le \sup_{\|X\|_1= 1} \left(\|\Lambda(X) - \Phi(X)\|_1 \right. \nonumber\\
        &\quad \left. + \|\Phi(X) - \Omega(X)\|_1\right)\\
        &\le \sup_{\|X\|_1= 1}\|\Lambda(X) - \Phi(X)\|_1 \nonumber\\
        &\quad + \sup_{\|X\|_1= 1}\|\Phi(X) - \Omega(X)\|_1\\
        &= \|\Lambda - \Phi\|_{1\rightarrow 1} + \|\Phi - \Omega\|_{1\rightarrow 1},
    \end{align}
    where the first inequality follows from the triangle inequality for trace norm.
\end{proof}
\begin{lemmaApp}[\cite{Kliesch2011DissipativeQCTtheorem}]\label{applem:subadditive_1to1}
    For any linear maps $\Lambda_1$, $\Lambda_2$, $\Omega_1$, and $\Omega_2$, the inequality holds:
    $\|\Lambda_2 \circ \Lambda_1 - \Omega_2 \circ \Omega_1\|_{1\rightarrow 1} \le \|\Lambda_2\|_{1\rightarrow 1}\|\Lambda_1 - \Omega_1\|_{1 \rightarrow 1} + \|\Lambda_2 - \Omega_2\|_{1 \rightarrow 1}\|\Omega_1\|_{1\rightarrow 1}$.
\end{lemmaApp}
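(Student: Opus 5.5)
The identity to prove is $\Lambda_2\circ\Lambda_1-\Omega_2\circ\Omega_1=\Lambda_2\circ(\Lambda_1-\Omega_1)+(\Lambda_2-\Omega_2)\circ\Omega_1$, followed by two standard norm estimates. I will add and subtract the mixed term $\Lambda_2\circ\Omega_1$. Since all maps are linear, composition distributes over differences, which gives the decomposition exactly. Lem.~\ref{applem:triangle_1to1} then bounds $\|\Lambda_2\circ\Lambda_1-\Omega_2\circ\Omega_1\|_{1\rightarrow1}$ by $\|\Lambda_2\circ(\Lambda_1-\Omega_1)\|_{1\rightarrow1}+\|(\Lambda_2-\Omega_2)\circ\Omega_1\|_{1\rightarrow1}$.

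It remains to show that the $1\rightarrow1$ norm is submultiplicative, i.e. $\|\Phi\circ\Psi\|_{1\rightarrow1}\le\|\Phi\|_{1\rightarrow1}\|\Psi\|_{1\rightarrow1}$ for linear maps $\Phi,\Psi$. Fix $X$ with $\|X\|_1=1$. If $\Psi(X)=0$ there is nothing to prove. Otherwise set $Y=\Psi(X)/\|\Psi(X)\|_1$, which has unit trace norm. By homogeneity of $\Phi$ and the definition~(\ref{eq:1to1_norm}),
\[
\|\Phi(\Psi(X))\|_1=\|\Psi(X)\|_1\,\|\Phi(Y)\|_1\le\|\Psi\|_{1\rightarrow1}\|\Phi\|_{1\rightarrow1}.
\]
Taking the supremum over $X$ gives the claim. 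Applying this with $(\Phi,\Psi)=(\Lambda_2,\Lambda_1-\Omega_1)$ and with $(\Phi,\Psi)=(\Lambda_2-\Omega_2,\Omega_1)$ yields the stated inequality.

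No step here is a real obstacle. The only point needing care is the submultiplicativity argument, specifically the zero case and the rescaling. If the supremum in~(\ref{eq:1to1_norm}) is read over possibly non-Hermitian $X$, the argument is unchanged, because it uses only linearity and the definition.
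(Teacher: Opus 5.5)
Your proposal is correct and follows essentially the same route as the paper: add and subtract $\Lambda_2\circ\Omega_1$, apply Lem.~\ref{applem:triangle_1to1}, and then bound each composite term by rescaling the intermediate operator to unit trace norm. The only difference is cosmetic. You normalise pointwise by $\|\Psi(X)\|_1$ and treat $\Psi(X)=0$ separately, whereas the paper normalises by $\|\Lambda_1-\Omega_1\|_{1\rightarrow1}$ and $\|\Omega_1\|_{1\rightarrow1}$ (tacitly assuming these are nonzero) and appeals to $\sup_{\|X\|_1\le1}=\sup_{\|X\|_1=1}$.
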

\begin{proof}
    \begin{align}
        &\quad \|\Lambda_2 \circ \Lambda_1 - \Omega_2 \circ \Omega_1\|_{1\rightarrow 1} \nonumber \\
        &\le \|\Lambda_2 \circ \Lambda_1 - \Lambda_2\circ\Omega_1\|_{1\rightarrow 1} + \|\Lambda_2 \circ \Omega_1 - \Omega_2 \circ \Omega_1\|_{1\rightarrow 1}\\
        &= \sup_{\|X\|_1= 1}\|\Lambda_2\circ\Lambda_1(X) - \Lambda_2\circ\Omega_1(X)\|_{1}\nonumber\\
        &\quad + \sup_{\|X\|_1= 1}\|\Lambda_2 \circ \Omega_1(X) - \Omega_2 \circ \Omega_1(X)\|_{1}\\
        &= \sup_{\|X\|_1= 1}\Big\|\Lambda_2\left(\frac{\Lambda_1(X)-\Lambda_2(X)}{\|\Lambda_1-\Lambda_2\|_{1\rightarrow 1}}\right)\Big\|_1\cdot \|\Lambda_1-\Lambda_2\|_{1\rightarrow 1}\nonumber\\
        &\quad + \sup_{\|X\|_1= 1}\Big\|(\Lambda_2-\Omega_2)\left(\frac{\Omega_1(X)}{\|\Omega_1\|_{1\rightarrow 1}}\right)\Big\|_1\cdot \|\Omega_1\|_{1\rightarrow 1}\\
        &\le \sup_{\|X\|_1= 1}\|\Lambda_2(X)\|_1\cdot \|\Lambda_1-\Lambda_2\|_{1\rightarrow 1}\nonumber\\
        &\quad + \sup_{\|X\|_1= 1}\|(\Lambda_2-\Omega_2)(X)\|_1\cdot \|\Omega_1\|_{1\rightarrow 1}\\
        &= \|\Lambda_2\|_{1\rightarrow 1}\|\Lambda_1 - \Omega_1\|_{1 \rightarrow 1} + \|\Lambda_2 - \Omega_2\|_{1 \rightarrow 1}\|\Omega_1\|_{1\rightarrow 1},
    \end{align}
    where the first inequality follows from Lem.~\ref{applem:triangle_1to1} and the second inequality is obtained by noticing that $\sup_{\|X\|_1\le 1}\|\Phi(X)\|_{1} = \sup_{\|X\|_1= 1}\|\Phi(X)\|_{1}$ for any linear map $\Phi$, since if the optimal $X^*$ for the first optimisation has $\|X^*\|_1 < 1$, the operator $\tilde{X}^*\equiv X^*/\|X^*\|_1$ will lead to a greater value, i.e., $\|\Phi(\tilde{X}^*)\|_{1} > \|\Phi(X^*)\|_{1}$, contradicting to the optimality of $X^*$.
\end{proof}
To exclude (unphysical) cases where $\mathcal{L}_t$ varies too dramatically in time, we assume that 
$\mathcal{L}_t$ is Lipschitz continuous~\cite{royden1988real} throughout. That is, there exists a constant $C_{\rm L}>0$ such that for any two times $r, u\ge 0$, the following bound holds:
\begin{align}
    \|\mathcal{L}_r-\mathcal{L}_u\|_{1\rightarrow 1} \le C_{\rm L} |r-u|.
    \label{eq:Lipschitz_continuity}
\end{align}
We note that Lipschitz continuity implies continuity of the superoperator-valued function $\mathcal{L}: t \mapsto \mathcal{L}_t$.

\subsection{Derivation of Eq.~(\ref{eq:varepsilon_bound_gen})}
The relevant error (distance) is
\begin{align}
    \varepsilon := \|\mathcal{E}_n - \overleftarrow{\rm T}\nE^{\int_0^\tau \mathcal{L}_t{\rm d} t}\|_{1\rightarrow 1}.
\end{align}
Intuitively, this error can be decomposed into two contributions:
\begin{itemize}
    \item[(i)] the error arising from the stroboscopic discretisation of the continuous-time evolution, and
    \item[(ii)] the error arising from approximating each infinitesimal evolution by a finite-duration collision.
\end{itemize}
To quantify the first contribution (i), we use the shorthand notation $\mathcal{L}_i := \mathcal{L}_{i\Delta t}$ where $i=1,2,\dots,n$ and define the product integral approximation error~\cite{Dollard1977ProductIntegrals}:
\begin{align}
    \varepsilon_{\rm pia} := \|\nE^{\mathcal{L}_n\Delta t}\circ\cdots\circ \nE^{\mathcal{L}_1\Delta t} - \overleftarrow{\rm T}\nE^{\int_0^\tau \mathcal{L}_t{\rm d} t}\|_{1\rightarrow 1},
\end{align}
together with its corresponding sampling error at $i$th step:
\begin{align}
    \varepsilon_{\rm s}(i) := \|\nE^{\mathcal{L}_i\Delta t} - \overleftarrow{\rm T}\nE^{\int_{(i-1)\Delta t}^{i\Delta t} \mathcal{L}_t{\rm d} t}\|_{1\rightarrow 1}.
\end{align}

The second contribution (ii) is quantified by the discrete-step error:
\begin{align}
    \varepsilon_{\rm ds} := \|\hat{\mathcal{E}}_n \circ \cdots \circ \hat{\mathcal{E}}_1 - \nE^{\mathcal{L}_n\Delta t}\circ\cdots\circ \nE^{\mathcal{L}_1\Delta t}\|_{1\rightarrow 1},
\end{align}
where $\hat{\mathcal{E}}_i$ is the $i$th collision channel defined as $\hat{\mathcal{E}}_i(\cdot) := {\rm Tr}_{E_i}\{U_i(\cdot\otimes\rho_{E_i})U_i^\dagger\}$, such that $\mathcal{E}_n = \hat{\mathcal{E}}_n \circ \cdots \circ \hat{\mathcal{E}}_1$.
To analyse $\varepsilon_{\rm ds}$, we further introduce the single-step error at the $i$th~step:
\begin{align}
    \varepsilon_{\rm ss}(i) := \|\hat{\mathcal{E}}_{i} - \nE^{\mathcal{L}_i\Delta t}\|_{1\rightarrow 1},
\end{align}
which is consists of the Taylor truncation error at $i$th step:
\begin{align}
    \varepsilon_{\rm t}(i) := \|(\mathcal{I}+\mathcal{L}_i\Delta t)-\nE^{\mathcal{L}_i\Delta t}\|_{1\rightarrow 1},
\end{align}
and the collision error at $i$th step:
\begin{align}
    \varepsilon_{\rm c}(i) := \|\hat{\mathcal{E}}_i - (\mathcal{I}+\mathcal{L}_i\Delta t)\|_{1\rightarrow 1}.
\end{align}
Combining the above definitions, the total simulation error admits the following decomposition:
\begin{align}
    \varepsilon &\le \varepsilon_{\rm pia} + \varepsilon_{\rm ds}\\
    &\le n \max_{1\le i\le n}\varepsilon_{\rm s}(i) + n \max_{1\le i \le n}\varepsilon_{\rm ss}(i)\\
    &\le n \max_{1\le i\le n}\varepsilon_{\rm s}(i) + n\max_{1\le i \le n}\left[\varepsilon_{\rm t}(i) + \varepsilon_{\rm c}(i)\right],
\end{align}
where the first and third inequalities follow from Lem.~(\ref{applem:triangle_1to1}), while the second inequality comes from Lem.~(\ref{applem:subadditive_1to1}), with the fact that 
the $1\rightarrow 1$ norm of completely positive and trace-preserving (CPTP) maps is 1 and
the time-ordered exponential $\overleftarrow{\rm T}\nE^{\int_{t'}^t \mathcal{L}_s{\rm d} s}$ is multiplicative, i.e., $\overleftarrow{\rm T}\nE^{\int_{t_1}^{t_3} \mathcal{L}_s{\rm d} s} = \overleftarrow{\rm T}\nE^{\int_{t_2}^{t_3} \mathcal{L}_s{\rm d} s}\circ \overleftarrow{\rm T}\nE^{\int_{t_1}^{t_2} \mathcal{L}_s{\rm d} s},\,\forall\, t_1\le t_2\le t_3$~\citep[][p.~11]{Dollard_Friedman_1984}. 

To obtain a bound on $\varepsilon$, we now bound the sampling error $\varepsilon_{\rm s}(i)$, the truncation error $\varepsilon_{\rm t}(i)$ and the collision error $\varepsilon_{\rm c}(i)$ individually.

\subsection{Sampling error}
In order to bound the $i$th step sampling error $\varepsilon_{\rm s}(i) := \|\nE^{\mathcal{L}_i\Delta t} - \overleftarrow{\rm T}\nE^{\int_{(i-1)\Delta t}^{i\Delta t} \mathcal{L}_t{\rm d} t}\|_{1\rightarrow 1}$ where $\mathcal{L}_i := \mathcal{L}_{i\Delta t}$, we define the average Lindbladian $\bar{\mathcal{L}}_i$ as follows:
\begin{align}
    \bar{\mathcal{L}}_i := \frac{1}{\Delta t}\int_{(i-1)\Delta t}^{i\Delta t}\mathcal{L}_t\, {\rm d}t.
\end{align}
By Lem.~\ref{applem:triangle_1to1}, the $i$th step sampling error $\varepsilon_{\rm s}(i)$ can be bounded as follows:
\begin{align}
    \varepsilon_{\rm s}(i) &\le \|\nE^{\mathcal{L}_i\Delta t} - \nE^{\bar{\mathcal{L}}_i\Delta t}\|_{1\rightarrow 1} \nonumber\\
    &\quad + \|\nE^{\bar{\mathcal{L}}_i\Delta t} - \overleftarrow{\rm T}\nE^{\int_{(i-1)\Delta t}^{i\Delta t} \mathcal{L}_t{\rm d} t}\|_{1\rightarrow 1}.
    \label{eq:decomp_varepsilon_s(k)}
\end{align}
To further bound both terms in Eq.~(\ref{eq:decomp_varepsilon_s(k)}), we introduce additional notation. First, since $\mathcal{L}_t$ is continuous, we rewrite $\overleftarrow{\rm T}\nE^{\int_{s}^{t} \mathcal{L}_u{\rm d} u}$ as a product integral~\citep[][p.7]{Dollard_Friedman_1984}, such that
\begin{align}
\label{eqn::propagator}
    \overleftarrow{\rm T}\nE^{\int_{s}^{t} \mathcal{L}_u{\rm d} u} = \lim_{\Delta u_k\rightarrow 0,\,\forall\,k} \overleftarrow{\prod_{k=1}^n}\nE^{\mathcal{L}_{u_k}\Delta u_k} =: \mathcal{P}_\mathcal{L}(t,s),
\end{align}
where $\{u_0,u_1,u_2,\cdots,u_n\}$ is a partition of $[s,t]$ with $\Delta u_k \equiv u_k - u_{k-1}$, and $\overleftarrow{\prod}_{m=1}^M \mathcal{X}_m := \mathcal{X}_M\mathcal{X}_{M-1}\cdots\mathcal{X}_1$ denotes the time ordered product (for simplicity, we ignore the concatenation symbol $\circ$). We further denote the product integral as a propagator $\mathcal{P}_\mathcal{L}(t,s)$ from time $s$ to time $t$ generated by $\mathcal{L}_u$. The derivatives of $\mathcal{P}_\mathcal{L}(t,s)$ are given by~\citep[][p.12]{Dollard_Friedman_1984}:
\begin{align}
    \frac{\partial}{\partial t}\mathcal{P}_\mathcal{L}(t,s) &= \mathcal{L}_t\mathcal{P}_\mathcal{L}(t,s),\\
    \frac{\partial}{\partial s}\mathcal{P}_\mathcal{L}(t,s) &= -\mathcal{P}_\mathcal{L}(t,s)\mathcal{L}_s.
\end{align}
As a propagator, $\mathcal{P}_\mathcal{L}(t,s)$ has a unique inverse given by~\citep[][suppl. mat.]{Kliesch2011DissipativeQCTtheorem}
\begin{align}
    \mathcal{P}^{-1}_\mathcal{L}(t,s) &:= \lim_{\Delta u_k\rightarrow 0,\,\forall\,k}\overleftarrow{\prod_{k=n}^1}\nE^{-\mathcal{L}_{u_k}\Delta u_k}\\
    &= \left(\mathcal{P}_\mathcal{L}(t,s)\right)^{-1},
\end{align}
such that $\mathcal{P}^{-1}_\mathcal{L}(t,s)\mathcal{P}_\mathcal{L}(t,s) = \mathcal{P}_\mathcal{L}(t,s)\mathcal{P}^{-1}_\mathcal{L}(t,s) = \mathcal{I}$ where $\mathcal{I}$ is the identity map. Accordingly, the derivatives of $\mathcal{P}^{-1}_\mathcal{L}(t,s)$ are given by
\begin{align}
    \frac{\partial}{\partial t}\mathcal{P}^{-1}_\mathcal{L}(t,s) &= -\mathcal{P}^{-1}_\mathcal{L}(t,s)\mathcal{L}_t,\\
    \frac{\partial}{\partial s}\mathcal{P}^{-1}_\mathcal{L}(t,s) &= \mathcal{L}_t\mathcal{P}^{-1}_\mathcal{L}(t,s).
\end{align}
Now, we are ready to derive error bounds for the sampling error. In this derivation, the following identity will be useful:
\begin{align}
    &\quad \mathcal{P}_{\mathcal{L}'}(t,s) - \mathcal{P}_{\mathcal{L}}(t,s) \nonumber\\
    &= \mathcal{P}_\mathcal{L}(t,s)\left[\mathcal{P}_{\mathcal{L}}^{-1}(t,s)\mathcal{P}_{\mathcal{L}'}(t,s)-\mathcal{I}\right]\\
    &= \mathcal{P}_\mathcal{L}(t,s)\int_s^t \frac{\partial}{\partial u}\left[\mathcal{P}_{\mathcal{L}}^{-1}(u,s)\mathcal{P}_{\mathcal{L}'}(u,s)-\mathcal{I}\right] {\rm d}u\\
    &= \mathcal{P}_\mathcal{L}(t,s)\int_s^t \mathcal{P}_{\mathcal{L}}^{-1}(u,s)(\mathcal{L}'_u-\mathcal{L}_u)\mathcal{P}_{\mathcal{L}'}(u,s) {\rm d}u\\
    &= \int_s^t \mathcal{P}_{\mathcal{L}}(t,u)(\mathcal{L}'_u-\mathcal{L}_u)\mathcal{P}_{\mathcal{L}'}(u,s) {\rm d}u,
    \label{eq:propagator_diff}
\end{align}
Using Eq.~(\ref{eq:propagator_diff}), the first term in Eq.~(\ref{eq:decomp_varepsilon_s(k)}) can be bounded as
\begin{align}
    &\quad \|\nE^{\mathcal{L}_i\Delta t} - \nE^{\bar{\mathcal{L}}_i\Delta t}\|_{1\rightarrow 1} \nonumber\\
    &= \Big\|\int_{0}^{\Delta t} \nE^{\bar{\mathcal{L}}_i(\Delta t - s)}(\mathcal{L}_i-\bar{\mathcal{L}}_i)\nE^{\mathcal{L}_i s}{\rm d}s\Big\|_{1\rightarrow 1}\\
    &\le \int_{0}^{\Delta t}\|\nE^{\bar{\mathcal{L}}_i(\Delta t - s)}(\mathcal{L}_i-\bar{\mathcal{L}}_i)\nE^{\mathcal{L}_i s}\|_{1\rightarrow 1}\, {\rm d} s\\[-0.5ex]
    &\le \int_{0}^{\Delta t}\|\nE^{\bar{\mathcal{L}}_i(\Delta t - s)}\|_{1\rightarrow 1}\|\mathcal{L}_i-\bar{\mathcal{L}}_i\|_{1\rightarrow 1}\|\nE^{\mathcal{L}_i s}\|_{1\rightarrow 1}\, {\rm d} s\\
    &= \Delta t\|\mathcal{L}_i-\bar{\mathcal{L}}_i\|_{1\rightarrow 1},
    \label{eq:1st_term_varepsilon_s(k)}
\end{align}
where the first and second inequalities follow from Lems.~\ref{applem:triangle_1to1} and \ref{applem:subadditive_1to1}, respectively, and the last line uses the fact that $\|\nE^{\bar{\mathcal{L}}_i(\Delta t - s)}\|_{1\rightarrow 1} = \|\nE^{\mathcal{L}_i s}\|_{1\rightarrow 1} = 1$ since both $\nE^{\bar{\mathcal{L}}_i(\Delta t - s)}$ and $\nE^{\mathcal{L}_i s}$ are CPTP maps. Under the assumption of the Lipschitz continuity of $\mathcal{L}_t$ [Eq.~(\ref{eq:Lipschitz_continuity})], the above can be further bounded as
\begin{align}
    \|\nE^{\mathcal{L}_i\Delta t} - \nE^{\bar{\mathcal{L}}_i\Delta t}\|_{1\rightarrow 1} &\le
    \Delta t \|\mathcal{L}_i-\bar{\mathcal{L}}_i\|_{1\rightarrow 1}\\
    &= \Big\|\int_{(i-1)\Delta t}^{i(\Delta t)} (\mathcal{L}_i - \mathcal{L}_t) {\rm d}t \Big\|_{1\rightarrow 1}\\
    &\le \int_{(i-1)\Delta t}^{i(\Delta t)} \|\mathcal{L}_i - \mathcal{L}_t\|_{1\rightarrow 1} {\rm d}t\\
    &\le \int_{(i-1)\Delta t}^{i(\Delta t)} C_{\rm L}|i\Delta t - t|{\rm d}t\\
    &= \frac{C_{\rm L}}{2}(\Delta t)^2,
\end{align}
where we recall the shorthand notation: $\mathcal{L}_i\equiv \mathcal{L}_{i\Delta t}$, the first inequality follows Lem.~\ref{applem:triangle_1to1} and the second inequality uses Eq.~(\ref{eq:Lipschitz_continuity}).

For the second term in Eq.~(\ref{eq:decomp_varepsilon_s(k)}), we follow the similar lines as the proof of Thm.~7 in the Supplemental Material of Ref.~\cite{Kliesch2011DissipativeQCTtheorem} to derive an upper bound. Define $t_{i}\equiv i\Delta t$.
Using Eq.~(\ref{eq:propagator_diff}), we have
\begin{widetext}
    \begin{align}
    &\quad\|\nE^{\bar{\mathcal{L}}_i \Delta t} - \overleftarrow{\rm T}\nE^{\int_{t_{i-1}}^{t_i} \mathcal{L}_t{\rm d} t}\|_{1\rightarrow 1} \nonumber\\
    &= \Big\|\int_{t_{i-1}}^{t_i} \mathcal{P}_{\mathcal{L}}(t_i,u)(\bar{\mathcal{L}}_i-\mathcal{L}_u)\nE^{\bar{\mathcal{L}}_i (u-t_{i-1})} {\rm d}u \Big\|_{1\rightarrow 1}\\
    &= \frac{1}{\Delta t}\Big\|\int_{t_{i-1}}^{t_i}\int_{t_{i-1}}^{t_i}\mathcal{P}_{\mathcal{L}}(t_i,u)(\mathcal{L}_r-\mathcal{L}_u)\nE^{\bar{\mathcal{L}}_i (u-t_{i-1})} {\rm d}r {\rm d}u\Big\|_{1\rightarrow 1}\\
    &= \frac{1}{\Delta t}\Big\|\int_{t_{i-1}}^{t_i}\int_{t_{i-1}}^{t_i}\left[\mathcal{P}_{\mathcal{L}}(t_i,r)\mathcal{L}_u\nE^{\bar{\mathcal{L}}_i (r-t_{i-1})} - \mathcal{P}_{\mathcal{L}}(t_i,u)\mathcal{L}_u\nE^{\bar{\mathcal{L}}_i (u-t_{i-1})}\right]{\rm d}r {\rm d}u\Big\|_{1\rightarrow 1}\\
    &\le \frac{1}{\Delta t}\int_{t_{i-1}}^{t_i}\int_{t_{i-1}}^{t_i}\|\mathcal{L}_u\|_{1\rightarrow 1}\left(
    \|\mathcal{P}_{\mathcal{L}}(t_i,r)\|_{1\rightarrow 1}\|\nE^{\bar{\mathcal{L}}_i (r-t_{i-1})}-\nE^{\bar{\mathcal{L}}_i (u-t_{i-1})}\|_{1\rightarrow 1} \right.\nonumber\\
    &\left. \quad +\|\mathcal{P}_{\mathcal{L}}(t_i,r) - \mathcal{P}_{\mathcal{L}}(t_i,u)\|_{1\rightarrow 1}\|\nE^{\bar{\mathcal{L}}_i (u-t_{i-1})}\|_{1\rightarrow 1}\right) {\rm d}r {\rm d}u\\
    &= \frac{1}{\Delta t}\int_{t_{i-1}}^{t_i}\int_{t_{i-1}}^{t_i}\|\mathcal{L}_u\|_{1\rightarrow 1}\left(\|\nE^{\bar{\mathcal{L}}_i (r-t_{i-1})}-\nE^{\bar{\mathcal{L}}_i (u-t_{i-1})}\|_{1\rightarrow 1} + \|\mathcal{P}_{\mathcal{L}}(t_i,r) - \mathcal{P}_{\mathcal{L}}(t_i,u)\|_{1\rightarrow 1}\right){\rm d}r {\rm d}u,
\end{align}
where the inequality follows from Lems.~\ref{applem:triangle_1to1} and \ref{applem:subadditive_1to1}. Using these lemmas again, we have
\begin{align}
    \|\mathcal{P}_\mathcal{L}(t, r) - \mathcal{P}_{\mathcal{L}}(t, u)\|_{1\rightarrow 1} &= \Big\|\int_u^r \frac{\partial}{\partial s}\mathcal{P}_{\mathcal{L}}(t, s) {\rm d}s \Big\|_{1\rightarrow 1}\\
    &= \Big\|\int_u^r \mathcal{P}_{\mathcal{L}}(t, s)\mathcal{L}_s {\rm d}s \Big\|_{1\rightarrow 1}\\
    &\le \Big|\int_u^r \|\mathcal{P}_{\mathcal{L}}(t, s)\|_{1\rightarrow 1} \|\mathcal{L}_s\|_{1 \rightarrow 1} {\rm d} s \Big|\\
    &= \Big|\int_u^r \|\mathcal{L}_s\|_{1 \rightarrow 1} {\rm d} s \Big|,
\end{align}
and similarly,
\begin{align}
    \|\nE^{\bar{\mathcal{L}}_i (r-t_{i-1})}-\nE^{\bar{\mathcal{L}}_i (u-t_{i-1})}\|_{1\rightarrow 1} \le \Big|\int_u^r \|\bar{\mathcal{L}}_i\|_{1 \rightarrow 1} {\rm d} s \Big|,
\end{align}
where the absolute values are taken in both cases since $u$ can be greater or smaller than $r$.
We therefore bound the second term in Eq.~(\ref{eq:decomp_varepsilon_s(k)}) as
\begin{align}
    \|\nE^{\bar{\mathcal{L}}_i \Delta t} - \overleftarrow{\rm T}\nE^{\int_{t_{i-1}}^{t_i} \mathcal{L}_t{\rm d} t}\|_{1\rightarrow 1}  &\le \frac{1}{\Delta t}\int_{t_{i-1}}^{t_i}\int_{t_{i-1}}^{t_i}\|\mathcal{L}_u\|_{1\rightarrow 1}\left(\Big|\int_u^r \|\mathcal{L}_s\|_{1 \rightarrow 1} {\rm d} s \Big| + \Big|\int_u^r \|\bar{\mathcal{L}}_i\|_{1 \rightarrow 1} {\rm d} s \Big|\right){\rm d}r {\rm d}u\\
    &\le \max_{t\in[t_{i-1},t_{i}]}\|\mathcal{L}_t\|_{1 \rightarrow 1}^2 \frac{2}{\Delta t}\int_{t_{i-1}}^{t_i}\int_{t_{i-1}}^{t_i}|r-u| {\rm d}r {\rm d}u\\
    &= \frac{2(\Delta t)^2}{3} \max_{t\in[t_{i-1},t_{i}]}\|\mathcal{L}_t\|_{1 \rightarrow 1}^2.
    \label{eq:2nd_term_varepsilon_s(k)}
\end{align}
\end{widetext}
Hence, combining Eqs.~(\ref{eq:decomp_varepsilon_s(k)}), (\ref{eq:1st_term_varepsilon_s(k)}) and (\ref{eq:2nd_term_varepsilon_s(k)}), we obtain an upper bound on $\varepsilon_{\rm s}(i)$:
\begin{align}
    \varepsilon_{\rm s}(i) &\le \Delta t\|\mathcal{L}_i - \bar{\mathcal{L}}_i\|_{1\rightarrow 1} + \frac{2(\Delta t)^2}{3} \max_{t\in[(i-1)\Delta t, i\Delta t]}\|\mathcal{L}_t\|_{1 \rightarrow 1}^2
    \label{eq:varepsilon_s(k)_bound_w_L_no_Lip}\\
    &\le \left(\frac{C_{\rm L}}{2} + \frac{2}{3}\max_{s\in[(i-1)\Delta t, i\Delta t]}\|\mathcal{L}_s\|_{1 \rightarrow 1}^2\right) (\Delta t)^2,
    \label{eq:varepsilon_s(k)_bound_w_L_w_Lip}
\end{align}
where the second inequality holds when $\mathcal{L}_t$ is Lipschitz continuous [Eq.~(\ref{eq:Lipschitz_continuity})] with the Lipschitz constant $C_{\rm L}$.

Recall that $\mathcal{L}_t$ satisfies Axiom~\ref{axiom:I}, i.e., $\mathcal{L}_t$ is a Lindbladian in the GKSL form [Eq.~(\ref{eq:L_gen})]. We thus have
\begin{align}
    &\quad \|\mathcal{L}_t\|_{1\rightarrow 1} \nonumber\\
    &= \sup_{\|X\|_1= 1} \|-\nI[H_S'(t), X] + \mathcal{A}_t(X) - \frac{1}{2}[\mathcal{A}_t^\dagger(\mathds{1}), X]_+\|_1\\
    &\le \sup_{\|X\|_1= 1}\left(2\|H_S'(t) X\|_1 + \|\mathcal{A}_t(X)\|_1 + \|\mathcal{A}_t^\dagger(\mathds{1})X\|_1\right)\\
    &\le 2\|H_S'(t)\|_\infty + \|\mathcal{A}_t\|_{1\rightarrow 1} + \|\mathcal{A}^\dagger_t(\mathds{1})\|_\infty,
\end{align}
where the first inequality is from the triangle inequality of the trace norm and the second inequality follows from H\"older's inequality [Eq.~(\ref{eq:Holder_ineq})]. Recall the definition of the operator norm [Eq.~(\ref{eq:operator_norm})]. We have
\begin{align}
    \|\mathcal{A}^\dagger_t(\mathds{1})\|_\infty &= \sup_{\|Y\|_1= 1} |{\rm Tr}\{Y^\dagger \mathcal{A}_t^\dagger(\mathds{1})\}| \\
    &= \sup_{\|Y^\dagger\|_1= 1}|{\rm Tr}\{\mathcal{A}_t(Y)\}|\\
    &= \sup_{\|Y\|_1= 1}|{\rm Tr}\{\mathcal{A}_t(Y)\}|\\
    &\le \sup_{\|Y\|_1= 1} {\rm Tr}\{\sqrt{(\mathcal{A}_t(Y))^\dagger\mathcal{A}_t(Y)}\}\\
    &= \|\mathcal{A}_t\|_{1\rightarrow 1},
\end{align}
where in the last line we used the definitions of the trace norm [Eq.~(\ref{eq:trace_norm})] and the $1\rightarrow 1$ norm [Eq.~(\ref{eq:1to1_norm})]. Therefore, $\|\mathcal{L}_t\|_{1\rightarrow 1}$ can be bounded as
\begin{align}
    \|\mathcal{L}_t\|_{1\rightarrow 1} \le 2(\|H_S'(t)\|_\infty + \|\mathcal{A}_t\|_{1\rightarrow 1}).
\end{align}
The bound [Eq.~(\ref{eq:varepsilon_s(k)_bound_w_L_w_Lip})] therefore implies
\begin{align}
    &\quad \varepsilon_{\rm s}(i) \nonumber\\
    &\le \left[\frac{C_{\rm L}}{2} + \frac{4}{3}\max_{t\in[(i-1)\Delta t, i\Delta t]}\left(\|H_S'(t)\|_\infty + \|\mathcal{A}_t\|_{1\rightarrow 1}\right)^2\right] (\Delta t)^2.
\end{align}
We remark that this bound is general for all Lipschitz-continuous Lindbladian $\mathcal{L}_t$.

In addition, as $\mathcal{L}_t$ is a thermal Lindbladian, $\mathcal{A}_t$ satisfies the two axioms, time-translation symmetry~\ref{axiom:II} and quantum detailed balance~\ref{axiom:III}. Then, by Lem.~\ref{lem:thermal_hermitian_dilation}, $\mathcal{A}_t$ admits the following dilation:
\begin{align}
    \mathcal{A}_t(\cdot) = {\rm Tr}_{E}\{V_{SE}(t)(\cdot\otimes\rho_{E})V_{SE}(t)\}.
\end{align}
Therefore, we have
\begin{align}
    \|\mathcal{A}_t\|_{1\rightarrow 1} &= \sup_{\|X_S\|_1= 1}\|{\rm Tr}_{E}\{V_{SE}(t)(X_S\otimes\rho_{E})V_{SE}(t)\}\|_1\\
    &\le \sup_{\|X_S\|_1= 1}\|V_{SE}(t)(X_S\otimes\rho_{E})V_{SE}(t)\|_1\\
    &\le \sup_{\|X_S\|_1= 1}\|V_{SE}(t)\|_\infty^2\|X_S\otimes\rho_{E}\|_1\\
    &= \|V_{SE}(t)\|_\infty^2,
    \label{eq:bound_on_A_t_1to1_norm}
\end{align}
where the first inequality follows from the fact that the trace norm is contractive under partial trace and the second inequality uses H\"older's inequality [Eq.~(\ref{eq:Holder_ineq})] and the fact that $\|X_S\otimes \rho_{E_i}\|_1 = \|X_S\|_1\|\rho_{E_i}\|_1 = \|X_S\|_1$. This leads to
\begin{align}
    \|\mathcal{L}_t\|_{1\rightarrow 1} \le 2(\|H_S'(t)\|_\infty + \|V_{SE}(t)\|_\infty^2).
    \label{eq:bound_on_L_t_1to1_norm}
\end{align}
Consequently, the $i$th step sampling error $\varepsilon_{\rm s}(i)$ can be further bounded as
\begin{align}
    &\quad \varepsilon_{\rm s}(i) \nonumber\\
    &\le \left[\frac{C_{\rm L}}{2} + \frac{4}{3}\max_{t\in[(i-1)\Delta t, i\Delta t]}\left(\|H_S'(t)\|_\infty + \|V_{SE}(t)\|_\infty^2\right)^2\right] \nonumber\\
    &\quad \times(\Delta t)^2,
    \label{eq:varepsilon_s(k)_bound_w_Lip}
\end{align}
which corresponds to the first part of the error bound provided in Eq.~\eqref{eq:error_bound} in the main text.
\subsection{Truncation error}\label{app:truncation_error}
We now derive bounds on the truncation error $\varepsilon_{\rm t}(i)$, stemming from first order approximations made in the derivation of the ECTCM master equation. Following a similar procedure as in Ref.~\cite{Cattaneo2021CMcansimulate}, we consider
\begin{align}
    \varepsilon_{\rm t}(i) &= \|(\mathcal{I}+\mathcal{L}_i\Delta t)-\nE^{\mathcal{L}_i\Delta t}\|_{1\rightarrow 1}\\
    &= \|\sum_{m=2}^\infty \frac{(\mathcal{L}_i\Delta t)^m}{m!}\|_{1\rightarrow 1}\\
    &\le \sum_{m=2}^\infty \frac{(\|\mathcal{L}_i\|_{1\rightarrow 1}\Delta t)^m}{m!}\\
    &\le \frac{(\|\mathcal{L}_i\|_{1\rightarrow 1}\Delta t)^2}{2!}\sum_{m=0}^\infty \frac{(\|\mathcal{L}_i\|_{1\rightarrow 1}\Delta t)^m}{m!},
\end{align}
where the first inequality comes from Lem.~\ref{applem:triangle_1to1} and the second inequality comes from the fact that $2!m! \le (m+2)!$ for all nonnegative integers $m$. For $\Delta t$ sufficiently small, we have
\begin{align}
    \|\mathcal{L}_i\|_{1\rightarrow 1}\Delta t \le 1,
    \label{eq:L_k_Delta_t_<_1}
\end{align}
in which case
\begin{align}
    \varepsilon_{\rm t}(i) &\le \frac{(\|\mathcal{L}_i\|_{1\rightarrow 1}\Delta t)^2}{2!}\sum_{m=0}^\infty \frac{(\|\mathcal{L}_i\|_{1\rightarrow 1}\Delta t)^m}{m!}\\
    &\le \frac{\nE}{2}(\|\mathcal{L}_i\|_{1\rightarrow 1}\Delta t)^2.
    \label{eq:varepsilon_t(k)_gen_bound}
\end{align}
According to Eq.~(\ref{eq:bound_on_L_t_1to1_norm}), $\|\mathcal{L}_i\|_{1\rightarrow 1}$ can be bounded as
\begin{align}
    \|\mathcal{L}_i\|_{1\rightarrow 1} \le 2(\|H_{S,i}'\|_\infty + \|\mathcal{A}_i\|_{1\rightarrow 1}),
\end{align}
where we used the shorthand notations $H_{S,i}':= H_S'(i\Delta t)$ and $\mathcal{A}_i:=\mathcal{A}(i\Delta t)$.
This leads to the bound on the truncation error:
\begin{align}
    \varepsilon_{\rm t}(i) \le 2\nE [(\|H_{S,i}'\|_\infty + \|\mathcal{A}_i\|_{1\rightarrow 1})\Delta t]^2.
\end{align}
Since $\mathcal{A}_i$ satisfies the two axioms, time-translation symmetry~\ref{axiom:II} and quantum detailed balance~\ref{axiom:III}, its $1\rightarrow 1$ norm can be bounded [Eq.~(\ref{eq:bound_on_A_t_1to1_norm})]: $\|\mathcal{A}_i\|_{1\rightarrow 1} \le \|V_{SE_i}\|_\infty^2$. Hence, we obtain
\begin{align}
    \varepsilon_{\rm t}(i) \le 2\nE [(\|H_{S,i}'\|_\infty + \|V_{SE_i}\|_\infty^2)\Delta t]^2,
    \label{eq:varepsilon_t(k)_bound}
\end{align}
yielding the second line in Eq.~\eqref{eq:error_bound}. Note that the above holds when Eq.~(\ref{eq:L_k_Delta_t_<_1}), or the following stronger condition,
\begin{align}
    2(\|H_{S,i}'\|_\infty + \|V_{SE_i}\|_\infty^2)\Delta t \le 1,
    \label{eq:(H_S+V^2)_Delta_t_<_1}
\end{align}
is satisfied. 

\subsection{Collision error}
We finally derive the upper bound of the collision error $\varepsilon_{\rm c}(i) = \|\hat{\mathcal{E}}_i - (\mathcal{I}+\mathcal{L}_i\Delta t)\|_{1\rightarrow 1}$. Recall that $\hat{\mathcal{E}}_i(\cdot) = {\rm Tr}_{E_i}\{U_i(\cdot\otimes \rho_{E_i}) U_i^\dagger\}$, where $U_i \equiv \nE^{-\nI H_{SE_i}\Delta t}$
with $H_{SE_i}\equiv H_{S,i}'+H_{E_i}+gV_{SE_i}$. We expand $U_i$ in terms of $\Delta t$:
\begin{align}
    U_i &= \mathds{1} - \nI (H_{i}+ g V_{SE_i}) \Delta t - \frac{1}{2}(H_{i} + g V_{SE_i})^2 (\Delta t)^2\nonumber\\
    &\quad + O((\Delta t)^3),
\end{align}
where $H_i \equiv H_{S,i}' + H_{E_i}$.
The map $\hat{\mathcal{E}}_i(\cdot)$ can be written as
\begin{widetext}
    \begin{align}
    \hat{\mathcal{E}}_i(\cdot) &:= {\rm Tr}_{E_i}\{U_i(\cdot\otimes \rho_{E_i}) U_i^\dagger\}\\
    &= (\cdot) -\mathrm{i} \Delta t  [H_{S,i}', \cdot] + (g\Delta t)^2\mathrm{Tr}_{E_i}\left\{V_{SE_i}(\cdot\otimes\rho_{E_i}) V_{SE_i}\right\} - (g\Delta t)^2{\rm Tr}_{E_i}\left\{\frac{1}{2}[V_{SE_i}^2, \cdot\otimes\rho_{E_i}]_+\right\} \nonumber\\
    &\quad + O(g (\Delta t)^2) + O((\Delta t)^2) \\
    &= (\cdot) -\mathrm{i} \Delta t  [H_{S,i}', \cdot] + \Delta t\,\mathrm{Tr}_{E_i}\left\{V_{SE_i}(\cdot\otimes\rho_{E_i}) V_{SE_i}\right\} - \Delta t\,{\rm Tr}_{E_i}\left\{\frac{1}{2}[V_{SE_i}^2, \cdot\otimes\rho_{E_i}]_+\right\} + O((\Delta t)^{3/2})\\
    &= \mathcal{I}(\cdot) + \mathcal{L}^{\rm (CM)}_{i\Delta t}(\cdot)\Delta t + {\rm R_c}({\rm Tr}_{E_i}\{U_i(\cdot\otimes \rho_{E_i}) U_i^\dagger\}),
    \label{eq:R_c_for_E_k}
    \end{align}
\end{widetext}
where in the first line we used the fact that ${\rm Tr}_{E_i}\{V_{SE_i}\rho_{E_i}\} = 0$ as required for a vanishing Lamb shift, in the third line we used the fact that $g^2\Delta t = 1$, $\mathcal{L}_t^{\rm (CM)}$ is the continuous-time Lindbladian at time $t$, defined in Eq.~(\ref{eq:L_CM_def}) and ${\rm R_c}({\rm Tr}_{E_i}\{U_i(\cdot\otimes \rho_{E_i}) U_i^\dagger\})$ is the remainder of the expansion of $\hat{\mathcal{E}}_i$ with respect to $\Delta t$:
\begin{widetext}
    \begin{align}
        R_{\rm c}({\rm Tr}_{E_i}\{U_i(\cdot\otimes \rho_{E_i}) U_i^\dagger\}) &\equiv 
        \sum_{m=3}^\infty \frac{(-\nI \Delta t)^m}{m!}{\rm Tr}_{E_i}\{\mathcal{L}_{H_{SE_i}}^m(\cdot\otimes\gamma_{E_i})\}
        + (\Delta t)^2\,{\rm Tr}_{E_i}\left\{H_i(\cdot\otimes\rho_{E_i}) H_i\right\} \nonumber\\
        &\quad + (\Delta t)^{3/2}\,{\rm Tr}_{E_i}\left\{H_i(\cdot\otimes\rho_{E_i})V_{SE_i}\right\} + (\Delta t)^{3/2}\,{\rm Tr}_{E_i}\left\{V_{SE_i}(\cdot\otimes\rho_{E_i})H_i\right\},
        \label{appeq:R_c_explicit}
    \end{align}
    with $\mathcal{L}_{X}(\cdot):=[X, \cdot]$.
\end{widetext}
We note that there are three kinds of terms in ${\rm R_c}$ with different scalings of $\Delta t$~\cite{Cattaneo2021CMcansimulate}: one containing only $V_{SE_i}$ with the scaling $O((\Delta t)^{3/2})$, one containing only $H_{i}$ with the scaling $O((\Delta t)^2)$ and one containing both $V_{SE_i}$ and $H_i$ with the scaling $O((\Delta t)^{3/2})$.

In Prop.~\ref{prop:ME_CM_equivalence}, we have shown that $\mathcal{L}_t^{\rm (CM)} = \mathcal{L}_t$ for all $t$. Thus, we have $\mathcal{L}^{\rm (CM)}_{i\Delta t} = \mathcal{L}_{i\Delta t} \equiv \mathcal{L}_{i}$. The collision error is hence given by
\begin{align}
    \varepsilon_{\rm c}(i) &= \|\hat{\mathcal{E}}_i - (\mathcal{I}+\mathcal{L}_i\Delta t)\|_{1\rightarrow 1}\\
    &= \|{\rm R_c}({\rm Tr}_{E_i}\{U_i(\cdot\otimes \rho_{E_i}) U_i^\dagger\})\|_{1\rightarrow 1}.
\end{align}
Using the fact that trace norm is contractive under partial trace, we can remove the partial trace ${\rm Tr}_{E_i}$ and obtain an upper bound on $\varepsilon_{\rm c}(i)$:
\begin{align}
    \varepsilon_{\rm c}(i) &= \|{\rm R_c}({\rm Tr}_{E_i}\{U_i(\cdot\otimes \rho_{E_i}) U_i^\dagger\})\|_{1\rightarrow 1}\\
    &= \|{\rm Tr}_{E_i}\{{\rm R_c}[U_i(\cdot\otimes \rho_{E_i}) U_i^\dagger]\}\|_{1\rightarrow 1}\\
    &=\sup_{\|X_S\|_1= 1}\|{\rm Tr}_{E_i}\{{\rm R_c}[U_i(X_S \otimes \rho_{E_i})U_i^\dagger]\}\|_{1}\\
    &\le \sup_{\|X_S\|_1= 1}\|{\rm R_c}[U_i(X_S \otimes \rho_{E_i})U_i^\dagger]\|_{1}.
\end{align}
Here, from the second line onward, $R_{\rm c}$ denotes the collection of all contributions beyond second order in the expansion of the unitary operator $U_i$. Specifically, it comprises the sum over $m\ge 3$ in Eq.~(\ref{appeq:R_c_explicit}) together with the three additional terms that originate from the second-order expansion of $U_i$, prior to taking the partial trace over the environment.
We now write the remainder explicitly:
\begin{widetext}
    \begin{align}
        &\quad \sup_{\|X_S\|_1= 1}\|{\rm R_c}[U_i(X_S \otimes \rho_{E_i})U_i^\dagger]\|_{1} \nonumber\\
        &=\sup_{\|X_S\|_1= 1}\|{\rm R_c}[\nE^{-\nI \Delta t (H_i + gV_{SE_i})}(X_S \otimes \rho_{E_i})\nE^{\nI \Delta t (H_i + gV_{SE_i})}]\|_{1}\\
        &= 
        \sup_{\|X_S\|_1= 1}\Big\|\tilde{\sum}_{j,m =1}^\infty \frac{[-\nI \Delta t (H_i + gV_{SE_i})]^j}{j!}(X_S\otimes\rho_{E_i})\frac{[\nI \Delta t (H_i + gV_{SE_i})]^m}{m!}\Big\|_1\\
        &\le \sup_{\|X_S\|_1= 1}\tilde{\sum}_{j,m =1}^\infty\Big\|\frac{[-\nI \Delta t (H_i + gV_{SE_i})]^j}{j!}(X_S\otimes\rho_{E_i})\frac{[\nI \Delta t (H_i + gV_{SE_i})]^m}{m!}\Big\|_1\\
        &\le \sup_{\|X_S\|_1= 1}\tilde{\sum}_{j,m=1}^\infty \frac{(\Delta t \|H_i + gV_{SE_i}\|_\infty)^j}{j!}\|X_S\otimes\rho_{E_i}\|_1\frac{(\Delta t \|H_i + gV_{SE_i}\|_\infty)^m}{m!}\\
        &\le \sup_{\|X_S\|_1= 1}\tilde{\sum}_{j,m=1}^\infty \frac{[\Delta t (\|H_i\|_\infty + g\|V_{SE_i}\|_\infty)]^j}{j!}\|X_S\otimes\rho_{E_i}\|_1\frac{[\Delta t (\|H_i\|_\infty + g\|V_{SE_i}\|_\infty)]^m}{m!}\\
        &= {\rm R_c}\left(\nE^{2\Delta t \left(\|H_i\|_\infty + g\|V_{SE_i}\|_\infty\right)}\right).
    \end{align}
\end{widetext}
    Here, in the third line we introduced the notation $\tilde{\sum}$ to express the sum over all possible combinations in the remainder ${\rm R_c}$ [cf.~Eq.~(\ref{appeq:R_c_explicit})], which therefore is not the full summation $\sum$. The first and third inequalities come from the triangle inequalities of the trace norm and the operator norm, respectively, while the second inequality comes from H\"older's inequality [Eq.~(\ref{eq:Holder_ineq})] and the submultiplicativity of the operator norm. The remainder ${\rm R_c}$ in the last line can be written explicitly by decompose it into three different terms ${\rm R}_{\rm c}^{(V)}$, ${\rm R}_{\rm c}^{(H)}$ and ${\rm R}_{\rm c}^{(VH)}$, where the superscript denotes what terms are contained [see the discussion below Eq.~(\ref{eq:R_c_for_E_k})].
\begin{align}
    {\rm R_c}\left(\nE^{2\Delta t \left(\|H_i\|_\infty + g\|V_{SE_i}\|_\infty\right)}\right) &= {\rm R}_{\rm c}^{(V)} + {\rm R}_{\rm c}^{(H)} + {\rm R}_{\rm c}^{(VH)},
    \label{appeq:R_c_decomposition}
\end{align}
where
\begin{widetext}
    \begin{align}
    {\rm R}_{\rm c}^{(V)} &:= \sum_{m=3}^\infty \frac{(2\Delta t g \|V_{SE_i}\|_\infty)^m}{m!} 
    \le \frac{(2\Delta t g \|V_{SE_i}\|_\infty)^3}{3!}\nE^{2\Delta t g \|V_{SE_i}\|_\infty} = \frac{4\|V_{SE_i}\|_\infty^3 (\Delta t)^{3/2}}{3}\nE^{2\sqrt{\Delta t} \|V_{SE_i}\|_\infty}, \\
    {\rm R}_{\rm c}^{(H)} &:= \sum_{m=2}^\infty \frac{(2\Delta t \|H_i\|_\infty)^m}{m!}
    \le \frac{(2\Delta t \|H_i\|_\infty)^2}{2!}\nE^{2\Delta t \|H_i\|_\infty} =
    2\|H_i\|_\infty^2(\Delta t)^2\nE^{2\Delta t \|H_i\|_\infty}, \\
    {\rm R}_{\rm c}^{(VH)} &:= \sum_{m=2}^\infty \frac{(2\Delta t)^m \sum_{m'=1}^{m-1}\begin{pmatrix}\begin{smallmatrix}
         m\\
         m'
    \end{smallmatrix}
    \end{pmatrix}\|H_i\|_\infty^{m'}(g\|V_{SE_i}\|_\infty)^{m-m'}}{m!} \le 
    \sum_{m=2}^\infty \frac{(4\Delta t)^m\|H_i\|_\infty(g\|V_{SE_i}\|_\infty)^{m-1}}{m!}\\
    &\le \frac{(4\Delta t)^2\|H_i\|_\infty g\|V_{SE_i}\|_\infty}{2!}\nE^{4\Delta tg\|V_{SE_i}\|_\infty}
    = 8\|H_i\|_\infty \|V_{SE_i}\|_\infty(\Delta t)^{3/2}\nE^{4\sqrt{\Delta t}\|V_{SE_i}\|_\infty},
    \end{align}
\end{widetext}
where in ${\rm R}_{\rm c}^{(VH)}$ we used the condition that
\begin{align}
    g\|V_{SE_i}\|_\infty \ge \|H_i\|_\infty,
    \label{eq:gV_>_H_k}
\end{align}
which holds for small $\Delta t$ since $g^2\Delta t = 1$,
and the fact that $\sum_{m'=1}^{m-1}\begin{pmatrix}\begin{smallmatrix}
         m\\
         m'
    \end{smallmatrix}
    \end{pmatrix} \le \sum_{m'=0}^{m-1}\begin{pmatrix}\begin{smallmatrix}
         m\\
         m'
    \end{smallmatrix}
    \end{pmatrix} = 2^m$.
The final expression of all three remainders are obtained under the condition $g^2\Delta t = 1$.
Putting these remainders together, we obtain an upper bound on $\varepsilon_{\rm c}(i)$:
\begin{align}
    \varepsilon_{\rm c}(i) &\le \frac{4\|V_{SE_i}\|_\infty^3 (\Delta t)^{3/2}}{3}\nE^{2\sqrt{\Delta t} \|V_{SE_i}\|_\infty} \nonumber\\
    &\quad + 2\|H_i\|_\infty^2(\Delta t)^2\nE^{2\Delta t \|H_i\|_\infty} \nonumber\\
    &\quad + 8\|H_i\|_\infty \|V_{SE_i}\|_\infty(\Delta t)^{3/2}\nE^{4\sqrt{\Delta t}\|V_{SE_i}\|_\infty}.
    \label{eq:varepsilon_c(k)_gen_bound}
\end{align}
Now we recall that in our construction of ECTCMs, $H_{E_i} = H_E$ [Eq.~(\ref{eq:H_E_constructed})] is independent of $i$ and $\|H_{E}\|_\infty = \max_{\ell,\ell'=0,1,\dots,d_S-1}(\frac{\epsilon_\ell-\epsilon_{\ell'}}{2})$ where $\{\epsilon_\ell\}_{\ell=0}^{d_S-1}$ are the eigenvalues of $H_S$. Thus, we have 
\begin{align}
    \|H_i\|_\infty &= \|H_{S,i}' + H_E\|_\infty \\
    &\le \|H_{S,i}'\|_\infty + \|H_E\|_\infty \\
    &\le \|H_{S,i}'\|_\infty + \|H_S\|_\infty \\
    &\le 2\|\tilde{H}_{S,i}\|_\infty,
\end{align}
where $\tilde{H}_{S,i} := \arg\max_{X\in\{H_{S,i}', H_S\}}\|X\|_\infty$,
and the collision error can be bounded as
\begin{align}
    \varepsilon_{\rm c}(i) &\le \frac{4\|V_{SE_i}\|_\infty^3 (\Delta t)^{3/2}}{3}\nE^{2\sqrt{\Delta t} \|V_{SE_i}\|_\infty} \nonumber\\
    &\quad + 8\|\tilde{H}_{S,i}\|_\infty^2(\Delta t)^2\nE^{4\Delta t \|\tilde{H}_{S,i}\|_\infty} \nonumber\\
    &\quad + 16\|\tilde{H}_{S,i}\|_\infty \|V_{SE_i}\|_\infty(\Delta t)^{3/2}\nE^{4\sqrt{\Delta t}\|V_{SE_i}\|_\infty}.
    \label{eq:varepsilon_c(k)_bound}
\end{align}
The bound holds when Eq.~(\ref{eq:gV_>_H_k}) is satisfied, i.e., for $\Delta t$ sufficiently small. Alternatively, the bound also holds under the stronger condition that $g^2\Delta t = 1$ and:
\begin{align}
    \|V_{SE_i}\|_\infty \ge 2\|\tilde{H}_{S,i}\|_\infty\sqrt{\Delta t}.
    \label{eq:V_>_2H_S_sqrt_Delta_t}
\end{align}
\subsection{Total error}\label{app:total_error}
The total error $\varepsilon$ is bounded as follows [Eq.~(\ref{eq:varepsilon_bound_gen})]
\begin{align}
    \varepsilon \le n\max_{1\le i \le n}\varepsilon_{\rm s}(i) + n\max_{1\le i \le n}[\varepsilon_{\rm t}(i) + \varepsilon_{\rm c}(i)].
    \label{appeq:varepsilon_bound_gen}
\end{align}
We summarise in Table~\ref{tab:error_bounds} the error bounds derived above for $\varepsilon_{\rm s}(i)$, $\varepsilon_{\rm t}(i)$ and $\varepsilon_{\rm c}(i)$. The table presents both the general bounds for collision-model simulations of arbitrary GKSL dynamics and those specialised to the explicit construction introduced in Protocol~\ref{tab:protocol}. Using the above equation and referring to Table~\ref{tab:error_bounds}, a general bound on the total error $\varepsilon$ can be directly obtained. In the following, we focus on bounding the error $\varepsilon$ for our specific construction. 
{\renewcommand{\arraystretch}{1.3}
    \setcounter{table}{0}
    \begin{table*}
        \begin{ruledtabular}
        \begin{tabular}{ccc}
        Error types & General collision-model simulation & Protocol~\ref{tab:protocol} \\
        \hline
        $\varepsilon_{\rm s}(i)$  &  Eq.~(\ref{eq:varepsilon_s(k)_bound_w_L_w_Lip}) [Eq.~(\ref{eq:Lipschitz_continuity})] & Eq.~(\ref{eq:varepsilon_s(k)_bound_w_Lip}) [Eq.~(\ref{eq:Lipschitz_continuity})]\\
        $\varepsilon_{\rm t}(i)$  & Eq.~(\ref{eq:varepsilon_t(k)_gen_bound}) [Eq.~(\ref{eq:L_k_Delta_t_<_1})] & Eq.~(\ref{eq:varepsilon_t(k)_bound}) [Eq.~(\ref{eq:(H_S+V^2)_Delta_t_<_1})]\\
        $\varepsilon_{\rm c}(i)$  & Eq.~(\ref{eq:varepsilon_c(k)_gen_bound}) [Eq.~(\ref{eq:gV_>_H_k}) with $g^2\Delta t =1$] & Eq.~(\ref{eq:varepsilon_c(k)_bound}) [Eq.~(\ref{eq:V_>_2H_S_sqrt_Delta_t})]
        \end{tabular}
        \end{ruledtabular}
        \caption{Error bounds for general collision-model simulation and the explicit construction provided in Protocol~\ref{tab:protocol}. The equations in squared brackets specify the conditions under which the corresponding bounds hold.}
        \label{tab:error_bounds}
    \end{table*}
}

The upper bound of $\varepsilon_{\rm s}(i)$ is given in Eq.~(\ref{eq:varepsilon_s(k)_bound_w_Lip}),
\begin{align}
    &\quad \varepsilon_{\rm s}(i) \nonumber\\
    &\le \left[\frac{C_{\rm L}}{2} + \frac{4}{3}\max_{t\in[(i-1)\Delta t, i\Delta t]}\left(\|H_S'(t)\|_\infty + \|V_{SE}(t)\|_\infty^2\right)^2\right] \nonumber\\
    &\quad \times(\Delta t)^2,
\end{align}
with the assumption of the Lipschitz continuity of $\mathcal{L}_t$ [Eq.~(\ref{eq:Lipschitz_continuity})].

Using the upper bounds for $\varepsilon_{\rm t}(i)$ [Eq.~(\ref{eq:varepsilon_t(k)_bound})] and $\varepsilon_{\rm c}(i)$ [Eq.~(\ref{eq:varepsilon_c(k)_bound})], we have
\begin{align}
    \varepsilon_{\rm t}(i) + \varepsilon_{\rm c}(i) &\le 2\nE(\|H_{S,i}'\|_\infty + \|V_{SE_i}\|_\infty^2)^2 (\Delta t)^2\nonumber\\
    &\quad + \frac{4\|V_{SE_i}\|_\infty^3 (\Delta t)^{3/2}}{3}\nE^{2\sqrt{\Delta t} \|V_{SE_i}\|_\infty} \nonumber\\
    &\quad + 8\|\tilde{H}_{S,i}\|_\infty^2(\Delta t)^2\nE^{4\Delta t \|\tilde{H}_{S,i}\|_\infty} \nonumber\\
    &\quad + 16\|\tilde{H}_{S,i}\|_\infty \|V_{SE_i}\|_\infty(\Delta t)^{3/2}\nE^{4\sqrt{\Delta t}\|V_{SE_i}\|_\infty},
\end{align}
where $\tilde{H}_{S,i} := \arg\max_{X\in\{H_{S,i}', H_S\}}\|X\|_\infty$. The conditions for this bound to hold are [Eqs.~(\ref{eq:(H_S+V^2)_Delta_t_<_1}) and (\ref{eq:V_>_2H_S_sqrt_Delta_t})]
\begin{align}
    2(\|H_{S,i}'\|_\infty + \|V_{SE_i}\|_\infty^2)\Delta t &\le 1, \\
    2\|\tilde{H}_{S,i}\|_\infty\sqrt{\Delta t}&\le \|V_{SE_i}\|_\infty,
\end{align}
which is satisfied for $\Delta t$ sufficiently small. We note that the operator $V_{SE}(t)$ is the Hermitian operator constructed from $\mathcal{A}_t$ according to Lem.~\ref{lem:thermal_hermitian_dilation} and by construction, we have $V_{SE_i} = V_{SE}(i\Delta t)$ for the system--ancilla interaction $V_{SE_i}$. Denote
\begin{align}
    \|V_{\max}\|_{\infty} &:= \max_{t\in [0,\infty)}\|V_{SE}(t)\|_{\infty},\\
    \|\tilde{H}_{\max}\|_\infty &:= \max\left\{\max_{t\in [0,\infty)}\|H_{S}'(t)\|_\infty, \|H_S\|_\infty\right\}.
\end{align}
We have
\begin{align}
        \varepsilon &\le n\max_{1\le i \le n}\varepsilon_{\rm s}(i) + n\max_{1\le i \le n}[\varepsilon_{\rm t}(i) + \varepsilon_{\rm c}(i)] \\
        &\le n \left[\left(\frac{C_{\rm L}}{2}+(\frac{4}{3}+2\nE)(\|\tilde{H}_{\max}\|_\infty + \|V_{\max}\|_\infty^2)^2\right) (\Delta t)^2 \right. \nonumber\\
        &\quad + \frac{4\|V_{\max}\|_\infty^3 (\Delta t)^{3/2}}{3}\nE^{2\sqrt{\Delta t} \|V_{\max}\|_\infty} \nonumber\\
        &\quad + 8\|\tilde{H}_{\max}\|_\infty^2(\Delta t)^2\nE^{4\Delta t \|\tilde{H}_{\max}\|_\infty} \nonumber\\
        &\left. \quad + 16\|\tilde{H}_{\max}\|_\infty \|V_{\max}\|_\infty(\Delta t)^{3/2}\nE^{4\sqrt{\Delta t}\|V_{\max}\|_\infty}\right],
        \label{eq:error_bound}
\end{align}
where the second bound holds when
\begin{align}
    2(\|\tilde{H}_{\max}\|_\infty + \|V_{\max}\|_\infty^2)\Delta t &\le 1, \\
    2\|\tilde{H}_{\max}\|_\infty\sqrt{\Delta t}&\le \|V_{\max}\|_\infty.
\end{align}
Recalling that in the ECTCM, $n\equiv \tau/\Delta t$, we thus obtain Eq.~(\ref{eq:error_scaling}) in the main text, namely, $\varepsilon = O(\sqrt{\Delta t})$ for fixed $\tau$ and sufficiently small $\Delta t$.

\setcounter{section}{9}
\section{Example of MTO realisation with finite-sized environment}\label{app:finite_MTO_realisation}
In this appendix, we present an explicit example showing that an axiomatic thermal Lindbladian generates a Markovian thermal operation that admits a realisation with a finite-size environment.

Consider the following Lindbladian for a qubit system with the Hamiltonian $H_S = \epsilon_0 F_S^{00} + \epsilon_1 F_S^{11}$ where
$F_S^{jj'} \equiv \ketbra{j}{j'}_S$ is the transition operator in the system's energy eigenbasis and the Lindbladian of the dynamics is given by:
\begin{align}
\label{eqn::dissipator_ex}
    \mathcal{L}(\cdot) &= -\nI[H_S, \cdot] \nonumber + \mathcal{D}(\cdot),\\
    \mathcal{D}(\cdot) &= \frac{1}{2}\ln\left(\frac{2}{1+\Delta_\gamma^2}\right)\left(F_S^{00}(\cdot)F_S^{00} - \frac{1}{2}[F_S^{00}, \cdot]_+\right) \nonumber\\
    &\quad + \frac{1}{2}\ln\left(\frac{2}{1+\Delta_\gamma^2}\right)\left(F_S^{11}(\cdot)F_S^{11} - \frac{1}{2}[F_S^{11}, \cdot]_+\right) \nonumber\\
    &\quad + \gamma_0\ln 2\left(F_S^{01}(\cdot)F_S^{10} - \frac{1}{2}[F_S^{11}, \cdot]_+\right) \nonumber\\
    &\quad + \gamma_1\ln 2\left(F_S^{10}(\cdot)F_S^{01} - \frac{1}{2}[F_S^{00}, \cdot]_+\right).
\end{align}
where $\gamma_{0(1)} \equiv \nE^{-\beta \epsilon_{0(1)}}/(\nE^{-\beta \epsilon_{0}} + \nE^{-\beta \epsilon_{1}})$ are thermal populations at the inverse temperature $\beta$ and $\Delta_\gamma\equiv \gamma_0-\gamma_1$. 
Recalling the expression of the dissipator in terms of the coefficient $\xi_{jj'mm'}$ [Eq.~(\ref{eq:D_givenby_xi})], we can identify the non-vanishing coefficients:
\begin{align}
    &\xi_{0101} = \gamma_0 \ln 2,\\
    &\xi_{1010} = \gamma_1 \ln 2,\\
    &\xi_{0000} = \xi_{1111} = \frac{1}{2}\ln\left(\frac{2}{1+\Delta_\gamma^2}\right),
\end{align}
It is straightforward to check that $\mathcal{L}$ satisfies the axioms~\ref{axiom:I}--\ref{axiom:III}. The resulting map at $\tau=1$ is 
\begin{align}
    \mathcal{E}_{\tau=1} \equiv \nE^{\mathcal{L}} = \mathcal{U}_{\tau=1}\circ\nE^{\mathcal{D}},
\end{align}
where we used the time-translation symmetry of $\mathcal{D}$ directly following from $[\mathcal{L}, \mathcal{L}_{H_S}]=0$ and $\mathcal{U}_\tau(\cdot) := \nE^{-\nI H_S \tau}(\cdot)\nE^{\nI H_S \tau}$. Since $\mathcal{U}_\tau$ is a thermal operation, then if $\mathcal{T} = \nE^{\mathcal{D}}$ is a TO, so is $\mathcal{E}_{t=\tau} = \mathcal{U}_{t=\tau}\circ\mathcal{T}$. We thus only need to show that $\mathcal{T}$ is a TO that can be realised with a finite-dimensional environment for $\nE^{\mathcal{D}}$.

Vectorising the system state $\rho_S$ in the basis $\{\ket{00}_S, \ket{01}_S, \ket{10}_S, \ket{11}_S\}$, we can express the superoperator $\mathcal{D}$ as a matrix:
\begin{align}
    \bm{D} = \begin{pmatrix}
        -\gamma_1\ln 2 & 0 & 0 & \gamma_0\ln 2\\
        0 & \ln\Big(\frac{\sqrt{1+\Delta_\gamma^2}}{2}\Big) & 0 & 0\\
        0 & 0 & \ln\Big(\frac{\sqrt{1+\Delta_\gamma^2}}{2}\Big) & 0\\
        \gamma_1\ln 2 & 0 & 0 & -\gamma_0\ln 2
    \end{pmatrix}.
\end{align}
Thus, the map $\nE^{\mathcal{D}}$ can be written as
\begin{align}
    \nE^{\bm{D}} = \begin{pmatrix}
        1-\frac{\gamma_1}{2} & 0 & 0 & \frac{\gamma_0}{2}\\
        0 & \frac{\sqrt{1+\Delta_\gamma^2}}{2} & 0 & 0\\
        0 & 0 & \frac{\sqrt{1+\Delta_\gamma^2}}{2} & 0\\
        \frac{\gamma_1}{2} & 0 & 0 & 1-\frac{\gamma_0}{2}
    \end{pmatrix}.
\end{align}
This can be realised as a thermal operation with a copy of $S$ (denoted as $S'$ with $H_{S'}=H_S$) being the environment, i.e.,
\begin{align}
    \mathcal{T}(\cdot) = {\rm Tr}_{S'}\{U(\cdot\otimes \gamma_{S'})U^\dagger\},
\end{align}
where $U\equiv (\nE^{\nI H_S \arctan(\Delta_\gamma)/(\epsilon_0-\epsilon_1)}\otimes\mathds{1}_{S'})\nE^{-\nI\pi F_{\rm swap}/4}$ with $F_{\rm swap}\equiv \sum_{j,j'\in\{0,1\}}\ketbra{j}{j'}\otimes\ketbra{j'}{j}$ being the swap operator. It is easy to check that $[U, H_S\otimes\mathds{1}_{S'}+\mathds{1}_S\otimes H_{S'}] = 0$.

It is straightforward to check that the superoperator $\mathbf{T}$ of $\mathcal{T}$ expressed in the basis $\{\ket{00}_S, \ket{01}_S, \ket{10}_S, \ket{11}_S\}$ satisfies $\mathbf{T} = e^\mathbf{D}$, i.e., it corresponds to the dynamics generated by the dissipator $\mathcal{D}$ in Eq.~\eqref{eqn::dissipator_ex}.
Therefore, in this example, the dimension of the required environment is just 2, equal to the system dimension. 

\section{Optimal error scaling for $d_S = 2$}\label{app:scaling_for_d=2}
In this appendix, we prove that for two-level systems with nontrivial Hamiltonians, the simulation error of Protocol~\ref{tab:protocol} scales as $O(\Delta t)$ instead of $O(\sqrt{\Delta t})$, achieving the optimal scaling for collision models~\cite{correa2014quantum}.

We consider a general two-level system $S$ with the Hamiltonian $H_S = \epsilon_0\ketbra{0}{0}_S + \epsilon_1\ketbra{1}{1}_S$ where $\epsilon_0\neq \epsilon_1$. According to Protocol~\ref{tab:protocol}, the Hamiltonian of an ancilla $E$ that consists of three qubits is constructed as follows [Eq.~(\ref{eq:H_E_constructed})]:
\begin{align}
    H_E = \left(\frac{\epsilon_0-\epsilon_1}{2}\ketbra{01}{01} + \frac{\epsilon_1-\epsilon_0}{2}\ketbra{10}{10}\right)\otimes\mathds{1}_2,
    \label{appeq:H_E_2d}
\end{align}
where $\mathds{1}_2 \equiv \ketbra{0}{0}+\ketbra{1}{1}$, and
the system--ancilla interaction Hamiltonian $V_{SE}$ [Eq.~(\ref{eq:V_constructed})] is constructed as follows with the time dependence omitted for simplicity:
\begin{widetext}
    \begin{align}
        V_{SE} &= \sqrt{\frac{Z}{2}}\ketbra{0}{0}_S\otimes(\ketbra{00}{00} + \ketbra{11}{11})\otimes\underbrace{\left(\sum_{q,q'=0}^{1} {W_{00qq'}}\ketbra{q'}{q}\right)}_{\displaystyle =: Q_{00}} + \sqrt{Z}\ketbra{0}{1}_S\otimes\ketbra{10}{01}\otimes\underbrace{\left(\sum_{q,q'=0}^{1} {W_{01qq'}}\ketbra{q'}{q}\right)}_{\displaystyle =: Q_{01}} \nonumber\\
        &\quad + \sqrt{\frac{Z}{2}}\ketbra{1}{1}_S\otimes(\ketbra{00}{00} + \ketbra{11}{11})\otimes\underbrace{\left(\sum_{q,q'=0}^{1} {W_{11qq'}}\ketbra{q'}{q}\right)}_{\displaystyle =: Q_{11}} + \sqrt{Z}\ketbra{1}{0}_S\otimes\ketbra{01}{10}\otimes\underbrace{\left(\sum_{q,q'=0}^{1} {W_{10qq'}}\ketbra{q'}{q}\right)}_{\displaystyle =: Q_{10}},
        \label{appeq:V_SE_2d}
    \end{align}
\end{widetext}
where the first qubit corresponds to the system $S$, $Z \equiv {\rm Tr}\{\nE^{-\beta H_E}\}$ for some inverse temperature $\beta$, and the coefficients $W_{xx'yy'}$ come from a thermal Lindbladian on $S$ whose details are irrelevant here.

According to the explicit block structure of $V_{SE}$, we have the following lemma:
\begin{lemma}\label{applem:vanishing_odd_moments}
    The interaction Hamiltonian $V_{SE}$ in the form of Eq.~(\ref{appeq:V_SE_2d}) satisfies ${\rm Tr}_{E}\{(V_{SE})^n\gamma_E\} = 0$ for all odd $n>1$, if ${\rm Tr}_E\{V_{SE}\gamma_E\} = 0$ is satisfied, where $\gamma_E\equiv \nE^{-\beta H_E} / Z$.
\end{lemma}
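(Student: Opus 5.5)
The plan is to exploit the block structure of $V_{SE}$ in Eq.~(\ref{appeq:V_SE_2d}) together with the diagonal form of $\gamma_E$. Write $V_{SE}=A+B$, where
\begin{align}
A \equiv \sqrt{\tfrac{Z}{2}}\sum_{j=0}^{1}\ketbra{j}{j}_S\otimes P_0\otimes Q_{jj},
\end{align}
with $P_0\equiv \ketbra{00}{00}+\ketbra{11}{11}$, is the block acting on the zero-energy sector of the first two ancilla qubits. The remaining part is
\begin{align}
B\equiv \sqrt{Z}\left(\ketbra{0}{1}_S\otimes\ketbra{10}{01}\otimes Q_{01}+\ketbra{1}{0}_S\otimes\ketbra{01}{10}\otimes Q_{10}\right),
\end{align}
which maps the sector $\ket{01}\oplus\ket{10}$ to itself with an off-diagonal swap. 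The projectors $P_0$ and $P_1\equiv\ketbra{01}{01}+\ketbra{10}{10}$ are complementary, so $AB=BA=0$ and therefore $V_{SE}^n=A^n+B^n$ for all $n\ge 1$.

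For the $B$-part, compute powers directly. $B^2$ is block diagonal: it contains $\ketbra{0}{0}_S\otimes\ketbra{10}{10}\otimes Q_{01}Q_{10}$ and the analogous term with $\ketbra{1}{1}_S$ and $\ketbra{01}{01}$. Consequently every odd power $B^{2k+1}=(B^2)^kB$ is purely off-diagonal in the first two ancilla qubits, of the form $\ketbra{10}{01}$ or $\ketbra{01}{10}$. Since $\gamma_E$ is diagonal in this basis, ${\rm Tr}_E\{B^{2k+1}\gamma_E\}=0$ for all $k$, with no assumption needed.

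For the $A$-part, $A^n=(Z/2)^{n/2}\sum_j \ketbra{j}{j}_S\otimes P_0\otimes Q_{jj}^n$. On the $P_0$ sector $\gamma_E$ restricts to $P_0\otimes\mathds{1}_2/Z$, because these levels have zero energy. Hence ${\rm Tr}_E\{A^n\gamma_E\}\propto\sum_j {\rm Tr}\{Q_{jj}^n\}\ketbra{j}{j}_S$. The $n=1$ assumption, equivalently Eq.~(\ref{eq:condition_on_W_for_LS}) restricted to $j=j'$, gives ${\rm Tr}\,Q_{jj}=\sum_q W_{jjqq}=0$. The main step is then to show that ${\rm Tr}\,Q_{jj}^n=0$ for all odd $n$. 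Note that $Q_{jj}$ is a $2\times 2$ matrix, and it is Hermitian: by the symmetry $W_{xx'yy'}=W^*_{x'xy'y}$ one gets $(Q_{jj})_{q'q}=W_{jjqq'}=W^*_{jjq'q}=(Q_{jj})^*_{qq'}$. A traceless Hermitian $2\times2$ matrix has eigenvalues $\pm\lambda$, so its odd powers are traceless. This is where $d_S=2$ is essential, and it is the crux of the argument. Combining the two parts, ${\rm Tr}_E\{V_{SE}^n\gamma_E\}=0$ for all odd $n$.
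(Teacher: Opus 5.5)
Your proposal is correct and follows essentially the same route as the paper. You split off the zero-energy block, on which $\gamma_E$ is flat and the partial trace reduces to ${\rm Tr}\{Q_{jj}^n\}$, from the off-diagonal swap block, whose odd powers have no diagonal entries. The only difference is the final step: the paper uses Cayley--Hamilton, $Q_{jj}^2=-\det(Q_{jj})\mathds{1}_2$, where you use Hermiticity of $Q_{jj}$. Hermiticity is not actually needed, since any traceless $2\times 2$ matrix has eigenvalues $\pm\lambda$, so its odd powers are traceless regardless.
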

\begin{proof}
    Firstly, we note that $V_{SE}$ can be written as the direct sum of two block matrices $V_{SE}^{(1)}$ and $V_{SE}^{(2)}$ in the bases $\{\ket{000}, \ket{011} \ket{100}, \ket{111}\}$ and $\{\ket{010}, \ket{101}\}$, respectively, i.e., $V_{SE} = \sqrt{Z/2}V_{SE}^{(1)}\oplus \sqrt{Z}V_{SE}^{(2)}$, where
    \begin{align}
        V_{SE}^{(1)} &= \begin{pNiceMatrix}[first-row, first-col]
                  & \ket{000} & \ket{011} & \ket{100} & \ket{111}\\
        \ket{000} & Q_{00} & 0 & 0 & 0\\
        \ket{011} & 0 & Q_{00} & 0 & 0\\
        \ket{100} & 0 & 0 & Q_{11} & 0\\
        \ket{111} & 0 & 0 & 0 & Q_{11}
        \end{pNiceMatrix}, \label{appeq:V_SE^(1)}\\
        V_{SE}^{(2)} &= \begin{pNiceMatrix}[first-row, first-col]
                  & \ket{010} & \ket{101}\\
        \ket{010} & 0 & Q_{01} \\
        \ket{101} & Q_{10} & 0
        \end{pNiceMatrix},
        \label{appeq:V_SE^(2)}
    \end{align}
    with the four matrices $Q_{00}$, $Q_{01}$, $Q_{10}$ and $Q_{11}$ defined in Eq.~(\ref{appeq:V_SE_2d}). Note that in the bases used in Eqs.~(\ref{appeq:V_SE^(1)}) and (\ref{appeq:V_SE^(2)}), the first qubit corresponds to the system. We now consider 
    \begin{align}
        {\rm Tr}_{E}\{(V_{SE})^n\gamma_E\} &= \left(\sqrt{Z}/2\right)^{n/2}{\rm Tr}_{E}\{(V_{SE}^{(1)})^n\gamma_E\}\nonumber\\
        &\quad \oplus Z^{n/2} {\rm Tr}_{E}\{(V_{SE}^{(2)})^n\gamma_E\},
    \end{align}
    for odd $n\ge 1$. From the structure of $V_{SE}^{(2)}$ [Eq.~(\ref{appeq:V_SE^(2)})], it is straightforward to verify that all odd powers $(V_{SE}^{(2)})^n$ have vanishing diagonal entries. However, since $\gamma_E$ is diagonal in the energy eigenbasis of $E$, it follows that ${\rm Tr}_{E}\{(V_{SE}^{(2)})^n\gamma_E\} = 0$ for all odd $n\ge 1$. Hence, in order to show ${\rm Tr}_{E}\{(V_{SE})^n\gamma_E\} = 0$, we only need to show that ${\rm Tr}_{E}\{(V_{SE}^{(1)})^n\gamma_E\} = 0$.

    According to the structure of $H_E$ [Eq.~(\ref{appeq:H_E_2d})], $\gamma_E$ is fully degenerate on the relevant support of $V_{SE}^{(1)}$. Since $(V_{SE}^{(1)})^n$ remains block-diagonal for all $n$, we have
    ~\\
    \begin{align}
        {\rm Tr}_{E}\{(V_{SE}^{(1)})^n\gamma_E\} &\propto\, \begin{pNiceMatrix}[first-row, first-col]
                  & \ket{0}_S & \ket{1}_S \\
        \ket{0}_S & {\rm Tr}\{(Q_{00})^n\} & 0\\
        \ket{1}_S & 0 & {\rm Tr}\{(Q_{11})^n\}
        \end{pNiceMatrix}.
    \end{align}
    When ${\rm Tr}_E\{V_{SE}\gamma_E\} = 0$ is satisfied, we have ${\rm Tr}_{E}\{V_{SE}^{(1)}\gamma_E\}=0$, and thus, ${\rm Tr}\{Q_{00}\} = {\rm Tr}\{Q_{11}\} = 0$.
    Using the Cayley-Hamilton Theorem~\citep[][p.~312]{axler2024linear} for two-dimensional matrices, we have
    \begin{align}
        Q_{00}^2 &= {\rm Tr}\{Q_{00}\}Q_{00} - {\rm det}(Q_{00})\mathds{1}_2 = - {\rm det}(Q_{00})\mathds{1}_2,\\
        Q_{11}^2 &= {\rm Tr}\{Q_{11}\}Q_{11} - {\rm det}(Q_{11})\mathds{1}_2 = - {\rm det}(Q_{11})\mathds{1}_2.
    \end{align}
    Therefore, all odd powers of $Q_{00}$ and $Q_{11}$ are traceless. This leads to ${\rm Tr}_{E}\{(V_{SE}^{(1)})^n\gamma_E\} = 0$, and therefore, ${\rm Tr}_{E}\{(V_{SE})^n\gamma_E\} = 0$ for all odd $n\ge 1$.
\end{proof}
Since in Protocol~\ref{tab:protocol}, ${\rm Tr}_E\{V_{SE}\gamma_E\}$ appears as the Lamb shift, which vanishes in ECTCMs, by Lem.~\ref{applem:vanishing_odd_moments}, it follows that ${\rm Tr}_{E}\{(V_{SE})^n\gamma_E\} = 0$ for all odd $n\ge 1$.
We now show how the vanishing odd moments ${\rm Tr}_{E}\{(V_{SE})^n\gamma_E\}$ lead to the optimal error scaling. 

As discussed around Eq.~(\ref{eq:R_c_for_E_k}), the single-collision error $\varepsilon_{\rm c}(i) = \|\hat{\mathcal{E}}_i - (\mathcal{I}+\mathcal{L}_i\Delta t)\|_{1\rightarrow 1}$ can be decomposed into three parts: terms involving only $H_i\equiv H_{S,i}' + H_{E_i}$, whose lowest-order contribution is of order $H_i^2$, terms involving only $V_{SE_i}$ with lowest-order contribution arising at $(V_{SE_i})^3$, and mixed terms containing both $H_i$ and $V_{SE_i}$ with the lowest-order term as $H_i V_{SE_i}$. Since $H_{S,i}$ is accompanied by a factor of $\Delta t$, while $V_{SE_i}$ is multiplied by $\sqrt{\Delta t}$, these three contributions therefore scale as $O((\Delta t)^2)$, $O((\Delta t)^{3/2})$ and $O((\Delta t)^{3/2})$, respectively, in the limit of sufficiently small $\Delta t$. Now, when all odd moments ${\rm Tr}_{E_i}\{(V_{SE_i})^n\gamma_{E_i}\}$ vanish, the lowest-order contributions in the second and the third terms vanish, rendering all three contributions of order $O((\Delta t)^2)$. Therefore, substituting this scaling of $\varepsilon_{\rm c}(i)$ into the total error $\varepsilon$ [Eq.~(\ref{appeq:varepsilon_bound_gen})] and noting that $n\equiv \tau/\Delta t$, we find that after $n$ collisions, the error $\varepsilon$ scales as $n O((\Delta t)^2) = O(\Delta t)$, achieving the desired scaling.

More explicitly, following the similar lines as in Ref.~\citep[][Suppl. Mat.]{Cattaneo2021CMcansimulate}, we modify the expansions of the remainders $R_{\rm c}^{(V)}$ and $R_{\rm c}^{(VH)}$ in Eq.~(\ref{appeq:R_c_decomposition}) by removing the vanishing terms with odd order of $V_{SE_i}$:
\begin{widetext}
    \begin{align}
        {\rm R}_{\rm c}^{(V)} &= \sum_{m=2}^\infty \frac{(2\Delta t g \|V_{SE_i}\|_\infty)^{(2m)}}{(2m)!} \le \frac{(2\Delta t g \|V_{SE_i}\|_\infty)^4}{4!}\cosh(2\Delta t g \|V_{SE_i}\|_\infty) \nonumber\\
        & = \frac{2}{3}\|V_{SE_i}\|_\infty^4 (\Delta t)^2 \cosh(2\sqrt{\Delta t} \|V_{SE_i}\|_\infty),\\
        {\rm R}_{\rm c}^{(VH)} &= \sum_{m=1}^\infty \frac{(2\Delta t)^{2m+1} \sum_{m'=1}^{m}\begin{pmatrix}\begin{smallmatrix}
         2m+1\\
         2m'
    \end{smallmatrix}
    \end{pmatrix}\|H_i\|_\infty^{2m-2m'+1}(g\|V_{SE_i}\|_\infty)^{2m'}}{(2m+1)!} \nonumber \\
    &\quad + \sum_{m=2}^\infty \frac{(2\Delta t)^{2m} \sum_{m'=1}^{m-1}\begin{pmatrix}\begin{smallmatrix}
         2m\\
         2m'
    \end{smallmatrix}
    \end{pmatrix}\|H_i\|_\infty^{2m-2m'}(g\|V_{SE_i}\|_\infty)^{2m'}}{(2m)!}\\
    &\le \|H_i\|_\infty\sum_{m=1}^\infty\frac{(4\Delta t)^{2m+1}(g\|V_{SE_i}\|_\infty)^{2m}}{(2m+1)!} + \|H_i\|_\infty^2 \sum_{m=2}^\infty \frac{(4\Delta t)^{2m}(g\|V_{SE_i}\|_\infty)^{2m-2}}{(2m)!}\\
    &\le 8\|H_i\|_\infty\|V_{SE_i}\|_\infty (\Delta t)^{3/2}\sinh(4\sqrt{\Delta t}\|V_{SE_i}\|_\infty) + \frac{32}{3}\|H_i\|_\infty^2\|V_{SE_i}\|_\infty^2(\Delta t)^3\cosh(4\sqrt{\Delta t}\|V_{SE_i}\|_\infty),
    \end{align}
\end{widetext}
where in the first inequality in $R_{\rm c}^{(VH)}$, we used the condition $g\|V_{SE_i}\|_\infty \ge \|H_i\|_\infty$ and the fact that $\sum_{m'=1}^{m}\begin{pmatrix}
    \begin{smallmatrix}
         2m+1\\
         2m'
    \end{smallmatrix}
    \end{pmatrix}$ and $\sum_{m'=1}^{m-1}\begin{pmatrix}\begin{smallmatrix}
         2m\\
         2m'
    \end{smallmatrix}
    \end{pmatrix}$ are upper bounded by $2^{2m+1}$ and $2^{2m}$, respectively. Since for small $\Delta t$, $\sinh(4\sqrt{\Delta t}\|V_{SE_i}\|_\infty)$ scales as $O(\sqrt{\Delta t})$, $R_{\rm c}^{(VH)}$ in total scales as $O((\Delta t)^2)$. The same holds for $R_{\rm c}^{(V)}$. One can therefore use the modified bounds on $R_{\rm c}^{(V)}$ and $R_{\rm c}^{(VH)}$ to bound $\varepsilon_{\rm c}(i)$, and hence $\varepsilon$, which consequently scales an $O(\Delta t)$ scaling upon taking $n\equiv \tau/\Delta t$.

\begin{figure}[]
    \centering
    \includegraphics[width=\linewidth]{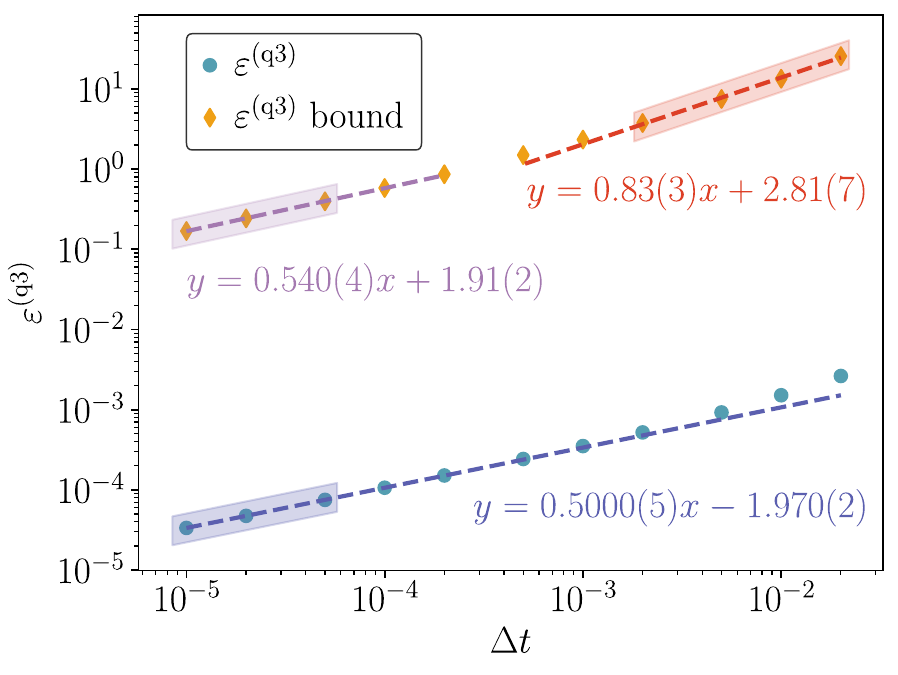}
    \caption{Error scaling of the ECTCM simulation of the qutrit model. The initial state is $\rho_S(0)=\ketbra{\psi}{\psi}_S$ with $\ket{\psi}_S=\sum_{j=0}^2 \ket{j}_S/\sqrt{3}$. Similar error scaling is observed for other initial states.
    The evolution time is $\tau=2$. The system remains out of equilibrium at time $\tau$, as quantified by $\|\rho_S^{\rm (q3)}(\tau)-\gamma_S\|_1 \simeq 0.32$. The error $\varepsilon^{\rm (q3)}$ and its bound [Eq.~(\ref{eq:error_bound})] are analysed via linear fits in the log-log plot: the blue-dashed and purple-dashed curved are fitted using the first four data points of $\varepsilon^{\rm (q3)}$ and its bound, respectively, while
    the red-dashed curve is obtained from the last four data points of the $\varepsilon^{\rm (q3)}$ bound, highlighted by the shaded regions. Uncertainties of fitted slopes and intercepts are given in parentheses on the last digit.}
    \label{fig:error_scaling_qutrit}
\end{figure}
    
    Since the proof of the vanishing odd moments ${\rm Tr}_{E_i}\{(V_{SE_i})^n\gamma_{E_i}\}$ (Lem.~\ref{applem:vanishing_odd_moments}) highly depends on the condition $d_S = 2$, in higher dimensional systems, we do not expect this favourable scaling of simulation errors to hold. Here, we present a concrete example of a three-dimensional system and show that the simulation error of Protocol~\ref{tab:protocol} scales as $O(\sqrt{\Delta t})$ for sufficiently small $\Delta t$.

    Let $S$ be a three-level system with two degenerate ground states, whose Hamiltonian is given by $H_S = \ln \sqrt{2}\ketbra{2}{2}_S$. A thermal Lindbladian $\mathcal{L}^{\rm (q3)}$ on $S$ satisfying Axioms~\ref{axiom:I}--\ref{axiom:III} is given by $\mathcal{L}^{\rm (q3)}(\cdot) = -\nI[H_S, \cdot] + \mathcal{D}^{\rm (q3)}(\cdot)$ where the background temperature $\beta = 2$ and the dissipator
    \begin{align}
        \mathcal{D}^{\rm (q3)}(\cdot) &= \sum_{j,j',m,m'=0}^{2}\xi_{jj'mm'}^{\rm (q3)}F_S^{jj'}(\cdot)(F_S^{mm'})^\dagger \nonumber\\
        &\quad - \sum_{j,j',m,m'=0}^{2}\xi_{jj'mm'}^{\rm (q3)}\frac{1}{2}[(F_S^{mm'})^\dagger F_S^{jj'}, (\cdot)]_+,
\end{align}
is charaterised by the coefficient $\xi_{jj'mm'}^{\rm (q3)}$ with the transition operators defined as $\{F_{S}^{jj'}\equiv\ketbra{j}{j'}_S\}_{j,j'=0}^{2}$. Grouping $(j,j')$ and $(m,m')$, respectively, $\xi_{jj'mm'}^{\rm (q3)}$ can be written as a complex matrix $\bm{\xi}\in\mathbb{C}^{9\times 9}$.
Concretely, we consider the following specific $\bm{\xi}^{\rm (q3)}$:
\begin{widetext}
    \begin{align}
        \bm{\xi}^{\rm (q3)} = \begin{pNiceMatrix}[first-row, first-col]
              & (0,0) & (0,1) & (0,2) & (1,0) & (1,1) & (1,2) & (2,0) & (2,1) & (2,2) \\
        (0,0) & 5 & -4 & 0 & -4 & -3 & 0 & 0 & 0 & -3 \\
        (0,1) & -4 & 5 & 0 & 3 & 2 & 0 & 0 & 0 & 2\\
        (0,2) & 0 & 0 & 3 & 0 & 0 & 2 & 0 & 0 & 0\\
        (1,0) & -4 & 3 & 0 & 5 & 2 & 0 & 0 & 0 & 2\\
        (1,1) & -3 & 2 & 0 & 2 & 5 & 0 & 0 & 0 & 1\\
        (1,2) & 0 & 0 & 2 & 0 & 0 & 3 & 0 & 0 & 0\\
        (2,0) & 0 & 0 & 0 & 0 & 0 & 0 & 1.5 & 1 & 0\\
        (2,1) & 0 & 0 & 0 & 0 & 0 & 0 & 1 & 1.5 & 0\\
        (2,2) & -3 & 2 & 0 & 2 & 1 & 0 & 0 & 0 & 5
        \end{pNiceMatrix} \times 0.1,
    \end{align}
    which satisfies the conditions [Eqs.~(\ref{eq:hermiticity_xi}), (\ref{eq:PSD_xi}), (\ref{eq:xi_energy_gaps_matching}) and (\ref{eq:QDB_xi})] we require for a thermal Lindbladian.
\end{widetext}

Following Protocol~\ref{tab:protocol}, we can construct an ECTCM to simulate this qutrit's dynamics.
For an initial state $\rho_S(0)$, we write the final state at time $\tau$ as $\rho_S(\tau)\equiv \nE^{\mathcal{L}^{\rm (q3)}\tau}(\rho_S(0))$, and the final state after $n=\tau/\Delta t$ steps in the ECTCM with finite collision time $\Delta t$ as $\rho_{S,n}^{\rm (CM)}\equiv \mathcal{E}_n\left(\rho_S(0)\right)$, where $\mathcal{E}_n$ is the collision map [Eq.~(\ref{eq:collision_map_E_n})]. The simulation error $\varepsilon^{\rm (q3)} \equiv \|\rho_S(\tau) - \rho_{S,n}^{\rm (CM)}\|_1$ is upper-bounded by the $1\rightarrow 1$ norm $\|\nE^{\mathcal{L}^{\rm (q3)}\tau} - \mathcal{E}_n\|_{1\rightarrow 1}$, and is therefore bounded by Eq.~(\ref{eq:error_bound}).

In Fig.~\ref{fig:error_scaling_qutrit}, we plot the scalings of $\varepsilon^{\rm (q3)}$ and its bound [Eq.~(\ref{eq:error_bound})] as functions of the collision time $\Delta t$, for a fixed initial state $\rho_S^{\rm (q3)}(0)$. The decay of $\varepsilon^{\rm (q3)}$ with decreasing $\Delta t$ demonstrates the faithful ECTCM simulation of $\mathcal{L}^{\rm (q3)}$. Besides, $\varepsilon^{\rm (q3)}$ follows a trend consistent with the prediction of the upper bound [Eq.~(\ref{eq:error_bound})], exhibiting a crossover from a faster decay regime to an $O(\sqrt{\Delta t})$ scaling in the small $\Delta t$ limit. 

\vfill
\FloatBarrier
\bibliography{references}

@article{Lostaglio2017coherence,
  title = {Markovian evolution of quantum coherence under symmetric dynamics},
  author = {Lostaglio, Matteo and Korzekwa, Kamil and Milne, Antony},
  journal = {Phys. Rev. A},
  volume = {96},
  issue = {3},
  pages = {032109},
  numpages = {20},
  year = {2017},
  month = {Sep},
  publisher = {American Physical Society},
  doi = {10.1103/PhysRevA.96.032109},
  url = {https://link.aps.org/doi/10.1103/PhysRevA.96.032109}
}

@article{horodecki2013fundamental,
  title={Fundamental limitations for quantum and nanoscale thermodynamics},
  author={Horodecki, Micha{\l} and Oppenheim, Jonathan},
  journal={Nat. Commun.},
  volume={4},
  number={1},
  pages={2059},
  year={2013},
  publisher={Nature Publishing Group UK London},
  doi={10.1038/ncomms3059}
}

@article{lostaglio2019introductory,
  title={An introductory review of the resource theory approach to thermodynamics},
  author={Lostaglio, Matteo},
  journal={Rep. Prog. Phys.},
  volume={82},
  number={11},
  pages={114001},
  year={2019},
  publisher={IOP Publishing},
  doi={10.1088/1361-6633/ab46e5}
}

@article{ciccarello2013CMnonMarkovian,
  title = {Collision-model-based approach to non-Markovian quantum dynamics},
  author = {Ciccarello, F. and Palma, G. M. and Giovannetti, V.},
  journal = {Phys. Rev. A},
  volume = {87},
  issue = {4},
  pages = {040103},
  numpages = {5},
  year = {2013},
  month = {Apr},
  publisher = {American Physical Society},
  doi = {10.1103/PhysRevA.87.040103},
  url = {https://link.aps.org/doi/10.1103/PhysRevA.87.040103}
}

@article{ciccarello2022quantum,
  title={Quantum collision models: Open system dynamics from repeated interactions},
  author={Ciccarello, Francesco and Lorenzo, Salvatore and Giovannetti, Vittorio and Palma, G Massimo},
  journal={Phys. Rep.},
  volume={954},
  pages={1--70},
  year={2022},
  doi={10.1016/j.physrep.2022.01.001},
  publisher={Elsevier}
}

@book{breuer2002theory,
  title={The theory of open quantum systems},
  author={Breuer, Heinz-Peter and Petruccione, Francesco},
  year={2002},
  doi={10.1093/acprof:oso/9780199213900.001.0001},
  publisher={OUP Oxford}
}

@article{son2024hierarchy,
  title={A hierarchy of thermal processes collapses under catalysis},
  author={Son, Jeongrak and Ng, Nelly HY},
  journal={Quantum Sci. Technol.},
  volume={10},
  number={1},
  pages={015011},
  year={2024},
  doi={10.1088/2058-9565/ad7ef5},
  publisher={IOP Publishing}
}

@article{spaventa2022capacity,
  title     = {Capacity of non-{M}arkovianity to boost the efficiency of molecular switches},
  author    = {Spaventa, Giovanni and Huelga, Susana F and Plenio, Martin B},
  journal   = {Phys. Rev. A},
  volume    = {105},
  number    = {1},
  pages     = {012420},
  year      = {2022},
  publisher = {APS},
  url       = {https://arxiv.org/abs/2103.14534}
}

@article{HOLEVO1993211,
title = {A note on covariant dynamical semigroups},
journal = {Rep. Math. Phys.},
volume = {32},
number = {2},
pages = {211-216},
year = {1993},
issn = {0034-4877},
doi = {https://doi.org/10.1016/0034-4877(93)90014-6},
author = {A.S. Holevo},
}

@article{ROGA2010311,
title = {Davies maps for qubits and qutrits},
journal = {Rep. Math. Phys.},
volume = {66},
number = {3},
pages = {311-329},
year = {2010},
issn = {0034-4877},
doi = {https://doi.org/10.1016/S0034-4877(11)00003-6},
url = {https://www.sciencedirect.com/science/article/pii/S0034487711000036},
author = {Wojciech Roga and Mark Fannes and Karol Życzkowski}
}

@article{Lostaglio2022Continuous,
  title = {Continuous thermomajorization and a complete set of laws for {M}arkovian thermal processes},
  author = {Lostaglio, Matteo and Korzekwa, Kamil},
  journal = {Phys. Rev. A},
  volume = {106},
  issue = {1},
  pages = {012426},
  numpages = {18},
  year = {2022},
  month = {Jul},
  publisher = {American Physical Society},
  doi = {10.1103/PhysRevA.106.012426},
  url = {https://link.aps.org/doi/10.1103/PhysRevA.106.012426}
}

@article{Stinespring_1955,
    author = {Stinespring, W. Forrest},
    title = {{Positive functions on $C^{\ast}$-algebras}},
    journal = {Proc. Am. Math. Soc.},
    volume = {6},
    pages = {211},
    year = {1955},
    doi = {10.1090/S0002-9939-1955-0069403-4},
    url = {https://doi.org/10.1090/S0002-9939-1955-0069403-4}
}

@article{Cattaneo2021CMcansimulate,
  title = {{Collision Models Can Efficiently Simulate Any Multipartite Markovian Quantum Dynamics}},
  author = {Cattaneo, Marco and De Chiara, Gabriele and Maniscalco, Sabrina and Zambrini, Roberta and Giorgi, Gian Luca},
  journal = {Phys. Rev. Lett.},
  volume = {126},
  issue = {13},
  pages = {130403},
  numpages = {8},
  year = {2021},
  month = {Apr},
  publisher = {American Physical Society},
  doi = {10.1103/PhysRevLett.126.130403},
  url = {https://link.aps.org/doi/10.1103/PhysRevLett.126.130403}
}

@article{Kliesch2011DissipativeQCTtheorem,
  title = {{Dissipative Quantum Church-Turing Theorem}},
  author = {Kliesch, M. and 
  Barthel, T. and Gogolin, C. and Kastoryano, M. and Eisert, J.},
  journal = {Phys. Rev. Lett.},
  volume = {107},
  issue = {12},
  pages = {120501},
  numpages = {5},
  year = {2011},
  month = {Sep},
  publisher = {American Physical Society},
  doi = {10.1103/PhysRevLett.107.120501},
  url = {https://link.aps.org/doi/10.1103/PhysRevLett.107.120501}
}

@article{Dollard1977ProductIntegrals,
    author = {Dollard, John D. and Friedman, Charles N.},
    title = {Product integrals and the {S}chrödinger equation},
    journal = {J. Math. Phys.},
    volume = {18},
    number = {8},
    pages = {1598-1607},
    year = {1977},
    month = {08},
    issn = {0022-2488},
    doi = {10.1063/1.523446},
    url = {https://doi.org/10.1063/1.523446},
}

@book{Dollard_Friedman_1984, 
    place={Cambridge}, 
    series={Encyclopedia of Mathematics and its Applications}, 
    title={Product Integration with Application to Differential Equations}, 
    publisher={Cambridge University Press}, 
    author={Dollard, John Day and Friedman, Charles N.}, 
    year={1984}, 
    doi={10.1017/CBO9781107340701},
    collection={Encyclopedia of Mathematics and its Applications}
}

@book{watrous2018theory,
  title={The theory of quantum information},
  author={Watrous, John},
  year={2018},
  doi={10.1017/9781316848142},
  publisher={Cambridge University Press}
}

@article{vomEnde2023ExploringLimits,
author = {vom Ende, Frederik and Malvetti, Emanuel and Dirr, Gunther and Schulte-Herbr\"{u}ggen, Thomas},
title = {{Exploring the Limits of Controlled Markovian Quantum Dynamics with Thermal Resources}},
journal = {Open Syst. Inf. Dyn},
volume = {30},
number = {01},
pages = {2350005},
year = {2023},
doi = {10.1142/S1230161223500051}
}

@article{AMORIM2021389,
title = {Complete positivity and self-adjointness},
journal = {Linear Algebra Appl.},
volume = {611},
pages = {389-439},
year = {2021},
issn = {0024-3795},
doi = {https://doi.org/10.1016/j.laa.2020.10.038},
author = {Érik Amorim and Eric A. Carlen}
}

@article{Alhambra2017Dynamicalmaps,
  title = {Dynamical maps, quantum detailed balance, and the {P}etz recovery map},
  author = {Alhambra, \'Alvaro M. and Woods, Mischa P.},
  journal = {Phys. Rev. A},
  volume = {96},
  issue = {2},
  pages = {022118},
  numpages = {12},
  year = {2017},
  month = {Aug},
  publisher = {American Physical Society},
  doi = {10.1103/PhysRevA.96.022118},
  url = {https://link.aps.org/doi/10.1103/PhysRevA.96.022118}
}

@book{alicki2007quantum,
  title={Quantum dynamical semigroups and applications},
  author={Alicki, Robert},
  year={2007},
  doi={10.1007/3-540-70861-8},
  publisher={Springer}
}

@article{brandao2013resource,
  title={Resource theory of quantum states out of thermal equilibrium},
  author={Brand{\~a}o, Fernando GSL and Horodecki, Micha{\l} and Oppenheim, Jonathan and Renes, Joseph M and Spekkens, Robert W},
  journal={Phys. Rev. Lett.},
  volume={111},
  number={25},
  pages={250404},
  year={2013},
  doi={10.1103/PhysRevLett.111.250404},
  publisher={APS}
}

@article{davies1974markovian,
  title={Markovian master equations},
  author={Davies, E Brian},
  journal={Commun. Math. Phys.},
  volume={39},
  number={2},
  pages={91--110},
  year={1974},
  doi={10.1007/BF01608389},
  publisher={Springer}
}

@article{moroder_thermodynamics_2024,
    title = {Thermodynamics of the {Quantum} {Mpemba} {Effect}},
    volume = {133},
    issn = {0031-9007, 1079-7114},
    url = {https://link.aps.org/doi/10.1103/PhysRevLett.133.140404},
    doi = {10.1103/PhysRevLett.133.140404},
    number = {14},
    urldate = {2025-09-05},
    journal = {Phys. Rev. Lett.},
    author = {Moroder, Mattia and Culhane, Oisín and Zawadzki, Krissia and Goold, John},
    month = {oct},
    year = {2024},
    pages = {140404},
}

@article{lacroix_making_2025,
    title = {Making quantum collision models exact},
    volume = {8},
    issn = {2399-3650},
    url = {https://www.nature.com/articles/s42005-025-02201-2},
    doi = {10.1038/s42005-025-02201-2},
    number = {1},
    urldate = {2025-11-19},
    journal = {Commun. Phys.},
    author = {Lacroix, Thibaut and Cilluffo, Dario and Huelga, Susana F. and Plenio, Martin B.},
    month = jul,
    year = {2025},
    pages = {268},
}

@article{shiraishi2025recovery,
  title={Recovery of the second law in fully quantum thermodynamics},
  author={Shiraishi, Naoto and Takagi, Ryuji},
  journal={\href{ 	
https://doi.org/10.48550/arXiv.2510.05642}{arXiv:2510.05642 (2025)}}
}

@article{gorini1976completely,
  title={Completely positive dynamical semigroups of {$N$}-level systems},
  author={Gorini, Vittorio and Kossakowski, Andrzej and Sudarshan, Ennackal Chandy George},
  journal={J. Math. Phys.},
  volume={17},
  number={5},
  pages={821--825},
  year={1976},
  doi={10.1063/1.522979},
  publisher={American Institute of Physics}
}

@article{lindblad1976generators,
  title={On the generators of quantum dynamical semigroups},
  author={Lindblad, Goran},
  journal={Commun. Math. Phys.},
  volume={48},
  number={2},
  pages={119--130},
  year={1976},
  doi={10.1007/BF01608499},
  publisher={Springer}
}

@article{dann_open_2021,
    title = {Open system dynamics from thermodynamic compatibility},
    volume = {3},
    issn = {2643-1564},
    url = {https://link.aps.org/doi/10.1103/PhysRevResearch.3.023006},
    doi = {10.1103/PhysRevResearch.3.023006},
    number = {2},
    urldate = {2025-02-28},
    journal = {Phys. Rev. Res.},
    author = {Dann, Roie and Kosloff, Ronnie},
    month = apr,
    year = {2021},
    pages = {023006},
}

@article{hewgill2021quantum,
  title={Quantum thermodynamically consistent local master equations},
  author={Hewgill, Adam and De Chiara, Gabriele and Imparato, Alberto},
  journal={Phys. Rev. Res.},
  volume={3},
  number={1},
  pages={013165},
  year={2021},
  doi={10.1103/PhysRevResearch.3.013165},
  publisher={APS}
}

@article{dann2021quantum,
  title={Quantum thermo-dynamical construction for driven open quantum systems},
  author={Dann, Roie and Kosloff, Ronnie},
  journal={Quantum},
  volume={5},
  pages={590},
  year={2021},
  doi={10.22331/q-2021-11-25-590},
  publisher={Verein zur F{\"o}rderung des Open Access Publizierens in den Quantenwissenschaften}
}

@article{chen2025thermodynamically,
  title={{Thermodynamically Consistent Lindbladians for Quantum Stochastic Thermodynamics}},
  author={Chen, Jin-Fu},
  journal={\href{ 	
https://doi.org/10.48550/arXiv.2502.20118
}{arXiv:2502.20118 (2025)}}
}

@book{heinonen2001lectures,
  title={Lectures on analysis on metric spaces},
  author={Heinonen, Juha},
  year={2001},
  doi={10.1007/978-1-4613-0131-8},
  publisher={Springer Science \& Business Media}
}

@article{Bardet2023RapidThermal,
  title = {{Rapid Thermalization of Spin Chain Commuting Hamiltonians}},
  author = {Bardet, Ivan and Capel, \'Angela and Gao, Li and Lucia, Angelo and P\'erez-Garc\'{\i}a, David and Rouz\'e, Cambyse},
  journal = {Phys. Rev. Lett.},
  volume = {130},
  issue = {6},
  pages = {060401},
  numpages = {7},
  year = {2023},
  month = {Feb},
  publisher = {American Physical Society},
  doi = {10.1103/PhysRevLett.130.060401},
  url = {https://link.aps.org/doi/10.1103/PhysRevLett.130.060401}
}

@article{Householder1958,
  title = {Unitary Triangularization of a Nonsymmetric Matrix},
  volume = {5},
  ISSN = {1557-735X},
  url = {http://dx.doi.org/10.1145/320941.320947},
  DOI = {10.1145/320941.320947},
  number = {4},
  journal = {J. ACM},
  publisher = {Association for Computing Machinery (ACM)},
  author = {Householder,  Alston S.},
  year = {1958},
  month = oct,
  pages = {339–342}
}

@book{paule2018thermodynamics,
  title={Thermodynamics and synchronization in open quantum systems},
  author={Paule, Gonzalo Manzano},
  year={2018},
  doi={10.1007/978-3-319-93964-3},
  publisher={Springer}
}

@inproceedings{cleve2017efficient,
  title={{Efficient Quantum Algorithms for Simulating Lindblad Evolution}},
  author={Cleve, Richard and Wang, Chunhao},
  booktitle={ICALP 2017},
  series = {LIPIcs},
  volume={80},
  pages={17:1--17:14},
  doi={10.4230/LIPIcs.ICALP.2017.17},
  publisher={Schloss Dagstuhl – Leibniz-Zentrum für Informatik},
  year={2017}
}

@article{correa2014quantum,
  title={Quantum-enhanced absorption refrigerators},
  author={Correa, Luis A and Palao, Jos{\'e} P and Alonso, Daniel and Adesso, Gerardo},
  journal={Sci. Rep.},
  volume={4},
  number={1},
  pages={3949},
  year={2014},
  doi={10.1038/srep03949},
  publisher={Nature Publishing Group UK London}
}

@book{axler2024linear,
  title={Linear algebra done right},
  author={Axler, Sheldon},
  year={2024},
  doi={10.1007/978-3-031-41026-0},
  publisher={Springer}
}

@book{royden1988real,
  title={Real Analysis},
  author={Royden, H.L.},
  isbn={9780024041517},
  lccn={86033216},
  series={Lecture Notes in Mathematics},
  url={https://books.google.ie/books?id=J_1CCoeDXsgC},
  year={1988},
  publisher={Macmillan}
}

@article{popovic2018entropy,
  title={Entropy production and correlations in a controlled non-{M}arkovian setting},
  author={Popovic, Maria and Vacchini, Bassano and Campbell, Steve},
  journal={Phys. Rev. A},
  volume={98},
  number={1},
  pages={012130},
  year={2018},
  doi={10.1103/PhysRevA.98.012130},
  publisher={APS}
}

@article{strasberg2019non,
  title={Non-{M}arkovianity and negative entropy production rates},
  author={Strasberg, Philipp and Esposito, Massimiliano},
  journal={Phys. Rev. E},
  volume={99},
  number={1},
  pages={012120},
  year={2019},
  doi={10.1103/PhysRevE.99.012120},
  publisher={APS}
}

@article{marvian2012symmetry,
  title={Symmetry, asymmetry and quantum information},
  author={Marvian, Iman},
  year={2012},
  publisher={University of Waterloo},
  journal={PhD Thesis},
  url={http://hdl.handle.net/10012/7088}
}

@article{Marvian2016QuantifyCoherence,
  title = {How to quantify coherence: Distinguishing speakable and unspeakable notions},
  author = {Marvian, Iman and Spekkens, Robert W.},
  journal = {Phys. Rev. A},
  volume = {94},
  issue = {5},
  pages = {052324},
  numpages = {23},
  year = {2016},
  month = {Nov},
  publisher = {American Physical Society},
  doi = {10.1103/PhysRevA.94.052324},
  url = {https://link.aps.org/doi/10.1103/PhysRevA.94.052324}
}

@book{kelly1979reversibility,
  title={Reversibility and stochastic networks},
  author={Kelly, Frank P},
  year={1979},
  publisher={J. Wiley},
  doi={10.1002/net.3230130110}
}

@article{esposito_second_2011,
    title = {Second law and {Landauer} principle far from equilibrium},
    volume = {95},
    issn = {0295-5075},
    number = {4},
    urldate = {2025-02-16},
    journal = {Europhys. Lett.},
    author = {Esposito, M. and Van den Broeck, C.},
    year = {2011},
    pages = {40004},
    doi={10.1209/0295-5075/95/40004}
}

@article{spohn1978entropy,
  title={Entropy production for quantum dynamical semigroups},
  author={Spohn, Herbert},
  journal={J. Math. Phys.},
  volume={19},
  number={5},
  pages={1227--1230},
  year={1978},
  publisher={American Institute of Physics},
  doi={10.1063/1.523789}
}

@article{horowitz2014equivalent,
  title={Equivalent definitions of the quantum nonadiabatic entropy production},
  author={Horowitz, Jordan M and Sagawa, Takahiro},
  journal={J. Stat. Phys.},
  volume={156},
  number={1},
  pages={55--65},
  year={2014},
  doi={s10955-014-0991-1},
  publisher={Springer}
}

@article{chitambar_quantum_2019,
    title = {Quantum resource theories},
    volume = {91},
    issn = {0034-6861, 1539-0756},
    url = {https://link.aps.org/doi/10.1103/RevModPhys.91.025001},
    doi = {10.1103/RevModPhys.91.025001},
    number = {2},
    urldate = {2023-11-13},
    journal = {Rev. Mod. Phys.},
    author = {Chitambar, Eric and Gour, Gilad},
    month = apr,
    year = {2019},
    pages = {025001},
}

@book{Gour_2025, place={Cambridge}, title={Quantum Resource Theories}, publisher={Cambridge University Press}, author={Gour, Gilad}, year={2025}, doi={10.1017/9781009560870}}

@article{cusumano2022quantum,
  title={Quantum collision models: {A} beginner guide},
  author={Cusumano, Stefano},
  journal={Entropy},
  volume={24},
  number={9},
  pages={1258},
  year={2022},
  doi={doi.org/10.3390/e24091258},
  publisher={MDPI}
}

@article{agarwal1973open,
  title={Open quantum {M}arkovian systems and the microreversibility},
  author={Agarwal, GS475498},
  journal={Z. Phys. A: Hadrons Nucl.},
  volume={258},
  number={5},
  pages={409--422},
  year={1973},
  doi={10.1007/BF01391504},
  publisher={Springer}
}

@article{alicki1976detailed,
  title={On the detailed balance condition for non-{H}amiltonian systems},
  author={Alicki, Robert},
  journal={Rep. Math. Phys},
  volume={10},
  number={2},
  pages={249--258},
  year={1976},
  doi={10.1016/0034-4877(76)90046-X},
  publisher={Elsevier}
}

@article{fagnola2009two,
  title={On two quantum versions of the detailed balance condition},
  author={Fagnola, Franco and Umanit{\`a}, Veronica},
  journal={Noncommut. Harmon. Anal. Appl. Probab. II, Banach Cent. Publ.},
  volume={89},
  pages={105--119},
  doi={10.4064/bc89-0-5},
  year={2010}
}

@article{scandi_thermalization_2026,
    title = {Thermalization in {Open} {Many}-{Body} {Systems} and {KMS} {Detailed} {Balance}},
    volume = {16},
    issn = {2160-3308},
    url = {https://link.aps.org/doi/10.1103/sfp3-3sqf},
    doi = {10.1103/sfp3-3sqf},
    number = {1},
    urldate = {2026-03-27},
    journal = {Phys. Rev. X},
    author = {Scandi, Matteo and Alhambra, \'Alvaro M.},
    year = {2026},
    pages = {011040},
}

@article{temme2010chi,
  title={The $\chi^2$-divergence and mixing times of quantum {M}arkov processes},
  author={Temme, Kristan and Kastoryano, Michael James and Ruskai, Mary Beth and Wolf, Michael Marc and Verstraete, Frank},
  journal={J. Math. Phys.},
  volume={51},
  pages={122201},
  year={2010},
  doi={10.1063/1.3511335},
  publisher={AIP Publishing}
}

@article{chen2025efficient,
  title={Efficient quantum thermal simulation},
  author={Chen, Chi-Fang and Kastoryano, Michael and Brand{\~a}o, Fernando GSL and Gily{\'e}n, Andr{\'a}s},
  journal={Nature},
  volume={646},
  number={8085},
  pages={561--566},
  year={2025},
  doi={10.1038/s41586-025-09583-x},
  publisher={Nature Publishing Group UK London}
}

@article{davies1976markovian,
  title={Markovian master equations. {II}},
  author={Davies, E Brian},
  journal={Math. Ann.},
  volume={219},
  number={2},
  pages={147--158},
  year={1976},
  doi={10.1007/BF01351898},
  publisher={Springer}
}

@article{alicki2009thermalization,
  title={On thermalization in {K}itaev's {2D} model},
  author={Alicki, Robert and Fannes, Mark and Horodecki, Michal},
  journal={J. Phys. A: Math. Theor.},
  volume={42},
  number={6},
  pages={065303},
  doi={10.1088/1751-8113/42/6/065303},
  year={2009}
}

@article{petz1986sufficient,
  title={Sufficient subalgebras and the relative entropy of states of a von {N}eumann algebra},
  author={Petz, D{\'e}nes},
  journal={Commun. Math. Phys.},
  volume={105},
  number={1},
  pages={123--131},
  year={1986},
  doi={10.1007/BF01212345},
  publisher={Springer}
}

@article{petz1988sufficiency,
  title={Sufficiency of channels over von {N}eumann algebras},
  author={Petz, D{\'e}nes},
  journal={Q. J. Math.},
  volume={39},
  number={1},
  pages={97--108},
  year={1988},
  doi={10.1093/qmath/39.1.97},
  publisher={Oxford University Press}
}

@article{parzygnat2023axioms,
  title={Axioms for retrodiction: achieving time-reversal symmetry with a prior},
  author={Parzygnat, Arthur J and Buscemi, Francesco},
  journal={Quantum},
  volume={7},
  pages={1013},
  doi = {10.22331/q-2023-05-23-1013},
  year={2023},
  publisher={Verein zur F{\"o}rderung des Open Access Publizierens in den Quantenwissenschaften}
}

@article{wilde2015recoverability,
  title={Recoverability in quantum information theory},
  author={Wilde, Mark M},
  journal={Proc. R. Soc. A},
  volume={471},
  pages={2182},
  year={2015},
  doi={10.1098/rspa.2015.0338},
  publisher={The Royal Society}
}

@article{parzygnat2023time,
  title={From time-reversal symmetry to quantum {B}ayes’ rules},
  author={Parzygnat, Arthur J and Fullwood, James},
  journal={PRX Quantum},
  volume={4},
  number={2},
  pages={020334},
  year={2023},
  doi={10.1103/PRXQuantum.4.020334},
  publisher={APS}
}

@mastersthesis{hubmann_open_2025,
    address = {Wien},
    title = {Open quantum evolution from thermodynamic collision models},
    url = {https://doi.org/10.25365/thesis.78592},
    author = {Hubmann, Felix},
    year = {2025},
}

@article{korzekwa2022optimizing,
  title={Optimizing thermalization},
  author={Korzekwa, Kamil and Lostaglio, Matteo},
  journal={Phys. Rev. Lett.},
  volume={129},
  number={4},
  pages={040602},
  year={2022},
  doi={10.1103/PhysRevLett.129.040602},
  publisher={APS}
}

@article{Kosloff2013QuantumThermo,
AUTHOR = {Kosloff, Ronnie},
TITLE = {Quantum Thermodynamics: A Dynamical Viewpoint},
JOURNAL = {Entropy},
VOLUME = {15},
YEAR = {2013},
NUMBER = {6},
PAGES = {2100--2128},
URL = {https://www.mdpi.com/1099-4300/15/6/2100},
ISSN = {1099-4300},
DOI = {10.3390/e15062100}
}

@article{burgarth_control_2023,
	title = {Control of {Quantum} {Noise}: {On} the {Role} of {Dilations}},
	volume = {24},
	issn = {1424-0661},
	shorttitle = {Control of {Quantum} {Noise}},
	url = {https://doi.org/10.1007/s00023-022-01211-y},
	doi = {10.1007/s00023-022-01211-y},
	number = {1},
	journal = {Ann. Henri Poincar{\'e}},
	author = {Burgarth, Daniel and Facchi, Paolo and Hillier, Robin},
	month = jan,
	year = {2023},
	pages = {325--347},
}

@article{burgarth2024taming,
  title={Taming the rotating wave approximation},
  author={Burgarth, Daniel and Facchi, Paolo and Hillier, Robin and Ligab{\`o}, Marilena},
  journal={Quantum},
  volume={8},
  pages={1262},
  year={2024},
  doi={10.22331/q-2024-02-21-1262},
  publisher={Verein zur F{\"o}rderung des Open Access Publizierens in den Quantenwissenschaften}
}

@article{trotter1959product,
  title={On the product of semi-groups of operators},
  author={Trotter, Hale F},
  journal={Proc. Am. Math. Soc.},
  volume={10},
  number={4},
  pages={545--551},
  year={1959},
  doi={10.2307/2033649},
  publisher={JSTOR}
}

@article{suzuki1976generalized,
  title={Generalized {T}rotter's formula and systematic approximants of exponential operators and inner derivations with applications to many-body problems},
  author={Suzuki, Masuo},
  journal={Commun. Math. Phys.},
  volume={51},
  number={2},
  pages={183--190},
  year={1976},
  doi={10.1007/BF01609348},
  publisher={Springer}
}

@article{Scovil59threelevel,
  title = {{Three-Level Masers as Heat Engines}},
  author = {Scovil, H. E. D. and Schulz-DuBois, E. O.},
  journal = {Phys. Rev. Lett.},
  volume = {2},
  issue = {6},
  pages = {262--263},
  numpages = {0},
  year = {1959},
  month = {Mar},
  publisher = {American Physical Society},
  doi = {10.1103/PhysRevLett.2.262},
  url = {https://link.aps.org/doi/10.1103/PhysRevLett.2.262}
}

@article{lostaglio_elementary_2018,
    title = {Elementary {Thermal} {Operations}},
    volume = {2},
    url = {https://quantum-journal.org/papers/q-2018-02-08-52/},
    doi = {10.22331/q-2018-02-08-52},
    urldate = {2023-10-09},
    journal = {Quantum},
    publisher = {Verein zur Förderung des Open Access Publizierens in den Quantenwissenschaften},
    author = {Lostaglio, Matteo and Alhambra, \'Alvaro M. and Perry, Christopher},
    year = {2018},
    pages = {52},
}

@article{linden2010small,
  title={How small can thermal machines be? {T}he smallest possible refrigerator},
  author={Linden, Noah and Popescu, Sandu and Skrzypczyk, Paul},
  journal={Phys. Rev. Lett.},
  volume={105},
  number={13},
  pages={130401},
  year={2010},
  doi={10.1103/PhysRevLett.105.130401},
  publisher={APS}
}

@article{kalaee2021violating,
  title={Violating the thermodynamic uncertainty relation in the three-level maser},
  author={Kalaee, Alex Aras
  
  h Sand and Wacker, Andreas and Potts, Patrick P},
  journal={Phys. Rev. E},
  volume={104},
  number={1},
  pages={L012103},
  year={2021},
  doi={10.1103/PhysRevE.104.L012103},
  publisher={APS}
}

@article{kosloff2014quantum,
  title={Quantum heat engines and refrigerators: {C}ontinuous devices},
  author={Kosloff, Ronnie and Levy, Amikam},
  journal={Annu. Rev. Phys. Chem.	},
  volume={65},
  number={1},
  pages={365--393},
  year={2014},
  doi={10.1146/annurev-physchem-040513-103724},
  publisher={Annual Reviews}
}

@article{boukobza2007three,
  title={Three-level systems as amplifiers and attenuators: {A} thermodynamic analysis},
  author={Boukobza, E and Tannor, DJ},
  journal={Phys. Rev. Lett.},
  volume={98},
  number={24},
  pages={240601},
  year={2007},
  doi={10.1103/PhysRevLett.98.240601},
  publisher={APS}
}

@article{faist_fundamental_2018,
    title = {Fundamental {Work} {Cost} of {Quantum} {Processes}},
    volume = {8},
    issn = {2160-3308},
    url = {https://link.aps.org/doi/10.1103/PhysRevX.8.021011},
    doi = {10.1103/PhysRevX.8.021011},
    number = {2},
    urldate = {2023-11-20},
    journal = {Phys. Rev. X},
    author = {Faist, Philippe and Renner, Renato},
    month = apr,
    year = {2018},
    pages = {021011},
}

@article{faist_thermodynamic_2019,
    title = {Thermodynamic {Capacity} of {Quantum} {Processes}},
    volume = {122},
    url = {https://link.aps.org/doi/10.1103/PhysRevLett.122.200601},
    doi = {10.1103/PhysRevLett.122.200601},
    number = {20},
    urldate = {2023-11-20},
    journal = {Phys. Rev. Lett.},
    publisher = {American Physical Society},
    author = {Faist, Philippe and Berta, Mario and Brand{\~a}o, Fernando},
    month = may,
    year = {2019},
    pages = {200601},
}

@article{czartowski_thermal_2023,
    title = {Thermal {Recall}: {Memory}-{Assisted} {Markovian} {Thermal} {Processes}},
    volume = {4},
    shorttitle = {Thermal {Recall}},
    url = {https://link.aps.org/doi/10.1103/PRXQuantum.4.040304},
    doi = {10.1103/PRXQuantum.4.040304},
    number = {4},
    urldate = {2023-10-09},
    journal = {PRX Quantum},
    publisher = {American Physical Society},
    author = {Czartowski, Jakub and de Oliveira Junior, A. and Korzekwa, Kamil},
    month = oct,
    year = {2023},
    pages = {040304},
}

@article{Hsieh2025Dynamical,
  title = {{Dynamical Landauer Principle: Quantifying Information Transmission by Thermodynamics}},
  author = {Hsieh, Chung-Yun},
  journal = {Phys. Rev. Lett.},
  volume = {134},
  issue = {5},
  pages = {050404},
  numpages = {9},
  year = {2025},
  month = {Feb},
  publisher = {American Physical Society},
  doi = {10.1103/PhysRevLett.134.050404},
  url = {https://link.aps.org/doi/10.1103/PhysRevLett.134.050404}
}

@article{luo_thermodynamic_2025,
  title = {Thermodynamic criteria for signaling in quantum channels},
  author = {Luo, Yutong and Milz, Simon and Binder, Felix C.},
  journal = {Phys. Rev. Res.},
  volume = {7},
  issue = {4},
  pages = {043327},
  numpages = {26},
  year = {2025},
  month = {Dec},
  publisher = {American Physical Society},
  doi = {10.1103/vjfk-vsmw},
  url = {https://link.aps.org/doi/10.1103/vjfk-vsmw}
}

@article{Vinjanampathy2016QTD,
author = {Sai Vinjanampathy and Janet Anders},
title = {Quantum thermodynamics},
journal = {Contemp. Phys.},
volume = {57},
number = {4},
pages = {545--579},
year = {2016},
publisher = {Taylor \& Francis},
doi = {10.1080/00107514.2016.1201896}
}

@article{Brando2015,
  title = {The second laws of quantum thermodynamics},
  volume = {112},
  ISSN = {1091-6490},
  url = {http://dx.doi.org/10.1073/pnas.1411728112},
  DOI = {10.1073/pnas.1411728112},
  number = {11},
  journal = {PNAS},
  publisher = {Proceedings of the National Academy of Sciences},
  author = {Brand{\~a}o,  Fernando and Horodecki,  Michał and Ng,  Nelly and Oppenheim,  Jonathan and Wehner,  Stephanie},
  year = {2015},
  month = Feb,
  pages = {3275–3279}
}

@article{Masanes2017,
  title = {A general derivation and quantification of the third law of thermodynamics},
  volume = {8},
  ISSN = {2041-1723},
  pages = {14538},
  doi = {10.1038/ncomms14538},
  journal = {Nat. Commun.},
  publisher = {Springer Science and Business Media LLC},
  author = {Masanes,  Lluís and Oppenheim,  Jonathan},
  year = {2017},
  month = Mar 
}

@article{janzing2000thermodynamic,
  title={{Thermodynamic Cost of Reliability and Low Temperatures: Tightening Landauer's Principle and the Second Law}},
  author={Janzing, D. and Wocjan, P. and Zeier, R. and Geiss, R. and Beth, {Th.}},
  journal={Int. J. Theor. Phys.},
  volume={39},
  number={12},
  pages={2717--2753},
  year={2000},
  doi={10.1023/A:1026422630734},
  publisher={Springer}
}

@article{Muller2018Correlating,
  title = {{Correlating Thermal Machines and the Second Law at the Nanoscale}},
  author = {M\"uller, Markus P.},
  journal = {Phys. Rev. X},
  volume = {8},
  issue = {4},
  pages = {041051},
  numpages = {23},
  year = {2018},
  month = {Dec},
  publisher = {American Physical Society},
  doi = {10.1103/PhysRevX.8.041051},
  url = {https://link.aps.org/doi/10.1103/PhysRevX.8.041051}
}

@article{vom2022bath,
  title={Which bath Hamiltonians matter for thermal operations?},
  author={Vom Ende, Frederik},
  journal={J. Math. Phys.},
  volume={63},
  pages={112202},
  year={2022},
  doi={10.1063/5.0117534},
  publisher={AIP Publishing}
}

@article{Shiraishi2021GPO,
  title = {Quantum Thermodynamics of Correlated-Catalytic State Conversion at Small Scale},
  author = {Shiraishi, Naoto and Sagawa, Takahiro},
  journal = {Phys. Rev. Lett.},
  volume = {126},
  issue = {15},
  pages = {150502},
  numpages = {6},
  year = {2021},
  month = {Apr},
  publisher = {American Physical Society},
  doi = {10.1103/PhysRevLett.126.150502},
  url = {https://link.aps.org/doi/10.1103/PhysRevLett.126.150502}
}

@article{Shiraishi2025CovGPO,
  title = {{Quantum Thermodynamics with Coherence: Covariant Gibbs-Preserving Operation Is Characterized by the Free Energy}},
  author = {Shiraishi, Naoto},
  journal = {Phys. Rev. Lett.},
  volume = {134},
  issue = {16},
  pages = {160402},
  numpages = {6},
  year = {2025},
  month = {Apr},
  publisher = {American Physical Society},
  doi = {10.1103/PhysRevLett.134.160402},
  url = {https://link.aps.org/doi/10.1103/PhysRevLett.134.160402}
}

@article{Cfiwikl2015Limitations,
  title = {Limitations on the Evolution of Quantum Coherences: Towards Fully Quantum Second Laws of Thermodynamics},
  author = {\ifmmode \acute{C}\else \'{C}\fi{}wikli\ifmmode \acute{n}\else \'{n}\fi{}ski, Piotr and Studzi\ifmmode \acute{n}\else \'{n}\fi{}ski, Micha\l{} and Horodecki, Micha\l{} and Oppenheim, Jonathan},
  journal = {Phys. Rev. Lett.},
  volume = {115},
  issue = {21},
  pages = {210403},
  numpages = {5},
  year = {2015},
  month = {Nov},
  publisher = {American Physical Society},
  doi = {10.1103/PhysRevLett.115.210403},
  url = {https://link.aps.org/doi/10.1103/PhysRevLett.115.210403}
}

@article{faist_gibbs-preserving_2015,
    title = {Gibbs-preserving maps outperform thermal operations in the quantum regime},
    volume = {17},
    issn = {1367-2630},
    url = {https://dx.doi.org/10.1088/1367-2630/17/4/043003},
    doi = {10.1088/1367-2630/17/4/043003},
    number = {4},
    urldate = {2023-09-22},
    journal = {New J. Phys.},
    publisher = {IOP Publishing},
    author = {Faist, Philippe and Oppenheim, Jonathan and Renner, Renato},
    year = {2015},
    pages = {043003},
}

@article{Tajima2025Gibbs-preserving,
  title = {Gibbs-Preserving Operations Requiring Infinite Amount of Quantum Coherence},
  author = {Tajima, Hiroyasu and Takagi, Ryuji},
  journal = {Phys. Rev. Lett.},
  volume = {134},
  issue = {17},
  pages = {170201},
  numpages = {7},
  year = {2025},
  month = {Apr},
  publisher = {American Physical Society},
  doi = {10.1103/PhysRevLett.134.170201},
  url = {https://link.aps.org/doi/10.1103/PhysRevLett.134.170201}
}

@article{Ding2021Exploring,
  title = {Exploring the gap between thermal operations and enhanced thermal operations},
  author = {Ding, Yuqiang and Ding, Feng and Hu, Xueyuan},
  journal = {Phys. Rev. A},
  volume = {103},
  issue = {5},
  pages = {052214},
  numpages = {9},
  year = {2021},
  month = {May},
  publisher = {American Physical Society},
  doi = {10.1103/PhysRevA.103.052214},
  url = {https://link.aps.org/doi/10.1103/PhysRevA.103.052214}
}

@article{Molitor2020Stroboscopic,
  title = {Stroboscopic two-stroke quantum heat engines},
  author = {Molitor, Otavio A. D. and Landi, Gabriel T.},
  journal = {Phys. Rev. A},
  volume = {102},
  issue = {4},
  pages = {042217},
  numpages = {10},
  year = {2020},
  month = {Oct},
  publisher = {American Physical Society},
  doi = {10.1103/PhysRevA.102.042217},
  url = {https://link.aps.org/doi/10.1103/PhysRevA.102.042217}
}

@article{melo2022implementation,
  title={Implementation of a two-stroke quantum heat engine with a collisional model},
  author={Melo, Filipe V and S{\'a}, Nahum and Roditi, Itzhak and Souza, Alexandre M and Oliveira, Ivan S and Sarthour, Roberto S and Landi, Gabriel T},
  journal={Phys. Rev. A},
  volume={106},
  number={3},
  pages={032410},
  year={2022},
  doi={10.1103/PhysRevA.106.032410},
  publisher={APS}
}

@article{piccione2021power,
  title={Power maximization of two-stroke quantum thermal machines},
  author={Piccione, Nicol{\`o} and De Chiara, Gabriele and Bellomo, Bruno},
  journal={Phys. Rev. A},
  volume={103},
  number={3},
  pages={032211},
  year={2021},
  doi={10.1103/PhysRevA.103.032211},
  publisher={APS}
}

@article{Uzdin2015Equivalence,
  title = {{Equivalence of Quantum Heat Machines, and Quantum-Thermodynamic Signatures}},
  author = {Uzdin, Raam and Levy, Amikam and Kosloff, Ronnie},
  journal = {Phys. Rev. X},
  volume = {5},
  issue = {3},
  pages = {031044},
  numpages = {21},
  year = {2015},
  month = {Sep},
  publisher = {American Physical Society},
  doi = {10.1103/PhysRevX.5.031044},
  url = {https://link.aps.org/doi/10.1103/PhysRevX.5.031044}
}

@article{lobejko2026equivalence,
  title={{Equivalence of Discrete and Continuous Otto-like Engines Assisted by Catalysts: Mapping Catalytic Advantages from the Discrete to the Continuous Framework}},
  author={{\L}obejko, Marcin and Biswas, Tanmoy and Horodecki, Micha{\l}},
  journal={Phys. Rev. Lett.},
  volume={136},
  number={18},
  pages={180402},
  year={2026},
  doi={10.1103/p2bv-hv5s},
  publisher={APS}
}

@article{oDonovan2025quantum,
  title={Quantum master equation from the eigenstate thermalization hypothesis},
  author={O'Donovan, Peter and Strasberg, Philipp and Modi, Kavan and Goold, John and Mitchison, Mark T},
  journal={Phys. Rev. B},
  volume={112},
  number={1},
  pages={014312},
  year={2025},
  doi={10.1103/kh96-ct3y},
  publisher={APS}
}

@article{brun2002simple,
  title={A simple model of quantum trajectories},
  author={Brun, Todd A},
  journal={Am. J. Phys.},
  volume={70},
  number={7},
  pages={719--737},
  year={2002},
  doi={10.1119/1.1475328},
  publisher={American Association of Physics Teachers}
}

@article{ziman2005description,
  title={Description of quantum dynamics of open systems based on collision-like models},
  author={Ziman, M{\"a}rio and {\v{S}}telmachovi{\v{c}}, Peter and Bu{\v{z}}ek, Vladim{\'\i}r},
  journal={Open Syst. Inf. Dyn.},
  volume={12},
  number={1},
  pages={81--91},
  year={2005},
  doi={10.1007/s11080-005-0488-0},
  publisher={Springer}
}

@article{Grimmer2016Open,
  title = {Open dynamics under rapid repeated interaction},
  author = {Grimmer, Daniel and Layden, David and Mann, Robert B. and Mart\'{\i}n-Mart\'{\i}nez, Eduardo},
  journal = {Phys. Rev. A},
  volume = {94},
  issue = {3},
  pages = {032126},
  numpages = {28},
  year = {2016},
  month = {Sep},
  publisher = {American Physical Society},
  doi = {10.1103/PhysRevA.94.032126}
}

@Inbook{ng2019resource,
author="Ng, Nelly Huei Ying
and Woods, Mischa Prebin",
editor="Binder, Felix
and Correa, Luis A.
and Gogolin, Christian
and Anders, Janet
and Adesso, Gerardo",
title={{Resource Theory of Quantum Thermodynamics: Thermal Operations and Second Laws}},
bookTitle="Thermodynamics in the Quantum Regime: Fundamental Aspects and New Directions",
year="2018",
publisher="Springer International Publishing",
address="Cham",
pages="625--650",
isbn="978-3-319-99046-0",
doi="10.1007/978-3-319-99046-0_26",
url="https://doi.org/10.1007/978-3-319-99046-0_26"
}

@incollection{alicki_introduction_2018,
    address = {Cham},
    title = {Introduction to {Quantum} {Thermodynamics}: {History} and {Prospects}},
    isbn = {978-3-319-99046-0},
    url = {https://doi.org/10.1007/978-3-319-99046-0_1},
    doi = {10.1007/978-3-319-99046-0_1},
    booktitle = {Thermodynamics in the {Quantum} {Regime}: {Fundamental} {Aspects} and {New} {Directions}},
    publisher = {Springer International Publishing},
    author = {Alicki, Robert and Kosloff, Ronnie},
    editor={Binder, Felix
    and Correa, Luis A.
    and Gogolin, Christian
    and Anders, Janet
    and Adesso, Gerardo},
    year = {2018},
    pages = {1--33},
}

@article{Campbell2018System,
  title = {System-environment correlations and {M}arkovian embedding of quantum non-{M}arkovian dynamics},
  author = {Campbell, Steve and Ciccarello, Francesco and Palma, G. Massimo and Vacchini, Bassano},
  journal = {Phys. Rev. A},
  volume = {98},
  issue = {1},
  pages = {012142},
  numpages = {11},
  year = {2018},
  month = {Jul},
  publisher = {American Physical Society},
  doi = {10.1103/PhysRevA.98.012142},
  url = {https://link.aps.org/doi/10.1103/PhysRevA.98.012142}
}

@article{Ptaszy2022NonMarkovian,
  title = {Non-Markovian thermal operations boosting the performance of quantum heat engines},
  author = {Ptaszy\ifmmode \acute{n}\else \'{n}\fi{}ski, Krzysztof},
  journal = {Phys. Rev. E},
  volume = {106},
  issue = {1},
  pages = {014114},
  numpages = {11},
  year = {2022},
  month = {Jul},
  publisher = {American Physical Society},
  doi = {10.1103/PhysRevE.106.014114},
  url = {https://link.aps.org/doi/10.1103/PhysRevE.106.014114}
}

@article{Dann2022NonMarkovian,
  title = {Non-Markovian dynamics under time-translation symmetry},
  author = {Dann, Roie and Megier, Nina and Kosloff, Ronnie},
  journal = {Phys. Rev. Res.},
  volume = {4},
  issue = {4},
  pages = {043075},
  numpages = {29},
  year = {2022},
  month = {Nov},
  publisher = {American Physical Society},
  doi = {10.1103/PhysRevResearch.4.043075},
  url = {https://link.aps.org/doi/10.1103/PhysRevResearch.4.043075}
}

@article{Lorenzo2017Composite,
  title = {Composite quantum collision models},
  author = {Lorenzo, Salvatore and Ciccarello, Francesco and Palma, G. Massimo},
  journal = {Phys. Rev. A},
  volume = {96},
  issue = {3},
  pages = {032107},
  numpages = {11},
  year = {2017},
  month = {Sep},
  publisher = {American Physical Society},
  doi = {10.1103/PhysRevA.96.032107},
  url = {https://link.aps.org/doi/10.1103/PhysRevA.96.032107}
}

@article{Zambon2025Quantum,
  title = {{Quantum Processes as Thermodynamic Resources: The Role of Non-Markovianity}},
  author = {Zambon, Guilherme and Adesso, Gerardo},
  journal = {Phys. Rev. Lett.},
  volume = {134},
  issue = {20},
  pages = {200401},
  numpages = {9},
  year = {2025},
  month = {May},
  publisher = {American Physical Society},
  doi = {10.1103/PhysRevLett.134.200401},
  url = {https://link.aps.org/doi/10.1103/PhysRevLett.134.200401}
}

@Inbook{Nazir2018,
author="Nazir, Ahsan
and Schaller, Gernot",
editor="Binder, Felix
and Correa, Luis A.
and Gogolin, Christian
and Anders, Janet
and Adesso, Gerardo",
title={{The Reaction Coordinate Mapping in Quantum Thermodynamics}},
bookTitle="Thermodynamics in the Quantum Regime: Fundamental Aspects and New Directions",
year="2018",
publisher="Springer International Publishing",
address="Cham",
pages="551--577",
isbn="978-3-319-99046-0",
doi="10.1007/978-3-319-99046-0_23",
url="https://doi.org/10.1007/978-3-319-99046-0_23"
}

@article{Nestmann2021How,
  title = {{How Quantum Evolution with Memory is Generated in a Time-Local Way}},
  author = {Nestmann, K. and Bruch, V. and Wegewijs, M. R.},
  journal = {Phys. Rev. X},
  volume = {11},
  issue = {2},
  pages = {021041},
  numpages = {22},
  year = {2021},
  month = {May},
  publisher = {American Physical Society},
  doi = {10.1103/PhysRevX.11.021041},
  url = {https://link.aps.org/doi/10.1103/PhysRevX.11.021041}
}

@book{gardiner2004quantum,
  title={{Quantum noise: a handbook of Markovian and non-Markovian quantum stochastic methods with applications to quantum optics}},
  author={Gardiner, Crispin and Zoller, Peter},
  year={2004},
  isbn={978-3-540-22301-6},
  publisher={Springer Science \& Business Media}
}

@article{hofer2017markovian,
  title={Markovian master equations for quantum thermal machines: local versus global approach},
  author={Hofer, Patrick P and Perarnau-Llobet, Mart{\'\i} and Miranda, L David M and Haack, G{\'e}raldine and Silva, Ralph and Brask, Jonatan Bohr and Brunner, Nicolas},
  journal={New J. Phys.},
  volume={19},
  number={12},
  pages={123037},
  year={2017},
  doi={10.1088/1367-2630/aa964f},
  publisher={IOP Publishing}
}

@article{fagnola2015entropy,
  title={Entropy production for quantum {M}arkov semigroups},
  author={Fagnola, Franco and Rebolledo, Rolando},
  journal={Commun. Math. Phys.},
  volume={335},
  number={2},
  pages={547--570},
  year={2015},
  doi={10.1007/s00220-015-2320-1},
  publisher={Springer}
}

@article{fagnola2015entropy_DB,
  title={Entropy production and detailed balance for a class of quantum {M}arkov semigroups},
  author={Fagnola, Franco and Rebolledo, Rolando},
  journal={Open Syst. Inf. Dyn.},
  volume={22},
  number={03},
  pages={1550013},
  year={2015},
  doi={10.1142/S1230161215500134},
  publisher={World Scientific}
}

@article{Potts_2021,
doi = {10.1088/1367-2630/ac3b2f},
url = {https://doi.org/10.1088/1367-2630/ac3b2f},
year = {2021},
month = {dec},
publisher = {IOP Publishing},
volume = {23},
number = {12},
pages = {123013},
author = {Potts, Patrick P and Kalaee, Alex Arash Sand and Wacker, Andreas},
title = {A thermodynamically consistent Markovian master equation beyond the secular approximation},
journal = {New J. Phys.}
}

@article{Leggett1987Dynamics,
  title = {Dynamics of the dissipative two-state system},
  author = {Leggett, A. J. and Chakravarty, S. and Dorsey, A. T. and Fisher, Matthew P. A. and Garg, Anupam and Zwerger, W.},
  journal = {Rev. Mod. Phys.},
  volume = {59},
  issue = {1},
  pages = {1--85},
  numpages = {0},
  year = {1987},
  month = {Jan},
  publisher = {American Physical Society},
  doi = {10.1103/RevModPhys.59.1},
  url = {https://link.aps.org/doi/10.1103/RevModPhys.59.1}
}

@article{DeRaedt1984Thermodynamics,
  title = {Thermodynamics of a two-level system coupled to bosons},
  author = {De Raedt, Bart and De Raedt, Hans},
  journal = {Phys. Rev. B},
  volume = {29},
  issue = {10},
  pages = {5325--5336},
  numpages = {0},
  year = {1984},
  month = {May},
  publisher = {American Physical Society},
  doi = {10.1103/PhysRevB.29.5325},
  url = {https://link.aps.org/doi/10.1103/PhysRevB.29.5325}
}

@article{Mehl2013Noise,
  title = {Noise analysis of qubits implemented in triple quantum dot systems in a {D}avies master equation approach},
  author = {Mehl, Sebastian and DiVincenzo, David P.},
  journal = {Phys. Rev. B},
  volume = {87},
  issue = {19},
  pages = {195309},
  numpages = {26},
  year = {2013},
  month = {May},
  publisher = {American Physical Society},
  doi = {10.1103/PhysRevB.87.195309},
  url = {https://link.aps.org/doi/10.1103/PhysRevB.87.195309}
}

@article{naeij2025open,
  title={Open quantum system approaches to superconducting qubits},
  author={Naeij, Hamid Reza},
  journal={Quantum Inf. Process.},
  volume={24},
  number={7},
  pages={220},
  year={2025},
  doi={10.1007/s11128-025-04838-y},
  publisher={Springer}
}

@article{cleri2024quantum,
  title={Quantum computers, quantum computing, and quantum thermodynamics},
  author={Cleri, Fabrizio},
  journal={Front. Quantum Sci. Technol.},
  volume={3},
  pages={1422257},
  year={2024},
  doi={10.3389/frqst.2024.1422257},
  publisher={Frontiers Media SA}
}
\end{document}